\def\I{\mathcal{I}}
\def\M{\mathcal{M}}
\def\D{\mathcal{D}}
\def\IC{\mathcal{IC}}
\def\cc{\textsc{cc}}
\def\qa{\textsc{qa}}
\def\mp{\textsc{mp}}
\def\nmp{\overline{\textsc{mp}}}
\newtheorem{definition}{Definition}
\newtheorem{example}{Example}
\newtheorem{theorem}{Theorem}
\newtheorem{lemma}{Lemma}
\newtheorem{proposition}{Proposition}
\newtheorem{fact}{Fact}
\newcommand{\la}{\langle}
\newcommand{\ra}{\rangle}
\newcommand{\nop}[1]{}
\newcommand{\boxx}{\ \hfill$\Box$}
\journalname{submitted to Journal of Computer and System Sciences}
\begin{document}
\begin{frontmatter}



\title{Consistency Checking and Querying in Probabilistic Databases under Integrity Constraints}


\author[label]{Sergio Flesca}
\ead{flesca@dimes.unical.it}
\author[label]{Filippo Furfaro}
\ead{furfaro@dimes.unical.it}
\author[label]{Francesco Parisi}
\ead{fparisi@dimes.unical.it}
\address[label]{DIMES, University of Calabria, Via Bucci - Rende (CS), Italy}

\begin{abstract}
We address the issue of incorporating a particular yet expressive form of integrity constraints (namely, denial constraints) into probabilistic databases.
To this aim, we move away from the common way of giving semantics to probabilistic 
databases, which relies on considering a unique interpretation of the data, and address 
two fundamental problems: \emph{consistency checking} and \emph{query evaluation}.
The former consists in verifying whether there is an 
interpretation which conforms to both the marginal probabilities of the tuples and 
the integrity constraints.
The latter is the problem of answering queries under a 
``cautious'' paradigm, taking into account all interpretations of the data in accordance with the constraints.
In this setting, we investigate the complexity of the above-mentioned problems, and identify 
several tractable cases of practical relevance.
\end{abstract}

\begin{keyword}
Probabilistic databases \sep Integrity constraints \sep Consistency checking


\end{keyword}

\end{frontmatter}

\section{Introduction}
Probabilistic databases (PDBs) are widely used to represent uncertain information in several contexts,
ranging from data collected from sensor networks, data integration from heterogeneous sources,
bio-medical data, and, more in general, data resulting from statistical analyses.
In this setting, several relevant results have been obtained regarding the evaluation of 
conjunctive queries, thanks to the definition of probabilistic frameworks dealing with two
substantially different scenarios:
the case of \emph{tuple-independent} PDBs~\cite{Suciu04, SuciuEDBT2010}, where all the tuples of the database are considered independent one from another, and 
the case of PDBs representing probabilistic networks encoding even 
complex forms of correlations among the data~\cite{Sen09}.
However, none of these frameworks takes into account integrity constraints in the same way as
it happens in the deterministic setting, where constraints are used to enforce the consistency of the data.
In fact, the former framework strongly relies on the independence assumption (which clearly
is in contrast with the presence of the correlations entailed by integrity constraints).
The latter framework is closer to an AI perspective of representing the information, as it 
requires the correlations among the data to be represented as data themselves.
This is different from the DB perspective, where constraints are part of the schema, and not of the data.

\looseness-1
In this paper, we address the issue of incorporating integrity constraints into probabilistic
databases, with the aim of extending the classical semantics and usage of integrity constraints of the deterministic setting to the probabilistic one.
Specifically, we consider one of the most popular logical models for the probabilistic data, where information is represented into tuples associated with probabilities, and give the
possibility of imposing \emph{denial constraints} on the data, i.e., constraints forbidding the
co-existence of certain tuples.
In our framework, the role of integrity constraints is the same as in the deterministic setting: they can be used to decide whether a new tuple can be inserted in the database,
or to decide (a posteriori w.r.t. the generation of the data) if the data are consistent.

Before explaining in detail the main contribution of our work, we provide a motivating example,
which clarifies the impact of augmenting a PDB with (denial)
constraints.
In particular, we focus on the implications on the consistency of the probabilistic data, and 
on the evaluation of queries.
We assume that the reader is acquainted with the data representation model where uncertainty is
represented by associating tuples with a probability, and with the notion of possible world. 
(however, these concepts will be formally recalled in the first sections of the paper).

\subsection*{Motivating Example}
Consider the PDB schema $\D^p$ consisting of the relation schema 
\emph{Room$^p$(Id, Hid, Price, Type, View, P)}, and its instance \emph{room}$^p$ 
in Figure~\ref{fig:room}.

\begin{figure}[!h]
\small
\centering
\begin{tabular}{l||c|c|c|c|c|c||}
  \hhline{~|t:======:t|}
& \raisebox{0cm}[3.2mm][1.1mm]{\textbf{\textit{Id}}} &  
\textbf{\textit{Hid}}   & 
\textbf{\textit{Price}} & 
\textbf{\textit{Type}} &
\textbf{\textit{View}} &
\textbf{\textit{P}}\\
  \hhline{~|:======:|}
$t_1$ & \raisebox{0cm}[3mm][1mm]{1} & 1 &  120 & Std  & Sea  &$p_1$ \\
  \hhline{~||------||}
$t_2$ & \raisebox{0cm}[3mm][1mm]{2}  & 1 & 70 & Suite & Courtyard &$p_2$\\
  \hhline{~||------||}
$t_3$ & \raisebox{0cm}[3mm][1mm]{3}  & 1 &  120 & Std & Sea &$p_3$  \\
\hhline{~|b:======:b|}
\end{tabular}
\label{fig:room}
\caption{Relation instance \textit{room}$^p$}
\end{figure}

Every tuple in \textit{room}$^p$ is characterized by the room
identifier \emph{Id}, the identifier \emph{Hid} of the hotel owning the room,
its price per night,
its type (e.g., ``Standard'', ``Suite''), and the attribute \emph{View} describing the room 
view.
The attribute $P$ specifies the probability that the tuple is true.
For now, we leave the probabilities of the three tuples as parameters ($p_1$, $p_2$, $p_3$), 
as we will consider different values to better explain the main issues 
related to the consistency and the query evaluation.

Assume that the following constraint $ic$ is defined over $\D^p$:
``\emph{in the same hotel, standard rooms cannot be more expensive than suites}''.
This is a denial constraint, as it forbids the coexistence of tuples not satisfying 
the specified property.
In particular, $ic$ entails that $t_1$ and $t_2$ are mutually exclusive, as, according to $t_1$,
the standard room $1$ would be more expensive than the suite room $2$ belonging to the same hotel as room $1$. 
For the same reason, $ic$ forbids the coexistence of $t_2$ and $t_3$.

 %
Finally, consider the following query $q$ on $\D^p$:
``\emph{Are there two standard rooms with sea view in hotel $1$?}''.
We now show how the consistency of the database and the answer of $q$ vary when changing the probabilities of \emph{room}$^p$'s tuples.

\noindent
\textbf{\textit{Case 1 (No admissible interpretation):}}\ \
$p_1=\frac{3}{4}$;\ \
$p_2=\frac{1}{2}$;\ \
$p_3=\frac{1}{2}$.\\
In this case, we can conclude that the database is inconsistent. 
In fact, $ic$ forbids the coexistence of $t_1$ and $t_2$, which means that the possible worlds
containing $t_1$ must be distinct from those containing $t_2$.
But the marginal probabilities of $t_1$ and $t_2$ do not allow this:
the fact that $p_1\!=\!\frac{3}{4}$ and $p_2\!=\!\frac{1}{2}$ implies that the sum of the probabilities of the worlds containing either $t_1$ or $t_2$ would be $\frac{3}{4}\!+\!\frac{1}{2}$, which is
greater than $1$.\\[4pt]
\noindent
\textbf{\textit{Case 2 (Unique admissible interpretation):}}\ \
$p_1=\frac{1}{2}$;\ \
$p_2=\frac{1}{2}$;\ \
$p_3=\frac{1}{2}$.\\
In this case, the database is consistent, as it represents two possible worlds:
$w_1= \{t_1, t_3\}$ and $w_2= \{t_2\}$, both with probability $\frac{1}{2}$ (correspondingly, the possible worlds representing the other subsets of $\{t_1, t_2, t_3\}$ have probability 
$0$).
Observe that  there is no other way to interpret the database, while making the constraint 
satisfied in each possible world, and the probabilities of the possible worlds compatible w.r.t. the marginal probabilities of $t_1$, $t_2$, $t_3$.
Thus, the database is consistent and has a unique admissible interpretation.\\
Now, evaluating the above-defined query $q$ over all the admissible interpretations of  the database yields the answer \emph{true} with probability $\frac{1}{2}$ 
(which is the probability of $w_1$, the only
non-zero-probability world, in the unique admissible interpretation, where $q$ evaluates to \emph{true}).
Note that, if $ic$ were disregarded and $q$ were evaluated using the independence
assumption, the answer of $q$ would be \emph{true} with probability $\frac{1}{4}$.\\[4pt]\noindent
\textbf{\textit{Case 3 (Multiple admissible interpretations):}}\ \
$p_1=\frac{1}{2}$;\ \
$p_2=\frac{1}{4}$;\ \
$p_3=\frac{1}{2}$.\\
In this case, we can conclude that the database is consistent, as it admits at least 
the interpretations $I_1$ and $I_2$ represented in the two rows of the following table (each cell 
is the probability of the possible world reported in the column header).

\begin{figure*}[!h]
\small\centering
\begin{tabular}{r||c|c|c|c|c|c|c|c||}
\hhline{~|t:========:t|}
&\raisebox{0cm}[3.2mm][1.1mm]{\textbf{\textit{$\hspace*{-1mm}\emptyset\hspace*{-1mm}$}} }     & 
\textbf{\textit{$\hspace*{-1mm}\{ t_1 \}\hspace*{-1mm}$}}      & 
\textbf{\textit{$\hspace*{-1mm}\{ t_2 \}\hspace*{-1mm}$}}   & 
\textbf{\textit{$\hspace*{-1mm}\{ t_3 \}\hspace*{-1mm}$}} & 
\textbf{\textit{$\hspace*{-1mm}\{ t_1, t_2 \}\hspace*{-1mm}$}}     & 
\textbf{\textit{$\hspace*{-1mm}\{ t_1, t_3 \}\hspace*{-1mm}$}}   & 
\textbf{\textit{$\hspace*{-1mm}\{ t_2, t_3 \}\hspace*{-1mm}$ }}   & 
\textbf{\textit{$\hspace*{-1mm}\{ t_1, t_2, t_3 \}\hspace*{-1mm}$}}\\
\hhline{~|:========:|}
$\hspace*{-1mm}I_1$&
{$0$} & 
{$\nicefrac{1}{4}$} &
$\nicefrac{1}{4}$ & 
$\nicefrac{1}{4}$ & 
$0$ & 
$\nicefrac{1}{4}$ & 
$0$ & 
$0$\\
\hhline{~||--------||}
$\hspace*{-1mm}I_2$&
{$\nicefrac{1}{4}$} & 
$0$ & 
$\nicefrac{1}{4}$ & 
$0$ & 
$0$ & 
$\nicefrac{1}{2}$ & 
$0$ & 
$0$\\
\hhline{~|b:========:b|}
\end{tabular}
\end{figure*}

With a little effort, the reader can check that there are infinitely many ways of interpreting
the database while satisfying the constraints:
each interpretation can be obtained by assigning to the possible world $\{t_1, t_3\}$ a
different probability in the range $[\frac{1}{4}, \frac{1}{2}]$, 
and then suitably modifying the probabilities of the other possible worlds where $ic$ is 
satisfied.
Basically, the interpretations $I_1$ and $I_2$ correspond to the two extreme possible scenarios
where, compatibly with the integrity constraint $ic$,
a strong negative or positive correlation exists between $t_1$ and $t_3$.
The other interpretations correspond to scenarios where an
``intermediate'' correlation exists between $t_1$ and $t_3$.
Thus, differently from the previous case, there is now more than one admissible
interpretation for the database.\\
Observe that, in the absence of any additional information about the actual correlation among the tuples of \emph{room}$^p$, all of the above-described admissible interpretations are equally 
reasonable.
Hence, when evaluating queries, we use a ``cautious'' paradigm, where all the admissible interpretations are taken into account -- meaning that no assumption on the actual correlations
among tuples is made, besides those which are derivable from the integrity constraints.
Thus, according to this paradigm, 
the answer of query $q$ is \textit{true} 
with a probability range $[\frac{1}{4}, \frac{1}{2}]$ (where the boundaries of this range are 
the overall probabilities assigned to the possible worlds containing both $t_1$ and $t_3$ by 
$I_1$ and $I_2$).
As pointed out in the discussion of Case 2, if the independence assumption were adopted (and $ic$
disregarded), the answer of $q$ would be \emph{true} with probability $\frac{1}{4}$, which is 
the left boundary of the probability range got as cautious answer.

  
\vspace*{-1mm}
\subsection*{Main contribution}
We address the following two fundamental problems:
\begin{enumerate}
\item[$1)$] 
\emph{\underline{Consistency checking}}: 
the problem of deciding the consistency of a PDB w.r.t. a given set of denial constraints,
that is deciding if there is at least one admissible interpretation of the data. 
This problem naturally arises when integrity constraints are considered over PDBs:
the information encoded in the data (which are typically uncertain) may be in contrast with the information encoded in the constraints (which are typically certain, as they express 
well-established knowledge about the data domain).
Hence, detecting possible inconsistencies arising from the co-existence of certain and
uncertain information is relevant in several contexts,
such as query evaluation, data cleaning and repairing.

In this regard, our contribution consists in a thorough characterization of the complexity 
of this problem.
Specifically, after noticing that, in the general case, this problem is $N\!P$-complete (owing 
to its interconnection to the probabilistic version of SAT),
we identify several islands of tractability,
which hold when either:
\begin{enumerate}
\item[$i)$] 
the conflict hypergraph (i.e., the hypergraph whose edges are the sets of tuples which 
can not coexist according to the constraints) has some structural property (namely, it is a 
hypertree or a ring), or
\item[$ii)$]
the constraints have some syntactic properties (independently from the shape of the conflict
hypergraph).
\end{enumerate}

\item[$2)$] 
\emph{\underline{Query evaluation}}:
the problem of evaluating queries over a database which is consistent w.r.t. a
given set of denial constraints.
Query evaluation relies on the ``cautious'' paradigm described in Case 3 of the motivating 
example above, which takes into account all the possible ways of interpreting the data in 
accordance with the constraints.
Specifically, query answers consist of pairs $\langle t, r_p\rangle$, 
where $t$ is a tuple and $r_p$ a range of probabilities.
Therein, $r_p$ is the narrowest interval containing all the probabilities which
would be obtained for $t$ as an answer of the query when considering all the admissible 
interpretations of the data (and, thus, all the correlations among the data compatible with 
the constraints).

For this problem, we address both its decisional and search versions, 
studying the sensitivity of their complexity to the specific constraints imposed on the data
and the characteristics of the query.
We show that, in the case of general conjunctive queries, the query evaluation
problem is $F\!P^{N\!P[\log n]}$-hard and in $F\!P^{N\!P}$ (note that $F\!P^{N\!P}$ is contained 
in $\#P$, the class for which the query evaluation problem under the independence assumption 
is complete).
Moreover, we identify tractable cases where the query evaluation problem is in \emph{PTIME}, 
which depend on the characteristics of the query and, analogously to the case of the consistency checking problem, on either the syntactic form of the constraints or on some structural properties of the conflict hypergraph.
\end{enumerate}


Moreover, we consider the following extensions of the framework and discuss their impact on the above-summarized results:
\begin{enumerate}
\item[$A)$] 
\emph{tuples are associated with probability ranges, rather than single probabilities}:
this is useful when the data acquisition process is not able to assign a precise probability
value to the tuples~\cite{LakshmananLRS97,Luk01};
\item[$B)$]  
\emph{also denial constraints are probabilistic}: 
this allows also the domain knowledge encoded by the constraints to be taken into account as 
uncertain;
\item[$C)$] 
\emph{pairs of tuples are considered independent unless this contradicts the constraints}:
this is a way of interpreting the data in between adopting tuple-independence and rejecting it, and is well suited for those cases where one finds it reasonable to assume some groups of
tuples as independent from one another.
For instance, if we consider further tuples pertaining to a different  hotel in the introductory example (where constraints involve tuples over the same hotel), it may be reasonable to assume that these tuples encode events independent from those pertaining 
hotel $1$.
\end{enumerate}

\section{Fundamental notions}
\label{sec:FundamentalNotions}
\subsection{Deterministic Databases and Constraints}
We assume classical notions of database schema, relation schema, and relation instance.
Relation schemas will be represented by sorted predicates of the form $R(A_1, \dots, A_n)$, 
where $R$ is said to be the name of the relation schema and $A_1, \dots, A_n$ are attribute
names composing the set denoted as \emph{Attr}$(R)$.
A tuple over a relation schema $R(A_1, \dots, A_n)$ is a member of 
$\Delta_1\times \dots \times \Delta_n$, where each $\Delta_i$ is the domain of 
attribute $A_i$ (with $i\in [1..n]$).
A relation instance of $R$ is a set $r$ of tuples over $R$.
A database schema $\D$ is a set of relation schemas, and a database instance $D$ of 
$\D$ is a set of relation instances of the relation schemas of $\D$.
Given a tuple $t$, the value of attribute $A$ of $t$ will be denoted as $t[A]$.

A \emph{denial constraint} over a database schema $\D$ is of the form
$\forall \vec{x}. \neg [R_1(\vec{x}_1)\wedge\dots\wedge R_m(\vec{x}_m)\wedge\phi(\vec{x})]$, where:\vspace*{-2mm}
\begin{list}{--}
     {\setlength{\rightmargin}{0mm}
      \setlength{\leftmargin}{3mm}
      \setlength{\itemindent}{-1mm} 
			\setlength{\itemsep}{0mm}}
\item
$R_1,\dots,R_m$ are name of relations in $\D$;
\item
$\vec{x}$ 
is a tuple of variables
and $\vec{x}_1,\dots,\vec{x}_m$ are tuples of variables and constants such that 
$\vec{x}=\mbox{Var}(\vec{x}_1)\cup\dots\cup\mbox{Var}(\vec{x}_m)$, 
where $\mbox{Var}(\vec{x}_i)$ denotes the set of variables in $\vec{x}_i$;
\item
$\phi(\vec{x})$ is a conjunction of built-in predicates of the form
$x \diamond y$, where $x$ and $y$ are either variables in $\vec{x}$ or constants, and 
$\diamond$ is a comparison operator in $\{=, \neq,\leq,\geq,<,> \}$.
\end{list}

\vspace*{-1mm}$m$ is said to be the \textit{arity} of the constraint.
Denial constraints of arity $2$ are said to be \emph{binary}.
For the sake of brevity, 
constraints will be written in the form: 
$\neg [R_1(\vec{x}_1)\wedge\dots\wedge R_m(\vec{x}_m)\wedge\phi(\vec{x})]$, thus omitting 
the quantification $\forall \vec{x}$.

We say that a denial constraint $ic$ is \emph{join-free} if no 
variable occurs in two distinct relation atoms of $ic$,
and, for each built-in predicate occurring in $\phi$,
at least one term is a constant.
Observe that join-free constraints allow multiple occurrences of the same relation name.

It is worth noting that denial constraints enable \emph{equality generating dependencies} 
(EGDs) to be expressed: an EGD is a denial constraint where all the conjuncts of $\phi$
are not-equal predicates.
Obviously, this means that a denial constraints enables also a functional dependency (FD) to be expressed, as an FD is a binary EGDs over a unique relation (when referring to FDs, we consider also non-canonical ones, i.e., FDs whose RHSs contain multiple attributes).

Given an instance $D$ of the database schema $\D$ and an integrity constraint $ic$ over $\D$, 
the fact that $D$ satisfies (resp., does not satify) $ic$ is denoted as 
$D\models ic$ (resp., $D\not\models ic$) and is defined in the standard way.
$D$ is said to be consistent w.r.t. a set of integrity constraints $\IC$, denoted 
with $D\models \IC$, iff $\forall ic \in\IC\,D\models ic$ .

\begin{example}\label{ex:deterministicDB-IC}
Let $\D$ be the (deterministic) database schema consisting of the relation schema 
\emph{Room(Id, Hid, Price, Type, View)}, obtained by removing the probability attribute 
from the relation schema of our motivating example.
Assume the following denial constraints over $\D$:\vspace*{-1mm}
\begin{list}{--}
     {\setlength{\rightmargin}{0mm}
      \setlength{\leftmargin}{3mm}
      \setlength{\itemindent}{-1mm} 
			\setlength{\itemsep}{-1mm}}
\item[$ic$:]
$\neg [$\emph{Room}$(x_1,x_2,x_3,$ \emph{`Std'}$,x_4)\wedge$
        \emph{Room}$(x_5,x_2,x_6,$ \emph{`Suite'} $,x_7)\wedge\ x_3>x_6]$,
saying that, in the same hotel, 
there can not be standard rooms more expensive than suites;
\item[$ic'$:]
$\neg [$\emph{Room}$(x_1,x_2,x_3,x_4,x_5)\wedge$ \emph{Room}$(x_6,x_2, x_7,x_4,x_8)\wedge x_3\neq x_7]$,
imposing that rooms of the same type and hotel have the same price.
Thus, $ic'$ is the FD: \emph{HId}, \emph{Type}$\rightarrow$ \emph{Price}.
\end{list}
\vspace*{-1mm}
\noindent
where $ic$ is the constraint presented in the introductory example.
Consider the relation instance \emph{room} of \emph{Room}, obtained from the 
instance \emph{room}$^p$ of the motivating example by removing column \emph{P}.
It is easy to see that \emph{room} satisfies $ic'$, but does not 
satisfy $ic$, since, for the same hotel, the price of standard rooms 
(rooms $1$ and $3$) is greater than that of suite room $2$.
\boxx
\end{example}

\subsection{Hypergraphs and hypertrees}
\label{sec:hypergraphs}
A hypergraph is a pair $H = \la N, E \ra$, where $N$ is a set of \emph{nodes}, 
and $E$ a set of subsets of $N$, called the \emph{hyperedges} of $H$. 
The sets $N$ and $E$ will be also denoted as $N(H)$ and $E(H)$, respectively.
Hypergraphs generalize graphs, as graphs are hypergraphs 
whose hyperedges have exactly two elements (and are called \emph{edges}).
Examples of hypergraphs are depicted in Figure~\ref{fig:esempioipergrafi}.

Given a hypergraph $H = \la N, E \ra$ and a pair of its nodes $n_1$, $n_2$, a 
\textit{path} connecting  $n_1$ and $n_2$ is a sequence $e_1$, $\dots$, $e_m$ of 
distinct hyperedges of $H$ (with $m\geq 1$) such that 
$n_1\in e_1$, $n_2\in e_m$ and, 
for each $i\in [1..m-1]$, $e_i\cap e_{i+1}\neq \emptyset$.
A path connecting $n_1$ and $n_2$ is said to be \emph{trivial} if $m=1$, that is,
if it consists of a single edge containing both nodes. 

Let $\mathcal{R}=e_1, \dots, e_m$ be a sequence of hyperedges.
We say that $e_i$ and $e_j$ are \emph{neighbors} if $j=i+1$, or $i=m$ and $j=1$ 
(or: if $i=j+1$, or $i=1$ and $j=m$).
The sequence $\mathcal{R}$ is said to be a \emph{ring} if:
$i)$ $m \geq 3$;
$ii)$ for each pair $e_i$, $e_j$ ($i\neq j$), it holds that 
$e_i \cap e_j\neq \emptyset$ if and only if $e_i$ and $e_j$ are neighbors.
An example of ring is depicted in Figure~\ref{fig:esempioipergrafi}(b).
It is easy to see that the definition of ring collapses to the definition of cycle in the
case that the hypergraph is a graph.

The nodes appearing in a unique edge will be said to be \emph{ears} of that edge. 
The set of ears of an edge $e$ will be denoted as \emph{ears}$(e)$.
For instance, in Figure~\ref{fig:esempioipergrafi}(a), 
\emph{ears}$(e_1)=\{t_2\}$ and \emph{ears}$(e_3)=\emptyset$.

A set of nodes $N'$ of $H$ is said to be an \emph{edge-equivalent set} if all the nodes 
in $N'$ appear altogether in the edges of $H$.
That is, for each $e\in E$ such that $e\cap N'\neq \emptyset$, it holds that $e\cap N'=N'$.
Equivalently, the nodes in $N'$ are said to be \emph{edge-equivalent}.
For instance, in the hypergraph of Figure~\ref{fig:esempioipergrafi}(b), $\{t_1, t_2\}$ is an 
edge-equivalent set, as both $t_1$ and $t_2$ belong to the edges $e_1$, $e_2$ only.
Analogously, in the hypergraph of Figure~\ref{fig:esempioipergrafi}(c), nodes $t_2$ and $t_3$ are
edge equivalent, while $\{t_2, t_3, t_4\}$ is not an edge-equivalent set.
Observe that sets of nodes which do not belong to any edge, as well as the ears of an edge (which belong to one edge only), are particular cases of edge-equivalent sets.

A hypergraph is said to be \textit{connected} iff, for each pair of its nodes,
a path connects them.
A hypergraph $H$ is a \textit{hypertree} iff it is connected and it 
satisfies the following \emph{acyclicity} property:
there is no pair of edges $e_1$, $e_2$ such that removing the nodes composing their 
intersection from every edge of $H$ results in a new hypergraph where the remaining nodes 
of $e_1$ are still connected to the remaining nodes of $e_2$.
An instance of hypertree is depicted in Figure~\ref{fig:esempioipergrafi}(c).
Observe that the hypergraph in Figure~\ref{fig:esempioipergrafi}(a) is not a hypertree, as 
the nodes $t_2$ and $t_6$ of $e_1$ and $e_2$, respectively, are still connected (through 
the path $e_1$, $e_3$, $e_2$) even if we remove node $t_1$, which is shared by $e_1$ and $e_2$.
It is easy to see that hypertrees generalize trees.
Basically, the acyclicity property of hypertrees used in this paper is the 
well-known $\gamma$-acyclicity property introduced in~\cite{Fagin83}.
In~\cite{DAtri84,Fagin83}, polynomial time algorithms for checking that a hypergraph is
$\gamma$-acyclic (and thus a hypertree) are provided.

\begin{figure}[!t]
\centering
\begin{tabular}{ccc}
\hspace*{5mm}
\includegraphics{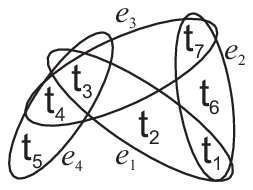} 
\hspace*{5mm}&
\hspace*{5mm}
\includegraphics{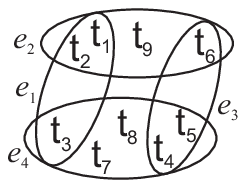}
\hspace*{5mm} &
\hspace*{5mm}
\includegraphics{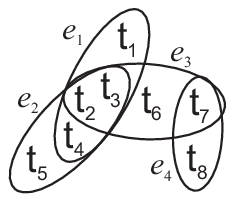}
\hspace*{5mm}\\
$(a)$ & $(b)$ & $(c)$
	\end{tabular}	
	\caption{An example of hypergraph (a), ring (b), hypertree (c)}
	\label{fig:esempioipergrafi}
\end{figure}

\section{PDB\lowercase{s} under integrity constraints}
\label{sec:PDBunderconstraints}

\subsection{Probabilistic Databases (PDBs)}
A probabilistic relation schema is a classical relation schema with a distinguished 
attribute $P$, called \emph{probability}, whose domain is the real interval $[0,1]$ 
and which functionally depends on the set of the other attributes.
Hence, a probabilistic relation schema has the form  $R^p(A_1, \dots, A_n,P)$.
A PDB schema $\D^p$ is a set of probabilistic relation schemas.
A probabilistic relation instance $r^p$ is an instance of $R^p$ and a PDB instance $D^p$ is an instance of $\D^p$.
We  use the superscript $p$ to denote probabilistic relation and database schemas, and their instances.
For a tuple $t\in D^p$, the value $t[P]$ is the probability that 
$t$ belongs to the real world.
We also denote $t[P]$ as $p(t)$.

Given a probabilistic relation schema $R^p$ (resp., relation instance $r^p$, probabilistic 
tuple $t$), we write $det(R^p)$ (resp., $det(r^p)$, $det(t)$) to denote its 
``deterministic'' part.
Hence, given $R^p(A_1, \dots, A_n,P)$, $det(R^p)=R(A_1, \dots, A_n)$, and
$det(r^p)=\pi_{Attr(det(R^p))}(r^p)$, and
$det(t)=$ $\pi_{Attr(det(R^p))}(t)$.
This definition is extended to deal with the deterministic part of PDB 
schemas and instances in the obvious way.

\subsubsection{Possible world semantics}
\label{sec:possworldsemantics}
The semantics of a PDB is based on \textit{possible worlds}.
Given a PDB $D^p$, a possible world is any subset of its 
deterministic part $det(D^p)$.
The set of possible worlds of $D^p$ is as follows:
$pwd(D^p)=\{w \ |\ w\subseteq det(D^p) \}$.
An $Pr$ interpretation of $D^p$ is a \emph{probability distribution function} (PDF) over 
the set of possible worlds $pwd(D^p)$ which satisfies the following property:
$$(i)\ \ 
\forall t\in D^p,\ \ 
p(t)=\hspace*{-3mm}\sum_{\scriptsize\begin{array}{c}w\in pwd(D^p)\\ \wedge\ det(t)\in w\end{array}}\hspace*{-3mm}Pr(w).$$



Condition $(i)$ imposes that the probability of each tuple $t$ of $D^p$ coincides with
that specified in $t$ itself.
Observe that, from definition of PDF, $Pr$ must also satisfy the following conditions:
$$(ii)\ 
\sum_{\scriptsize\begin{array}{c}w\in pwd(D^p)\end{array}} \hspace*{-2mm}Pr(w)=1;\hspace*{10mm}
(iii)\ \
\forall w\in pwd(D^p),\ Pr(w)\geq 0;\hspace*{10mm}$$
\noindent
meaning that $Pr$ assigns a non-negative probability to each possible world, and that the probabilities assigned by $Pr$ to the possible worlds sum up to $1$.

The set of interpretations of a PDB $D^p$ will be denoted as $\I(D^p)$.

Observe that, strictly speaking, possible worlds are sets of deterministic counterparts of
probabilistic tuples.
However, for the sake of simplicity, with a little abuse of notation, in the following 
we will say that a probabilistic tuple $t$ belongs (resp., does not belong) to a possible 
world $w$ -- written $t\in w$ (resp., $t\not\in w$) --  
if $w$  contains (resp., does not contain) the deterministic counterpart of $t$, 
i.e., $det(t)\in w$ (resp., $det(t)\not\in w$).
Moreover, given a deterministic tuple $t$, we will write
$p(t)$ to denote the probability associated with the probabilistic counterpart of $t$.
Thus, $p(t)$ will denote either $t[P]$, in the case that $t$ is a probabilistic tuple, or 
$t'[P]$, in the case that $t$ is deterministic and $t'$ is its probabilistic counterpart.

If independence among tuples is assumed, only one interpretation of $D^p$ is considered,
assigning to each possible world $w$ the probability
$Pr(w)\!=\!\prod_{t\in w} p(t) \times \prod_{t\not\in w} (1\!-\!p(t)).$
In fact, under the independence assumption, the probability of a conjunct of events 
is equal to the product of their probabilities.
In turn, queries over the PDB are evaluated by considering this unique 
interpretation.
In this paper, we consider a different framework, where independence among tuples is not
assumed, and all the possible interpretations are considered. 

\vspace*{-1mm}
\begin{example}\label{ex:probabilisticDB}
Consider the PDB schema $\D^p$ and its instance $D^p$ introduced in our
motivating example.
$D^p$ consists of the relation instance \emph{room}$^p$ reported in Figure~\ref{fig:room}.
Assume that $t_1$, $t_2$, $t_3$ have probabilities 
$p_1\!=\!p_2\!=\!p_3\!=\!1/2$, and disregard the integrity constraint defined in the motivating example.

Table~\ref{tab:PDFs} shows some interpretations of $D^p$.
$Pr_1$ corresponds to the interpretation obtained by assuming tuple independence.
Interpretation $Pr_5$, where $\epsilon$ is any real number in $[0,1/4]$, suffices to show that there are infinitely many interpretations of $D^p$. 
\boxx
\end{example}

\begin{table*}[!h]
\centering
\small
\begin{tabular}{c||l||c|c|c|c|c|c|c|c||l}
\multicolumn{2}{c}{}& \multicolumn{8}{c}{Possible worlds ($w$)} &\\
\hhline{~~|t:========:t|~}
\multicolumn{2}{c||}{}& \raisebox{0cm}[3.2mm][1.1mm]{\textbf{\textit{$\emptyset$}}} &
\textbf{\textit{$\{ t_1 \}$}}      & 
\textbf{\textit{$\{ t_2 \}$}}   & 
\textbf{\textit{$\{ t_3 \}$}} & 
\textbf{\textit{$\{ t_1, t_2 \}$}}     & 
\textbf{\textit{$\{ t_1, t_3 \}$}}   & 
\textbf{\textit{$\{ t_2, t_3 \}$ }}   & 
\textbf{\textit{$\{ t_1, t_2, t_3 \}$}}&
 \\
\hhline{~|t:=::========:|~}
\multirow{6}{*}{\begin{sideways}Interpretations\end{sideways}}&
\raisebox{0cm}[3.2mm][1.1mm]{\textbf{\textit{$Pr_1(w)$}}}&
$1/8$ & $1/8$ & $1/8$ & $1/8$ & $1/8$ & $1/8$ & $1/8$ & $1/8$&
\textbf{\large\}} \  Assuming tuple independence\\
\hhline{~||---------||~}
&\raisebox{0cm}[3.2mm][1.1mm]{\textbf{\textit{$Pr_2(w)$}}}&
$0$ & $1/2$ & $0$ & $0$ & $0$ & $0$ & $1/2$ & $0$&
\multirow{4}{*}{
\hspace*{-3mm}
\begin{tabular}{rl}
\multirow{4}{*}{
\hspace*{-6mm}
$\left. \begin{array}{c} 
					\raisebox{0cm}[3.2mm][1.1mm]{}\\ 
					\raisebox{0cm}[3.2mm][1.1mm]{}\\ 
					\raisebox{0cm}[3.2mm][1.1mm]{}\\ 
					\raisebox{0cm}[3.2mm][1.1mm]{}
				\end{array}  
 \right\rbrace$} &\\
& \hspace*{-4mm} Further interpretations\\ 
& \hspace*{-4mm} corresponding to different\\
& \hspace*{-4mm} correlations among tuples \\
& 
\end{tabular}
}\\
\hhline{~||---------||~}
&\raisebox{0cm}[3.2mm][1.1mm]{\textbf{\textit{$Pr_3(w)$}}}&
$0$ & $0$ & $1/2$ & $0$ & $0$ & $1/2$ & $0$ & $0$& \\
\hhline{~||---------||~}
&\raisebox{0cm}[3.2mm][1.1mm]{\textbf{\textit{$Pr_4(w)$}}}&
$0$ & $0$ & $0$ & $1/2$ & $1/2$ & $0$ & $0$ & $0$& \\
\hhline{~||---------||~}
&\raisebox{0cm}[3.2mm][1.1mm]{\textbf{\textit{$Pr_5(w)$}}}&
$1/2- 2\epsilon$ & $\epsilon$ & $\epsilon$ & $\epsilon$ & $0$ & $0$ & $0$ & $1/2- \epsilon$& \\
\hhline{~|b:=:b:========:b|~}
\end{tabular}
\vspace*{-2mm}\caption{Some interpretations of $D^p$}
\label{tab:PDFs}
\end{table*}



\subsection{Imposing denial constraints over PDBs}
An integrity constraint over  a PDB schema $\D^p$ is written as an integrity constraint over its deterministic part $det(\D^p)$.
Its impact on the semantics of the instances of $\D^p$ is as follows.
As explained in the previous section, a PDB $D^p$, instance of $\D^p$, may
have several interpretations, all equally sound.
However, if some constraints are known on its schema $\D^p$, some interpretations may
have to be rejected. 
The interpretations to be discarded are those ``in contrast'' with the domain
knowledge expressed by the constraints, that is, those assigning a non-zero probability
to worlds violating some constraint.

Formally, given a set of constraints $\IC$ on $\D^p$, an interpretation $Pr\in \I(D^p)$ is \emph{admissible} (and said to be a \emph{model for $D^p$ w.r.t. $\IC$}) if
$\sum_{w\in pwd(D^p)\wedge w\models \IC} Pr(w)=1$
(or, equivalently, if $\sum_{w\in pwd(D^p)\wedge w\not\models \IC} Pr(w)=0$).
The set of models of $D^p$ w.r.t. $\IC$ will be denoted as $\M(D^p,\IC)$.
Obviously, $\M(D^p,\IC)$ coincides with the set of interpretations $\I(D^p)$ if no integrity constraint is imposed ($\IC=\emptyset$), while, in general,  
$\M(D^p,\IC)\subseteq \I(D^p)$.
\begin{example}\label{ex:probabilisticIC}
Consider the PDB $D^p$ and the integrity constraint $ic$ introduced in our
motivating example.
Assume that all the tuples of \emph{room}$^p$ have probability $1/2$.
Thus, the interpretations for $D^p$ are those discussed in Example~\ref{ex:probabilisticDB} 
(see also Table~\ref{tab:PDFs}).
It is easy to see that \emph{room}$^p$ admits at least one model, namely
$Pr_3$ (shown in Table~\ref{tab:PDFs}), which assigns non-zero probability only to  
$w_1\!=\!\{t_2\}$ and $w_2\!=\!\{t_1,t_3\}$.
In fact, it can be proved that $Pr_3$ is the unique model of \emph{room}$^p$ w.r.t. $ic$,
since every other interpretation of \emph{room}$^p$, including $Pr_1$ where tuple independence is assumed, makes the constraint $ic$ violated in some non-zero probability world.
This example shows an interesting aspect of  denial constraints.
Although denial constraints only explicitly forbid the co-existence of tuples, they may implicitly entail the co-existence of tuples:
for instance, for the given probabilities of $t_1$, $t_2$, $t_3$, constraint $ic$
implies the coexistence of $t_1$ and $t_3$.
\boxx
\end{example}

\looseness-1
Example~\ref{ex:probabilisticIC} re-examines Case 2 of our motivating example, and shows a case where
the PDB is consistent and admits a unique model.
The reader is referred to the discussions of Case 1 and Case 3 of the motivating example 
to consider different scenarios, where the PDB is not consistent (Case 1), or is consistent and admits several models (Case 3).



\subsubsection{Modeling denial constraints as hypergraphs}
Basically, a denial constraint over a PDB restricts its models w.r.t. 
the set of interpretations, as it expresses the fact that some sets of tuples of $D^p$ are 
\emph{conflicting}, that is, they cannot co-exist: 
an interpretation is not a model if it assigns a non-zero probability to a possible
world containing these tuples altogether.
Hence, a set of denial constraints $\IC$ can be naturally represented as a 
\emph{conflict hypergraph}, whose nodes are the tuples of $D^p$ and where each hyperedge
consists of a set of tuples whose co-existence is forbidden by a denial constraint in $\IC$
(in fact, hypergraphs were used to model denial constraints also in several works
dealing with consistent query answers in the deterministic setting~\cite{ChomickiMS04}).
The definitions of \emph{conflicting tuples} and \emph{conflict hypergraph} are as
follows.

\begin{definition}[Conflicting set of tuples]
Let $\D^p$ be a PDB schema, 
$\IC$ a set of denial constraints on $\D^p$, and
$D^p$ an instance of $\D^p$.
A set $T$ of tuples of $D^p$ is said to be a \emph{conflicting set} w.r.t. $\IC$ if 
it is a minimal set such that any possible world  
containing all the tuples in $T$ violates $\IC$.
\end{definition}
\begin{example}
In Example~\ref{ex:probabilisticIC}, both $\{t_1,t_2\}$ and $\{t_2,t_3\}$ are conflicting sets of tuples w.r.t. $\IC=\{ic\}$, while $\{t_1,t_2,t_3\}$ is not, as it is not minimal.
\boxx
\end{example}

\begin{definition}[Conflict hypergraph]
Let $\D^p$ be a PDB schema, 
$\IC$ a set of denial constraints on $\D^p$, and
$D^p$ an instance of $\D^p$.
The conflict hypergraph  of $D^p$ w.r.t. $\IC$ 
is the hypergraph $HG(D^p, \IC)$ whose nodes are the tuples of $D^p$ and whose 
hyperedges are the conflicting sets of $D^p$ w.r.t. $\IC$.
\end{definition}

\begin{example}
Consider a database instance $D^p$ having tuples $t_1,\dots,t_9$, and a set of
denial constraints $\IC$ stating that 
$e_1\!=\!\{t_1,t_4, t_7\}$,
$e_2\!=\!\{t_1, t_2, t_3, t_4, t_5, t_6\}$,
$e_3\!=\!\{t_3,t_6, t_9\}$, and
$e_4\!=\!\{t_6,t_8\}$ are conflicting sets of tuples.
The conflict hypergraph $HG(D^p, \IC)$ in Figure~\ref{fig:ConflictHG} concisely represents this fact.\boxx
\end{example}

\noindent
It is easy to see that, if $\IC$ contains binary denial constraints only, 
then the conflict hypergraph collapses to a graph.

\begin{example}
\label{ex:HGesempioprima}
Consider $D^p$ and $\IC=\{ic\}$ of our motivating example -- observe that $ic$ is a binary denial 
constraint.
The graph representing $HG(D^p,\IC)$ is shown in Figure~\ref{fig:ConflictG}.
\boxx
\end{example}

\begin{figure}[h!]
\centering
\begin{minipage}[b]{5cm}
\includegraphics{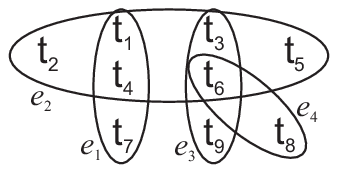}
\caption{A conflict hypergraph.}
\label{fig:ConflictHG}
\end{minipage}
\hspace*{0mm}
\begin{minipage}[b]{3.2cm}
\hspace*{4mm}
\includegraphics[width=2cm]{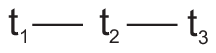}
\caption{Conflict graph of the motivating example}
\label{fig:ConflictG}
\end{minipage}
\end{figure}

It is easy to see that the size of the conflict hypergraph is polynomial w.r.t. 
the size of $D^p$ (in particular, its number of nodes is bounded by the number of tuples 
of $D^p$) and can be constructed in polynomial time w.r.t. the size of $D^p$.\\[-2pt]

\noindent
\textbf{Remark 1.}
Observe that the conflict hypergraph $H(D^p, \IC)$ corresponds to a representation 
of the dual lineage of the \emph{constraint query} $q_{\mathcal{IC}}$, i.e., 
the boolean query $q_{\scriptsize\mathcal{IC}}=\bigvee_{ic\in\IC} (\neg ic)$
which basically asks whether there is \emph{no} model for $D^p$ w.r.t. $\IC$.
For instance,  consider the case of Example~\ref{ex:probabilisticIC}.
A lineage of $q_{\scriptsize\mathcal{IC}}$ is the DNF expression: 
$(X_1 \wedge X_2) \vee (X_2 \wedge X_3)$, where each $X_i$ corresponds to tuple $t_i$.
Thus, the semantics of the considered constraints is captured by the dual lineage,
that is the CNF expression $(Y_1 \vee Y_2) \wedge (Y_2 \vee Y_3)$, where
each $Y_i=$ not$(X_i)$.
It is easy to see that the conflict hypergraph (as described in 
Example~\ref{ex:HGesempioprima}) is the hypergraph of this CNF expression.
In the conclusions (Section~\ref{sec:conclusions}), we will elaborate more on this  
relationship between conflict hypergraphs and (dual) lineages of constraint queries: exploiting this relationship may help 
to tackle the problems addressed in this paper from a different perspective.

%

\section{Consistency checking}
\label{sec:consistencychecking}
Detecting inconsistencies is fundamental for certifying the quality of the data and extracting 
reliable information from them.
In the deterministic setting, inconsistency typically arises from errors that occurred during
the generation of the data, as well as during their acquisition.
In the probabilistic setting, there is one more possible source of inconsistency, coming 
from the technique adopted for estimating the ``degree of uncertainty'' of the acquired 
information, which determines the probability values assigned to the probabilistic 
tuples.
Possible bad assignments of probability values can turn out when integrity constraints on the
data domain (which typically encode certain information coming from well-established knowledge of the domain) are considered.

In this section, we address the problem of checking this form of consistency, that is, the 
problem of checking whether the probabilities associated with the tuples are ``compatible'' 
with the integrity constraints defined over the data. 
It is worth noting that the study of this problem has a strong impact in several aspects of 
the management of probabilistic data: checking the consistency can be used during the
data acquisition phase (in order to ``certify'' the validity of the model applied for
determining the probabilities of the tuples), as well as a preliminary step of the computation
of the query answers. 
Moreover, it is strongly interleaved with the problem of repairing the data, whose  study is deferred to future work.

Before providing the formal definition of the consistency checking problem, we introduce some 
basic notions and notations.
Given a PDB schema $\D^p$, a set of integrity constraint $\IC$, and an instance $D^p$ of $\D^p$, we say that $D^p$ satisfies (resp., does not satisfy) $\IC$, denoted 
as $D^p\models \IC$ (resp., $D^p\not\models \IC$ ) iff the set of models $\M(D^p,\IC)$ is not
empty.
In the following, we will say ``\emph{consistent w.r.t.}'' 
(resp.,  ``\emph{inconsistent w.r.t.}'') meaning the same as ``\emph{satisfies}'' 
(resp.,  ``\emph{does not satisfy}'').



We are now ready to provide the formal definition of the consistency checking problem.
In this definition, as well as in the rest of the paper, we assume that a PDB schema $\D^p$ and a set of denial constraints $\IC$ over $\D^p$ are given.

\begin{definition}[Consistency Checking Problem (\cc)]\label{def:cc}
Given a PDB instance $D^p$ of $\D^p$,
the consistency checking problem (\cc) is deciding whether $D^p\models \IC$.
\end{definition}

We point out that, in our complexity analysis, $\D^p$ and $\IC$ will be assumed of fixed size, 
thus we refer to data complexity.

The following theorem states that \cc\ is $NP$-complete, and it easily derives from the 
interconnection of \cc\ with the $NP$-complete problem 
PSAT~\cite{Pap88} (\emph{Probabilistic satisfiability}), which is the generalization of SAT defined as follows:
``\emph{Let $S=\{C_1, \dots, C_m\}$ be a set of $m$ clauses, where each $C_i$ is a disjunction of literals (i.e, possibly negated propositional variables $x_1,\dots, x_n$)
and each $C_i$ is associated with a probability $p_i$. 
Decide whether $S$ is \emph{satisfiable}, that is, whether there is a probability distribution $\pi$ over all the $2^n$ possible truth assignments over $x_1,\dots, x_n$ 
such that, for each $C_i$, the sum of the probabilities assigned 
by $\pi$ to the truth assignments satisfying $C_i$ is equal to $p_i$.}''
Basically, the membership in $NP$ of \cc\ derives from the fact that any instance of \cc\
over a PDB $D^p$ can be reduced to an equivalent PSAT instance where:
$a)$ the propositional variables correspond to the tuples of $D^p$, 
$b)$ the constraints of \cc\ are encoded into clauses with probability $1$, 
$c)$ the fact that the tuples are assigned a probability is encoded into a clause for each tuple, with probability equal to the tuple probability.
As regards the hardness of \cc\ for $NP$, it intuitively derives from the fact
that the hardness of PSAT was shown in~\cite{Pap88} for the case that only unary clauses 
have probabilities different from $1$: 
thus, this proof can be applied on \cc, by mapping unary clauses to tuples and 
the other clauses (which are deterministic) to constraints
\footnote{
However, we will not provide a formal proof of the $N\!P$-hardness of \cc\ based on this reasoning, that is, based on reducing hard instances of PSAT to \cc\ instances. 
Indeed, a formal proof of the hardness will be provided for the theorems~\ref{theo:ic-arity-3}
and~\ref{theo:2FDs-cc} introduced in Section~\ref{sec:Syntactic-tractable-cases}, which 
are more specific in stating the hardness of \cc\ in that they say that \cc\ is $N\!P$-hard in the presence of denial constraints of some syntactic forms.}.

\begin{theorem}[Complexity of \cc]
\label{theo:CCisNPcomplete}
\cc\  is $NP$-complete.
\end{theorem}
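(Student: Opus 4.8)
The plan is to establish $NP$-completeness in two parts: membership in $NP$ and $NP$-hardness, both exploiting the tight connection between \cc\ and PSAT that the authors have already described in the text preceding the statement.

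\medskip

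\noindent\textbf{Membership in $NP$.} First I would argue that \cc\ reduces to PSAT, which is known to be in $NP$. Given an instance $D^p$ of $\D^p$ with $n$ tuples $t_1,\dots,t_n$, I would introduce one propositional variable $x_i$ per tuple $t_i$, with the intended reading that $x_i$ is true in a truth assignment exactly when $t_i$ belongs to the corresponding possible world. A truth assignment over $x_1,\dots,x_n$ then corresponds bijectively to a possible world $w\in pwd(D^p)$, and a probability distribution $\pi$ over the $2^n$ assignments corresponds exactly to an interpretation $Pr\in\I(D^p)$. I would then encode the two kinds of requirements as PSAT clauses: (a) for each tuple $t_i$, the unary clause $C_i = x_i$ with probability $p(t_i)$, which forces $\sum_{w\ni t_i}Pr(w)=p(t_i)$, i.e.\ condition $(i)$ of the definition of interpretation; and (b) for each conflicting set (hyperedge of the conflict hypergraph), a clause with probability $1$ forbidding the coexistence of those tuples, namely the disjunction of the negated variables of the tuples in the edge, which forces the total probability of worlds violating $\IC$ to be $0$ and hence captures admissibility. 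A probability distribution satisfying all these clauses is precisely an admissible interpretation, so $\M(D^p,\IC)\neq\emptyset$ iff the constructed PSAT instance is satisfiable. Since the conflict hypergraph is polynomial-size and computable in polynomial time (as noted after the definition of the conflict hypergraph), the reduction is polynomial, giving \cc\ $\in NP$.

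\medskip

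\noindent\textbf{$NP$-hardness.} Here I would reduce from PSAT to \cc\ using the refinement of Papadimitriou's result that the authors cite: PSAT remains $NP$-hard even when only the \emph{unary} clauses carry probabilities different from $1$. Given such a hard PSAT instance, I would build a PDB $D^p$ whose tuples correspond to the propositional variables, set each tuple's marginal probability equal to the probability of its associated unary clause, and translate every remaining (deterministic, probability-$1$) clause into a denial constraint over $D^p$. A disjunctive clause of the form $\bigvee_{i\in A} x_i \vee \bigvee_{j\in B}\neg x_j$ with probability $1$ asserts that no assignment (world) may falsify it, which is exactly a denial constraint forbidding the coexistence corresponding to its falsifying pattern; I would need to verify that such patterns are expressible as denial constraints over a suitable schema. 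The correspondence between distributions and interpretations established above then makes the PSAT instance satisfiable iff the constructed PDB is consistent w.r.t.\ the resulting constraints, completing the reduction.

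\medskip

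\noindent\textbf{Main obstacle.} The delicate point is the hardness direction rather than membership. Making the PSAT-to-\cc\ reduction rigorous requires ensuring that arbitrary probability-$1$ clauses can be faithfully realized as \emph{denial} constraints over a concrete relation schema---denial constraints naturally forbid coexistence of tuples (they encode the negation of a conjunction), so clauses containing positive literals $x_i$ translate cleanly, but clauses are disjunctions and one must check that the general mixed form is captured, respecting the syntactic shape of denial constraints (relation atoms plus built-in predicates). In fact the authors signal in their footnote that they prefer to defer the fully formal hardness argument to the later, sharper Theorems~\ref{theo:ic-arity-3} and~\ref{theo:2FDs-cc}, which pin down hardness for specific syntactic forms of constraints; so for this theorem I would present the reduction at the level of the hypergraph/clause correspondence and invoke those later results (or the PSAT connection) for the formal hardness, keeping the present proof at the sketch level that the surrounding exposition already supports.
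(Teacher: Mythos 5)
Your proposal is correct and, in its final form, coincides with the paper's own proof: membership is established by exactly the CC-to-PSAT reduction the paper sketches (variables for tuples, unary clauses carrying the tuple probabilities, probability-$1$ clauses for the conflicting sets), and hardness is ultimately delegated to the later syntactic-form results --- the paper's appendix proof states verbatim that hardness ``follows from Theorem~\ref{theo:2FDs-cc} (or, equivalently, from Theorem~\ref{theo:ic-arity-3})'', both of which are proved by direct reductions from \textsc{3-coloring}, not via PSAT.

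One point in your intermediate reasoning is inverted, and it is precisely the crux of why the PSAT-to-\cc\ direction cannot be formalized generically: you write that ``clauses containing positive literals $x_i$ translate cleanly,'' but the opposite is true. A ground denial constraint is the negation of a conjunction of tuples, so as a clause over tuple-variables it is a disjunction of \emph{negative} literals; hence only all-negative deterministic clauses are expressible as denial constraints. A probability-$1$ clause containing a positive literal would have to force the \emph{presence} of some tuple in every nonzero-probability world exhibiting a certain pattern, which no denial constraint can do. This is exactly the obstacle that makes the generic reduction from Papadimitriou's restricted PSAT fail (unless one additionally argues the hard instances use only all-negative deterministic clauses), and it explains the strategy actually used in the paper's Theorems~\ref{theo:ic-arity-3} and~\ref{theo:2FDs-cc}: there the constraints needed are all of the forbidden-coexistence form (FDs), and the ``at least one color per node'' requirement, which would naively need a positive clause, is instead enforced by the marginal probabilities --- three mutually exclusive tuples of probability $\frac{1}{3}$ each must jointly exhaust the probability space. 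Since your fallback is to invoke those theorems anyway, your proof stands, but the sentence quoted above should be corrected rather than left as a claim to ``verify.''
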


\looseness-1
In the following, we devote our attention to determining tractable cases of \cc, from two 
different perspectives.
First, in Section~\ref{sec:tractablecases}, we will show tractable cases which depend from 
the structural properties of the conflict hypergraph, and, thus, from how the 
data combine with the constraints.
The major results of this section are that \cc\ is tractable if the conflict hypergraph is 
either a hypertree or ring.
Then, in Section~\ref{sec:Syntactic-tractable-cases}, we will show syntactic conditions on the
constraints which make \cc\ tractable, independently from the shape of the conflict hypergraph.
At the end of the latter section, we also discuss the relationship between these two kinds 
of tractable cases.

\subsection{Tractability arising from the structure of the conflict hypergraph}
\label{sec:tractablecases}

It is worth noting that, since there is a polynomial-time reduction from \cc\ to PSAT,
the tractability results for PSAT may be exploited for devising efficient strategy for solving \cc.
In fact, in~\cite{Pap88}, it was shown that 2PSAT (where clauses are binary) can be solved in
polynomial time if the graph of clauses (which contains a node for each literal and an edge
for each pair of literals occurring in the same clause) is outerplanar.
This result relies on a suitable reduction of 2PSAT to a tractable instance of 2MAXSAT
(maximum weight satisfiability with at most two literals per clause).
Since, in the case of binary denial constraints, the conflict hypergraph is a graph and 
the above-discussed reduction of \cc\ to PSAT results in an instance of 2PSAT where the graph
of clauses has the same ``shape'' of our conflict graph, we have that \cc\ is polynomial-time solvable if denial constraints are binary and the conflict graph is outerplanar.
However, on the whole,  reducing 2PSAT to 2MAXSAT and then solving the obtained 2MAXSAT 
instance require a high polynomial-degree computation (specifically, the complexity is 
$O(n^6 \log n)$, where $n$ is the number of literals in the PSAT formula, corresponding
to the number of tuples in our case).

Here, we detect tractable cases of \cc, which, up to our knowledge, are not subsumed by 
any known tractability result for PSAT.
Our tractable cases have the following amenities:\vspace*{-1.5mm}
\begin{list}{--}
     {\setlength{\rightmargin}{0mm}
      \setlength{\leftmargin}{3mm}
      \setlength{\itemindent}{-1mm} 
			\setlength{\itemsep}{-1mm}}
\item
no limitation is put on the arity of the constraints;
\item
instead of exploiting reductions of \cc\ to other problems, we determine necessary and sufficient conditions which can be efficiently checked (in linear time) by only examining
the conflict hypergraph and the probabilities of the tuples. 
\end{list}

Our main results regarding the tractability arising from the structure of the conflict hypergraph (which will be given in sections~\ref{sec:TractabilityHypertrees} and 
\ref{sec:TractabilityRings}) are that consistency can be checked in linear time over the 
conflict hypergraph if it is either a hypertree or a ring.

\subsubsection{New notations and preliminary results}\label{sec:newNotations}
Before providing our characterization of 
tractable cases arising from the structure of the conflict hypergraph, we introduce some 
preliminary results and new notations.
Given a hypergraph $H=\langle N, E\rangle$ and a hyperedge $e\in E$, 
the set of intersections of $e$ with the other hyperedges of $H$ is denoted as 
$Int(e,H)=\{s\: |\: \exists e'\in E \mbox{ s.t. } e'\neq e \wedge s=e\cap e'\}$.
For instance, for the hypertree $H$ in Figure~\ref{fig:esempioipergrafi}(c), 
$Int(e_1,H)=\left\{  \{t_2, t_3\}, \{t_2, t_3, t_4\} \right\}$.
Moreover, given a set of sets $S$, we call $S$ a \emph{matryoshka} if 
there is a total ordering $s_1,\dots, s_n$ of its elements such that, 
for each $i,j \in [1..n]$ with $i<j$ it holds that $s_1\subset s_2 \subset \dots \subset s_n$.
For instance, the above-mentioned set $Int(e_1,H)$ is a matryoshka.
Finally, given a set of hyperedges $S$, we denote as $H^{-S}$ the hypergraph obtained from
$H$ by removing the edges of $S$ and the nodes in the edges of $S$ which do not belong to 
any other edge of the remaining part of $H$.
That is, $H^{-S}=\langle N', E'\rangle$, where $E'=E\setminus S$, 
$N'=\bigcup_{e\in E'} e$.
For instance, for the hypergraph $H$ in Figure~\ref{fig:esempioipergrafi}(a), 
$H^{-\{e_1\}}$ is obtained by removing $e_1$ from the set of edges of $H$, 
and $t_2$ from the set of its nodes.
Analogously, $H^{-\{e_1,e_2\}}$ will not contain edges $e_1$ and $e_2$, as well as 
nodes $t_1$, $t_2$, $t_6$.

The first preliminary result (Proposition \ref{prop:esistenzamatrioska}) states 
a general property of hypertrees:
any hypertree $H$ contains at least one edge $e$ which is attached to the rest of $H$ 
so that the set of intersections of $e$ with the other edges of $H$ is a matryoshka.
Moreover, removing this edge from $H$ results in a new hypergraph which is still a hypertree.
This result is of independent interest, as it allows for reasoning on hypertrees 
(conforming to the $\gamma$-acyclicity property) by using induction on the number of 
hyperedges: 
any hypertree with $x$ edges can be viewed as a hypertree with $x-1$ edges which has been augmented with a new edge, attached to the rest of the hypertree by means of sets
of nodes encapsulated one to another.

\begin{proposition}
\label{prop:esistenzamatrioska}
Let $H=\langle N,E\rangle$ be a hypertree.
Then, there is at least one hyperedge $e\in E$ such that 
$Int(e, H)$ is a matryoshka.
Moreover, $H^{-\{e\}}$ is still a hypertree.
\end{proposition}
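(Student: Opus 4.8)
The plan is to treat the two assertions separately, because the second one (that $H^{-\{e\}}$ is again a hypertree) follows almost for free once the matryoshka edge $e$ is in hand, so that the real work lies in proving existence. First I would dispose of the degenerate case in which $H$ has a single edge $e$: there $Int(e,H)=\emptyset$ is vacuously a matryoshka and $H^{-\{e\}}$ is the empty hypergraph, treated as a trivial base case. So assume $H$ has at least two edges; since $H$ is connected, every edge then meets another, hence $Int(e,H)\neq\emptyset$ for all $e$. Now suppose $Int(e,H)$ is a matryoshka with top element $s=e\cap e_j$. Since a chain has its union equal to its top, $\bigcup Int(e,H)=s\subseteq e_j$; that is, \emph{every} node $e$ shares with any other edge already lies in the single edge $e_j$. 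Consequently $e$ is not a cut edge: in any path of $H$ traversing $e$, the two nodes at which the path enters and leaves $e$ are shared nodes, hence both lie in $e_j$, so the passage through $e$ can be rerouted through $e_j$; this keeps $H^{-\{e\}}$ connected. Acyclicity is inherited for free, since a violating pair $f_1,f_2$ in $H^{-\{e\}}$ would connect the residuals of $f_1,f_2$ after deleting $f_1\cap f_2$, and as the residual hypergraph of $H^{-\{e\}}$ is a sub-hypergraph of that of $H$ (we only deleted $e$ and its ears), the same connection would survive in $H$, contradicting the acyclicity of $H$. Thus $H^{-\{e\}}$ is a hypertree.

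The heart of the argument is producing a matryoshka edge, which I would do by contradiction using the operational form of $\gamma$-acyclicity: for every pair of \emph{intersecting} edges, deleting their intersection separates them. The first, easy observation is that if an edge $e$ has two incomparable intersections $e\cap e_1$ and $e\cap e_2$ arising from edges $e_1,e_2$ that \emph{themselves} intersect, then acyclicity already fails. Indeed, pick $a\in(e\cap e_1)\setminus e_2$ and $b\in(e\cap e_2)\setminus e_1$; neither lies in $e_1\cap e_2$, so after deleting $e_1\cap e_2$ both survive inside $e$, and the residuals of $e_1$ and $e_2$ remain connected through $e$ -- a contradiction. Hence in a hypertree any witnessing pair of incomparable intersections must come from \emph{disjoint} edges.

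To exhibit a matryoshka edge I would then argue extremally on the intersection graph $G$ (nodes $=$ edges of $H$, adjacency $=$ nonempty intersection), which is connected. Take a longest simple path $f_0,\dots,f_k$ in $G$ and examine the endpoint $f_0$; by maximality every $G$-neighbour of $f_0$ lies on this path. If $f_0$ were not a matryoshka it would have two incomparable intersections with neighbours $f_p,f_q$ ($p<q$) which, by the previous paragraph, are disjoint and therefore non-consecutive on the path. The subpath $f_q,f_{q-1},\dots,f_p$, together with $f_0$ (which meets both $f_p$ and $f_q$), provides a ``go-around'' connecting the residuals of $f_0$ and $f_p$ after deleting $f_0\cap f_p$, again violating acyclicity; so $f_0$ is matryoshka. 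The main obstacle is making this last step fully rigorous, namely guaranteeing that the transit nodes along the subpath can be chosen outside the separator $f_0\cap f_p$. This can fail only if some consecutive intersection on the subpath is swallowed by $f_0\cap f_p$, which would force extra chords at $f_0$; I would eliminate that possibility by choosing the witnessing pair $f_p,f_q$ with $q-p$ minimal (an innermost chord), so that no intermediate edge shortcuts into the separator. Carrying out this bookkeeping -- equivalently, extracting a genuine $\gamma$-cycle from a hypergraph in which no edge is a matryoshka -- is the delicate part, and is where the $\gamma$-acyclicity characterisations of~\cite{Fagin83,DAtri84} can be invoked should a direct argument become cumbersome.
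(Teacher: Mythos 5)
Your base case, your proof that $H^{-\{e\}}$ remains a hypertree once a matryoshka edge $e$ is in hand, and your opening observation (two incomparable intersections of $e$ coming from edges that themselves intersect already violate $\gamma$-acyclicity) are all correct. But the existence of a matryoshka edge---which is the entire content of the proposition---is not established, and the gap sits exactly where you flag it. Two things go wrong in the ``go-around'' step. First, to violate acyclicity via the pair $(f_0,f_p)$ you need a \emph{remaining} node of $f_p$ after deleting $f_0\cap f_p$; this fails when $f_p\subseteq f_0$, a configuration your setup does not exclude (once $f_p\cap f_q=\emptyset$, incomparability of $f_0\cap f_p=f_p$ with $f_0\cap f_q$ is automatic, so such an $f_p$ is a perfectly good witness). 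Second, and more seriously, the claim that choosing $f_p,f_q$ with $q-p$ minimal forces every consecutive intersection $f_i\cap f_{i+1}$ on the subpath to survive the deletion of $f_0\cap f_p$ is asserted, not proved: if $f_i\cap f_{i+1}\subseteq f_0\cap f_p$, you do get that $f_i$ and $f_{i+1}$ are neighbors of $f_0$, but minimality of $q-p$ only tells you that pairs of neighbors with smaller gap have \emph{comparable} intersections with $f_0$; nothing in the sketch converts the new chords into a witnessing (incomparable) pair of smaller gap, which is what a contradiction with minimality would require. So the extremal argument, as written, does not close.

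Falling back on ``the $\gamma$-acyclicity characterisations of~\cite{Fagin83,DAtri84}'' does not repair this either, and it is instructive to see why: the paper's own proof starts from precisely such a result (Fact~\ref{fact:dm}, due to~\cite{DAtri84}: every hypertree has an edge $e$ whose non-ear nodes form an edge-equivalent set, or which has a ``twin'' $e'$ intersecting the rest of the hypergraph in the same set as $e$), and that fact alone is \emph{not} sufficient. In the twin case one only gets $e\cap f\subseteq e\cap e'$ for every other edge $f$; these other intersections need not be nested among themselves, so $Int(e,H)$ need not be a matryoshka. Bridging exactly this is what the paper's induction on the number of hyperedges does: it builds a full elimination ordering in which every edge is matryoshka with respect to the hypergraph that remains, with a two-case insertion argument to place the edge supplied by Fact~\ref{fact:dm} at the right position. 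In other words, the step you deferred as ``bookkeeping'' is not a routine citation; it is the actual mathematical content of Proposition~\ref{prop:esistenzamatrioska}, and your proposal leaves it open.
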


As an example, consider the 
hypertree in Figure~\ref{fig:esempioipergrafi}(c).
As ensured by Proposition~\ref{prop:esistenzamatrioska}, this hypertree contains the 
edge $e_1$ whose set of intersections with the other edges is 
$\{\{t_2, t_3\},$ $\{t_2,t_3,t_4\}\}$, which is a matryoska.
Moreover, removing $e_1$ from the set of hyperedges, and the ears of $e_1$ from the set of
nodes, still yields a hypertree.
The same holds for $e_2$ and $e_4$, but not for $e_3$.

The second preliminary result (which will be stated as Lemma~\ref{lem:pmin1/2})
regards the minimum probability that a set of tuples co-exist
according to the models of the given PDB.
Specifically, given a set of tuples $T$ of the PDB 
$D^p$, we denote this minimum probability as $p^{\min}(T)$, whose formal definition is as
follows:
$$p^{\min}(T)=
\begin{array}{c}
 \min\\ 
 \mbox{\scriptsize$Pr\in\M(D^p, \IC)$}
\end{array}
\left\{
\begin{array}{cl}
\sum & \hspace*{-2mm}Pr(w)\\
\mbox{\scriptsize$w\in pwd(D^p)\wedge T\!\subseteq\!w$} & \\
\end{array}
\right\}
$$
The following example clarifies the semantics of $p^{\min}$. 
\begin{example}\label{ex:pmin}
Consider the case discussed in Example~\ref{ex:probabilisticDB} (the same as Case 2 of our
motivating example, but with $\IC=\emptyset$). 
Here, every interpretation is a model.
Hence, $p^{\min}(t_1,t_3)=0$, as there is an interpretation 
(for instance, $Pr_2$ or $Pr_4$ in Table~\ref{tab:PDFs}) which assigns probability 
$0$ to both the possible worlds $\{t_1,t_3\}$ and $\{t_1,t_2,t_3\}$ 
-- the worlds containing both $t_1$ and $t_3$.
However, if we impose $\IC=\{ic\}$ of the motivating example, 
we have that $p^{\min}(t_1,t_3)=1/2$, as according to $Pr_3$ 
(the unique model for the database w.r.t. $\IC$) the probabilities of worlds $\{t_1,t_3\}$ 
and $\{t_1,t_2,t_3\}$ are, respectively, $1/2$ and $0$ (hence, their sum is $1/2$).
\boxx
\end{example}


Lemma~\ref{lem:pmin1/2} states that, for any set of tuples $T=\{t_1,\dots,t_n\}$, independently 
from how they are connected in the conflict hypergraph, the probability that 
they co-exist, 
for every model, has a lower bound which is implied by their marginal probabilities.
This lower bound is $\max\left\{0, \sum_{i=1}^n p(t_i)-n+1\right\}$, which is exactly 
the minimum probability of the co-existence of $t_1,\dots,t_n$ in two cases:
$i)$ the case that $t_1,\dots,t_n$ are pairwise disconnected in the conflict hypergraph 
(which happens, for instance, in the very special case that $t_1,\dots,t_n$ are not involved in any constraint);
$ii)$
the case that the set of intersections of $T$ with the edges of $H$ is a matryoshka.
This is interesting, as it depicts a case of tuples correlated through constraints which 
behave similarly to tuples among which no correlation is expressed by any constraint.

\begin{lemma}
\label{lem:pmin1/2}
Let $D^p$ be an instance of $\D^p$ consistent w.r.t. $\IC$, 
$T$ a set of tuples of $D^p$, and let 
$H$ denote the conflict hypergraph $HG(D^p, \IC)$.
If either 
\emph{i)} the tuples in $T$ are pairwise disconnected in $H$, 
or 
\emph{ii)} $Int(T,H)$ is a matryoshka,
then
$p^{\min}(T)= \max\left\{0, \sum_{t\in T} p(t)-|T|+1\right\}$.
Otherwise, this formula provides a lower bound for $p^{min}(T)$.
\end{lemma}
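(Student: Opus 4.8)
The plan is to prove the two directions separately: first the universal lower bound (which also delivers the final ``Otherwise'' claim), and then the matching upper bound under hypotheses (i) and (ii).

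For the lower bound, fix any model $Pr\in\M(D^p,\IC)$; one exists because $D^p$ is consistent, so $p^{\min}(T)$ is in fact attained over the nonempty, compact polytope $\M(D^p,\IC)$. Writing $A_t$ for the event ``$t\in w$'', the Boole/Bonferroni inequality gives, for every such $Pr$,
\[
\sum_{\substack{w\in pwd(D^p)\\ T\subseteq w}} Pr(w)
=1-Pr\Big(\bigcup_{t\in T}\overline{A_t}\Big)
\ge 1-\sum_{t\in T}\big(1-p(t)\big)
=\sum_{t\in T}p(t)-|T|+1 .
\]
Since the left-hand side is also nonnegative, it is at least $\max\{0,\sum_{t\in T}p(t)-|T|+1\}$, and taking the minimum over $Pr$ yields $p^{\min}(T)\ge\max\{0,\sum_{t\in T}p(t)-|T|+1\}$. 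This uses no assumption whatsoever on $H$, so it already proves the last sentence of the statement.

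For equality under (i) or (ii) I would exhibit a model attaining the bound. The guiding object is the \emph{Fr\'echet-optimal coupling} of the presence indicators of the tuples of $T$: laying the ``absence masses'' $1-p(t)$ consecutively on $[0,1)$ produces a distribution $\mu$ over $\{0,1\}^{T}$ with the prescribed marginals $p(t)$ in which all tuples of $T$ are simultaneously present only on a set of measure exactly $\max\{0,\sum_{t\in T}p(t)-|T|+1\}$. The task is to realise $\mu$ as the $T$-marginal of a genuine model of the whole $D^p$. In case (i) the tuples of $T$ lie in pairwise distinct connected components of $H$, so the constraints never couple them; I would then sample, independently across components and conditionally on each presence bit drawn from $\mu$, the remaining tuples of that component from a consistent local interpretation (which exists by consistency of $D^p$), obtaining a global model whose $T$-co-existence is exactly that of $\mu$. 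In case (ii) the chain $s_1\subseteq\cdots\subseteq s_k$ of $Int(T,H)$ lets me peel $H$ by a Proposition~\ref{prop:esistenzamatrioska}-style induction, using the nested intersection pattern to arrange the absence masses so that on each atom of $[0,1)$ the present tuples of $T$, together with a suitable choice for the tuples outside $T$, never complete a conflicting edge.

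The main obstacle is precisely this last verification: proving that the constructed interpretation assigns zero probability to every constraint-violating world while still meeting the marginals of \emph{all} tuples of $D^p$, not only those in $T$. Meeting the marginals is routine bookkeeping; the delicate point is constraint satisfaction, and it is exactly where hypotheses (i) and (ii) are essential. Without them the Fr\'echet-optimal coupling may force an entire conflicting edge to be present on some atom, so the bound need not be tight (consistently with the lemma claiming only a lower bound in the general case). I expect the cleanest discharge to be the inductive peeling of $H$ along the matryoshka guaranteed by Proposition~\ref{prop:esistenzamatrioska}, reducing case (ii) to a base situation that behaves like the disconnected case (i).
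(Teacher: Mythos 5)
Your Bonferroni argument for the lower bound is correct and complete, and it covers the ``Otherwise'' clause more directly than the paper does (the paper obtains it by noting that dropping all constraints can only shrink $p^{\min}$ and then invoking the disconnected case). Your case~(i) construction is also sound: pairwise disconnected tuples lie in distinct maximal connected components, every hyperedge lives inside a single component, so coupling the presence bits of $T$ by the Fr\'echet-optimal coupling $\mu$ and filling in each component independently from the conditionals of a (marginalized) global model preserves all marginals and never gives positive mass to a constraint-violating world.

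The genuine gap is case~(ii). There your text is a plan, not a proof: you explicitly defer ``this last verification,'' and the mechanism you propose for discharging it --- peeling $H$ by a Proposition~\ref{prop:esistenzamatrioska}-style induction --- is not available here, because that proposition presupposes that $H$ is a hypertree and peels a \emph{hyperedge} $e$ with $Int(e,H)$ a matryoshka, whereas in the lemma $H$ is an arbitrary conflict hypergraph and $T$ is an arbitrary set of tuples, generally not a hyperedge. The paper's proof instead inducts on the tuples of $T$ themselves, ordered compatibly with the nesting so that every hyperedge containing the last tuple $t_n$ also contains all of $t_1,\dots,t_{n-1}$. Given, by induction, a full model $M$ of $D^p$ attaining the bound for $\{t_1,\dots,t_{n-1}\}$, it partitions the possible worlds into four classes ($W$: all of $t_1,\dots,t_n$ present; $W'$: $t_n$ present but some earlier tuple missing; $W''$: all of $t_1,\dots,t_{n-1}$ present but not $t_n$; $W'''$: $t_n$ and some earlier tuple both missing), shows that the excess $\epsilon=M(W)-\bigl(\sum_{i=1}^{n}p(t_i)-n+1\bigr)$ equals $M(W''')$, and then shifts mass $\epsilon$ from worlds of $W'''$ onto the same worlds with $t_n$ added, compensating by shifting $\epsilon$ from $W$ onto the same worlds with $t_n$ removed. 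This rerouting lowers the coexistence probability to the target value, leaves every tuple's marginal untouched, and --- this is exactly where the matryoshka hypothesis enters --- stays consistent: adding $t_n$ to a world missing some $t_i$ with $i<n$ cannot complete any hyperedge, since every hyperedge containing $t_n$ contains all the earlier $t_i$'s, and deleting $t_n$ can never fire a denial constraint. Some construction of this kind, which modifies a model of the \emph{whole} database rather than just the coupling on $T$, is what your proposal lacks; without it, the equality in case~(ii) --- the half of the lemma that the proof of Theorem~\ref{theo:tractableCC} actually relies on --- is not established.
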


\subsubsection{Tractability of hypertrees}\label{sec:TractabilityHypertrees}
We are now ready to state our first result on \cc\ tractability.
\begin{theorem}\label{theo:tractableCC}
Given an instance $D^p$ of $\D^p$, if $HG(D^p, \IC)$ is a hypertree, then $D^p\models \IC$
iff, for each hyperedge $e$ of $HG(D^p, \IC)$, it holds that
\begin{equation}\label{eq:tractableCC}
\sum_{t\in e} p(t) \leq |e|-1
\end{equation}
\end{theorem}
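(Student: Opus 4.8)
The plan is to prove the two directions of the biconditional separately. The forward direction, that each inequality~(\ref{eq:tractableCC}) is \emph{necessary} for consistency, needs no structural hypothesis and follows at once from Lemma~\ref{lem:pmin1/2}; the converse is where the hypertree structure is essential, and I would establish it by induction on the number of hyperedges, peeling off one edge at a time via Proposition~\ref{prop:esistenzamatrioska}. For necessity, suppose $D^p\models\IC$ and fix any model $Pr\in\M(D^p,\IC)$. Since each hyperedge $e$ is a conflicting set, every world containing all tuples of $e$ violates $\IC$ and is thus assigned probability $0$ by $Pr$; hence $p^{\min}(e)=0$. The lower bound guaranteed by Lemma~\ref{lem:pmin1/2} gives $p^{\min}(e)\ge\max\{0,\sum_{t\in e}p(t)-|e|+1\}$, so $\sum_{t\in e}p(t)-|e|+1\le 0$, which is exactly~(\ref{eq:tractableCC}).

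For sufficiency I would argue by induction on $|E(H)|$, where $H=HG(D^p,\IC)$ (tuples lying in no edge are unconstrained and can be appended at the end as independent Bernoulli variables, affecting neither marginals nor constraints). By Proposition~\ref{prop:esistenzamatrioska} pick an edge $e$ with $Int(e,H)$ a matryoshka and with $H^{-\{e\}}$ still a hypertree, and write $F=\text{ears}(e)$ and $S=e\setminus F$ for the nodes $e$ shares with the remaining edges. The edges of $H^{-\{e\}}$ form a subset of those of $H$, so they inherit~(\ref{eq:tractableCC}) and the induction hypothesis yields a model of the sub-PDB on $H^{-\{e\}}$. The key observation is that $Int(S,H^{-\{e\}})=Int(e,H)$ is a matryoshka, so Lemma~\ref{lem:pmin1/2} applies to $S$ and lets me select a model $Pr^{-}$ of $H^{-\{e\}}$ attaining $q:=p^{\min}(S)=\max\{0,\sum_{t\in S}p(t)-|S|+1\}$.

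It then remains to attach the ears $F$, which occur in no other constraint. Conditioning on whether $S$ is entirely present (an event of probability $q$ under $Pr^{-}$), I would add the variables of $F$ with conditional marginals $\beta_t$ inside that event and $\gamma_t$ outside it, so that $q\beta_t+(1-q)\gamma_t=p(t)$; taking $\gamma_t$ as large as possible forces $\beta_t=\max\{0,(p(t)-(1-q))/q\}$ (when $q=0$, $S$ never fully coexists and the ears are simply added independently). Using $\sum_{t\in S}p(t)=q+|S|-1$, inequality~(\ref{eq:tractableCC}) for $e$ rewrites as $\sum_{t\in F}p(t)\le|F|-q$, i.e. $\sum_{t\in F}(1-p(t))\ge q$, from which a one-line case split (either some $1-p(t)\ge q$, or all are $<q$) yields $\sum_{t\in F}\min\{1,(1-p(t))/q\}\ge 1$, equivalently $\sum_{t\in F}\beta_t\le|F|-1$. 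Because these prescribed marginals sum to at most $|F|-1$, a joint law on $\{0,1\}^{F}$ with marginals $\beta_t$ forbidding the all-present outcome exists: lay consecutive arcs of lengths $1-\beta_t$ around a circle of circumference $1$; since the lengths sum to at least $1$ the arcs cover the circle, so at every point at least one ear is absent. Combining this conditional law with $Pr^{-}$ produces a model in which no world contains all of $e$, closing the induction.

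The hard part will be this attachment step, and it rests on two things coming together: the identity $Int(S,H^{-\{e\}})=Int(e,H)$, which is what allows Lemma~\ref{lem:pmin1/2} to pin the coexistence probability $q$ of the shared nodes down to its extremal value $\max\{0,\sum_{t\in S}p(t)-|S|+1\}$, and the ensuing algebraic check that $\sum_{t\in F}\beta_t\le|F|-1$. Were $q$ not forced to this minimum---which is exactly where $\gamma$-acyclicity, through the matryoshka edge supplied by Proposition~\ref{prop:esistenzamatrioska}, enters---the ears could not in general be attached while keeping the forbidden world of $e$ at probability $0$, and the inductive construction would break down.
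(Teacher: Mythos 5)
Your proposal is correct and follows essentially the same route as the paper's proof: necessity via the lower bound of Lemma~\ref{lem:pmin1/2}, and sufficiency by induction on the number of hyperedges, peeling off a matryoshka edge via Proposition~\ref{prop:esistenzamatrioska}, invoking Lemma~\ref{lem:pmin1/2}(ii) to force the shared nodes $S$ to coexist with exactly their minimum probability $q$, and then exploiting inequality~(\ref{eq:tractableCC}) for $e$ rewritten as $\sum_{t\in F}p(t)\le |F|-q$ --- which is precisely the inequality $p^{\min}(t''_1,\dots,t''_n)\le \overline{p'}$ the paper derives. The only divergence is in how the final model is assembled: the paper merges a model of the sub-PDB with a model of the ears by aligning two sequences of worlds ordered by cumulative probability (a quantile-coupling argument), whereas you condition on the event that $S$ fully coexists, prescribe conditional marginals $\beta_t$, $\gamma_t$ for the ears, and realize the conditional law by the arc-covering construction on a circle; the two constructions are equivalent in effect, and yours is arguably the more explicit and self-contained of the two.
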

\begin{proof}
$(\Rightarrow)$:
We first show that if there is a model for $D^p$ w.r.t. $\IC$, then 
inequality (\ref{eq:tractableCC}) holds for each hyperedge of $HG(D^p, \IC)$.
Reasoning by contradiction, assume that 
$D^p\models\IC$ and there is an hyperedge $e=\{t_1,\dots, t_n\}$ 
of $HG(D^p, \IC)$ such that $\sum_{i=1}^n p(t_i)-n+1>0$.
Since this value is a lower bound for $p^{min}(t_1, \dots, t_n)$ (due to Lemma~\ref{lem:pmin1/2}),
it holds that every model $M$ for $D^p$ w.r.t. $\IC$ assigns a non-zero probability
to some possible world containing all the tuples $t_1, \dots, t_n$.
This contradicts that $M$ is a model, since any possible world containing $t_1,\dots,t_n$ 
does not satisfy $\IC$.\\
\noindent
$(\Leftarrow)$: We now prove that if inequality (\ref{eq:tractableCC}) holds for each 
hyperedge of $HG(D^p, \IC)$, then there is a model for $D^p$ w.r.t. $\IC$.
We reason by induction on the number of hyperedges of $HG(D^p, \IC)$.

The base case is when $HG(D^p, \IC)$ consists of a single hyperedge $e=\{t_1, \dots, t_{k}\}$.
Consider the same database $D^p$, but impose over it the empty set of denial constraints, instead of $\IC$.
Then, from Lemma~\ref{lem:pmin1/2} (case $i)$), we have that 
there is at least one model $M$ for $D^p$ (w.r.t. the empty set of constraints) such that
$\sum_{w\supseteq\{t_1,\dots,t_k\}} M(w)=\max\left\{0, \sum_{i=1}^k p(t_i)-k+1\right\}.$
The term on the right-hand side evaluates to $0$, as, from the hypothesis, we have that 
$\sum_{i=1}^k p(t_i)\leq k-1$.
Hence, $M$ is a model for $D^p$ also w.r.t. $\IC$, since the only constraint entailed by 
$\IC$ is that the tuples $t_1,\dots,t_k$ can not be altogether in any possible world with
non-zero probability.

We now prove the induction step. 
Consider the case that $H= HG(D^p, \IC)$ is a hypertree with $n$ hyperedges.
The induction hypothesis is that the property to be shown holds in the presence of any conflict 
hypergraph consisting of a hypertree with $n-1$ hyperedges.
Let $e$ be a hyperedge of $H$ such that $Int(e,H)$ is a matryoshka, and $H'=H^{-\{e\}}$ is a hypertree.
The existence of $e$ and the fact that $H'$ is a hypertree are guaranteed by 
Proposition~\ref{prop:esistenzamatrioska}.
We denote the nodes in $e$ as $t'_1,\dots, t'_m, t''_1, \dots, t''_n$, where 
$T'=\{t'_1,\dots, t'_m\}$ is the set of nodes of $e$ in $H'$, 
and $T''=\{t''_1, \dots t''_n\}$ are the ears of $e$.
Correspondingly, $D''$ is the portion of $D^p$ containing only the tuples $t''_1, \dots t''_n$, and $D'$ is the portion of $D^p$ containing all the other tuples (that is, the tuples 
corresponding to the nodes of $H'$).
We consider $D'$ associated with the set of constraints imposed by $H'$, and $D''$ associated 
with an empty set of constraints.

Thanks to the induction hypothesis, and to the fact that inequality (\ref{eq:tractableCC}) 
holds, we have that $D'$ is consistent w.r.t. the set of constraints encoded by $H'$.
Moreover, 
since $Int(e,H)$ is a matryoshka, we have that the set $T'$ is such that $Int(T',H')$ is 
a matrioshka too.
Hence, from Lemma~\ref{lem:pmin1/2} (case $ii$) we have that there is a model $M'$ for 
$D'$ w.r.t. $H'$ such that 
$\sum_{w\supseteq \{t'_1,\dots,t'_m\}} M'(w)= \max\left\{0,\right.$ $\left.\sum_{i=1}^m p(t'_i)-m+1\right\}$.
We denote this value as $p'$, and consider the case that $p'>0$ (that is, 
$p'= \sum_{i=1}^m p(t'_i)-m+1$ as the case that $p'=0$ can be proved analogously).
Since inequality (\ref{eq:tractableCC}) holds for every edge of $HG(D^p,\IC)$, the following
inequality holds for the tuples of $e$:
$\sum_{i=1..m} p(t'_i)+\sum_{i=1..n} p(t''_i)-m-n+1\leq 0$.
The quantity $m-\sum_{i=1..m} p(t'_i)$ is equal to $1-p'$, that is the overall probability 
assigned by $M'$ to the possible worlds of $D'$ not containing at least one tuple 
$t'_1,\dots, t'_m$. 
Denoting the probability $1-p'$ as $\overline{p'}$, the above inequality becomes 
$\sum_{i=1..n} p(t''_i)-n+1\leq \overline{p'}$.
Owing to Lemma~\ref{lem:pmin1/2} (case $i$), the term on the left-hand side corresponds to $p^{min}(t''_1,\dots,t''_n)$. 

Intuitively enough, this suffices to end the proof, as it means that, if we arrange the tuples 
$t''_1,\dots,t''_n$ according to a model $M''$ for $D''$ which minimizes the overall probability
of the possible worlds of $D''$ containing $t''_1,\dots,t''_n$ altogether, the portion of the
probability space invested to represent these worlds is less than the portion of the probability
space invested by $M'$ to represent the possible worlds of $D'$ not containing at least one
tuple among $t'_1,\dots,t'_m$.
For the sake of completeness, we formally show how to obtain a model for $D^p$ w.r.t. $\IC$ starting from $M'$ and $M''$.

First of all, observe that any interpretation $Pr$ can be represented as a sequence 
$S(Pr)=(w_1, p1), \dots, (w_k,p_k)$ where:
\begin{itemize}
\item
$w_1,\dots, w_k$ are all the possible worlds such that $Pr(w_i)\neq 0$ for each $i\in [1..k]$;
\item
$p_1=Pr(w_1)$;
\item
for each $i\in [2..n]$ $p_i=p_{i-1}+Pr(w_i)$ (that is, $p_i$ is the cumulative probability
of all the possible worlds in $S(M)$ occurring in the positions not greater than $i$).
In particular, this entails that $p_n=1$.
\end{itemize}
It is easy to see that many sequences can represent the same interpretation $Pr$, each
corresponding to a different permutation of the set of the possible worlds which are assigned 
a non-zero probability by $Pr$.

Consider the model $M'$, 
and let $\alpha$ 
be the number of possible worlds which are assigned by $M'$ a non-zero probability and which do not contain at least one tuple among $t'_1, \dots, t'_m$.
Then, take a sequence $S(M')$ such that the first $\alpha$ pairs are possible worlds not containing at least one tuple among $t'_1,\!\dots,\!t'_m$. 
In this sequence, denoting the generic pair occurring in it as $(w'_i,p'_i)$, 
it holds that $p'_{\alpha}= \overline{p'}$.

Analogously, consider the model $M''$ , and take any sequence $S(M'')$ where the first pair 
contains the possible world containing all the tuples $t''_1, \dots, t''_n$.
Obviously, denoting the generic pair occurring in $S(M'')$ as $(w''_i,p''_i)$
it holds that $p''_1= p^{min}(t''_1,\dots, t''_n)$ is less than or equal to $\overline{p'}$.

Now consider the sequence $S'= (w'''_1, p'''_1), \dots, '(w'''_k, p'''_k)$ defined as follows:
\begin{itemize}
\item
$p'''_1,\dots, p'''_k$ are the distinct (cumulative) probability values occurring in $S(M')$ and $S(M'')$, ordered by their values;
\item
for each $i\in[1..k]$, $w'''_i=w'_j\cup w''_l$, where $w'_j$ (resp., $w''_l$) is the 
possible world occurring in the left-most pair of $S(M')$ (resp., $S(M'')$) containing a 
(cumulative) probability value not less than $p'''_i$.
\end{itemize}

Consider the function $f$ over the set of possible worlds of $D^p$ defined as follows:
$$f(w)= \left\{\begin{array}{ll}
							 0 & \mbox{if $w$ does not occur in any pair of }S'\\
							p'''_1 & \mbox{if $w$ occurs in the first pair of }S'\\
							p'''_i-p'''_{i-1} & \mbox{if $w$ occurs in the } i\mbox{-th pair of }S' (i>1)
							\end{array}
							\right.$$
							
It is easy to see that $f$ is an interpretation for $D^p$.
In fact, by construction, it assigns to each possible world of $D^p$ a value in $[0,1]$, and 
the sum of the values assigned to the possible worlds is $1$. 
Moreover, the values assigned by $f$ to the possible worlds are compatible with the marginal 
probabilities of the tuples, since, for each tuple $t$ of $D'$, 
$\sum_{w'''|t\in w'''} f(w''')= \sum_{w'|t\in w'} M'(w')= p(t)$, as well as
for each tuple $t$ of $D''$, 
$\sum_{w'''|t\in w'''} f(w''')= \sum_{w''|t\in w''} M''(w'')= p(t)$.

In particular, $f$ is also a model for $D^p$ w.r.t. $\IC$:
on the one hand, $f$ assigns $0$ to every possible world containing tuples 
which are conflicting according to $H'$ (this follows from how $f$ was 
obtained starting from $M'$).
Moreover, $f$ assigns $0$ to every possible world containing tuples which are 
conflicting according to the hyperedge $e$.
In fact, the worlds containing all the tuples $t'_1,\dots,t'_m, t''_1,\dots,t''_n$ 
are assigned $0$ by $f$, since the worlds occurring in $S'$ containing 
$t''_1,\dots,t''_n$ do not contain at least one tuple among $t'_1,\dots,t'_m$ 
(this trivially follows from the fact that 
$\overline{p'}> p^{min}(t''_1, \dots, t''_n)$).
The fact that $f$ is a model for $D^p$ w.r.t. $\IC$ means that $D^p\models \IC$.
\end{proof}

The above theorem entails that, if $HG(D^p,\IC)$ is a hypertree, then \cc\
can be decided in time $O(|E|\cdot k)$ over $HG(D^p,$ $\IC)$, where $E$ is the set 
of hyperedges of $HG(D^p,\IC)$ and $k$ is the maximum arity of the constraints 
(which bounds the number of nodes in each hyperedge).
The number of hyperedges in a hypertree is bounded by the number of 
nodes $|N|$ (this easily follows from Proposition~\ref{prop:esistenzamatrioska}), thus $O(|E|\cdot k)=O(|N|\cdot k)$.
Interestingly, even if denial constraints of any arity were allowed, the consistency check
could be still accomplished over the conflict hypertree in polynomial time (that is, 
replacing $k$ with $|N|$, we would get the bound $O(|N|^2)$).

\begin{example}\label{ex:modelChecking}
Consider the PDB schme $\D^p$ consisting of 
relation scheme \emph{Person}$^p($\emph{Name}, \emph{Age}, \emph{Parent}, \emph{Date}, \emph{City}, \emph{P}$)$ representing some personal data obtained by integrating various sources. 
A tuple over \emph{Person}$^p$ refers to a person, and, in particular, attribute \emph{Parent}
references the name of one of the parents of the person, while \emph{City} is the city of residence of the person in the date specified in \emph{Date}.
Consider  the PDB instance $D^p$ consisting of the instance 
\emph{person}$^p$ of \emph{Person}$^p$ shown in Figure~\ref{fig:IperalberoReale}(a).

\begin{figure}[h!]
\centering
\begin{tabular}{cc}
{\small
\begin{tabular}{l||c|c|c|c|c|c||}
  \hhline{~|t:======:t|}
& \raisebox{0cm}[3.2mm][1.1mm]{\textbf{\textit{Name}}} &  
\textbf{\textit{Age}} & 
\textbf{\textit{Parent}} &
\textbf{\textit{Date}} &
\textbf{\textit{City}} &
\textbf{\textit{P}}\\
  \hhline{~|:======:|}
$t_1$ & \raisebox{0cm}[3mm][1mm]{A} & 40 &  B & 2010  & NY  &$p_1$ \\
  \hhline{~||------||}
$t_2$ & \raisebox{0cm}[3mm][1mm]{A}  & 40 & B & 2012  & Rome &$p_2$\\
  \hhline{~||------||}
$t_3$ & \raisebox{0cm}[3mm][1mm]{A}  & 40 & C & 2010  & NY &$p_3$  \\
  \hhline{~||------||}
$t_4$ & \raisebox{0cm}[3mm][1mm]{A}  & 40 & D & 2010  & NY &$p_4$  \\
  \hhline{~||------||}
$t_5$ & \raisebox{0cm}[3mm][1mm]{C} & 30 & E  & 2010  & NY   & $p_5$ \\
\hhline{~|b:======:b|}
\end{tabular}
}
&
\hspace*{10mm}
\raisebox{-.5\height}{\includegraphics[scale=0.9]{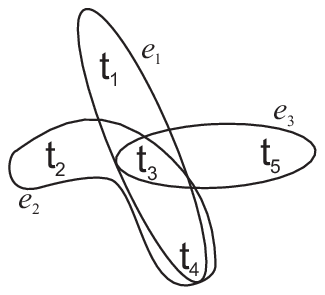}}\\
(a) & (b)\\
\end{tabular}
\caption{(a) PDB instance $D^p$; (b) Conflict hypergraph $HG(D^p,\IC)$}
\label{fig:IperalberoReale}
\end{figure}


Assume that $\IC$ consists of the following constraints defined over \emph{Person}$^p$:

\begin{itemize}
\item[$ic_1$:] 
$\neg \big[$ 
Person$(x_1,y_1,z_1,v_1,w_1)\,\wedge$ 
Person$(x_1,y_2,z_2,v_2,w_2)\,\wedge$ 
Person$(x_1,y_3,z_3,v_3,w_3)\,\wedge 
z_1\!\neq\! z_2 \wedge z_1\!\neq\!z_3 \wedge z_2\!\neq\!z_3
\big]$, 
imposing that no person has more than 2 parents;

\item[$ic_2$:]  
$\neg \big[$ 
Person$(x_1,y_1,z_1,v_1,w_1)\,\wedge$
Person$(z_1,y_2,z_2,v_2,w_2)\,\wedge y_1\!>\!y_2 \big]$, 
imposing that no person is older than any of her parents.
\end{itemize}
The conflict hypergraph $HG(D^p,\IC)$ is shown in Figure~\ref{fig:IperalberoReale}(b).
Here, the conflicting sets $e_1$, $e_2$ are originated by violations of $ic_1$, while $e_3$ 
is originated by the violation of $ic_2$.
It is easy to check that $HG(D^p,\IC)$ is a hypertree. 
In particular, observe that set of intersections of $e_1$ with the other hyper-edges of 
$HG(D^p,\IC)$, that is \emph{Int}$(e_1,HG(D^p,\IC))=\{\{t_3\}, \{t_3, t_4\} \}$, is a
matryoshka. 
Analogously, \emph{Int}$(e_2,HG(D^p,\IC))$ is matryoshka as well. 

Since $HG(D^p,\IC)$ is a hyper-tree, thanks to Theorem~\ref{theo:tractableCC},
we can conclude that $D^p$ is consistent iff the following inequalities hold:
$$p_1+p_3+p_4\leq 2;
\hspace*{1cm}
p_2+p_3+p_4\leq 2;
\hspace*{1cm}
p_3+p_4\leq 1.
$$
%
%
\boxx
\end{example}

Note that the condition of Theorem~\ref{theo:tractableCC} is 
a necessary condition for consistency in the presence of conflict hypergraphs of any shape, 
not necessarily hypertrees (in fact, in 
the proof of the necessary condition of Theorem~\ref{theo:tractableCC}, we did not use the
assumption that the conflict hypergraph is a hypertree).
The following example shows that this condition is not sufficient in general, in particular
when the conflict hypergraph contains ``cycles''.
\begin{example}
Consider the hypergraph $HG(D^p, \IC)$ obtained by augmenting the hypertree
in Figure~\ref{fig:ConflictHG} with the hyperedge $e_5=\{t_8, t_9\}$
(whose presence invalidates the acyclicity of the hypergraph).
Let the probabilities of $t_1,\dots,t_9$ be as follows:
\begin{center}
\begin{tabular}{|c||c|c|c|c|c|c|c|c|c|}
\hline
$t_i$	&$t_1$ 				& $t_2$ 	&$t_3$ 				&$t_4$ 				&$t_5$ 	&$t_6$ 				&$t_7$				 	&$t_8$ 				&$t_9$\\
\hline
$p(t_i)$	& \raisebox{0cm}[4mm][2mm]{$\frac{3}{4}$} 	& $1$		&$\frac{3}{4}$ 	&$\frac{3}{4}$ 	& $1$		&$\frac{1}{2}$ 	&$\frac{1}{2}$ 		&$\frac{1}{2}$	&$\frac{1}{2}$ \\ 
\hline
\end{tabular}
\end{center}
Although the condition of Theorem~\ref{theo:tractableCC} holds for every hyperedge 
$e_i$, with $i$ in $[1..5]$,  
there is no model of $D^p$ w.r.t. $\IC$.
In fact,
the overall probability of the possible worlds containing $t_8$ must be $1/2$;
due to hyperedges $e_4$ and $e_5$, these possible worlds can not contain 
neither $t_6$ nor $t_9$, which must appear together in the remaining possible worlds 
(since the marginal probability of $t_6$ and $t_9$ is
equal to the sum of the probabilities of the possible worlds not containing 
$t_8$);
however, as $t_3$ can not co-exist with both $t_6$ and $t_9$ (due to $e_3$), 
it must be in the worlds containing $t_8$;
but, as the overall probability of these worlds is $1/2$, they are not sufficient 
to make the probability of $t_3$ equal to $3/4$.
\boxx
\end{example} 

\vspace*{-1mm}
\subsubsection{``Cyclic'' hypergraphs: cliques and rings}
\label{sec:TractabilityRings}
An interesting tractable case which holds even in the presence of cycles in 
the conflict hypergraph is when the constraints define \emph{buckets} of tuples: 
buckets are disjoint sets of tuples, such that each pair of tuples in the same bucket are
mutually exclusive.
The conflict hypergraph describing a set of buckets is simply a graph consisting of 
disjoint cliques, each one corresponding to a bucket.
It is straightforward to see that, in this case,
the consistency problem can be decided by just verifying that, 
for each clique, the sum of the probabilities of the tuples in it is not greater than $1$.
Observe that the presence of buckets in the conflict hypergraph can be due to 
key constraints.
Thus, what said above implies that \cc\ is tractable in the presence of keys.
However, we will be back on the tractability of key constraints in the next section, where we will generalize this tractability result to the presence of one FD per relation.

We now state a more interesting tractability result holding in the presence of 
``cycles'' in the conflict hypergraph.

\begin{theorem}
\label{theo:ring}
Given an instance $D^p$ of $\D^p$, if $H(D^p, \IC)=\langle N, E\rangle$ is a ring, 
then $D^p\models \IC$ iff both the following hold:\\[4pt]
1)
$\forall e \in E,\  \sum_{t\in e} p(t) \leq |e|-1$;
\hspace*{1mm}
2)
$\sum_{t \in N} p(t) - |N| + \lceil\frac{|E|}{2}\rceil \leq 0$.
\end{theorem}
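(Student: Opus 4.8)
The plan is to prove both directions by reformulating consistency as a covering problem on the unit circle $C=[0,1)$ equipped with Lebesgue measure. Identifying each point $x\in C$ with the possible world $w_x=\{t : x\in P_t\}$, where $P_t$ is a chosen measurable \emph{presence region} of measure $p(t)$, a model for $D^p$ w.r.t. $\IC$ exists iff one can choose regions $P_t$ (equivalently, absence regions $A_t=C\setminus P_t$ of measure $1-p(t)$) so that no positive-measure world violates a constraint, i.e. so that for every hyperedge $e$ we have $\bigcap_{t\in e}P_t=\emptyset$, equivalently $\bigcup_{t\in e}A_t=C$. I would first record this equivalence (taking the $A_t$ as finite unions of arcs, so that only finitely many distinct worlds arise and the measure yields a genuine $Pr$), together with the elementary fact that a single edge $e$ can be covered iff $\sum_{t\in e}(1-p(t))\ge 1$, which is exactly inequality~$1$.

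Necessity is the easy half. Condition~$1$ is already known to be necessary for an arbitrary conflict hypergraph: it is the content of the $(\Rightarrow)$ direction of Theorem~\ref{theo:tractableCC} (which does not use the hypertree assumption), and also follows from the lower bound of Lemma~\ref{lem:pmin1/2}. For condition~$2$ I would use a transversal argument: in any model, every positive-probability world $w$ satisfies $\IC$, so $w$ misses at least one node of each edge, whence $N\setminus w$ is a transversal of the ring. Because non-neighboring edges are disjoint, each node lies in at most two edges, and those must be consecutive, so every transversal has at least $\lceil |E|/2\rceil$ nodes. Taking expectations, $|N|-\sum_{t\in N}p(t)=\sum_{t\in N}\big(1-p(t)\big)=\sum_{w}Pr(w)\,|N\setminus w|\ge \lceil |E|/2\rceil$, which is exactly condition~$2$.

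For sufficiency I would give an explicit construction, first for the pure-graph ring (every edge a pair $e_i=\{v_i,v_{i+1}\}$, indices cyclic modulo $m=|E|$), where the role of condition~$2$ is clearest. Here $\bigcap_{t\in e_i}P_t=\emptyset$ means simply $P_{v_i}\cap P_{v_{i+1}}=\emptyset$. I would place each presence region as an arc $P_{v_i}=[\sigma_i,\sigma_i+p_i)\bmod 1$ and seek offsets making consecutive arcs disjoint. Writing $d_i=\sigma_{i+1}-\sigma_i$, consecutive disjointness with exactly one wrap amounts to $d_i\in[p_i,\,1-p_{i+1}]$, an interval nonempty precisely because condition~$1$ gives $p_i+p_{i+1}\le 1$; closing the cycle requires $\sum_i d_i$ to be an integer. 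The attainable values of $\sum_i d_i$ fill $[\sum_i p_i,\ m-\sum_i p_i]$, and I would finish by showing $k=\lfloor m/2\rfloor$ lies in it: the left bound $\sum_i p_i\le\lfloor m/2\rfloor$ is condition~$2$ (as $|N|=m$), and the right bound $\lfloor m/2\rfloor\le m-\sum_i p_i$ follows from $\sum_i p_i\le\lfloor m/2\rfloor\le\lceil m/2\rceil$. The induced arcs then form a model.

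To lift this to a general ring I would reduce each edge to its two shared sets $s_{i-1}=e_{i-1}\cap e_i$ and $s_i=e_i\cap e_{i+1}$ together with its ears: absence arcs of the ears can pre-cover a part of $C$ assigned to $e_i$ (they lie in no other edge and are otherwise free), lowering the residual length the shared sets must cover, while each shared set is treated as an aggregate whose absence region is the union of its members' arcs. The same offset/integer-existence argument then closes the cycle, with condition~$2$ again supplying exactly the $\lceil |E|/2\rceil$ worth of absence budget the closure consumes and condition~$1$ guaranteeing local feasibility of each edge. I expect this final reduction --- reconciling multi-node shared sets and ears with the one-arc-per-node bookkeeping of the graph case, and making the odd-$|E|$ parity closure go through uniformly --- to be the main obstacle; necessity and the graph-ring construction are comparatively routine.
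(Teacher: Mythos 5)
Your proposal is sound and takes a genuinely different route from the paper. The paper proves both directions by cutting the ring $\mathcal{R}$ into the chain $\mathcal{C}$ obtained by deleting one hyperedge and then recursively applying Lemma~\ref{lem:pmin-chain-hyper}, telescoping the resulting $\max$-expressions $\lfloor |E|/2\rfloor$ times to extract condition~$2$, and proving sufficiency by contradiction against Theorem~\ref{theo:tractableCC} applied to $\mathcal{C}$. Your necessity argument replaces all of that with a counting argument: every non-zero-probability world misses a node of each edge, so its complement is a transversal; a transversal of a ring has at least $\lceil |E|/2\rceil$ nodes; and taking expectations under the model yields condition~$2$ verbatim. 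This is substantially shorter and makes the combinatorial content of condition~$2$ transparent. One caveat you share with the paper: the claim that each node lies in at most two (consecutive) edges is automatic for $|E|\ge 4$ but is a tacit extra assumption when $|E|=3$, where the ring definition does not exclude a node common to all three edges (the paper's proof makes the same assumption when it treats $\gamma_0$ and $\gamma_n$ as ears of the chain $\mathcal{C}$). Your sufficiency proof for graph rings via arcs and a winding-number closure is correct and explains \emph{why} $\lceil |E|/2\rceil$ appears: the attainable total offsets form an interval symmetric about $|E|/2$, and condition~$2$ is exactly the statement that this interval contains an integer. What the paper's heavier route buys in exchange is the chain lemma itself, which is reused elsewhere (e.g., in Lemma~\ref{lem:pmin-pmax-chain} and the proof of Theorem~\ref{theo:qatractable}).

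The only part of your text below proof level is the lifting to general rings, which you flag yourself. It does go through, but three steps must be written out. First, inside each shared set $s_i=e_i\cap e_{i+1}$ the members' absence regions can be laid out disjointly inside one region $B_i$ of measure $\min\{1,b_i\}$, where $b_i=\sum_{t\in s_i}(1-p(t))$, and similarly for the ears of $e_i$ with budget $c_i=\sum_{t\in\varepsilon_i}(1-p(t))$; whenever $b_i\ge 1$ or $c_i\ge 1$ the corresponding edge constraint(s) hold outright and the ring degenerates to a chain, i.e., a hypertree covered by Theorem~\ref{theo:tractableCC}. Second, since the ear region of $e_{i+1}$ is private to that edge, the covering requirement for $e_{i+1}$ is equivalent to the quantitative condition $|Q_i\cap Q_{i+1}|\le \min\{1,c_{i+1}\}$, where $Q_i$ is the complement of $B_i$, of measure $q_i=1-b_i$; so the ears do not merely ``pre-cover'', they absorb a bounded amount of overlap between consecutive shared regions. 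Third, placing the $Q_i$ as arcs with offsets $d_i\in[\,q_i-c_{i+1},\,1-q_{i+1}+c_{i+1}\,]$ (nonempty by condition~$1$, and condition~$1$ also caps the two-sided overlap case, since $q_i+q_{i+1}-1\le c_{i+1}$) makes the attainable values of $\sum_i d_i$ an interval symmetric about $|E|/2$ with left endpoint $\sum_i q_i-\sum_i c_i=|E|-(|N|-\sum_{t\in N}p(t))$; condition~$2$ says precisely that this endpoint is at most $\lfloor |E|/2\rfloor$, so an integer winding number exists and the cycle closes. With these three steps filled in, your argument is a complete, self-contained proof that never invokes the chain lemma.
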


Interestingly, Theorem~\ref{theo:ring} states that, when deciding the consistency of tuples arranged as a ring in the conflict hypergraph, it is not sufficient to consider the local 
consistency w.r.t. each hyperedge (as happens in the case of conflict hypertrees), as
also a condition involving all the tuples and hyperdges must hold.
As an application of this result, consider the case that $H(D^p, \IC)$ is 
the ring whose nodes are $t_1$, $t_2$, $t_3$, $t_4$ 
(where: $p(t_1)=p(t_2)=p(t_3)=\nicefrac{1}{2}$ and $p(t_4)=1$),
and whose edges~are: 
$e_1=\{t_1,t_2, t_4\}$, $e_2=\{t_1,t_3, t_4\}$, $e_3=\{t_2,t_3\}$.
It is easy to see that property $1)$ of Theorem~\ref{theo:ring} (which is necessary 
for consistency, as already observed) is satisfied, while property
$2)$ is not (in fact, 
$\sum_{t \in N} p(t)\!-\!|N|\!+\!\left\lceil\frac{|E|}{2}\right\rceil=$ 
$\nicefrac{5}{2}-4+2=\nicefrac{1}{2}>0$), which implies inconsistency.
Note that changing $p(t_4)$ to $\nicefrac{1}{2}$ yields consistency.\\[-4pt]

\noindent
\looseness-1
\textbf{Remark 2.}
\emph{Further tractable cases due to the conflict hypergraph.}
The tractability results given so far can be straightforwardly merged into 
a unique more general result: \cc\ is tractable if the conflict hypergraph consists of maximal 
connected components such that each of them is either a hypertree, a clique, 
or a ring.
In fact, it is easy to see that the consistency can be checked by 
considering the connected components separately.

\subsection{Tractability arising from the syntactic form of the denial constraints}
\label{sec:Syntactic-tractable-cases}

We now address the determination of tractable cases from a different perspective.
That is, rather than searching for other properties of the conflict hypergraph guaranteeing 
that the consistency can be checked in polynomial time, we will search for syntactic 
properties of denial constraints which can be detected without looking at the conflict
hypergraph and which yield the tractability of \cc.
We start from the following result.

\begin{theorem}\label{theo:ic-empty-phi}
If $\IC$ consists of a join-free denial constraint, then \cc\ is in \textit{PTIME}.
In particular, $D^p\models \IC$ iff, for each hyperedge $e$ of $HG(D^p, \IC)$, it holds 
that $\sum_{t\in e}\!p(t)\!\leq\!|e|\!-\!1$.
\end{theorem}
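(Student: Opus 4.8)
The plan is to prove the biconditional by reusing the general lower bound of Lemma~\ref{lem:pmin1/2} for the ``only if'' direction and by building an explicit admissible distribution for the ``if'' direction, where join-freeness is exploited to give $HG(D^p,\IC)$ a transparent combinatorial shape.

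For the forward direction I would argue exactly as in the necessity part of Theorem~\ref{theo:tractableCC}, which used no assumption on the shape of the hypergraph. By Lemma~\ref{lem:pmin1/2}, the quantity $\sum_{t\in e} p(t)-|e|+1$ is a lower bound on $p^{\min}(e)$ for every hyperedge $e$. Hence if $\sum_{t\in e} p(t) > |e|-1$ for some $e$, then every model would be forced to assign non-zero probability to a world containing all tuples of $e$, i.e.\ to a world violating $\IC$, contradicting admissibility. So $\sum_{t\in e} p(t)\le |e|-1$ is necessary in general, and in particular for join-free $\IC$.

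The substance is the converse, and I would begin by translating join-freeness into a description of $HG(D^p,\IC)$. For $ic=\neg[R_1(\vec x_1)\wedge\cdots\wedge R_m(\vec x_m)\wedge\phi]$, the facts that no variable is shared across relation atoms and that every built-in predicate carries a constant mean that atom position $i$ imposes a purely \emph{local} selection on a single tuple; let $S_i$ be the set of tuples of $D^p$ satisfying this selection. A world then violates $ic$ exactly when it meets every $S_i$, so the hyperedges of $HG(D^p,\IC)$ are the minimal transversals of $\{S_1,\dots,S_m\}$. In the base case where the $S_i$ are pairwise disjoint (which always holds when the $R_i$ are distinct relations), each hyperedge has the form $\{u_1,\dots,u_m\}$ with $u_i\in S_i$, and the per-edge inequalities reduce to $\sum_i \max_{u\in S_i} p(u)\le m-1$. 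Here I would construct a model as a mixture of $m$ ``modes'': in mode $i$ all tuples of $S_i$ are declared absent (so every world it produces misses $S_i$ and satisfies $\IC$), and the other tuples are drawn independently. If mode $i$ gets weight $\mu_i$, then $u\in S_i$ has total probability $(1-\mu_i)\,c(u)$, so realizing the marginals asks for $c(u)=p(u)/(1-\mu_i)\le 1$, i.e.\ $\mu_i\le 1-\max_{u\in S_i}p(u)$. Non-negative weights with $\sum_i\mu_i=1$ obeying these bounds exist precisely when $\sum_i\bigl(1-\max_{u\in S_i}p(u)\bigr)\ge 1$, which is the per-edge condition; tuples lying in no $S_i$ are simply added independently. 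This yields an admissible $Pr$, proving sufficiency in the disjoint case.

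The main obstacle is the case in which the same relation occurs in several atoms, so the groups $S_i$ overlap: a single tuple may then witness several atoms at once, minimal transversals may be strictly smaller than $m$, and the naive mixture breaks down because a tuple in several $S_i$ is forced absent in several modes, coupling the weights $\mu_i$. I expect the heart of the proof to be showing that the per-edge (minimal-transversal) inequalities still certify feasibility of the mixing weights despite this coupling. I would attack this either by reducing the overlapping instance to the disjoint one through a refinement of the groups into their Boolean combinations (so that repeated-relation witnesses are accounted for without double counting), or by analysing the resulting $HG(D^p,\IC)$ directly, using Proposition~\ref{prop:esistenzamatrioska} to peel off edges and certify that the local per-edge test propagates to a global model as in Theorem~\ref{theo:tractableCC}. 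Verifying that no ``integrality gap'' between the per-edge test and the existence of a model survives the overlaps is the step I anticipate to be genuinely delicate, and the one where the special structure induced by a single join-free constraint must be used in full.
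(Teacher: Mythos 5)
Your forward direction is sound and is the same as the paper's (it is just the necessity argument of Theorem~\ref{theo:tractableCC}, which uses no structural assumption on the hypergraph). Your mixture-of-modes construction for the converse is also correct as far as it goes, and it is a genuinely different device from the paper's; but it only goes as far as the case where the witness sets $S_1,\dots,S_m$ are pairwise disjoint. The theorem, however, explicitly covers join-free constraints in which the same relation name occurs in several atoms (the paper stresses this when defining join-freeness), and for that case you give no proof: you name two candidate routes and yourself flag as unresolved the key step, namely that the per-edge inequalities still certify feasibility once the mode weights become coupled by tuples lying in several $S_i$. That unresolved step is exactly where the content of the theorem lies, so the proposal has a genuine gap.

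Moreover, one of your two proposed routes cannot work. The conflict hypergraph of a join-free constraint is in general far from $\gamma$-acyclic: already for $m=2$ with $S_1=\{a,b\}$ and $S_2=\{c,d\}$ it is the $4$-cycle with edges $\{a,c\},\{a,d\},\{b,c\},\{b,d\}$, and for larger $S_i$ or larger $m$ it is a complete multipartite-like structure that is neither a hypertree nor a ring --- a point the paper itself makes in the comparison at the end of Section~\ref{sec:consistencychecking}. Hence Proposition~\ref{prop:esistenzamatrioska} gives you no edge to peel, and the inductive scheme of Theorem~\ref{theo:tractableCC} is unavailable. The paper resolves the overlapping case with a different, concrete device: it picks an \emph{extremal} hyperedge $e^*$ minimizing $|e|-1-\sum_{t\in e}p(t)$, builds a model of the sub-database consisting of $e^*$ alone, and then inserts the remaining tuples one at a time, placing each new tuple $t'$ --- whose witness positions form the set $J=\{j \mid t'\in R_{\phi_j}\}$, where $R_{\phi_j}$ is the set of tuples satisfying the $j$-th atom and its local built-ins --- entirely inside the probability mass where the tuples $\{t_j \mid j\in J\}\subseteq e^*$ co-exist. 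The minimality of $e^*$ yields precisely the inequality $p_J\geq p(t')$ needed for this nesting to fit, and nesting inside the witnesses' region is what guarantees no new violation is created. Some device of this extremal/greedy kind (or a fully worked-out duality argument for your coupled weight system) is what your proposal is missing.
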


\begin{example}
\label{ex:joinfree}
Consider the PDB scheme consisting of the probabilistic relation scheme
\emph{Employee}$^p($\emph{Name}, \emph{Age},  \emph{Team}, \emph{P}$)$.
This scheme is used to represent some (uncertain) personal information about the employees 
of an enterprise.
The uncertain data were obtained starting from anonymized data, and then estimating sensitive information (such as the names of the employees).
Assume that the PDB instance $D^p$ obtained this way consists of the instance 
\emph{employee}$^p$ of \emph{Employee}$^p$ shown in Figure~\ref{fig:join-free}(a).

\begin{figure}[h!]
\centering
\begin{tabular}{cc}
{\small
\begin{tabular}{l||c|c|c|c||}
  \hhline{~|t:====:t|}
& \raisebox{0cm}[3.2mm][1.1mm]{\textbf{\textit{Name}}} &  
\textbf{\textit{Age}} & 
\textbf{\textit{Team}} &
\textbf{\textit{P}}\\
  \hhline{~|:====:|}
$t_1$ & \raisebox{0cm}[3mm][1mm]{P. Jane} & 35 &   A  &$1$ \\
  \hhline{~||----||}
$t_2$ & \raisebox{0cm}[3mm][1mm]{T. Lisbon}  & 25  & B  &$1$\\
  \hhline{~||----||}
$t_3$ & \raisebox{0cm}[3mm][1mm]{W. Rigsby}  & 40 & B &$1/2$  \\
  \hhline{~||----||}
$t_4$ & \raisebox{0cm}[3mm][1mm]{K. Cho}  & 40 & B &$1/2$  \\
  \hhline{~||----||}
$t_5$ & \raisebox{0cm}[3mm][1mm]{G. Van Pelt} & 22  & C  & $1$ \\
  \hhline{~||----||}
$t_6$ & \raisebox{0cm}[3mm][1mm]{G. Bertram}  & 40  & C  &$1/2$\\
  \hhline{~||----||}
$t_7$ & \raisebox{0cm}[3mm][1mm]{R. John}  & 40  & C  &$1/2$\\
\hhline{~|b:====:b|}
\end{tabular}
}
&\hspace*{5mm}
\raisebox{-.5\height}{\includegraphics[scale=0.9]{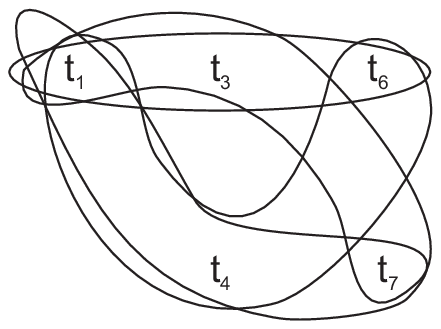}}\\
(a) & (b)\\
\end{tabular}
\caption{(a) PDB instance $D^p$; (b) Conflict hypergraph $HG(D^p,\IC)$}
\label{fig:join-free}
\end{figure}

From some knowledge of the domain, it is known that at least one team among 
`\emph{A}', `\emph{B}', `\emph{C}' consists of only young employees, i.e., employees at most
$30$-year old.
This corresponds to considering $\IC=\{ic\}$ as the set of denial constraints, 
where $ic$ is as follows:

$ic:\ \neg\, \big[$
\emph{Employee}$(x_1, x_2,$ `\emph{A}' $)$
$\wedge$
\emph{Employee}$(x_3, x_4,$ `\emph{B}' $)$
$\wedge$
\emph{Employee}$(x_5, x_6,$ `\emph{C}' $)$
$\wedge$
$x_2\!>\!30 \wedge x_4\!>\!30 \wedge x_6\!>\!30\ \big]$.

It is easy to see that $ic$ is a join-free denial constraint, thus the consistency of 
$D^p$ can be decided using Theorem~\ref{theo:ic-empty-phi}.
In particular, since $HG(D^p,\IC)$ is the hypergraph depicted in Figure~\ref{fig:join-free}(b),
we have that $D^p$ is consistent if and only if the following inequalities hold:
$$
p(t_1)+ p(t_3)+ p(t_6)\leq 2; \hspace*{7mm}
p(t_1)+ p(t_3)+ p(t_7)\leq 2; \hspace*{7mm}
p(t_1)+ p(t_4)+ p(t_6)\leq 2; \hspace*{7mm}
p(t_1)+ p(t_4)+ p(t_7)\leq 2; \hspace*{7mm}
$$
As a matter of fact, all these inequalities are satisfied, thus the considered PDB is consistent.
In fact, there is a unique model $Pr$ for $D^p$ w.r.t. $\IC$.
In particular, $Pr$ assigns probability $1/2$ to each of the possible worlds
$w_1=\{t_1, t_2, t_3, t_4, t_5 \}$ and $w_2=\{t_1, t_2, t_5, t_6, t_7\}$, and probability $0$ 
to all the other possible worlds.
\boxx
\end{example}

The result of Theorem~\ref{theo:ic-empty-phi} strengthens what already observed in the 
previous section:
the arity of constraints is not, \emph{per se}, a source of complexity.
In what follows, we show that the arity can become a source of complexity when combined with the presence of join conditions.

\begin{theorem}\label{theo:ic-arity-3}
There is an $\IC$ consisting of a non-join-free denial constraint of arity $3$ 
such that \cc\ is $N\!P$-hard. 
\end{theorem}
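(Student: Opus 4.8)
The statement only asserts $N\!P$-hardness, and membership in $N\!P$ is already guaranteed by Theorem~\ref{theo:CCisNPcomplete}; hence the plan is to exhibit a single \emph{fixed} ternary non-join-free constraint $ic^*$ together with a polynomial-time map from instances of a known $N\!P$-hard problem to PDB instances $D^p$ such that $D^p\models \{ic^*\}$ iff the source instance is a yes-instance. I would reduce from \textsc{psat}~\cite{Pap88} (or directly from $3$SAT lifted to the probabilistic setting), using instances whose clauses of size up to three all have probability $1$ and in which only unary clauses carry an arbitrary probability --- the very form underlying the hardness cited for Theorem~\ref{theo:CCisNPcomplete}. Under the correspondence of that theorem, a unary clause with probability $p$ becomes a tuple with marginal $p$, and an all-negative ternary clause becomes a ternary conflict, so the task reduces to realizing an \emph{arbitrary} family of ternary conflicts over tuples-with-marginals using one constraint schema.

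The crux --- and the reason the general reduction behind Theorem~\ref{theo:CCisNPcomplete} uses many different constraints --- is that a fixed schema of fixed arity forces every tuple to participate \emph{natively} in only boundedly many conflicts, whereas a variable of the source instance may occur in unboundedly many clauses. The plan to overcome this is a \emph{node-splitting} construction. Working over a single relation $R^p(\mathit{Val},\mathit{Grp},P)$, I would take
$$
ic^*:\ \neg\big[R^p(x,w)\wedge R^p(y,w)\wedge R^p(z,w)\wedge x\neq y\wedge x\neq z\wedge y\neq z\big],
$$
which is non-join-free (the join variable $w$ occurs in all three atoms, and the built-ins contain no constant) and of arity $3$. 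Since $ic^*$ fires exactly on the pairwise-$\mathit{Val}$-distinct triples sharing a common $\mathit{Grp}$-value, I can populate each ``group'' $w$ with exactly three tuples to install precisely one ternary conflict, and by padding a group with a probability-$1$ \emph{filler} tuple (distinct per group) I can install a single pairwise conflict as well, since the always-present filler degenerates the ternary conflict into a binary one. Thus the one schema $ic^*$ realizes both the ternary clause-conflicts and the auxiliary binary conflicts needed below.

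On top of this I would introduce, for each variable $x_i$, complementary tuples $u_i,\bar u_i$ made mutually exclusive and given marginals summing to $1$, so that every positive-probability world selects a genuine truth assignment; I would then add one copy per occurrence of each literal and force all copies of the same literal to share a common presence profile by chaining them with binary conflicts whose endpoints are \emph{driven to coexist} by their marginals. This is exactly the implicit-coexistence phenomenon of Example~\ref{ex:probabilisticIC}, where $\neg(t_1\wedge t_2)$ and $\neg(t_2\wedge t_3)$ together with marginals $\tfrac12$ force $t_1$ and $t_3$ into every positive-probability world. A clause is then violated in a world iff the three false-witness copies of its literals coexist there, which is precisely the ternary conflict installed for that clause.

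The two steps I expect to be the main obstacle are: $(i)$ choosing all marginals so that the chaining gadgets \emph{provably} enforce the intended equalities and complementarities --- here the lower bound of Lemma~\ref{lem:pmin1/2} is the natural tool for the soundness direction, while completeness requires building, from a satisfying solution, an explicit model whose positive-probability worlds are exactly the witness worlds of that solution; and $(ii)$ verifying that fillers and the distinctness predicate introduce no \emph{spurious} firings, so that the conflict hypergraph of $D^p$ is exactly the intended one. Establishing both directions of the equivalence, together with polynomiality of the construction, completes the proof; the contrast with Theorem~\ref{theo:ic-empty-phi} then confirms that it is the combination of joins with arity $3$, rather than arity alone, that is the source of hardness.
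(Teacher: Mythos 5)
Your construction has a genuine gap, and it lies exactly at the step you flagged as the crux. With the schema $R^p(\mathit{Val},\mathit{Grp},P)$ and the constraint $ic^*$ joining all three atoms on the single attribute $\mathit{Grp}$, every tuple carries exactly one group value, so a tuple can only ever be in conflict with tuples of its own group: the conflicting sets w.r.t.\ $\{ic^*\}$ are precisely the triples of tuples sharing a $\mathit{Grp}$ value (pairwise distinctness of $\mathit{Val}$ is automatic, since tuples of a group are distinguished by $\mathit{Val}$), and no hyperedge of $HG(D^p,\{ic^*\})$ can span two groups. Consequently the conflict hypergraph of \emph{any} instance your reduction can produce is a disjoint union of group-local components, each being the ``all triples'' hypergraph on its group; by the paper's Remark~2, consistency decomposes over these components, and within a group it reduces to trivial arithmetic on the marginals (every non-zero-probability world holds at most two tuples of a group, so the condition is $\sum_i p(t_i)\leq 2$, and at most one non-filler tuple when a probability-$1$ filler is present, so the condition becomes $\sum_i p(t_i)\leq 1$). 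Hence \cc\ restricted to the instances definable with $ic^*$ is in \emph{PTIME}, and no polynomial-time reduction from an $N\!P$-hard problem onto such instances can exist unless $P=N\!P$. The chaining gadget meant to repair this is circular: to force two copies $u^1$ (in group $g_1$) and $u^2$ (in group $g_2$) to share a presence profile you need an intermediate tuple $m$ mutually exclusive with both, but $m$ has a single $\mathit{Grp}$ value and so cannot conflict with tuples in two distinct groups; splitting $m$ into per-group copies recreates exactly the synchronization problem it was supposed to solve.

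The idea you are missing---and the one the paper's proof uses---is to make the join reference the \emph{identities} of the conflicting tuples from a \emph{second} relation, rather than a shared attribute stored inside the conflicting tuples themselves. The paper keeps the probabilistic tuples in $R_1(\mathit{Node},\mathit{Color})$ with probability $\frac{1}{3}$ each, and adds a relation $R_2(\mathit{Node}_1,\mathit{Node}_2,\mathit{Color}_1,\mathit{Color}_2)$ of probability-$1$ tuples, with the single ternary non-join-free constraint $\neg[R_1(x_1,x_2)\wedge R_1(x_3,x_4)\wedge R_2(x_1,x_3,x_2,x_4)]$: a pair of $R_1$ tuples is (semantically) conflicting exactly when some $R_2$ tuple references that pair. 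Since one $R_1$ tuple can be referenced by arbitrarily many $R_2$ tuples, this one fixed constraint realizes an \emph{arbitrary} conflict graph over the probabilistic tuples---in particular the conflict structure of the \textsc{3-coloring} reduction of Theorem~\ref{theo:2FDs-cc} (same node with two colors, and adjacent nodes with the same color), which is how the paper concludes hardness. Note that your filler trick---probability-$1$ tuples degenerating ternary conflicts into effectively binary ones---is sound and is exactly the mechanism at work in the paper's $R_2$ tuples; what must change is where the join variables live, not the probabilistic bookkeeping built on top of them.
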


%

Still, one may be interested in what happens to the complexity of \cc\ 
for denial constraints containing joins and having arity strictly lower than $3$. 
In particular, since in the proof of Theorem~\ref{theo:ic-arity-3} we exploit
a ternary EGD to show the $N\!P$-hardness of \cc\ in the presence of ternary constraints with joins (see~\ref{app:567}), it is worth investigating what happens when only binary
EGDs are considered, which are denial constraints with arity $2$ containing joins.
The following theorem addresses this case, and states that \cc\ becomes tractable for any $\IC$
consisting of a binary EGD.


\begin{theorem}\label{theo:EGDs-cc}
If $\IC$ consists of a binary EGD, then \cc\ is in \textit{PTIME}.
\end{theorem}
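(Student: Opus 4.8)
The plan is to recast \cc\ for a single binary EGD as a polytope-membership test and then exploit the special shape of the conflict graph. Since the constraint is binary, $G=HG(D^p,\IC)$ is an ordinary graph whose nodes are the tuples of $D^p$. The first and central observation is that $D^p\models\IC$ if and only if the vector of marginals $(p(t))_{t\in D^p}$ lies in the \emph{stable-set polytope} of $G$, i.e.\ in $\mathrm{conv}\{\mathbf{1}_I : I \text{ is an independent set of } G\}$. Indeed, a model assigns non-zero probability only to worlds containing no conflicting pair, that is, to independent sets of $G$, so its tuple-marginals form a convex combination of the indicator vectors $\mathbf{1}_I$; conversely, any representation $p=\sum_I \lambda_I\,\mathbf{1}_I$ with $\lambda_I\ge 0$ and $\sum_I\lambda_I=1$ is realised by the interpretation assigning probability $\lambda_I$ to the world $I$, which is a model. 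Thus \cc\ is exactly the membership problem for this polytope, and the task is to solve it in polynomial time for the graphs arising from a binary EGD.

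Next I would reduce the test to the connected components of $G$ and describe their structure. The join of the EGD (shared variables and constants) forces any two conflicting tuples to agree on the join attributes; hence tuples with different join values are non-adjacent, $G$ is the disjoint union of the subgraphs induced on the join-groups, and membership can be checked group by group. Within a group all tuples agree on the join attributes, so two of them conflict precisely when they differ on \emph{every} attribute compared by the $\neq$-predicates. Because we work in data complexity the EGD is fixed, so the number $L$ of such compared attributes is a constant. Two easy sub-cases fall out at once: if the EGD spans two distinct relations then $G$ is bipartite (conflicts join one tuple of each relation), and in the functional-dependency case (a single $\neq$-predicate on one attribute) each group induces a complete multipartite graph, with tuples grouped by the value of that attribute.

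For these sub-cases I would finish immediately: bipartite and complete multipartite graphs are \emph{perfect}, so their stable-set polytope is described by the clique inequalities $\sum_{t\in C}p(t)\le 1$ (one per clique $C$); these are necessary for consistency because every world meets a clique in at most one tuple, and membership in a perfect graph's stable-set polytope is decidable in polynomial time. This recovers and generalises the clean bucket/key condition. The remaining work, and the main obstacle, is the general single-relation, multi-$\neq$ case: here $G$ is \emph{not} perfect in general — with three compared attributes one can exhibit five tuples inducing a $C_5$, choosing them so that consecutive tuples agree on a single rotating coordinate while all other pairs differ on every coordinate — so the clique inequalities no longer suffice and perfection cannot be invoked. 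The crux is therefore to show that, since $L$ is a fixed constant, the maximum-weight independent set of a group-subgraph (equivalently, the separation problem for its stable-set polytope) is still solvable in polynomial time: for instance, for $L=2$ every independent set is contained in the tuples sharing a common value on one of the two compared attributes, and I would aim to extend such a combinatorial characterisation to fixed $L$. Given a polynomial separation oracle, LP duality then yields a polynomial-time membership test, and hence decides \cc.
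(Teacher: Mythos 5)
Your reduction of \cc\ to membership of the marginal vector in the stable-set polytope of the conflict graph is sound, and two of your three regimes are handled correctly and in fact recover the paper's explicit conditions: when the EGD spans two distinct relations the conflict graph is bipartite, so nonnegativity plus the edge inequalities $p(t_1)+p(t_2)\le 1$ decide membership (this is the paper's Case~2), and in the single-$\neq$, single-relation case each join-group induces a complete multipartite graph whose binding clique inequality is $\sum_i \max_{t\in S_i} p(t)\le 1$ (the paper's Case~1, and Theorem~\ref{cor:FD-cc}). The genuine gap is that the case you yourself call the crux --- one relation and $L\ge 3$ inequality conjuncts --- is never solved. Your $C_5$ construction correctly shows these conflict graphs need not be perfect, so perfection cannot close it; what is then required is a polynomial-time separation oracle (equivalently, a max-weight independent set algorithm) for this graph class, and for that you offer only the $L=2$ characterisation together with the statement that you ``would aim to extend'' it. That is an intention, not an argument, and it sits exactly where the difficulty is: as written, your proposal proves the theorem only for EGDs that span two relations or have at most two $\neq$-conjuncts. (It is worth noting that the paper's own proof explicitly treats a single $\neq$-conjunct and relegates the general case to a remark that ``the reasoning is still valid''; your $C_5$ example actually shows this extension is far from routine, since the complete-multipartite structure on which the paper's Case~1 rests disappears for $L\ge 3$. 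But this observation weakens the paper's remark; it does not repair your proof.)

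There is a second, smaller but real, error: your conflict relation is purely pairwise, whereas a binary EGD over a single relation can be violated by one tuple alone. If a conjunct compares two \emph{different} attributes $Z_1\neq Z_2$ of the same relation, i.e. the constraint has the paper's form $\neg[R(\vec{x},\vec{y}_1)\wedge R(\vec{x},\vec{y}_2)\wedge z_1\neq z_2]$ with both atoms instantiable by the same tuple, then any tuple $t$ with $t[Z_1]\neq t[Z_2]$ is a \emph{singleton} conflicting set: it can belong to no consistent world, so a positive marginal for $t$ makes $D^p$ inconsistent regardless of what the pairwise graph looks like. Your stable-set membership test on the pairwise graph would wrongly accept such instances. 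The paper handles precisely this point as the ``first necessary condition'' of its Case~1, and it is also why, once that condition holds, every surviving tuple in a group satisfies $t[Z_1]=t[Z_2]$ and the complete multipartite picture emerges. The fix is easy --- first check that no positive-probability tuple self-violates the EGD, treating such self-loops as immediate inconsistency --- but it must be part of the algorithm, not left implicit.
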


Differently from the previous theorems on the tractability of \cc, in the statement of 
Theorem~\ref{theo:EGDs-cc}, for the sake of presentation, we have not explicitly reported 
the necessary and sufficient conditions for consistency. 
In fact, in this setting, deciding on the consistency requires reasoning by cases, and then 
checking some conditions which are not easy to be defined compactly. 
However, these conditions can be checked in polynomial time, and the interested reader can 
find their formal definition in the proof of Theorem~\ref{theo:EGDs-cc} (see~\ref{app:567}).

Binary EGDs can be viewed as a generalization of FDs, involving pairs of tuples possibly 
belonging to different relations.
For instance, over the relation schemes 
\emph{Student}$($\emph{Name}, \emph{Address}, \emph{University}$)$ and
\emph{Employee}$($\emph{Name}, \emph{Address}, \emph{Firm}$)$, 
the binary EGD 
$\neg\, \big[$\emph{Student}$(x_1,x_2,x_3)\, \wedge$ \emph{Employee}$(x_1,x_3,x_4)
\wedge\ x_2\!\neq\!x_3\ \big]$
imposes that if a student and an employee are the same person (i.e., they have the same name),
then they must have the same address.
Thus, an immediate consequence of Theorem~\ref{theo:EGDs-cc} is that \cc\ is tractable in the presence of a single FD.

The results presented so far refer to cases where $\IC$ consists of a single denial 
constraint.
We now devote our attention to the case that $\IC$ is not a singleton. 
In particular, the last tractability result makes the following question arise: 
``\emph{Is \cc\ still tractable when $\IC$ contains several binary EGDs?''}.
(Obviously, we do not consider the case of multiple EGDs of any arity, as 
Theorem~\ref{theo:ic-arity-3} states that \cc\ is already hard if $\IC$ merely contains 
one constraint of this form.)
The following theorem provides a negative answer to this question, as it states that \cc\ can be intractable even in the simple case that $\IC$ consists of just two FDs (as recalled above, 
FDs are special cases of binary EGDs).


\begin{theorem}\label{theo:2FDs-cc}
There is an $\IC$ consisting of $2$ FDs over the same relation scheme 
such that \cc\ is $NP$-hard.
\end{theorem}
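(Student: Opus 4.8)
The plan is to prove NP-hardness by a polynomial-time reduction from a known NP-complete problem into a \cc\ instance whose constraint set $\IC$ consists of exactly two FDs over one relation scheme. The first step is to record the right reformulation: since FDs are binary denial constraints, the conflict hypergraph $HG(D^p,\IC)$ is an ordinary graph, and by the possible-world semantics an admissible interpretation is exactly a probability distribution over the \emph{conflict-free worlds} (the independent sets of this graph) whose tuple-marginals coincide with the prescribed $p(t)$. Hence $D^p\models\IC$ iff the vector $(p(t))_{t\in D^p}$ lies in the convex hull of the characteristic vectors of the independent sets. Membership in $NP$ is already delivered by Theorem~\ref{theo:CCisNPcomplete}, so the whole effort goes into the hardness direction; note that this also pinpoints the phenomenon, since Theorem~\ref{theo:EGDs-cc} shows a single binary EGD is tractable, so the jump to intractability must be produced by the \emph{second} constraint.

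Next I would isolate which conflict graphs two FDs can induce. Fixing a scheme $R^p(A,B,V,P)$ with $\IC=\{A\!\to\! V,\ B\!\to\! V\}$, two tuples conflict exactly when they disagree on $V$ and agree on $A$ or on $B$. Viewing each tuple as a cell $(a,b)$ of a grid carrying the label $V$, conflicts arise only \emph{within a row} (fixed $A$) or \emph{within a column} (fixed $B$), and inside each row or column the conflict relation is the complete multipartite graph determined by the $V$-labels. Thus, by choosing the $A,B,V$ values of the tuples, one can sculpt the conflict graph as a controlled edge-union of two families of complete-multipartite ``row'' and ``column'' components, leaving untouched every pair that shares neither coordinate or carries a common label. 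The reduction then encodes the source instance into the placement of tuples in this grid and into their labels, so that the local mutual-exclusions forced by the two FDs reproduce precisely the intended pairwise constraints.

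Concretely I would reduce from $3$-SAT (or an equally flexible tripartite problem such as $3$-dimensional matching, which the grid-with-labels structure naturally suggests), mapping its elements to tuples and \emph{calibrating} the probabilities $p(t)$ so that the only way to spread probability mass over conflict-free worlds while meeting every marginal is to realize a solution of the source instance, and conversely so that every solution yields such a distribution. The hypertree/ring conditions are of little direct use here because the gadget is deliberately ``cyclic''; instead I would argue the completeness direction by exhibiting an explicit family of conflict-free worlds realizing a given solution, and the soundness direction by a marginal-counting argument in the spirit of the lower bound of Lemma~\ref{lem:pmin1/2}, showing that any admissible interpretation is forced into a combinatorial choice equivalent to solving the source problem.

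The main obstacle is exactly this gadget design. Two FDs yield only \emph{binary} conflicts, rigidly organised into the complete-multipartite row/column pattern above, so unlike the ternary case exploited in Theorem~\ref{theo:ic-arity-3} one cannot impose clauses or higher-arity constraints directly; the hardness must instead emerge from the interplay between this restricted local structure and the global requirement that \emph{all} marginals be met simultaneously. Arranging the encoding so that feasibility of the world-distribution coincides precisely with yes-instances, and then verifying both directions of the equivalence, is where the real work lies.
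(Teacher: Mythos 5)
There is a genuine gap: your proposal is a reduction \emph{plan}, not a reduction. Everything that carries the weight of an $N\!P$-hardness proof --- the concrete source instance mapping, the choice of tuples and marginal probabilities, and the two directions of the equivalence --- is explicitly deferred (``where the real work lies''). The framing you do provide (consistency as membership of the marginal vector in the convex hull of independent-set indicators; the row/column complete-multipartite shape of conflict graphs induced by two FDs; membership in $N\!P$ from Theorem~\ref{theo:CCisNPcomplete}) is correct but establishes nothing about hardness by itself, so the theorem remains unproved.

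Moreover, the specific template you fix is likely a dead end. With $\IC=\{A\!\rightarrow\!V,\ B\!\rightarrow\!V\}$ both FDs share the dependent attribute $V$, so two tuples can conflict \emph{only} when they disagree on $V$; yet the source problems you name need gadgets that forbid coexistence of tuples carrying the \emph{same} label (adjacent nodes with equal color, matching triples sharing an element). The paper's proof resolves exactly this by giving the two FDs different dependent attributes: it reduces from \textsc{3-coloring} over the single scheme $R^p(\textit{Node},\textit{Color},\textit{IdEdge},P)$, with one tuple $(n,c,e,\frac{1}{3})$ for every node $n$, incident edge $e$, and color $c$, and the FDs $\textit{Node}\rightarrow\textit{Color}$ and $\textit{Color},\textit{IdEdge}\rightarrow\textit{Node}$ --- the second FD, with \textit{Color} in the determinant, is precisely what forbids equal colors at the two endpoints of an edge. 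Completeness is witnessed by three ``rotated'' colorings, each a possible world of probability $\frac{1}{3}$. Soundness is a counting argument: by $\textit{Node}\rightarrow\textit{Color}$ the sets of positive-probability worlds containing $(n,R,e)$, $(n,G,e)$, $(n,B,e)$ are pairwise disjoint, and since each tuple has marginal $\frac{1}{3}$ their masses sum to $1$, so \emph{every} positive-probability world contains exactly one color per (node, edge) pair and hence encodes a proper $3$-coloring. Your appeal to ``a marginal-counting argument in the spirit of Lemma~\ref{lem:pmin1/2}'' gestures in this direction, but the argument only works because the probabilities were calibrated to $\frac{1}{3}$ and the FDs were chosen with distinct right-hand sides --- the two design decisions your proposal leaves open.
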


However, the source of complexity in the case of two FDs is that they are defined over the 
same relation (see the proof of Theorem~\ref{theo:2FDs-cc} in~\ref{app:567}).
As a matter of fact, the following theorem states that all the tractability results stated in this section in the presence of only one denial constraint can be extended to the case of multiple denial constraints defined over disjoint sets of relations.
Intuitively enough, this derives from the fact that, if the denial constraints involve 
disjoint sets of relation, the overall consistency can be checked by considering the constraints separately. 

\begin{theorem}\label{theo:tract-EGDs-and-join-free}
Let each denial constraint in $\IC$ be join-free or a BEGD.
If, for each pair of distinct constraints $ic_1$,$ic_2$ in $\IC$, the relation names occurring in $ic_1$ are distinct from those in $ic_2$, then \cc\ is in \textit{PTIME}.
\end{theorem}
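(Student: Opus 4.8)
The plan is to exploit the disjointness of relation names to decompose the problem into independent sub-problems, one per constraint, each of which has already been shown to be tractable.

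First I would observe that, since the relation names occurring in distinct constraints are pairwise disjoint, each relation name of $\D^p$ occurs in at most one constraint of $\IC$. Consequently, any conflicting set (hyperedge) can only involve tuples drawn from the relations of a single constraint: a conflicting set is a minimal set of tuples whose coexistence violates $\IC$, hence it violates some single $ic_i$, and by minimality it reduces to the tuples of the relations mentioned in $ic_i$. This lets me partition the tuples of $D^p$ into groups $U_1,\dots,U_k$, where $U_i$ collects the tuples of the relations mentioned in $ic_i$, together with a group $U_0$ of ``free'' tuples belonging to relations mentioned in no constraint. By the above, every hyperedge of $HG(D^p,\IC)$ is entirely contained in a single $U_i$ (with $i\geq 1$), so the conflict hypergraph is the disjoint union of the sub-hypergraphs induced by the individual constraints.

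Next I would prove the key decomposition lemma: $D^p\models\IC$ if and only if $D_i^p\models\{ic_i\}$ for every $i\in[1..k]$, where $D_i^p$ is the restriction of $D^p$ to the relations mentioned in $ic_i$. The crucial point enabling this is that whether a possible world $w$ satisfies $ic_i$ depends only on $w\cap U_i$, since $ic_i$ refers only to relations whose tuples lie in $U_i$. For the forward direction, given a model $Pr$ for $D^p$, I would take for each $i$ the marginal $Pr_i(v)=\sum_{w:\, w\cap U_i=v}Pr(w)$ over the possible worlds of $D_i^p$; this preserves the tuple marginals, and since $Pr$ assigns probability $0$ to every world violating $\IC$ (hence every world whose $U_i$-part violates $ic_i$), $Pr_i$ assigns probability $0$ to every world of $D_i^p$ violating $ic_i$, making it a model. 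For the converse, given models $Pr_i$ for each $D_i^p$ and the tuple-independent interpretation $Pr_0$ on $U_0$, I would define the product interpretation $Pr(w)=Pr_0(w\cap U_0)\cdot\prod_{i=1}^k Pr_i(w\cap U_i)$. Because the groups are disjoint and the factors are PDFs over disjoint tuple sets, $Pr$ is an interpretation respecting all tuple marginals; and since any world violating $\IC$ violates some $ic_i$, i.e.\ has its $U_i$-part violating $ic_i$, the corresponding factor $Pr_i(w\cap U_i)$ equals $0$, so $Pr(w)=0$. Hence $Pr\in\M(D^p,\IC)$.

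Finally, I would invoke the single-constraint tractability results already established: each $ic_i$ is either join-free, whence $D_i^p\models\{ic_i\}$ is decidable in \textit{PTIME} by Theorem~\ref{theo:ic-empty-phi}, or a binary EGD, whence it is decidable in \textit{PTIME} by Theorem~\ref{theo:EGDs-cc}. Since $\IC$ has fixed size (data complexity), there are only constantly many sub-problems, each solvable in polynomial time in $|D^p|$, and the decomposition lemma reduces \cc\ to their conjunction; thus \cc\ is in \textit{PTIME}. I expect the main obstacle to be the careful verification of the decomposition lemma --- in particular, arguing rigorously that admissibility factorizes across the groups (both that marginalizing a global model yields admissible local models, and that the product of admissible local models is admissible), which hinges precisely on the observation that each constraint's satisfaction is determined by a single group of tuples.
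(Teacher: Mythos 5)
Your proof is correct and takes essentially the same approach as the paper's: the paper also decomposes \cc\ into independent per-constraint sub-problems (stated there as the fact that consistency can be checked separately on the maximal connected components of the conflict hypergraph, each of which, under the disjointness hypothesis, is generated by a single constraint) and then invokes Theorems~\ref{theo:ic-empty-phi} and~\ref{theo:EGDs-cc}. The only difference is the level of detail: the paper asserts the decomposition as trivial, whereas you verify it explicitly via the marginalization and product constructions, which is a sound elaboration of the same argument rather than a different route.
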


Hence, the above theorem entails that
\cc\ is tractable in the interesting case that $\IC$ consists of one FD per relation.
In the following theorem, we elaborate more on this case, and specify necessary and sufficient conditions which can be checked to decide the consistency.


\begin{theorem}\label{cor:FD-cc}
If $\IC$ consists of one FD per relation, then $HG(D^p,\IC)$ is a graph where
each connected component is either a singleton or a complete multipartite graph.
Moreover, $D^p$ is consistent w.r.t. $\IC$ iff the following property holds:
for each connected component $C$ of $HG(D^p,\IC)$, denoting the maximal independent sets 
of $C$ as $S_1, \dots, S_k$, it is the case that
$\sum_{i\in[1..k]} \tilde{p}_i\leq 1$, where $\tilde{p}_i=\max_{t\in S_i} p(t)$.
\end{theorem}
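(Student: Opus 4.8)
The plan is to first establish the structural claim and then prove the two directions of the consistency characterization, reducing the problem to a single connected component.

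\emph{Structure of the conflict graph.} Since an FD is a binary EGD, $HG(D^p,\IC)$ is a graph. Fix a relation $R$ with its FD $X\to Y$. Two tuples of $R$ conflict exactly when they agree on $X$ but disagree on $Y$, and tuples of distinct relations never conflict (the FD of one relation does not mention another). First I would partition the tuples of $R$ by their value on $X$: tuples with different $X$-values are non-adjacent, so each $X$-group is a union of connected components. Within a fixed $X$-group I would further partition by the value on $Y$, obtaining parts $S_1,\dots,S_k$; two tuples are adjacent iff they lie in different parts (same $X$, different $Y$) and non-adjacent iff they lie in the same part (same $X$ and same $Y$, hence compatible). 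This is exactly a complete multipartite graph, which is connected when $k\ge 2$ and degenerates into isolated nodes (singletons) when $k=1$. Moreover, since adjacency means ``being in different parts'', the independent sets are precisely the subsets of a single part, so the maximal independent sets of such a component are exactly the parts $S_1,\dots,S_k$, as claimed.

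\emph{Reduction and necessity.} Because there are no edges between distinct connected components, a model of $D^p$ exists iff each component admits a model matching its marginals (component models combine into a global one, as already used in Remark~2); hence I would argue on a single component $C$ with parts $S_1,\dots,S_k$ and $\tilde{p}_i=\max_{t\in S_i}p(t)$. For necessity, given a model $Pr$, consider the events $E_i=$ ``some tuple of $S_i$ is present''. Since every tuple of $S_i$ conflicts with every tuple of $S_j$ for $i\ne j$, no nonzero-probability world contains tuples from two different parts, so $E_1,\dots,E_k$ are pairwise disjoint and their probabilities sum to at most $1$. As $E_i$ contains the event ``$t$ is present'' for each $t\in S_i$, we have $Pr(E_i)\ge \tilde{p}_i$, whence $\sum_i\tilde{p}_i\le 1$.

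\emph{Sufficiency.} Assuming $\sum_i\tilde{p}_i\le 1$, I would build a model for $C$ by a ``probability line'' construction. Carve $[0,1]$ into consecutive disjoint sub-intervals $I_1,\dots,I_k$ with $|I_i|=\tilde{p}_i$ (feasible since the lengths sum to at most $1$). Inside $I_i$, assign to each $t\in S_i$ the sub-interval of length $p(t)\le\tilde{p}_i$ anchored at the left endpoint of $I_i$, so that the intervals of the tuples of $S_i$ are nested. Reading each point $x\in[0,1]$ as the world consisting of the tuples whose interval contains $x$ yields a PDF over worlds: the measure on which $t$ is present equals $p(t)$, so all marginals are matched; the nesting inside $I_i$ makes every world a subset of a single part (no conflict), and the disjointness of the $I_i$ guarantees that no world mixes parts, so every nonzero-probability world satisfies the FD. Thus $C$ is consistent, completing the iff.

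I expect the main obstacle to be the sufficiency direction: one must exhibit a single admissible interpretation meeting all tuple marginals simultaneously while respecting the exclusivity among parts, and the nested-interval layout is the key device that makes this possible. The structural step, though routine, must be stated carefully so that the maximal independent sets are correctly identified with the parts; once that is done, the necessity argument via mutually exclusive events and the line construction for sufficiency fit together directly. Note that this reasoning is genuinely independent of the hypertree and ring results, since a complete multipartite graph (e.g.\ $K_{2,3}$) need be neither a hypertree, a ring, nor a clique.
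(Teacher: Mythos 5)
Your proposal is correct and follows essentially the same route as the paper's proof: the same structural analysis (partition the tuples of each relation by their $X$-value and then by their $Y$-value, so that the parts are exactly the maximal independent sets), the same component-wise reduction, necessity via the mutual exclusivity of the parts (the paper argues with the clique of maximum-probability representatives $\tilde{t}_1,\dots,\tilde{t}_k$, you with the union events $E_i$ — the same idea), and sufficiency by nesting each tuple of $S_i$ inside the probability mass reserved for its part's maximum-probability tuple. Your explicit interval (``probability line'') construction is simply a concrete rendering of what the paper describes more informally as putting the tuples of $S_i$ other than $\tilde{t}_i$ in the portion of the probability space occupied by the worlds containing $\tilde{t}_i$.
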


We recall that a complete multipartite graph is a graph whose nodes can be partitioned into 
sets such that an edge exists if and only if it connects two nodes 
belonging to distinct sets. 
Each of these sets is a maximal independent set of nodes.
For instance, the portion of the graph in Figure~\ref{fig:HG1FDxRelation}(b) containing 
only the nodes $t_1$, $t_2$, $t_3$, $t_4$, $t_5$ is a complete multipartite graph whose
maximal independent sets are  $S_1=\{t_1, t_2\}$, $S_2=\{t_3, t_4\}$, $S_3=\{t_5\}$.
The following example shows an application of Theorem~\ref{cor:FD-cc}.

\begin{example}
\label{ex:FDcons}
Consider the PDB scheme consisting of the probabilistic relation scheme
\emph{Person}$^p($\emph{Name}, \emph{City}, \emph{State}, \emph{P}$)$,
and its instance $D^p$ consisting of the instance 
\emph{person}$^p$ of \emph{Person}$^p$ shown in Figure~\ref{fig:HG1FDxRelation}(a).

\begin{figure}[h!]
\centering
\begin{tabular}{cc}
{\small
\begin{tabular}{l||c|c|c|c||}
\hhline{~|t:====:t|}
& \raisebox{0cm}[3.2mm][1.1mm]{\textbf{\textit{Name}}} &  
\textbf{\textit{City}} & 
\textbf{\textit{State}} &
\textbf{\textit{P}}\\
\hhline{~|:====:|}
$t_1$ & \raisebox{0cm}[3mm][1mm]{B. Van de Kamp} & Sioux City &   IA  &$1/2$ \\
\hhline{~||----||}
$t_2$ & \raisebox{0cm}[3mm][1mm]{S. Delfino}  & Sioux City &  IA &$1/4$\\
\hhline{~||----||}
$t_3$ & \raisebox{0cm}[3mm][1mm]{L. Scavo}  & Sioux City &   NE  &$1/4$  \\
\hhline{~||----||}
$t_4$ & \raisebox{0cm}[3mm][1mm]{G. Solis}  & Sioux City &   NE  &$1/4$  \\
\hhline{~||----||}
$t_5$ & \raisebox{0cm}[3mm][1mm]{E. Britt} & Sioux City &   SD   & $1/4$ \\
\hhline{~||----||}
$t_6$ & \raisebox{0cm}[3mm][1mm]{K. Mayfair}  & Baltimore &   MD   &$3/4$\\
\hhline{~||----||}
$t_7$ & \raisebox{0cm}[3mm][1mm]{R. Perry}  & Fargo  & ND  &$3/4$\\  
\hhline{~||----||}
$t_8$ & \raisebox{0cm}[3mm][1mm]{M. A. Young}  & Fargo  & ND  &$1/4$\\
\hhline{~||----||}
$t_9$ & \raisebox{0cm}[3mm][1mm]{K. McCluskey}  & Fargo  & MN  &$1/4$\\
\hhline{~|b:====:b|}
\end{tabular}
}
&
\raisebox{-.5\height}{\includegraphics[scale=0.8]{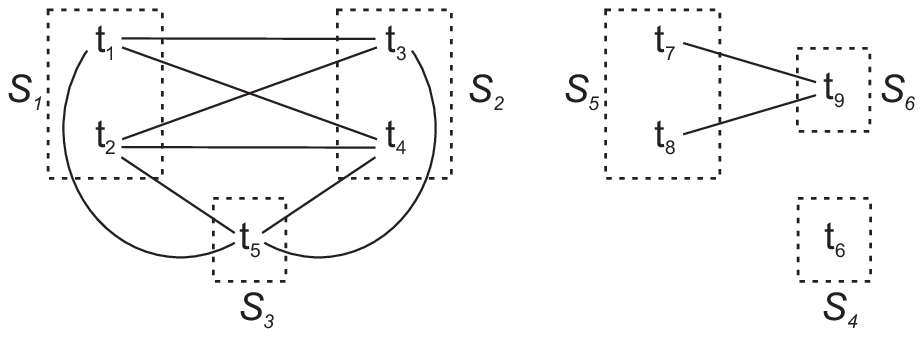}}\\
(a) & (b)\\
\end{tabular}
\caption{(a) PDB instance $D^p$; (b) Conflict hypergraph $HG(D^p,\IC)$}
\label{fig:HG1FDxRelation}
\end{figure}

Consider the FD $ic$: \emph{City} $\rightarrow$ \emph{State}, which can be rewritten as
$\neg \big[$ \emph{Person}$(x_1,x_2,x_3)\,\wedge$ \emph{Person}$(x_4,x_2,x_5)\,\wedge x_3\!\neq\!x_5 \, \big]$.
The conflict hypergraph $HG(D^p,\IC)$ is the graph depicted in Figure~\ref{fig:HG1FDxRelation}(b).
It consists of $3$ connected components: 
one of them is a singleton (and corresponds to the maximal independent set $S_4$), 
and the other two are the complete multipartite graphs over
the maximal independent sets $S_1,S_2,S_3$ and $S_5,S_6$, respectively.
Theorem~\ref{cor:FD-cc} says that $D^p$ is consistent if and only if the 
following three inequalities (one for each connected component of $HG(D^p,\IC)$) hold:
$$
\max\{p(t_1), p(t_2)\} + \max\{p(t_3), p(t_4)\} + p(t_5)\leq 1; \hspace*{10mm}
p(t_6)\leq 1; \hspace*{10mm}
\max\{p(t_7), p(t_8)\} + p(t_9)\leq 1.
$$

As a matter of fact, all these inequalities are satisfied, thus the considered PDB is consistent.
In fact, there is a model $M$ for $D^p$ w.r.t. $\IC$ 
assigning probability $1/4$ to each of the possible worlds
$w_1=\{t_1, t_2, t_6, t_7, t_8 \}$,
$w_2=\{t_1, t_6, t_7\}$, 
$w_3=\{t_3, t_4, t_6, t_7\}$, and
$w_4=\{t_5, t_9\}$,
and probability $0$ to all the other possible worlds.
The reader can easily check that there are models for $D^p$ w.r.t. $\IC$ other than $M$.
\boxx
\end{example}

\subsection{Tractability implied by conflict-hypergraph properties vs. tractability implied by syntactic forms.}
The tractability results stated in sections~\ref{sec:tractablecases} 
and~\ref{sec:Syntactic-tractable-cases} can be viewed as complimentary to each other.
In fact, an instance of \cc\ may turn out to be tractable due the syntactic form of the constraints, even if the shape of the conflict hypergraph is none of those ensuring tractability, and vice versa.
For instance, in the case that $\IC$ consists of a join-free denial constraint or a binary EGD, 
it is easy to see that the conflict hypergraph may not be a hypertree or a ring, but \cc\
is nevertheless tractable due to theorems~\ref{theo:tractableCC} and~\ref{theo:ring}.
Vice versa, if $\IC$ contains two FDs per relation or a ternary denial constraints with joins
(which, potentially, are hard cases, due to theorems~\ref{theo:ic-arity-3} 
and~\ref{theo:2FDs-cc}), \cc\ may turn out to be tractable, if the way the data combine with 
the constraints yields a conflict hypergraph which is a hypertree or a ring (see theorems~\ref{theo:tractableCC} and~\ref{theo:ring}).

On the whole, the tractability results presented in sections~\ref{sec:tractablecases} and~\ref{sec:Syntactic-tractable-cases} can be used conjunctively when addressing \cc :
for instance, one can start by examining the constraints and check whether they conform 
to a tractable syntactic form, and, if this is not the case, one can look at the conflict hypergraph and check whether its structure entails tractability.

\section{Querying PDB\lowercase{s} under constraints}
\label{sec:queryevaluation}
As explained in the previous section, given a PDB $D^p$ 
in the presence of a set $\IC$ of integrity constraints, not all the interpretations of $D^p$
are necessarily models w.r.t. $\IC$.
If $D^p$ is consistent w.r.t. $\IC$, there may be exactly one model (Case 2 of the motivating example), or more (Case 3 of the same example).
In the latter case, given that all the models satisfy all the constraints in $\IC$, there is 
no reason to assume one model more reasonable than the others
(at least in the absence of other knowledge not encoded in the constraints).
Hence, when querying $D^p$, it is ``cautious'' to answer to queries by taking into account
all the possible models for $D^p$ w.r.t. $\IC$.
In this section, we follow this argument and introduce a \emph{cautious querying paradigm}
for conjunctive queries, where query answers consist of tuples associated with probability ranges:
given a query $Q$ posed over $D^p$, the range associated with a tuple $t$ in the answer of $Q$
contains every probability with which $t$ would be returned as an answer of $Q$
if $Q$ were evaluated separately on every model of $D^p$.
In what follows, we first introduce the formal definition of conjunctive query in the 
probabilistic setting, and introduce its semantics according to the above-discussed cautious
paradigm.
Then, we provide our contributions on the characterization of the problem of computing query
answers.

A (conjunctive) query over a PDB schema $\D^p$ is written as a (conjunctive)
query over its deterministic part $det(\D^p)$. 
Thus, it is an expression of the form:\\
$Q(\vec x)=\exists \vec z.\ R_1(\vec{y}_1)\wedge\dots\wedge R_m(\vec{y}_m)\wedge\phi(\vec{y}_1,\dots,\vec{y}_m)$,
where:\vspace*{-1mm}
\begin{list}{--}
     {\setlength{\rightmargin}{0mm}
      \setlength{\leftmargin}{3mm}
      \setlength{\itemindent}{-1mm} 
			\setlength{\itemsep}{-1mm}}
\item
$R_1,\dots,R_m$ are name of relations in $det(\D^p)$;
\item
$\vec{x}$ and $\vec{z}$ are tuples of variables, having no variable common;
\item 
$\vec{y}_1,\!\dots,\!\vec{y}_m$ are tuples of variables and constants such that 
every variable in any $\vec{y}_i$ occurs in either $\vec{x}$ or $\vec{z}$, and vice versa;
\item
$\phi(\vec{y}_1,\dots,\vec{y}_m)$ is a conjunction of built-in predicates, each
of the form 
$\alpha \diamond \beta$, where $\alpha$ and $\beta$ are either variables in 
$\vec{y}_1,\dots,\vec{y}_m$ or constants, and 
$\diamond \in \{=, \neq, \leq, \geq, <, > \}$.
\end{list}

A query $Q$ will be said to be \textit{projection-free} if $\vec{z}$ is empty.

The semantics of a query $Q$ over a PDB $D^p$
in the presence of a set of integrity constraints $\IC$ is given in two steps.
First, we define the answer of $Q$ w.r.t. a single model $M$ of $D^p$.
Then, we define the  answer of $Q$ w.r.t. $D^p$, which summarizes all the answers of $Q$ 
obtained by separately evaluating $Q$ over every model of $D^p$ .
Obviously, we rely on the assumption that $D^p$ is consistent w.r.t. $\IC$, thus $\M(D^p,\IC)$
is not empty.

The answer of $Q$ over a model $M$ of $D^p$ w.r.t. $\IC$ 
is the set \emph{Ans}$^M\!(Q,D^p,\IC)$ of pairs of the form 
$\la\vec{t}, p^M_Q(\vec{t}) \ra$  such that:\vspace*{-1.5mm}
\begin{list}{--}
     {\setlength{\rightmargin}{0mm}
      \setlength{\leftmargin}{3mm}
      \setlength{\itemindent}{-1mm} 
			\setlength{\itemsep}{0mm}}
\item
$\vec{t}$ is a ground tuple such that  $\exists w\!\in\! \mbox{\emph{pwd}}(D^p)\mbox{\emph{ s.t. }} w\models Q(\vec{t})$;
\item
$p^M_Q(\vec{t})=
\sum_{
w\in pwd(D^p) \wedge\\
w\models Q(\vec{t})
} 
M(w)$ is the overall probability of the possible 
worlds where $Q(\vec{t})$ evaluates to true,
\end{list}
\vspace*{-1mm}
\noindent where $w\models Q(\vec{t})$ denotes that $Q(\vec{t})$ evaluates to true in $w$.

\vspace*{1mm}
In general, there may be several models for $D^p$, 
and the same tuple $\vec{t}$ may have different probabilities in the answers evaluated
over different models.
Thus, the overall answer of $Q$ over $D^p$ is defined in what follows as 
a summarization of all the answers of $Q$ over all the models of $D^p$.
\begin{definition}[Query answer]
\label{def:queryanswer}
Let  $Q$ be a query over $\D^p$,
and $D^p$ an instance of $\D^p$.
The answer of $Q$ over $D^p$ is the set \emph{Ans}$(Q,D^p,\IC)$ of pairs 
$\la \vec{t}, [p^{min}, p^{max}] \ra$,
where:
\begin{list}{--}
     {\setlength{\rightmargin}{0mm}
      \setlength{\leftmargin}{3mm}
      \setlength{\itemindent}{-2mm} 
			\setlength{\itemsep}{0mm}}
\item
$\exists M\!\in\!\M(D^p,\IC)$ s.t. $\vec{t}$ is a tuple in \emph{Ans}$^M(Q,D^p,\IC)$;
\item
$p^{\mbox{\scriptsize{min}}}\!=\!\!\!\!\!\begin{array}{cl}
\min\vspace*{-1mm}&\hspace*{-4mm} \left\{ p^M_Q(\vec{t}) \right\},\\ 
\mbox{\scriptsize$M\!\!\in\!\!\M(D^p\!,\!\IC)$}&\\
\end{array}$\ \ \ \ \
$p^{\mbox{\scriptsize{max}}}\!=\!\!\!\!\!\begin{array}{cl}
\max\vspace*{-1mm}&\hspace*{-4mm} \left\{ p^M_Q(\vec{t}) \right\}.\\  
\mbox{\scriptsize$M\!\!\in\!\!\M(D^p\!,\!\IC)$}&\\
\end{array}$
\end{list}
\end{definition}
Hence, each tuple $\vec{t}$ in \emph{Ans}$(Q,D^p,\IC)$ is associated with an interval 
$[p^{\min}, p^{\max}]$, whose extremes are, respectively,
the minimum and maximum probability of $\vec{t}$ in the answers of $Q$ over 
the models of $D^p$.
Examples of answers of a query are reported in the motivating example.
In the following, we say that $\vec{t}$ is an answer of $Q$ 
with minimum and maximum probabilities $p^{\min}$ and $p^{\max}$ if 
$\langle \vec{t}, [p^{\min}, p^{\max}]\rangle \in$\emph{Ans}$(Q,D^p,\IC)$.



The following proposition gives an insight on the semantics of query answers, as it better 
explains the meaning of the probability range associated with
each tuple occurring in the set of answers of a query.
That is, it states that,  taken any pair $\la \vec{t}, [p^{\min}, p^{\max}] \ra$ in 
\emph{Ans}$(Q,D^p,\IC)$, every value $p$ inside the interval $[p^{\min}, p^{\max}]$ is 
``meaningful'', in the sense that there is at least one model for which $\vec{t}$ 
is an answer of $Q$ with probability $p$.
Considering this property along the fact that the boundaries $p^{\min}, p^{\max}$ 
are the minimum and maximum probabilities of $\vec{t}$ as an answer of $Q$ (which follows
from Definition~\ref{def:queryanswer}), we have that $[p^{\min}, p^{\max}]$ is the tightest
interval containing all the probabilities of $\vec{t}$ as an answer of $Q$, and is dense
(every value inside it corresponds to a probability of $\vec{t}$ as an answer of $Q$).

\begin{proposition}
\label{pro:intquery}
Let  $Q$ be a query over $\D^p$, and $D^p$ an instance of $\D^p$.
For each pair $\la \vec{t}, [p^{\min}, p^{\max}] \ra$ in 
\emph{Ans}$(Q,D^p,\IC)$, and each probability value $p\in [p^{\min}, p^{\max}]$, 
there is a model $M$ of $D^p$ w.r.t. $\IC$ such that 
$\la \vec{t}, p \ra \in$ \emph{Ans}$^M(Q,D^p,\IC)$.
\end{proposition}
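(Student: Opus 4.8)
The plan is to exploit the fact that the set of models $\M(D^p,\IC)$ is convex, together with the linearity of the map $M\mapsto p^M_Q(\vec{t})$. First I would observe that $\M(D^p,\IC)$ is a convex subset of the probability simplex over $pwd(D^p)$: an interpretation is just a vector $(Pr(w))_{w\in pwd(D^p)}$, and every condition defining a model is linear in this vector. Indeed, condition $(iii)$ requires $Pr(w)\geq 0$, condition $(ii)$ requires $\sum_{w} Pr(w)=1$, condition $(i)$ requires $\sum_{w\,\ni\, t} Pr(w)=p(t)$ for each tuple $t$, and admissibility w.r.t. $\IC$ requires $\sum_{w\not\models\IC} Pr(w)=0$ (equivalently $Pr(w)=0$ for every $w\not\models\IC$, since all summands are nonnegative). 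Being an intersection of half-spaces and hyperplanes, $\M(D^p,\IC)$ is a convex polytope, so any convex combination of two of its members is again a member.

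Next I would fix a pair $\la\vec{t},[p^{\min},p^{\max}]\ra\in$\emph{Ans}$(Q,D^p,\IC)$ and let $M_{\min}$ and $M_{\max}$ be models attaining the extreme values $p^{\min}$ and $p^{\max}$ of $p^M_Q(\vec{t})$; these exist by Definition~\ref{def:queryanswer} (and, since the polytope is compact and the functional continuous, the minimum and maximum are indeed attained). For $\lambda\in[0,1]$ I would define the convex combination $M_\lambda=(1-\lambda)\,M_{\min}+\lambda\,M_{\max}$, which by the previous paragraph is itself a model of $D^p$ w.r.t. $\IC$; the only point deserving care is the routine check that $M_\lambda$ meets each defining condition, which is immediate from linearity (e.g. $\sum_{w\,\ni\, t} M_\lambda(w)=(1-\lambda)\,p(t)+\lambda\,p(t)=p(t)$, and $M_\lambda(w)=0$ whenever $w\not\models\IC$). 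Since $p^M_Q(\vec{t})=\sum_{w\models Q(\vec{t})} M(w)$ is a fixed linear functional of $M$, it follows that $p^{M_\lambda}_Q(\vec{t})=(1-\lambda)\,p^{\min}+\lambda\,p^{\max}$.

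Finally, given any target value $p\in[p^{\min},p^{\max}]$, I would set $\lambda=(p-p^{\min})/(p^{\max}-p^{\min})$ when $p^{\max}>p^{\min}$ (and take any model when $p^{\max}=p^{\min}$, where the interval degenerates to a point), so that $p^{M_\lambda}_Q(\vec{t})=p$. Because the membership condition of $\vec{t}$ in \emph{Ans}$^M(Q,D^p,\IC)$ --- namely that $\exists w\in pwd(D^p)$ with $w\models Q(\vec{t})$ --- does not depend on the chosen model, the pair $\la\vec{t},p\ra$ belongs to \emph{Ans}$^{M_\lambda}(Q,D^p,\IC)$, which is exactly the claim. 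I do not expect any genuine obstacle here: the whole argument rests on the single observation that the model set is a polytope and the query probability a linear functional on it, so density of the range is just convexity of the image of a convex set under a linear map.
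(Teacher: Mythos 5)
Your proof is correct. It rests on exactly the same geometric facts as the paper's proof --- the models of $D^p$ w.r.t.\ $\IC$ form a convex polytope (being cut out by the linear conditions $(i)$--$(iii)$ plus the linear admissibility condition), and $M\mapsto p^M_Q(\vec{t})$ is a linear functional on that polytope --- but the execution differs in one respect worth noting. The paper makes the polytope explicit as a system $S^*(\D^p,\IC,D^p)$ of linear (in)equalities with one variable per possible world plus an auxiliary variable $v^*$ for the query probability, and then cites a known result on linear programming (that a linear objective over a convex feasible region attains every value between its minimum and maximum) to conclude density of the interval. You instead prove that intermediate-value property from scratch: take models $M_{\min}$ and $M_{\max}$ attaining the extremes and pass to the convex combination $M_\lambda=(1-\lambda)M_{\min}+\lambda M_{\max}$, checking by linearity that $M_\lambda$ is again a model and that its query probability is $(1-\lambda)p^{\min}+\lambda p^{\max}$. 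Your route is thus self-contained and more elementary --- it needs no external optimization theorem and no explicit enumeration of the exponentially many world-variables --- while the paper's LP formulation has the side benefit of being reused elsewhere (e.g.\ in the proof of Lemma~\ref{boundedpmin-pmax}, which bounds the size of the rationals $p^{\min},p^{\max}$ via the integer-coefficient LP, something your interpolation argument does not by itself provide). One small point you rightly flag but could make sharper: attainment of $p^{\min}$ and $p^{\max}$ (i.e.\ that the min and max in Definition~\ref{def:queryanswer} are genuine minima/maxima) follows from compactness of the polytope, which is exactly what licenses picking $M_{\min}$ and $M_{\max}$ in the first place.
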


\begin{proof}
We first introduce a system $S(\D^p,\IC, D^p)$ of linear (in)equalities whose solutions
one-to-one correspond to the models of $D^p$ w.r.t. $\IC$.
For every  $w_i\in pwd(D^P)$, let $v_i$ be a variable ranging over the domain 
of rational numbers.
The variable $v_i$ will be used to represent the probability assigned to $w_i$ by an
interpretation of $D^p$.
The system of linear (in)equalities $S(\D^p,\IC, D^p)$ is as follows:
$$
\left\{
\begin{array}{lr}
\forall t\in D^p,\  \sum_{i | w_i\in pwd(D^p)\wedge t\in w_i} v_i=p(t)& (e1)\\
\sum_{i | w_i\in pwd(D^p)\wedge w_i\not\models \IC} v_i=0 & (e2)\\
\sum_{i | w_i\in pwd(D^p)} v_i=1 & (e3)\\
\forall w_i\in pwd(D^p),\ v_i\geq 0 & (e4)\\
\end{array}\\
\right.
$$
The first $|D^p|$ equalities $(e1)$ in $S(\D^p,\IC, D^p)$ encode the fact that, 
for each tuple $t$ in the PDB instance, the sum of the 
probabilities assigned to the worlds containing the tuple $t$ must be equal to 
the marginal probability of $t$. 
The subsequent two equalities $(e2)$, $(e3)$, along with the inequalities $(e4)$ imposing 
that the probabilities $v_i$ assigned to each possible world are non-negative,
entail that the probability assigned to any world violating $\IC$ is $0$,
as well as that the probabilities assigned to all the possible worlds sum up to $1$.

It is easy to see that every solution $s$ of $S(\D^p,\IC, D^p)$ one-to-one corresponds to 
a model $Pr$ for $D^p$ w.r.t. $\IC$, where $Pr(w_i)$ is equal to $v_i[s]$, i.e.,
the value of $v_i$ in~$s$.

We now consider the system of linear (in)equalities 
$S^*(\D^p,\IC,D^p)$
obtained by augmenting the set of (in)equalities in 
$S(\D^p,\IC,D^p)$
with the following equality:
$$v^* =\sum_{i | w_i\in pwd(D^p)\wedge w_i\models \vec{t}} v_i$$
where $v^*$ is a new variable symbol not appearing in $S(\D^p,\IC,$ $D^p)$.

Obviously, every solution $s$ of $S^*(\D^p,\IC,D^p)$ still one-to-one corresponds to 
a model $Pr$ for $D^p$ w.r.t. $\IC$ such that, for each possible world $w_i\in pwd(D^p)$,
$Pr(w_i)$ is equal to $v_i[s]$, and $v^*[s]$ (the value of $v^*$ in $s$) is equal 
to the sum of the probabilities assigned by $Pr$ to the possible worlds 
where $\vec{t}$ is an answer of $Q$.
Therefore, $p^{\min}$ (resp. $p^{\max}$) is the solution of the following
linear programming problem $LP(S^*)$:
$$
\begin{array}{l}
\mbox{minimize (resp. maximize) } v^*\\
\mbox{subject to\ } 
S^*(\D^p,\IC,D^p)
\end{array}$$

Since the feasible region shared by the min- and max- variants of $LP(S^*)$ is defined by linear inequalities only, it follows that it is a convex polyhedron.
Hence, the following well-known result~\cite{PapOpt} can be exploited:
``\emph{given two linear programming problem $LP_1$ and $LP_2$
minimizing and maximizing the same objective function $f$ over the same convex feasible region $S$, respectively,
it is the case that 
for any value $v$ belonging to the interval $[v^{\min}, v^{\max}]$, whose extreme values 
are the optimal solutions of $LP_1$ and $LP_2$, respectively,
there is a solution $s$ of $S$ such that $v$ is the value taken by $f$ when evaluated over 
$s$}''.   
This result entails that, for every probability value 
$p\in [p^{\min}, p^{\max}]$ taken by the objective function $v^*$ of $LP(S^*)$,
there is a feasible solution $s$ of $S^*(\D^p,\IC,D^p)$ such that $p=v^*[s]$.
Hence, the statement follows from the fact that every solution of $S^*(\D^p,\IC, D^p)$ one-to-one corresponds to a model for $D^p$ w.r.t. $\IC$.
\end{proof}

The definition of query answers with associated ranges is reminiscent of the
treatment of aggregate queries in inconsistent databases~\cite{ArenasBCHRS03}.
In that framework, the consistent answer of an aggregate query \emph{Agg} is a range 
$[v_1, v_2]$, whose boundaries represent the minimum and maximum answer which would be 
obtained by evaluating \emph{Agg} on at least one repair of the database.
However, the consistent answer is not, in general, a dense interval: for instance,
it can happen that there are only two repairs, one corresponding to $v_1$ and one to $v_2$,
while the values between $v_1$ and $v_2$ can not be obtained as answers on any repair.

In the rest of this section, 
we address the evaluation of queries from two standpoints:
we first consider a decision version of the query answering problem,
and then we investigate the query evaluation as a search problem.
In the following, besides assuming that a database schema $\D^p$ and a set of constraints
$\IC$ of fixed size are given, we also assume that queries over $\D^p$ are of fixed 
size.
Thus, all the complexity results refer to data complexity. 

\subsection{Querying as a decision problem}
In the classical ``deterministic'' relational setting, the decision version of the 
query answering problem is commonly defined as the \emph{membership problem} of 
deciding whether a given tuple belongs to the answer of a given query.
In our scenario, tuples belong to query answers with some probability range, thus it 
is natural to extend this definition to our probabilistic setting in the following way.
\begin{definition}[Membership Problem (\mp)]
\label{def:mp}
Given a 
query $Q$ over $\D^p$, an instance $D^p$ of $\D^p$, 
a ground tuple $\vec{t}$, and the constants $k_1$ and $k_2$ (with $0\!\!\leq\!\!k_1\!\!\leq\!\! k_2\!\!\leq\!\! 1$), 
the membership problem is deciding whether $\vec{t}$ is an answer of $Q$ with 
minimum and maximum probabilities $p^{\mbox{\scriptsize{min}}}$ and $p^{\mbox{\scriptsize{max}}}$ 
such that 
$p^{\mbox{\scriptsize{min}}}\!\geq\!k_1$ and $p^{\mbox{\scriptsize{max}}}\!\leq\!k_2$.
\end{definition}


Hence, solving \mp\ can be used to decide whether a given tuple is an answer with a 
probability which is at least $k_1$ and not greater than $k_2$.
Observe that Definition~\ref{def:mp} collapses to the classical definition of 
membership problem when data are deterministic: 
in fact, asking whether a tuple belongs to the answer of a query posed over a deterministic
database corresponds to solving \mp\ over the same database with $k_1\!=\!k_2\!=\!1$.


From the results in \cite{Luk01}, where an entailment problem more general than 
\mp\ was shown to be in co$N\!P$ (see Section \ref{sec:relatedwork}), 
it can be easily derived that \mp\ is in co$N\!P$ as well. 
The next theorems (which are preceded by a preliminary lemma) determine two cases when this 
upper bound on the complexity is tight.

\begin{lemma}
\label{boundedpmin-pmax}
Let $Q$ be a conjunctive query over $\D^p$, $D^p$ an instance of $\D^p$, and 
$\vec{t}$ an answer of $Q$ having minimum probability $p^{\min}$ and maximum probability 
$p^{\max}$.
Let $m$ be the number of tuples in $D^p$ plus $3$ and $a$ be the maximum among the 
numerators and denominators of the probabilities of the tuples in $D^p$.
Then 
$p^{\min}$ and $p^{\max}$ are expressible as fractions of the form $\frac{\eta}{\delta}$,
with $0\leq \eta \leq (m a)^m$ and $0< \delta \leq (m a)^m$.
\end{lemma}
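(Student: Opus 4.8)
The plan is to exploit the linear-programming characterization of $p^{\min}$ and $p^{\max}$ already established in the proof of Proposition~\ref{pro:intquery}, and then to bound the size of an optimal vertex of the associated polyhedron by means of Cramer's rule and Hadamard's inequality. Recall that $p^{\min}$ (resp.\ $p^{\max}$) is the optimum of the linear program $LP(S^*)$ which minimizes (resp.\ maximizes) the variable $v^*$ subject to $S^*(\D^p,\IC,D^p)$. This system consists of exactly $m=|D^p|+3$ linear \emph{equalities} --- namely the $|D^p|$ equalities $(e1)$, the two equalities $(e2)$ and $(e3)$, and the defining equality of $v^*$ --- together with the non-negativity constraints $v_i\geq 0$ (and $v^*\geq 0$, which may be added without altering the feasible region, since $v^*$ equals a sum of non-negative variables). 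The coefficient matrix $A$ of the equalities has all its entries in $\{-1,0,1\}$, because every coefficient is either an indicator of whether a tuple or a world satisfies a given condition, or the coefficient $\pm 1$ of $v^*$ in its defining equality. The right-hand side $b$ has entries in $\{0,1\}\cup\{p(t)\mid t\in D^p\}$, hence all lying in $[0,1]$.

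Since $D^p$ is consistent, the feasible region is nonempty; it is moreover a pointed polyhedron (all variables are non-negative) and the objective is bounded (as $0\le v^*\le 1$), so each optimum is attained at a vertex, i.e.\ at a basic feasible solution. At such a solution the non-basic variables vanish and the basic ones are the unique solution of $Bx_B=b$, where $B$ is a non-singular square submatrix of $A$ of order $r\le m$. If $v^*$ is non-basic then $v^*=0=0/1$ and the claim is immediate; otherwise, Cramer's rule gives $v^*=\det(A')/\det(B)$, where $A'$ is obtained from $B$ by replacing the column of $v^*$ with the corresponding entries of $b$.

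To control these determinants I would first clear denominators. Writing $p(t)=n_t/d_t$ with $|n_t|,d_t\le a$, let $D=\prod_{t\in D^p}d_t$, so that $1\le D\le a^{|D^p|}$ and $Db$ is an integer vector with entries of absolute value at most $D$. Using multilinearity of the determinant in the replaced column, $v^*=\det(A'')/(D\,\det(B))$, where $A''$ is the integer matrix obtained from $A'$ by scaling that single column by $D$. Now Hadamard's inequality bounds the determinant of an $r\times r$ matrix by the product of the Euclidean norms of its columns: the columns of $B$ have norm at most $\sqrt{r}\le\sqrt{m}$, giving $|\det(B)|\le(\sqrt{m})^{r}\le m^{m/2}$; the scaled column of $A''$ has norm at most $\sqrt{m}\,D$ while the remaining columns keep norm at most $\sqrt{m}$, whence $|\det(A'')|\le(\sqrt{m})^{r}D\le m^{m/2}a^{|D^p|}$. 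Setting $\eta=|\det(A'')|$ and $\delta=D\,|\det(B)|$ yields $v^*=\eta/\delta$ with $0\le\eta\le m^{m/2}a^{|D^p|}$ and $0<\delta\le m^{m/2}a^{|D^p|}$. Since $a\ge 1$ and $|D^p|=m-3\le m$, both right-hand sides are at most $m^{m}a^{m}=(ma)^m$, as required; the same argument applies verbatim to $p^{\max}$.

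The main obstacle is the bookkeeping in the determinantal estimate: one must argue carefully that the basis matrix is a submatrix of a $\{-1,0,1\}$-valued matrix of order at most $m$ (so that the value $m=|D^p|+3$ in the statement is exactly the number of equality constraints), and that clearing denominators through a single column keeps the integer entries small enough for Hadamard's bound to absorb the factor $a^{|D^p|}$ into $(ma)^m$. Everything else is a routine application of the vertex theorem for bounded linear programs over pointed polyhedra together with Cramer's rule.
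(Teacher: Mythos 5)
Your proof is correct, and it shares its skeleton with the paper's: both arguments reduce the claim to the linear program $LP(S^*)$ built in the proof of Proposition~\ref{pro:intquery}, identify $m=|D^p|+3$ as the number of equality constraints and $a$ as the largest numerator/denominator of a tuple probability, and then bound the size of an optimal solution of that program. Where you part ways is in how the size bound is established. The paper clears denominators \emph{row-wise} --- each equality $(e1)$ is multiplied by $d(p(t))$, so every coefficient becomes an integer bounded by $a$ --- and then invokes, as a black box, the result of~\cite{PapOpt} that an LP with integer coefficients, $m$ (in)equalities, and largest coefficient $a$ has optimal solutions expressible as $\frac{\eta}{\delta}$ with $\eta,\delta\leq (ma)^m$. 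You instead keep the $\{-1,0,1\}$-valued coefficient matrix and the rational right-hand side, pass to a basic optimal solution (vertex), apply Cramer's rule, clear denominators \emph{column-wise} through the single replaced column (picking up the factor $D\leq a^{m-3}$), and absorb everything into $(ma)^m$ via Hadamard's inequality. In effect you re-derive, in a self-contained way, the theorem the paper cites; this costs you the Cramer/Hadamard bookkeeping but buys independence from the external reference, and your intermediate bound $m^{m/2}a^{m-3}$ is in fact slightly sharper than $(ma)^m$. The delicate points --- pointedness of the polyhedron so that the optimum is attained at a vertex, the degenerate case where $v^*$ is non-basic, the sign of the Cramer quotient (handled by $v^*\geq 0$), and the rank of the system possibly being below $m$ --- are all treated correctly.
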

\begin{theorem}[Lower bound of \mp]
\label{theo:mphardness1}
There is at least one conjunctive query containing projection for which \mp\ is co$NP$-hard, 
even if $\IC$ is empty.
\end{theorem}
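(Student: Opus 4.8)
The plan is to establish coNP-hardness by a reduction from the complement of consistency checking (\cc), which is coNP-complete since \cc\ is NP-complete (Theorem~\ref{theo:CCisNPcomplete}). The guiding observation is that a conjunctive query \emph{with projection} has a lineage that is a monotone DNF whose shape is entirely controlled by the data: for a Boolean query (answer tuple $\la\ra$), in a world $w$ the query holds iff some witness for the existential variables is present, so the query is true exactly in those worlds that contain one of a collection of tuple-sets. Thus, if $\phi=\bigvee_{e}\bigwedge_{t\in e}X_t$ denotes this lineage (with $X_t$ the presence event of tuple $t$), the family $\{e\}$ of conjunctions plays the role of a conflict hypergraph, and the \emph{negation} $\neg\phi$ is precisely a set of denial constraints. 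The crucial equivalence I would prove is that, with $\IC=\emptyset$ (so every interpretation is a model), the minimum answer probability satisfies $p^{\min}=0$ iff some interpretation assigns probability $0$ to all worlds containing a full $e$, i.e.\ iff the PDB is consistent w.r.t.\ the denial constraints $\neg\phi$. Hence $p^{\min}>0$ iff that data-encoded consistency instance is \emph{inconsistent}. The point is that all the combinatorial difficulty of \cc\ is shifted from the constraints into the query, so that \IC\ may indeed remain empty.

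Concretely, I would fix the single query
\[
Q()=\exists c,x_1,x_2,x_3.\; R(c,x_1,x_2,x_3)\wedge S(x_1)\wedge S(x_2)\wedge S(x_3),
\]
which contains projection (all variables are existential) and has self-joins on $S$. Given a hard \cc\ instance whose conflict hypergraph has edges of size at most $3$ --- such instances exist by Theorem~\ref{theo:ic-arity-3} --- I would build the data as follows: $S$ carries one probabilistic tuple $S(a)$ per node, with marginal equal to the node's probability, while $R$ is \emph{deterministic} (every tuple has probability $1$) and lists the edges, one tuple $R(\mathit{id}_e,a,b,c)$ per edge $e=\{a,b,c\}$; edges of size $1$ or $2$ are padded to size $3$ with a distinguished certain node, which leaves the conjunction unchanged because an $X$-variable of probability $1$ is true in every positive-probability world. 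Since the $R$-atoms are certain, the lineage of $Q()$ is exactly $\phi=\bigvee_{e}\bigwedge_{t\in e}X_t$, faithfully reproducing the given hypergraph.

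It remains to turn the equivalence into a single \mp\ query (Definition~\ref{def:mp}). I would set $k_2=1$, so that the condition $p^{\max}\le k_2$ is vacuous, reducing \mp\ to the test $p^{\min}\ge k_1$; and I would choose $k_1$ as a positive separator between $0$ and the smallest possible nonzero value of $p^{\min}$. By Lemma~\ref{boundedpmin-pmax}, $p^{\min}$ is a fraction with denominator at most $(ma)^m$, so any nonzero $p^{\min}$ is at least $1/(ma)^m$; taking $k_1=1/(ma)^m$ (writable with polynomially many bits) yields $p^{\min}\ge k_1\iff p^{\min}>0$. The answer tuple $\la\ra$ belongs to $\mathit{Ans}(Q,D^p,\IC)$ whenever $\phi$ is satisfiable, so, combining the observations, the produced \mp\ instance answers \emph{yes} iff $p^{\min}>0$ iff the encoded \cc\ instance is inconsistent. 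As this last language is coNP-complete and the whole construction is polynomial-time with a fixed query, \mp\ is coNP-hard.

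The step I expect to be the main obstacle is this first equivalence together with the realizability of the lineage: I must verify rigorously that a \emph{single, fixed} conjunctive query with projection can, through the data alone, produce the monotone DNF of an NP-hard family of conflict hypergraphs (handling uniform edge size by padding and the certain $R$-atoms contributing nothing to the lineage), and that $p^{\min}=0$ matches exactly the existence of an admissible interpretation in the sense of \cc. The threshold bookkeeping via Lemma~\ref{boundedpmin-pmax} is then routine, and the density of the answer interval (Proposition~\ref{pro:intquery}) guarantees there are no pathological gaps to worry about between $p^{\min}$ and $p^{\max}$.
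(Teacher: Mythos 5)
Your proposal is correct and follows essentially the same route as the paper's own proof: both reduce consistency checking to the complement of \mp\ by encoding the conflict hypergraph of a hard \cc\ instance into a constraint-free PDB (certain ``edge'' tuples, probabilistic ``node'' tuples), using a fixed conjunctive query with projection whose lineage detects a violated conflicting set, and setting $k_1=1/(ma)^m$ via Lemma~\ref{boundedpmin-pmax} (with $k_2=1$) so that the membership test separates $p^{\min}=0$ from $p^{\min}>0$, i.e., consistency from inconsistency. The only differences are cosmetic: the paper draws its hard instances from Theorem~\ref{theo:2FDs-cc}, whose conflict hypergraphs are binary graphs, and hence uses the simpler query $\exists x,y\; R(x)\wedge R(y)\wedge S(x,y)$ with no need for the edge-padding device you introduce for hyperedges of size below three.
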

\begin{proof}
We show a LOGSPACE reduction from the consistency checking problem (\cc)
in the presence of binary denial constraints, 
which is $NP$-hard (see Theorem~\ref{theo:2FDs-cc}), to the 
complement of the membership problem ($\nmp$).

Let $\la \D^p_\cc,\IC_\cc,D^p_\cc \ra$ be an instance of \cc.
We construct an equivalent instance 
$\la \D^p_{\nmp},\IC_{\nmp},D^p_{\nmp}, Q, t_\emptyset, k_1, k_2 \ra$
of $\nmp$ as follows.
\begin{list}{--}
     {\setlength{\rightmargin}{0mm}
      \setlength{\leftmargin}{3mm}
      \setlength{\itemindent}{-1mm} 
			\setlength{\itemsep}{0mm}}
\item
$\D^p_{\nmp}$ consists of relation schemas $R^p(tid,P)$ and 
$S^p(tid_1,$ $tid_2,P)$;
\item
$\IC_{\nmp}=\emptyset$, that is, no constraint is assumed on $\D^p_{\nmp}$;
\item
$D^p_{\nmp}$ is the instance of $\D^p_{\nmp}$ which contains, for each tuple
$t\in D^p_\cc$, the tuple $R^p(id(t), p(t))$, where $id(t)$ is a unique identifier 
associated to the tuple $t$. 
Moreover, $D^p_{\nmp}$ contains, for each pair of 
tuples $t_1,t_2$ in $D^p_\cc$ which are conflicting w.r.t. $\IC_\cc$, the tuple 
$S^p(id(t_1),id(t_2), 1)$.
\item
$Q = \exists x,y\ R(x)\wedge R(y)\wedge S(x,y)$;
\item
$t_\emptyset$ is the empty tuple;
\item
the lower bound $k_1$ of the minimum probability of $t_\emptyset$ as answer of $Q$ is 
set equal to $k_1=\frac{1}{(m a)^m}$, where $m$ is the number of tuples in $D^p_{\nmp}$ 
plus $3$, and $a$ the maximum among the numerators and denominators of the probabilities 
of the tuples in $D^p_{\nmp}$;
\item
the upper bound $k_2$ of the maximum probability of $t_\emptyset$ as answer of $Q$ is 
set equal to $1$.
\end{list}

Obviously, the $\nmp$ instance returns true iff the minimum probability that 
$t_\emptyset$ is an answer to $Q$ over $D^p_{\nmp}$ is (strictly) less than $k_1$.

It is easy to see that every interpretation of $D^p_\cc$ (the database in the \cc\ instance)
corresponds to a unique interpretation of $D^p_{\nmp}$ (the database in the $\nmp$ instance),
and vice versa.
Observe that $D^p_{\nmp}$ is consistent, since the set of constraints considered in the $\nmp$
instance is empty.

We show now that the above-considered \cc\ and $\nmp$ instances are equivalent, that is,
the \cc\ instance is true iff the $\nmp$ instance is true.
On the one hand, if the \cc\ instance is true, then there is at least is one model
$Pr_{\cc}$ for $D^p_\cc$ w.r.t. $\IC_\cc$ (that is, $Pr_{\cc}$ assigns probability $0$ 
to every possible world $w$ which contains tuples which are conflicting according to 
$\IC_\cc$).
It is easy to see that evaluating $Q$ on the corresponding interpretation $Pr_{\nmp}$ of 
$\nmp$ yields probability $0$ for the empty tuple $t_\emptyset$. 
Hence, the $\nmp$ instance is true in this case.

On the other hand, if the $\nmp$ instance is true, then the minimum probability that
$t_\emptyset$ is an answer of $Q$ must be less than $\frac{1}{(m a)^m}$.
Since $\frac{1}{(m a)^m}$ is the smallest non-zero value that can be assumed by
the minimum probability of $t_\emptyset$ (see Lemma~\ref{boundedpmin-pmax}), this implies 
that the minimum probability that $t_\emptyset$ is an answer of $Q$ is $0$.
This means that there is a model $Pr_{\nmp}$ that assigns probability $0$ to every possible world
$w$ which contains three tuples $R(x_1)$, $R(y_1)$ and $S(x_2,y_2)$ with $x_1=x_2$ and $y_1=y_2$.
It is easy to see that the corresponding interpretation $Pr_{\cc}$ 
is a model for $D^p_\cc$ w.r.t. $\IC_\cc$, as
it assigns probability $0$ to every possible world which contains conflicting tuples.
Hence the \cc\ instance is true in this case.
\end{proof}

The above theorem establishes that the type of the query, 
and in particular that fact that it contains projection, 
is an important source of complexity making \mp\ hard, 
irrespectively of the constraints considered.
For projection-free queries, the next theorem states that \mp\ remains hard even if only binary constraints are considered.
\begin{theorem}[Lower bound of \mp]
\label{theo:mphardness2}
There is at least one projection-free conjunctive query and a set $\IC$ consisting of 
only binary constraints for which \mp\ is co$NP$-hard.
\end{theorem}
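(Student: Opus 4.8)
The plan is to establish co$NP$-hardness of \mp\ by a LOGSPACE reduction from the consistency checking problem \cc\ in the presence of binary denial constraints, which is $NP$-hard by Theorem~\ref{theo:2FDs-cc}. As in the proof of Theorem~\ref{theo:mphardness1}, I would reduce to the complement $\nmp$, fixing once and for all a single projection-free conjunctive query $Q$ and a set $\IC$ of only binary denial constraints, so that an instance $\langle \D^p_\cc, \IC_\cc, D^p_\cc\rangle$ of \cc\ is a yes-instance iff the associated $\nmp$-instance is.

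First I would isolate the key difference from Theorem~\ref{theo:mphardness1}: there the constraint set was empty (so every constructed database was trivially consistent) and the conflicts were pushed into the query via projection; here projection is forbidden, so the conflicts of $D^p_\cc$ must instead be carried by the binary constraints in $\IC$. Since a projection-free query evaluated at a fixed answer tuple $\vec t$ simply returns, in each model $M$, the probability $p^M_Q(\vec t)$ that a \emph{fixed} set $U$ of tuples co-exist, the whole source of difficulty is global and must come from $\IC$. I would therefore design the construction so that the binary constraints of $\IC$ encode the conflict graph of $D^p_\cc$, while the constructed database $D^p_{\nmp}$ remains \emph{consistent for every input}. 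This last point is essential: by Definition~\ref{def:queryanswer} a tuple is an answer only when $\M(D^p_{\nmp},\IC)\neq\emptyset$, so an inconsistent $D^p_{\nmp}$ would make \mp\ vacuously false and collapse the reduction.

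To reconcile these two requirements I would add a small \emph{probe/slack} gadget: besides the tuples encoding $D^p_\cc$, I would introduce a constant number of fresh tuples connected to them only by binary denial constraints, together with suitably chosen probabilities, arranged so that every input admits at least one model, while the range $[p^{\min},p^{\max}]$ of $Q$ at the designated answer $\vec t$ (the co-existence probability of the probe tuples) is forced to widen exactly when the original marginals can be simultaneously realised under $\IC_\cc$, i.e. exactly when $D^p_\cc$ is consistent. Concretely, I would invoke the linear-system characterisation of models from the proof of Proposition~\ref{pro:intquery}: models of $D^p_{\nmp}$ correspond to solutions of the system $S(\D^p_{\nmp},\IC,D^p_{\nmp})$, and $p^{\min}$, $p^{\max}$ are the optima of the linear program minimising/maximising the probe variable $v^\ast$ over that system; the gadget must be built so that this optimum crosses a fixed value precisely at feasibility of the marginals of $D^p_\cc$. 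To turn the strict threshold into the non-strict bounds of Definition~\ref{def:mp}, I would reuse the device of Theorem~\ref{theo:mphardness1}: by Lemma~\ref{boundedpmin-pmax} the extremal probabilities are fractions with denominator at most $(m a)^m$, so taking the bound $k=\frac{1}{(m a)^m}$ separates the value $0$ from every strictly positive attainable value, letting the $\geq/\leq$ test decide the strict condition. I would then check that the reduction is LOGSPACE and that the equivalence holds in both directions.

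The hard part is precisely the gadget of the previous paragraph. With binary denial constraints one can only \emph{forbid} the co-occurrence of pairs, and cannot express implications or ``soft'' constraints; hence guaranteeing that $D^p_{\nmp}$ is consistent for \emph{every} input, while still keeping the conflict graph of $D^p_\cc$ active enough that the probe's range genuinely detects consistency, is delicate — these two goals pull in opposite directions, since a faithful copy of the conflict graph at the original marginals would make consistency of $D^p_{\nmp}$ coincide with \cc\ rather than hold unconditionally. The crux is therefore to exhibit concrete binary constraints linking the probe tuples to the encoded tuples, with probabilities for which I can prove, via the linear program above, both the unconditional feasibility of $D^p_{\nmp}$ and the exact correspondence between the optimum of $v^\ast$ and the consistency of $D^p_\cc$. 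Once this gadget and its linear-programming analysis are in place, the remaining verifications are routine.
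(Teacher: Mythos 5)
Your proposal is a plan, not a proof: the one thing it must deliver --- the ``probe/slack'' gadget --- is exactly what you leave unconstructed, and your own analysis shows why the natural attempt fails. If $\IC_{\nmp}$ contains a faithful copy of the binary constraints of the \cc\ instance and $D^p_{\nmp}$ carries the original tuples at their original marginals, then $D^p_{\nmp}$ is consistent if and only if $D^p_\cc$ is, so consistency of the constructed database cannot hold unconditionally; but if you perturb the marginals or weaken the constraints to force unconditional consistency, you must then prove that the probe's extremal probability still separates yes- from no-instances of \cc, and you give no construction and no linear-programming analysis doing so. Declaring this ``the crux'' and asserting the rest is routine leaves the theorem unproved: there is no evidence that a single fixed pair $(Q,\IC)$ with only binary constraints admits such a gadget when the source problem is \cc\ itself.

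The paper resolves this tension not by bolting a gadget onto a \cc\ instance but by changing the source problem: it reduces from \emph{planar} \textsc{3-coloring}. The database encodes a coloring problem with \emph{four} colors $\{R,G,B,C\}$ (each node--edge incidence yields four tuples of probability $\frac{1}{8}$), and since every planar graph is $4$-colorable, the constructed PDB is consistent for \emph{every} input --- the fourth color is the slack, supplied by the choice of source problem rather than by extra tuples. Two probe tuples $R_\phi(1),R_\phi(2)$ of probability $\frac{1}{2}$ and two fixed binary constraints ($R_G$ tuples exclude $R_\phi(2)$; $R_G$ tuples colored $C$ exclude $R_\phi(1)$) tie the projection-free query $Q(x,y)=R_\phi(x)\wedge R_\phi(y)$ at the answer $(1,2)$ to $3$-colorability: if $G$ is $3$-colorable one can build a model in which the probes co-exist with probability only $\frac{1}{8}$, whereas if $G$ is not $3$-colorable every nonzero-probability world containing $R_G$ tuples must use color $C$, hence excludes both probes, forcing their co-existence probability to be exactly $\frac{1}{2}$ in every model. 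The threshold $k_1=\frac{1}{2}$ then decides $3$-colorability outright; note also that the separation here is a constant gap ($\frac{1}{8}$ versus $\frac{1}{2}$), so the appeal to Lemma~\ref{boundedpmin-pmax} that you import from Theorem~\ref{theo:mphardness1} (a $0$ versus positive separation) is not the mechanism needed. To salvage your approach you would have to exhibit, concretely, a relaxation of an arbitrary \cc\ instance that is always feasible yet whose probe optimum detects feasibility of the original marginals --- precisely the step the paper avoids by picking a source problem with built-in slack.
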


We recall that, when addressing \mp, we assume that the database 
is consistent w.r.t. the constraints.
Thus, the hardness results for \mp\ do not derive from any source 
of complexity inherited by \mp\ from \cc.
On the whole, theorems \ref{theo:mphardness1} and \ref{theo:mphardness2} 
suggest that \mp\ has at least two sources of complexity: the type of query 
(the fact that the query contains projection or not),  and the form 
of the constraints.

Once some sources of complexity of \mp\ have been identified, the problem is  worth addressing of determining tractable cases.
We defer this issue after the characterization of the query evaluation as a search problem, 
since, as it will be clearer in what follows, the conditions yielding tractability of the 
latter problem also ensure the tractability of \mp.

\subsection{Querying as a search problem}
\label{sec:queryassearch}
Viewed a search problem, the query answering problem (\qa) is the problem of computing the
set \emph{Ans}$(Q,D^p,\IC)$.
The complexity of this problem is characterized as follows.

\vspace*{-1mm}\begin{theorem}
\label{theo:fp-at-np}
\qa\ is in $F\!P^{N\!P}$ and is $F\!P^{N\!P [\log n]}$-hard.
\end{theorem}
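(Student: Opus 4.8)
The plan is to prove the two halves separately, reusing the linear-programming characterization of $p^{\min}$ and $p^{\max}$ developed in the proof of Proposition~\ref{pro:intquery}.

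\emph{Membership in $F\!P^{N\!P}$.} First I would enumerate the candidate answer tuples: since $Q$, $\D^p$ and $\IC$ are of fixed size, the ground tuples $\vec{t}$ that can occur in \emph{Ans}$(Q,D^p,\IC)$ range over the active domain of $D^p$ and are therefore polynomially many. For each such $\vec{t}$, the values $p^{\min}(\vec{t})$ and $p^{\max}(\vec{t})$ are the optima of the program $LP(S^*)$ of Proposition~\ref{pro:intquery}. The key observation is that, although $S^*(\D^p,\IC,D^p)$ has one variable $v_i$ per possible world (exponentially many), it has only polynomially many (in)equalities (one equation $(e1)$ per tuple, plus $(e2)$, $(e3)$, $(e4)$ and the defining equation for $v^*$). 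Hence, by standard bounds on the size of vertices of rational polyhedra, whenever the system augmented with $v^*\le r$ (resp. $v^*\ge r$) is feasible, it has a feasible solution of polynomially bounded support whose nonzero coordinates are rationals of polynomial bit-size. Consequently the predicate ``there is a model $M$ with $p^M_Q(\vec{t})\le r$'' (and likewise with $\ge r$) is in $N\!P$: a witness is the list of the polynomially many worlds of positive probability together with their rational probabilities, which is verified in polynomial time by checking $(e1)$--$(e4)$, that every listed world satisfies $\IC$, and the bound on $v^*$. Using this $N\!P$ predicate as an oracle I would binary-search the threshold $r$ to approximate $p^{\min}(\vec{t})$ and $p^{\max}(\vec{t})$ to precision $2^{-\mathrm{poly}}$, and then round each approximation to the unique rational with denominator at most $(ma)^m$ in the isolating interval, as licensed by Lemma~\ref{boundedpmin-pmax}. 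Since there are polynomially many candidate tuples and each requires polynomially many oracle calls, the whole computation runs in polynomial time with an $N\!P$ oracle, giving $\qa\in F\!P^{N\!P}$.

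\emph{Hardness for $F\!P^{N\!P[\log n]}$.} For the lower bound I would give a metric (Krentel-style) reduction from a problem complete for $F\!P^{N\!P[\log n]}$ whose output has $O(\log n)$ bits, such as computing the maximum number of simultaneously satisfiable clauses of a CNF formula. Reusing the gadget ideas of Theorems~\ref{theo:mphardness1} and~\ref{theo:mphardness2}, where conflicts among tuples are encoded so that the probability of a projected query equals the probability that a forbidden pattern occurs, I would build from a formula $\varphi$ with clauses $C_1,\dots,C_m$ a consistent PDB $D^p$, a set $\IC$ of denial constraints and a query $Q$ with a distinguished answer tuple $\vec{t}$ such that $p^{\max}(\vec{t})$, a rational of the form $k/N$ with $N$ fixed and polynomially bounded, encodes the optimum: the linear objective $v^*$ is maximized at a vertex of the model polytope corresponding to a single truth assignment, so its optimal value is a simple affine function of $\max_\sigma(\#\text{clauses of }\varphi\text{ satisfied by }\sigma)$. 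From the endpoint $p^{\max}(\vec{t})$ returned by $\qa$ the optimal value is then read off in polynomial time, completing the reduction.

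\emph{Main obstacle.} The delicate point in the membership proof is justifying the $N\!P$ bound in the presence of exponentially many variables; this rests squarely on the polynomial support of basic solutions together with the bit-size control given by Lemma~\ref{boundedpmin-pmax}. The genuinely hard part, however, is the hardness construction: one must design the PDB, the denial constraints and the query so that $(i)$ the instance is guaranteed consistent, $(ii)$ the vertices of the model polytope correspond exactly to the discrete objects (truth assignments) of the source problem, and $(iii)$ the optimum of $v^*$ equals the target optimum up to a fixed, invertible scaling, so that the single numeric answer of $\qa$ determines the $O(\log n)$-bit answer of the source problem.
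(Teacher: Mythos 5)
Your membership argument is sound and is in fact more self-contained than the paper's: the paper simply inherits $F\!P^{N\!P}$ membership from the entailment result of~\cite{Luk01}, whereas you re-derive it from the LP of Proposition~\ref{pro:intquery} via basic feasible solutions of polynomial support and bit-size, an $N\!P$-oracle binary search, and rounding through the denominator bound of Lemma~\ref{boundedpmin-pmax}. That half stands on its own.

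The hardness half, however, has a genuine gap: you never give the reduction, you only list the properties it should satisfy. The paper's proof of the lower bound \emph{is} this construction, carried out in full: a reduction from \textsc{clique size} in which each node of $G$ becomes a \emph{Node} tuple with probability $\nicefrac{1}{n}$, non-adjacency is recorded in a deterministic \emph{NoEdge} relation, a single \emph{Flag} tuple carries probability $\frac{n-1}{n}$, one constraint forbids coexistence of node tuples for non-adjacent nodes, another forbids \emph{Flag}$(1)$ from coexisting with two distinct node tuples, and the boolean query asks whether \emph{Flag}$(1)$ coexists with some node tuple. One then proves that the \emph{minimum} probability of the empty answer equals $\frac{n-K^*}{n}$; the easy direction exhibits a model from a clique of size $K$, and the hard direction (no model can do better) requires an exchange argument by induction on the number of positive-probability worlds that contain a clique but not the flag. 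Nothing of this kind appears in your proposal: ``I would build $\dots$ such that $\dots$'' is a specification, not a proof, and you concede in your closing paragraph that this is exactly the hard part.

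Moreover, the mechanism you sketch is doubtful as stated. You want $v^*$ to be optimized ``at a vertex of the model polytope corresponding to a single truth assignment,'' but a model must reproduce the tuples' marginal probabilities, and whenever those probabilities lie strictly between $0$ and $1$ (as they must in any such gadget, e.g.\ $\nicefrac{1}{n}$ and $\frac{n-1}{n}$ in the paper's reduction) every model is a genuine mixture of many worlds; no vertex of the polytope corresponds to a single world or assignment. The paper's argument never identifies vertices with discrete solutions: it relates the optimal value to the clique number by reasoning about the support of an \emph{arbitrary} model and perturbing it world by world. So your requirement (ii) is not merely unproven --- as phrased it cannot hold, and a correct reduction has to be designed around that obstacle rather than through it.
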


\vspace*{-1mm}The fact that \qa\ is in $F\!P^{N\!P}$ means that our ``cautious'' query evaluation paradigm 
is not more complex than the query evaluation based on the independence assumption, 
which has been shown in~\cite{Suciu04} to be complete for $\#P$ (which strictly 
contains $F\!P^{N\!P}$, assuming $P\!\neq\!N\!P$).
On the other hand, the hardness for $F\!P^{N\!P[\log n]}$ is interesting also because it tightens 
the characterization given in~\cite{Luk01} of the more general entailment problem 
for probabilistic logic programs containing a general form of probabilistic rules
(\emph{conditional rules}).
Specifically, in~\cite{Luk01}, the above-mentioned entailment problem was shown to be in 
$F\!P^{N\!P}$, but no lower bound on its data complexity was stated.
Thus, our result enriches the characterization in~\cite{Luk01}, as it implies that 
$F\!P^{N\!P[\log n]}$ is a lower bound for the entailment problem for probabilistic logic 
programs under data complexity even in the presence of rules much simpler than conditional 
rules. 
More details are given in Section~\ref{sec:relatedwork}, where we provide a more 
thorough comparison with~\cite{Luk01}.
However, finding the tightest characterization for \qa\ remains an open problem, as it might 
be the case that 
\qa\ is complete for either $F\!P^{N\!P[\log n]}$ or $F\!P^{N\!P}$.
We conjecture that none of these cases holds (thus a characterization of \qa\ tighter 
than ours can not be provided), thus \qa\ is likely to be
in the ``limbo'' containing the problems in  $F\!P^{N\!P}$ but not in $F\!P^{N\!P[\log n]}$, 
without being hard for the former (this limbo is non-empty if 
$P\!\neq\!N\!P$~\cite{Krentel88}).
\subsection{Tractability results}
In this section, we show some sufficient conditions for the tractability of the query evaluation problem, which hold for both its decision and search versions.
When stating our results, we refer to \qa\ only, as its tractability implies that of \mp\
(as \mp\ is straightforwardly reducible to \qa).

Again, we address the tractability from two standpoints: we will show sufficient 
conditions which regard either 
$a)$ the shape of the conflict hypergraph, or
$b)$ the syntactic form of the constraints.
Specifically, we focus on finding islands of tractability when queries are projection-free and either the conflict hypergraph collapses to a graph -- as for direction $a)$, or the 
constraints are binary -- as for direction $b)$.
These are interesting contexts,
since Theorem~\ref{theo:mphardness2} entails that \mp\ (and, thus, also \qa) is, 
in general, hard in these cases (indeed, Theorem~\ref{theo:mphardness2}
implicitly shows the hardness for the case of conflict hypergraphs collapsing to graphs, as,
in the presence of binary constraints, the conflict hypergraph is a graph).

The next result goes into direction $a)$, as it states that, for projection-free queries,
\qa\ is tractable if the conflict hypergraph is a graph satisfying some structural properties.

\begin{theorem}
\label{theo:qatractable}
For projection-free conjunctive queries, \qa\ is in \emph{PTIME} if $HG(D^p, \IC)$ is a 
graph where each maximal connected component is either a tree or a clique.
\end{theorem}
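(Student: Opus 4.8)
The plan is to reduce \qa\ to a family of extremal coexistence-probability computations, one per candidate answer, and then solve each by decomposing along the connected components of the conflict graph. Since $Q$ is projection-free, a ground answer $\vec{t}$ ground-instantiates every relational atom of $Q$, so $Q(\vec{t})$ holds in a world $w$ exactly when a fixed set $T_{\vec{t}}$ of database tuples (those matching the atoms, provided $\vec{t}$ satisfies the built-in predicates of $Q$) is contained in $w$. Hence $p^M_Q(\vec{t})=\sum_{w\supseteq T_{\vec{t}}} M(w)$ is the probability that the tuples of $T_{\vec{t}}$ coexist under $M$, and the interval associated with $\vec{t}$ in \emph{Ans}$(Q,D^p,\IC)$ is exactly $[p^{\min}(T_{\vec{t}}),p^{\max}(T_{\vec{t}})]$, where $p^{\max}$ is defined as $p^{\min}$ but maximizing. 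As $Q$ and $\D^p$ have fixed size, there are only polynomially many candidate $\vec{t}$, each computable together with $T_{\vec{t}}$ in polynomial time; so it suffices to compute both extremes for a given tuple set in polynomial time (after checking, in \emph{PTIME}, that $D^p$ is consistent, via Theorem~\ref{theo:tractableCC} for tree components and the bucket test for cliques).

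Second, I would decompose across the maximal connected components $C_1,\dots,C_r$ of $HG(D^p,\IC)$. Because no constraint couples tuples in distinct components, a model of $D^p$ is exactly a joint distribution over per-component worlds whose component marginals are themselves consistent models of the components, and the cross-component coupling is completely free. Writing $S_i=T\cap C_i$ and letting $A_i$ be the event ``all tuples of $S_i$ coexist'', Proposition~\ref{pro:intquery} applied within each component shows that $\Pr[A_i]$ can take any value in $[p^{\min}_{C_i}(S_i),p^{\max}_{C_i}(S_i)]$, and the Fr\'echet--Hoeffding coupling bounds---realizable here precisely because the components are mutually unconstrained---yield the combination rules
\[
p^{\max}(T)=\min_i p^{\max}_{C_i}(S_i),\qquad
p^{\min}(T)=\max\Big\{0,\ \textstyle\sum_i p^{\min}_{C_i}(S_i)-(r'-1)\Big\},
\]
where $i$ ranges over the $r'$ components meeting $T$. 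This reduces everything to per-component extremal coexistence.

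Third, I would solve the per-component problems. If $S_i$ contains two adjacent (conflicting) tuples, then $p^{\min}_{C_i}(S_i)=p^{\max}_{C_i}(S_i)=0$. Otherwise $S_i$ is an independent set of $C_i$. For a clique $C_i$, every independent set has at most one node, so $S_i$ is a singleton $\{t\}$ (or empty) whose coexistence probability is just the fixed marginal $p(t)$. The substantive case is a tree component, where I would compute $p^{\min}_{C_i}(S_i)$ and $p^{\max}_{C_i}(S_i)$ by dynamic programming on the tree: rooting it arbitrarily and propagating, for each subtree, the attainable range of the coexistence probability of its $S_i$-nodes conditioned on the presence or absence of the subtree's root, merging children via the same coupling bounds. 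The inductive structure is the one exploited in the proof of Theorem~\ref{theo:tractableCC} and underwritten by Proposition~\ref{prop:esistenzamatrioska} and Lemma~\ref{lem:pmin1/2}.

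I expect the tree subroutine to be the main obstacle. The difficulty is that conflicts along tree paths neither leave the extremes at their naive Fr\'echet values nor act independently: a conflicting separator caps the maximum below $\min_{t\in S_i}p(t)$ (on a path $t_1\!-\!t_2\!-\!t_3$ with $S_i=\{t_1,t_3\}$, having both $t_1,t_3$ present forces $t_2$ absent, so the maximum is $\min\{p(t_1),p(t_3),1-p(t_2)\}$), while \emph{forced} coexistence can push the minimum strictly above $\max\{0,\sum_{t\in S_i}p(t)-|S_i|+1\}$, as in Example~\ref{ex:probabilisticIC}. The core of the argument is therefore to show that the recurrences correctly track both effects and are tight; tightness (achievability) would be established from the convexity and density of the per-component feasible region already used in Proposition~\ref{pro:intquery}. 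Combining the polynomial per-component computations over the polynomially many candidate answers then gives the \emph{PTIME} bound for \qa.
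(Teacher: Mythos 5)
Your reduction of \qa\ to extremal coexistence probabilities, the decomposition over maximal connected components combined via Fr\'echet--Hoeffding bounds (this is exactly the paper's application of Fact~\ref{fac:boole}), and the clique case all coincide with the paper's proof. The genuine gap is the tree subroutine, which is precisely where all the technical content of the theorem lies (the paper itself stresses that this case is ``far from being straightforward'' and devotes Lemma~\ref{lem:qatractable-tree} to it): you announce a dynamic program propagating ``the attainable range of the coexistence probability of its $S_i$-nodes conditioned on the presence or absence of the subtree's root,'' but you never write the recurrences, and you explicitly defer their correctness and tightness --- which is the whole proof. Worse, the state you propose to propagate is provably insufficient. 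Consider the conflict tree with nodes $u,c,d_1,d_2$, edges $u$--$c$, $c$--$d_1$, $c$--$d_2$, query tuples $S_i=\{d_1,d_2\}$, probabilities $p(c)=1/2$, $p(u)=p(d_1)=p(d_2)=1/4$, and root $u$. Writing $A$ for the event $d_1\wedge d_2$, one model gives $\Pr[A\wedge u]=1/4$ (worlds $\{u,d_1,d_2\}$, $\{c\}$, $\emptyset$ with probabilities $1/4,1/2,1/4$) and another gives $\Pr[A\wedge\neg u]=1/4$ (worlds $\{d_1,d_2\}$, $\{u\}$, $\{c\}$ with probabilities $1/4,1/4,1/2$), so both per-branch intervals reach $1/4$; yet $\Pr[A]=\Pr[A\wedge u]+\Pr[A\wedge\neg u]$ can never exceed $\min\{p(d_1),p(d_2)\}=1/4$. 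Interval propagation would certify a maximum of $1/2$ at the root. The achievable set of the pair $\left(\Pr[A\wedge u],\Pr[A\wedge\neg u]\right)$ is a two-dimensional convex region carrying a binding constraint on the \emph{sum} of its coordinates, and this is lost if you keep only the two intervals.

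A correct version of your DP would therefore have to track richer regions (here, a box intersected with a slab on the sum), prove that this class of regions is closed under the merge operation at every internal node, and prove achievability of every point in them --- which is essentially re-deriving the paper's argument piecewise. The paper sidesteps this bookkeeping entirely: it restricts attention to the $n$ query tuples plus the $m<n$ branching nodes of the spanning tree, summarizes each chain segment between two such nodes by the single interval $\left[p^{\min}(t_i\wedge t_j),p^{\max}(t_i\wedge t_j)\right]$ --- computable by Lemma~\ref{lem:pmin-pmax-chain}, and sufficient because a chain's interface, once marginals are fixed, really is one-dimensional --- and then solves a linear program over the $2^{n+m}$ joint truth assignments of these distinguished nodes, a constant-size LP under data complexity. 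Its correctness is established by an explicit gluing construction (the ``relative weight'' normalization) turning any LP solution into a genuine model, which is exactly the achievability argument your sketch leaves open.
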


The polynomiality result stated above is rather straightforward in the case that each connected component is a clique, but is far from being straightforward in the presence of 
connected components which are trees.
Basically, when the conflict hypergraph is a tree, the tractability derives from the
fact that, for any conjunction of tuples, its minimum (or, equivalently, maximum) probability
can be evaluated as the solution of an instance of a linear programming problem.
In particular, differently from the ``general'' system of inequalities used in the proof of
Proposition~\ref{pro:intquery} (where the variables corresponds to the possible worlds, thus their number is exponential in the number of tuples),
here we can define a system of inequalities where both the number of inequalities and
variables depend only on the arity of the query (which is constant, as we address data complexity).
We do not provide an example of the form of this system of inequalities, as explaining the correctness of the approach on a specific case is not easier than proving its validity in the 
general case.
Thus, the interested reader is referred to the proof of Theorem~\ref{theo:qatractable} reported in~\ref{app:tractabilityqa} for more details.

The following result goes into direction of locating tractability scenarios arising from 
the syntactic form of the constraints, as it states that, if $\IC$ consists of 
one FD for each relation scheme, the evaluation of projection-free queries is tractable.

\begin{theorem}
\label{theo:qa-tractable-binary-EGDs}
For projection-free conjunctive queries, \qa\ is in \emph{PTIME} if 
$\IC$ consists of at most one FD per relation scheme.
\end{theorem}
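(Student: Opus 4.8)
The plan is to reduce \qa\ to polynomially many co-existence--probability computations and then solve each of them in closed form by exploiting the structure of the conflict graph guaranteed by Theorem~\ref{cor:FD-cc}. First I would observe that, since $Q$ is projection-free and of fixed size, every candidate answer tuple $\vec t$ is obtained by matching the (constantly many) relational atoms of $Q$ to tuples of $D^p$; hence there are only polynomially many candidates, all enumerable in PTIME. For a fixed candidate $\vec t$, let $T\subseteq D^p$ be the set of tuples matched by the atoms of $Q$. Because $Q$ has no projection, $w\models Q(\vec t)$ holds exactly when $T\subseteq w$, so by Definition~\ref{def:queryanswer} the endpoints of the probability range of $\vec t$ are precisely $p^{\min}(T)$ and $p^{\max}(T)$, the minimum and maximum over all models of the overall probability of the worlds containing all of $T$. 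Thus it suffices to show that $p^{\min}(T)$ and $p^{\max}(T)$ are PTIME-computable when $\IC$ contains one FD per relation.

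Next I would invoke Theorem~\ref{cor:FD-cc}: each connected component of $HG(D^p,\IC)$ is a singleton or a complete multipartite graph, whose maximal independent sets are exactly its parts. Writing $T=\bigcup_j T_j$ with $T_j$ the restriction of $T$ to component $C_j$, I would assign $\vec t$ the range $[0,0]$ whenever some $T_j$ meets two distinct parts of $C_j$, since then no consistent world contains $T_j$; otherwise each $T_j$ lies inside a single part. The crucial structural fact is that the constraints are \emph{local} to components: a world violates $\IC$ iff its restriction to some component does. This lets me build global models as couplings of per-component models, preserving both the per-tuple marginals and consistency. Minimizing (resp. maximizing) the probability of $\bigcap_j\{T_j\subseteq w\}$ over such couplings then reduces, via the standard Bonferroni/nesting bounds for binary events, to
\[
p^{\max}(T)=\min_j p^{\max}_{C_j}(T_j),\qquad
p^{\min}(T)=\max\Big(0,\ \textstyle\sum_j p^{\min}_{C_j}(T_j)-(\ell-1)\Big),
\]
where $\ell$ is the number of components meeting $T$ and $p^{\min}_{C_j},p^{\max}_{C_j}$ denote the co-existence bounds computed inside $C_j$ alone.

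It then remains to give closed forms for the per-component quantities. For a singleton both equal $p(t)$. For a complete multipartite component with parts $S_1,\dots,S_k$ and $T_j\subseteq S_i$, I would argue that $p^{\max}_{C_j}(T_j)=\min_{t\in T_j}p(t)$ (nesting the occurrences of the tuples of $T_j$ inside those of the least probable one, which the absence of intra-part edges permits) and that $p^{\min}_{C_j}(T_j)=\max\big(0,\ \sum_{t\in T_j}p(t)-(|T_j|-1)\,R_i\big)$, where $R_i=1-\sum_{i'\neq i}\widetilde p_{i'}$ with $\widetilde p_{i'}=\max_{t\in S_{i'}}p(t)$ is the largest probability mass the worlds lying in part $S_i$ may receive (each other part claiming at least $\widetilde p_{i'}$, so $R_i\geq\widetilde p_i$ by the consistency condition of Theorem~\ref{cor:FD-cc}). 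All of these are evaluable in PTIME, so assembling them over the polynomially many candidates places \qa\ in \emph{PTIME}.

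The main obstacle, and the technical heart of the argument, is the per-component minimum $p^{\min}_{C_j}(T_j)$: unlike the hypertree case, it is \emph{not} the naive bound $\max(0,\sum_{t\in T_j}p(t)-|T_j|+1)$ of Lemma~\ref{lem:pmin1/2}, because the complete multipartite structure forces \emph{implicit co-existence} (the phenomenon already seen in Example~\ref{ex:probabilisticIC}): confining the tuples of $T_j$ to the limited mass available for part $S_i$ pushes their joint probability up. Establishing the exact value therefore requires both the combinatorial lower bound above and an explicit construction of a consistent model attaining it (allocating mass $R_i$ to part $S_i$ and spreading the $T_j$-tuples within it with minimal overlap), together with the dual nesting construction for $p^{\max}_{C_j}$. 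Finally one must verify that these per-component optimisers can be realised simultaneously by a single cross-component coupling, which is exactly where the locality of the constraints is used decisively.
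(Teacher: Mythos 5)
Your proposal follows the same skeleton as the paper's proof: enumerate the polynomially many candidate answers; for each one, partition the matched tuples $T$ by the maximal connected components of the conflict graph (you invoke the complete-multipartite structure via Theorem~\ref{cor:FD-cc}, while the paper re-derives it from the FD through the classes $r(\vec{x},\vec{y})$ --- the same thing); return $[0,0]$ when some $T_j$ straddles two parts; otherwise compute per-component extrema in closed form and combine them across components with the Frechet--Hoeffding bounds (Fact~\ref{fac:boole}). Your maximum formulas, per component and across components, coincide with the paper's.

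The one place you diverge is the per-component minimum, and there your formula is the correct one while the paper's is flawed. The paper ``generalizes'' Lemma~\ref{lem:pmin1/2} to the residual space of size $S=1-\sum_{\vec{y}^*\neq\vec{y}}\max\{p(t)\mid t\in r(\vec{x},\vec{y}^*)\}$ by writing $p^{\min}=\max\{0,\sum_{t\in T_i}p(t)-|T_i|+S\}$, i.e., it merely replaces the additive $+1$ by $+S$; you instead scale the subtracted term, $p^{\min}=\max\{0,\sum_{t\in T_i}p(t)-(|T_i|-1)S\}$ (your $R_i$ is exactly the paper's $S$). The two agree only when $S=1$. Your version is right: for events of fixed measures confined to a region of measure $S$, complementation within that region gives $\mu(A_1\cap\dots\cap A_n)\geq\sum_i\mu(A_i)-(n-1)S$, and this bound is attained by nesting the other parts into disjoint blocks and spreading the $T_i$-tuples inside the remaining region. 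Concretely: take a part containing two query tuples of probability $2/5$ each and one other part whose maximum probability is $1/2$, so $S=1/2$. In every model the worlds containing either query tuple must avoid the other part's tuple, hence lie in a region of measure exactly $1/2$, forcing their joint probability to be at least $2/5+2/5-1/2=3/10$ (and $3/10$ is attainable); the paper's formula returns $0$. So your argument does not merely match the paper's proof --- it repairs it, and the theorem statement itself survives because correct PTIME-computable closed forms exist, as you show. Two minor remarks: to attain $p^{\max}_{C_j}(T_j)=\min_{t\in T_j}p(t)$ the occurrence set of the \emph{least} probable tuple must be nested inside those of the others (your wording has the nesting reversed); and your cross-component step is sound for the reason you give --- decreasing each component's co-existence probability only decreases the Frechet lower bound, so realizing every component at its own minimum and coupling antitonically is optimal, the coupling being unconstrained since no denial constraint links distinct components.
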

\begin{proof}
We consider the case that $\IC$ contains one relation and one FD only, as the general case 
(more relations, and one FD per relation) follows straightforwardly.
Let the denial constraint $ic$ in $\IC$ be the following FD over relation scheme $R$:
$X\rightarrow Y$, where $X$, $Y$ are disjoint sets of attributes of $R$.
We denote as $r$ the instance of $R$ in the instance of \qa.
Constraint $ic$ implies a partition of $r$ into disjoint relations, each corresponding to a different combination of the values of the attributes in $X$ in the tuples of $r$.
Taken one of this combinations $\vec{x}$ (i.e., $\vec{x} \in \Pi_X(r)$), we denote the corresponding set of tuples in this partition as $r(\vec{x})$.
That is, $r(\vec{x})= \{t \in r | \Pi_X(t)=\vec{x}\}$.
In turn, for each $r(\vec{x})$, $ic$ partitions it into disjoint relations, each corresponding
to a different combinations of the values of the attributes in $Y$.
Taken one of this combinations $\vec{y}$ (i.e., $\vec{y} \in \Pi_Y(r(\vec{x}))$), we denote the corresponding set of tuples in this partition as $r(\vec{x}, \vec{y})$.

Given this, constraint $ic$ entails that the conflict hypergraph is a graph with the
following structure:
there is an edge $(t_1,t_2)$ iff $\exists \vec{x}, \vec{y_1},\vec{y_2}$, 
with $\vec{y_1}\neq\vec{y_2}$, such that $t_1\in r(\vec{x}, \vec{y_1})$ and 
$t_2\in r(\vec{x}, \vec{y_2})$.

Now, consider any conjunction of tuples $T=t_1, \dots, t_n$.
The probability of $T$ as an answer of the query $q$ specified in the instance of \qa\ can be computed as follows.
First, we partition $\{t_1, \dots, t_n\}$ according to the maximal connected components of the conflict hypergraph.
This way we obtain the disjoint subsets $T_1, \dots, T_k$ of  $\{t_1, \dots, t_n\}$, where 
each $T_i$ corresponds to a maximal connected component of the conflict hypergraph, and 
contains all the tuples of  $\{t_1, \dots, t_n\}$ which are in this component. 
The minimum and maximum probabilities of $T$ as answer of $q$ can be obtained by 
computing the minimum and maximum probability of each set $T_i$, and then combining them
using the well known Frechet-Hoeffding formulas (reported also in the appendix as Fact~\ref{fac:boole}), which give the minimum and maximum probabilities of a conjunction of events 
among which no correlation is known (in fact, since $T_1, \dots, T_k$ correspond to distinct 
connected components, they can be viewed as pairwise uncorrelated events).

Then, it remains to show how the minimum and maximum probabilities of a single $T_i$ can be computed.
We consider the case that $T_i$ contains at least two tuples (otherwise, the minimum and the maximum probabilities of $T_i$ coincide with the marginal probability of the unique tuple in $T_i$).
If $\exists t_\alpha,t_\beta\in T_i$ $\exists \vec{x}, \vec{y_1},\vec{y_2}$ such that $t_\alpha\neq t_\beta$ and
$\vec{y_1}\neq\vec{y_2}$ and $t_\alpha\in r(\vec{x}, \vec{y_1})$, while $t_\beta\in r(\vec{x}, \vec{y_2})$, then the minimum and maximum probabilities of $T_i$ are both $0$ (since 
$\{t_\alpha,t_\beta\}$ is a conflicting set).
Otherwise, it is the case that all the tuples in $T_i$ share all the values $\vec{x}$ 
for the attributes $X$, and the same values $\vec{y}$ for the attributes $Y$.
Due to the structure of the conflict hypergraph, it is easy to see that this implies that
the tuples in $T_i$ can be distributed in any way in the portion of the probability space 
which is not invested to represent the tuples having the same values  $\vec{x}$ for $X$, but combinations for $Y$ other than $\vec{y}$.
The size of this probability space is 
$S=1-\sum_{\vec{y}^*\neq \vec{y}}\max \{p(t)|t\in r(\vec{x},\vec{y}^*)\}$.
Hence, the minimum and maximum probabilities of $T_i$ are:
\begin{center}
$p^{\min}= \max\left\{0,  \sum_{t\in T_i} p(t)- |T_i| +S\right\}$; \hspace*{1cm}
$p^{\max}= \min\left\{p(t)\, |\, t\in T_i\right\}$.
\end{center}
The first formula is an easy generalization of the corresponding formula for the minimum probability given in Lemma~\ref{lem:pmin1/2} to the case of a probability space of a generic
size less than $1$.
The second formula derives from the above-recalled Frechet-Hoeffding formulas, and from the
fact that the database is consistent (we recall that we rely on this assumption when addressing the query evaluation problem).
\end{proof}

Again, observe that the last two results are somehow complementary: it is easy to see that
there are FDs yielding conflict hypergraphs not satisfying 
the sufficient condition of Theorem~\ref{theo:qatractable}, as well as conflict hypergraphs 
which are trees generated by some ``more general'' denial constraint, not 
expressible as a set of FDs over distinct relations.

\section{Extensions of our framework}
\label{app:extensions}

Some extensions of our framework are discussed in what follows.
In particular, for each extension, we show its impact on our characterization of the fundamental problems addressed in the paper.\\

\subsection{Tuples with uncertain probabilities}
All the results stated in this paper can be trivially extended to the case that 
tuples are associated with ranges of probabilities, rather than single probabilities
(as happens in several probabilistic data models, such as~\cite{LakshmananLRS97,Luk01}).

Obviously, all the hardness results for \cc, \mp, \qa\ hold also for this variant,
since considering tuples with single probabilities is a special case of 
allowing tuples associated with range of probabilities.

As regards \cc, both the membership in $NP$ and the extendability of the tractable cases
straightforwardly derive from the fact that, as only denial constraints are considered,
deciding on the consistency of an assignment of ranges of probabilities can be accomplished
by looking only at the minimum probabilities of each range.

As regards \mp\ and \qa, the fact that the complexity upper-bounds do not change follows from the results in~\cite{Luk01}.
Finally, it can be shown, with minor changes to the proof of 
Theorem~\ref{theo:qatractable}, that \mp\ and \qa\ are still tractable under the hypotheses on the shape of the conflict hypergraph stated in this theorem.
We refer the interested reader to~\ref{app:estensionequery}, where a hint is given on how the proof of Theorem~\ref{theo:qatractable} can be extended to deal with 
tuples with uncertain probabilities.
The extension of the tractability results for \mp\ and \qa\ regarding the syntactic forms  of 
the constraints is even simpler, and can be easily understood after reading the proofs of these results.

\subsection{Associating constraints with probabilities.}
Another interesting extension consists in allowing constraints to be assigned probabilities.
In our vision, constraints should encode some certain knowledge on the data domain, 
thus they should be interpreted as deterministic.
However, this extension can be interesting at least from a theoretical point of view, or
when constraints are derived from some elaboration on historical data~\cite{FassettiF07}.
Thus, the point becomes that of giving a semantics to the probability assigned to 
the constraints. 
The semantics which seems to be the most intuitive is as follows: 
``\emph{A constraint with probability $p$ forbidding the co-existence of some tuples is 
satisfied if there is an interpretation where the overall probability of the possible 
worlds satisfying the constraint is at least $p$}''. 
This means that the condition imposed by the constraint must hold in a portion of size $p$ of 
the probability space, while nothing is imposed on the remaining portion of the probability space.

Starting from this, we first discuss the impact of associating constraints with 
probabilities on our results about \cc.
First of all, it is easy to see that there is a reduction from any instance 
\emph{Prob-cc} of the variant of \cc\ with probabilistic constraints to an equivalent 
instance \emph{Std-cc} of the standard version of \cc.
Basically, this reduction constructs the conflict hypergraph $H($\emph{Std-cc}$)$ of
\emph{Std-cc} as follows:
denoting the conflict hypergraph of \emph{Prob-cc} as $H($\emph{Prob-cc}$)$,
each hyperedge $e \in H($\emph{Prob-cc}$)$ (with probability $p(e)$) is transformed 
into a hyperedge $e'$ of $H($\emph{Std-cc}$)$ which consists of the same nodes in $e$ 
plus a new node with probability $p(e)$.
On the one hand, the existence of this reduction suffices to state that also the probabilistic version of \cc\ is $NP$-complete. 
On the other hand, it is worth noting that applying this reduction yields a conflict hypergraph $H($\emph{Std-cc}$)$ with the same ``shape'' as $H($\emph{Prob-cc}$)$, except that
each hyperedge has one new node, belonging to no other hyperedge: 
hence, if $H($\emph{Prob-cc}$)$ is a hypertree (resp., a ring), 
then $H($\emph{Std-cc}$)$ is a hypertree (resp., a ring) too. 
This means that all the tractability results given for \cc\ concerning the shapes of the 
conflict hypergraph hold also when stated directly on its probabilistic version.
However, this does not suffice to extend the tractability results for \cc\ regarding  the syntactic forms of the constraints, as in the considered cases the conflict hypergraph may not be a hypertree or a ring.
Thus, the extension of the tractability results on the syntactic forms is deferred to future work.

As regards \mp\ and \qa, the  arguments used in the discussion of the previous
extension can be used to show that our lower and upper bounds still hold for the 
variants of these problems allowing probabilistic constraints.
As for the tractability results, in~\ref{app:estensionequery}, a more detailed discussion is provided explaining how
the proof of Theorem~\ref{theo:qatractable} (which deal with conflict hypergraphs 
where each maximal connected componenent is either a clique or a tree) can be extended to deal with probabilistic constraints.
The extension of the tractability result for FDs stated in Theorem~\ref{theo:qa-tractable-binary-EGDs} 
is deferred to future work.\\

\subsection{Assuming pairs of tuples as independent unless this contradicts the constraints}
As observed in the introduction, in some cases, rejecting the assumption of independence for some groups of tuples may be somehow ``overcautious''.
For instance, if we consider further tuples pertaining to a different  hotel in the introductory example (where constraints involve tuples over the same hotel), it may be reasonable to assume that these tuples encode events independent from those pertaining 
hotel $1$.

A naive way of extending our framework in this direction is that of assuming every pair 
of tuples which are not explicitly ``correlated'' by some constraint as independent from 
one another.
This means considering as independent any two tuples $t_1$, $t_2$ such that there is no 
hyperedge in the conflict hypergraph containing both of them.
However, this strategy can lead to wrong interpretations of the data.
For instance, consider the case of Example~\ref{ex:probabilisticIC}, where each of the 
three tuples $t_1$, $t_2$, $t_3$ has probability $1/2$, and two (ground) constraints are
defined over them:
one forbidding the co-existence of $t_1$ with $t_2$, 
and the other forbidding the co-existence of $t_2$ with $t_3$.
As observed in Example~\ref{ex:probabilisticIC}, the combination of these two constraints 
implicitly enforces the co-existence of $t_1$ with $t_3$.
Hence, the fact that $t_1$ and $t_3$ are not involved in the same (ground) constraint
does not imply that these two tuples can be considered as independent from one another.

However, it is easy to see that if two tuples are not connected through any path in the 
conflict hypergraph, assuming independence among them does not contradict the 
constraints in any way.
Hence, a cautious way of incorporating the independence assumption in our framework is 
the following:
any two tuples are independent from one another iff they belong to distinct maximal 
connected components of the conflict hypergraph.

If this model is adopted, nothing changes in our characterization of the consistency 
checking problem.
In fact, it is easy to see that an instance of \cc\ is equivalent to an instance of the variant of \cc\ where independence is assumed among maximal connected components 
of the conflict hypergraph.
This trivially follows from the fact that, if a PDB $D^p$ is consistent according to the original framework, all the possible interpretations combining the models of the maximal connected components are themselves models of $D^p$, and the set of these interpretations 
contains also the interpretation corresponding to assuming independence among the maximal 
connected components.

As regards the query evaluation problem, adopting this variant of the framework makes 
\qa\ \#P-hard (as \qa\ becomes more general than the problem of evaluating queries under the independence assumption~\cite{Suciu04}).
However, all our tractability results for projection-free queries still hold.
In fact, the probability of $t_1,\dots,t_n$ as an answer of a query can be obtained as
follows.
First, the set $T=\{t_1,\dots,t_n\}$ is partitioned into the (non-empty) sets 
$S_1, \dots, S_k$ which correspond to distinct maximal connected components of the conflict hypergraph, and
where each $S_i$ consists of all the tuples in $T$ belonging to the 
connected component corresponding to $S_i$.
Then, the minimum and maximum probabilities of each $S_i$ are computed (in \emph{PTIME}, when 
our sufficient conditions for tractability hold), by considering each $S_i$ separately.
Finally, the independence assumption among the tuples belonging to distinct maximal components is exploited, so that the minimum (resp., maximum) probability of $t_1,\dots,t_n$ is evaluated as the product of the so obtained minimum (resp., maximum) probabilities
of $S_1, \dots, S_k$.


\section{Related work}
\label{sec:relatedwork}
\noindent
We separately discuss the related work in the AI and DB literature.\\[1pt]
\emph{AI setting}.
The works in the AI literature related to ours are mainly those dealing with
probabilistic logic.
The problem of integrating probabilities into logic was first addressed 
(though pretty informally) in~\cite{Nilsson86}.
Then, in~\cite{Pap88} the PSAT problem was formalized as the satisfiability problem 
in a propositional fragment of the logic discussed in~\cite{Nilsson86}, and shown to be
$N\!P$-complete.
In~\cite{FaginHM90}, a more general probabilistic propositional logic than that 
in~\cite{Pap88} was defined, which enables algebraic relations to be specified
among the probabilities of propositional formulas 
(such as ``\emph{the probability of $\phi_1 \wedge \phi_2$ is twice that of 
$\phi_3 \vee \phi_4$}).
\cite{FaginHM90} mainly focuses on the satisfiability problem, showing that it is 
$N\!P$-complete (thus generalizing the result on PSAT of~\cite{Pap88}).
However, it provides no tractability result (whose investigation is our main contribution in 
the study of the corresponding consistency problem).
Up to our knowledge, most of the works devising techniques for efficiently solving the
satisfiability problem (such as~\cite{KavvadiasP90,Luk99}) rely on translating it into a
Linear Programming instance and using some heuristics, which do not guarantee 
polynomial-bounded complexity. 
Thus, the only works determining provable polynomial cases of probabilistic 
satisfiability are~\cite{Pre01,Pap88}.
As for \cite{Pap88}, we refer the reader to the discussions in Section~\ref{sec:consistencychecking} (right after Definition~\ref{def:cc}) and 
at the begininning of Section~\ref{sec:tractablecases}.
As regards~\cite{Pre01}, it is related to our work in that it showed 
that PSAT is tractable if the hypergraph of the formula (which corresponds to our conflict
hypergraph) is a hypertree.
However, the notion of hypertree in~\cite{Pre01} is very restrictive, as it relies on a 
notion of acyclicity much less general than the $\gamma$-acyclicity used here.
In fact, even the simple hypergraph consisting of 
$e_1=\{t_1, t_2, t_3\}$, $e_2=\{t_2, t_3, t_4\}$ is not viewed in~\cite{Pre01} as a 
hypertree, since it contains at least one cycle, such as $t_1,e_1,t_2,e_2,t_3,e_1,t_1$ 
(note that, in our framework, this would not be a cycle). 
Basically, hypertrees in~\cite{Pre01} are special cases of our hypertrees, 
as they require distinct hyperedges to have at most one node in common.
Hence, our result strongly generalizes the forms of conflict hypergraphs over which \cc\ 
turns out to be tractable according to the result of~\cite{Pre01} on PSAT.

The entailment problem (which corresponds to our query answering problem) was studied both
in the propositional~\cite{Luk99} and in the (probabilistic-)logic-programming 
setting~\cite{Luk01,NgS92,Ng97}.
The relationship between these works and ours is in the fact that they deal with knowledge
bases where rules and facts can be associated with probabilities.
Intuitively, imposing constraints over a PDB might be simulated by a 
probabilistic logic program, where tuples are encoded by (probabilistic) facts and constraints
by (probabilistic) rules with probability $1$.
However, not all the above-cited probabilistic-logic-programming frameworks can be used
to simulate our framework: for instance, \cite{NgS92,Ng97} use rules 
which can not express our constraints.
On the contrary, the framework in~\cite{Luk01} enables pretty general rules to be 
specified, that is \emph{conditional rules} of the form $(H|B)[p_1,p_2]$, where $H$ and $B$
are classical open formulas, stating that the probability of the formula $H\wedge B$ is 
between $p_1$ and $p_2$ times the probability of $B$.
Obviously, any denial constraint $ic$ can be written as a conditional rule of the form 
$(H|$\emph{true}$)[1,1]$, where $H$ is the open formula in $ic$.  
In the presence of conditional rules, \cite{Luk01} characterizes the complexity of the 
satisfiability and the entailment problems.
The novelty of our contribution w.r.t. that of~\cite{Luk01} derives from the specific 
database-oriented setting considered in our work.
In particular, as regards the consistency problem, our tractable 
cases are definitely a new contribution, as~\cite{Luk01} does not determine 
polynomially-solvable instances.
As regards the query answering problem, our contribution is relevant 
from several standpoints.
First, we provide a lower bound of the membership problem by assuming that the database
is consistent: this is a strong difference with~\cite{Luk01}, where 
the decisional version of the entailment problem has been addressed without assuming the 
satisfiability of the knowledge base, thus the satisfiability checking is used as a source of
complexity when deciding the entailment. 
Second, we have characterized the lower bound of the membership problem 
w.r.t. two specific aspects, which make sense in a database-perspective and were 
not considered in~\cite{Luk01}:
the presence of projection in the query (Theorem~\ref{theo:mphardness1}) and the type of 
denial constraints (Theorem~\ref{theo:mphardness2}).
Third,~\cite{Luk01} did not prove any lower bound for the data complexity of the search 
version of the entailment problem.
Indeed, it provided an $F\!P^{N\!P}$-hardness result only under combined complexity
(assuming all the knowledge base as part of the input, while we consider constraints
of fixed size) and exploiting the strong expressiveness of conditional rules, which enable 
also constraints not expressible by denial constraints to be specified.
Hence, in brief, our Theorem~\ref{theo:fp-at-np} shows that constraints simpler than
conditional constraints suffice to get an $F\!P^{N\!P[\log n]}$-hardness of the entailment for 
probabilistic logic programs, even under data complexity.
Finally, our  tractable cases of the query evaluation problem, up to our knowledge, are not subsumed by any result in the literature, and depict islands of 
tractability also for the more general entailment problem studied in~\cite{Luk01}. 
\\[2pt]
\emph{DB setting}.
The database research literature contains several works addressing various aspects related to
probabilistic data, and a number of models have been proposed for their representation and 
querying.
In this section, we first summarize the most important results on probabilistic databases
relying on the independence assumption (which, obviously, is somehow in contrast with 
allowing integrity constraints to be specified over the data, thus making these works 
marginally related to ours).
Then, we focus our attention on other works, which are more related to ours as they 
allow some forms of correlations among data to be taken into account when representing and 
querying data.

As regards the works relying on the independence assumption, the problem of efficiently evaluating (conjunctive) queries was first studied in~\cite{Suciu04}, where it was shown that
this problem is \#P-hard in the general case of queries without self-joins, but can be solved
in polynomial time for queries admitting a particular evaluation plan (namely, \emph{safe plan}).
Basically, a safe plan is obtained by suitably pushing the projection in the query 
expression, in order to extend the validity of the independence assumption also to the partial 
results of the query. 
The results of~\cite{Suciu04} were extended 
in~\cite{Dalvi10,Suciu07, Suciu07b,SuciuEDBT2010,KochICDE09}.
Specifically, in~\cite{Suciu07}, a technique was presented for computing safe plans on 
disjoint-independent databases (where only tuples belonging to different buckets are 
considered as independent). 
In~\cite{Suciu07b} and~\cite{Dalvi10}, the dichotomy theorem of~\cite{Suciu04} was extended 
to deal with conjunctive queries with self-joins and unions of conjunctive queries, respectively.
In~\cite{KochICDE09}, it was shown that a polynomial-time evaluation can be accomplished also
with query plans with any join ordering (not only those orderings required by safe plans).
Finally, in~\cite{SuciuEDBT2010}, a technique was presented enabling the determination of 
efficient query plans even for queries admitting no safe plan (this is 
allowed by looking at the database instance to decide the most suitable query 
plan, rather than looking only at the database schema).

The problem of dealing with probabilistic data when correlations are not known (and 
independence may not be assumed) was addressed in~\cite{LakshmananLRS97}.
Here, an algebra for querying probabilistic data was introduced, as well as a system called 
\emph{ProbView}, which supports the evaluation of algebraic expressions by returning answers 
associated with probability intervals.  
However, the query evaluation is based on an extensional semantics and no integrity constraints encoding domain knowledge were considered. 

One of the first works investigating a suitable model for representing correlations among
probabilistic data is~\cite{Green*2006}, where \emph{probabilistic c-tables} were introduced.
In this framework, whose rationale is also at the basis of the PDB 
\emph{MayBMS}~\cite{MayBMS}, correlations are expressed by associating tuples with boolean
formulas on random variables, whose probability functions are represented in a table.
However, in this approach, only one interpretation for the database is considered (the one
deriving from assuming the random variables independent from one another), and it is not 
suitable for simulating the presence of integrity constraints on the data when the marginal 
probabilities of the tuples are known.
Similar differences, such as that of assuming only one interpretation,
hold between our framework and that at the basis of 
\emph{Trio}~\cite{TRIO,Agrawal2010}, where incomplete and probabilistic data are modeled by 
combining the possibility of specifying buckets of tuples with the association of each tuple
with its lineage (expressed as the set of tuples from which each tuple derived).
In particular, in~\cite{Agrawal2010} an extension of Trio is proposed which aims at 
better managing the epistemic uncertainty (i.e., the information about uncertainty is 
itself incomplete).
Here, the semantics of \textit{generalized uncertain databases} is given in terms of a Dempster-Shafer mass distribution over the powerset of the possible worlds (this collapses 
to the case of a PDB with one probability distribution, if the mass distribution is defined 
over every single possible world).
Further approaches to representing rich correlations and querying the data are those 
in~\cite{Deshpande07,DeshpandePods09,DeshpandeSigmod10}, where 
correlations
among data are represented according to some graphical models (such as PGMs, junction trees, 
AND/XOR trees). 
In these approaches, correlations are detected while data are generated and, in some sense, 
they are data themselves: the database consists of a graph representing correlations among
events, so that the marginal distributions of tuples are not explicitly represented, but 
derive from the correlations encoded in the graph. 
This is a strong difference with our framework, where a PDB is 
a set of tuples associated with their marginal probabilities, and constraints can be imposed 
by domain experts with no need of taking part to the 
data-acquisition process.
Moreover, in~\cite{Deshpande07,DeshpandePods09,DeshpandeSigmod10}, independence is assumed between
tuples for which a correlation is not represented in the graph of correlations.
On the contrary, our query evaluation model relies on a ``cautious'' paradigm, where no 
assumption is made between tuples not explicitly correlated by the constraints.
In~\cite{DalviSuciu05}, the problem of evaluating queries over probabilistic views under integrity 
constraints (functional and inclusion dependencies) and in the presence of statistics on the
cardinality of the source relations was considered. 
In this setting, when evaluating query answers and their probabilities, all the possible values 
of the attribute values of the original relations must be taken into account, and this backs the
use of the Open World Assumption (as the original relations may contain attribute values which 
do not occur in the views). 
Under this assumption, queries are evaluated over the interpretation of the data having the 
maximum entropy among all the possible models.

All the above-cited works assume that the correlations represented among the data are
consistent.
In~\cite{KochPVLDB2O08}, the problem was addressed of querying a PDB 
when integrity constraints are considered a posteriori, thus some possible worlds 
having non-zero probability under the independence assumption may turn out to be inconsistent.
In this scenario, queries are still evaluated on the unique interpretation 
entailed by the independence assumption, but the possible worlds are assigned the probabilities 
\emph{conditioned} to the fact that what entailed by the constraint is true.
That is, in the presence of a constraint $\Gamma$, the probability $P(Q)$ of a query $Q$ is evaluated as $P(Q|\Gamma)$, which is the probability of $Q$ assuming that $\Gamma$ holds.
This corresponds to evaluating queries by augmenting them with the constraints, thus
it is a different way of interpreting the constraints and queries from the
semantics adopted in our paper, where constraints are applied on the database.
The same spirit as this approach is at the basis of~\cite{CohenKS09}, where specific forms 
of integrity constraints in the special case of probabilistic XML data are taken into account
by considering a single interpretation, conditioned on the constraints.
 
%

\section{Conclusions and Future work}
\label{sec:conclusions}
We have addressed two fundamental problems dealing with PDBs 
in the presence of denial constraints:
the \emph{consistency checking} and the \emph{query evaluation} problem.
We have thoroughly studied the complexity of these problems,
characterizing the general cases and pointing out several tractable cases.

There exist a number of interesting directions for future work.
First of all, the cautious querying paradigm will be extended to deal with further forms of 
constraints.
This will allow for enriching the types of correlations which can be expressed 
among the data, and this may narrow the probability ranges associated with the answers (in fact, 
for queries involving tuples which are not involved in any denial constraint, the obtained 
probability ranges may be pretty large, and of limited interest for data analysis).

Another interesting  direction for future work is the identification of other tractable 
cases of the consistency checking and the query evaluation problems.
As regards the consistency checking problem, 
we conjecture that polynomial-time strategies can be devised when the conflict 
hypergraph exhibits a limited degree of cyclicity (as a matter of fact, 
we have shown that this problem is feasible in linear time not only for hypertrees, but
also for rings, which have limited cyclicity as well).
A possible starting point is investigating the connection between the consistency
checking problem (viewed as evaluating the (dual) lineage of the constraint query - see Remark~$1$) 
and the model checking problem of Boolean formulas.
The connection between lineage evaluation and model checking has been well established mainly 
for the cases of tuple-independent PDBs~\cite{OlteanuH08,Suciu11}.
In fact, in this setting, it has been shown that, as it happens for checking 
Boolean formulas, the probability of a lineage can be evaluated by compiling it 
into a \emph{Binary Decision Diagram} - BDD~\cite{OBDDbook}, and then suitably processing 
the diagram.
Specifically, if the lineage (or, equivalently, the Boolean formula to be checked) $L$ 
can be compiled into a particular case of BDDs (such as \emph{Read-Once} or \emph{Ordered} 
BDD), the lineage evaluation (as well as the formula verification) can be accomplished  
as the result of a traversal of the BDD, in time linear w.r.t. the diagram size.
Hence, in all the cases where $L$ can be compiled into a \emph{Read-Once} or an \emph{Ordered}
BDD of polynomial size, $L$ can be evaluated in polynomial time.
One of the most general result about the compilability of Boolean formulas into  
\emph{Ordered} BDDs was stated in~\cite{Vardi05}, where it was shown that any CNF expression
over $n$ variables whose hypergraph of clauses has bounded treewidth ($<k$) admits
an equivalent ordered BDD of size $O(n^{k+1})$.
Then, the point becomes devising a mechanism for exploiting an Ordered BDD
equivalent to a Boolean formula $f$ to evaluate the probability of $f$, when neither 
independence nor precise correlations can be assumed among the terms of $f$.
Up to our knowledge, this topic has not been investigated yet, and we plan to address it in
future work.
If it turned out that, under no assumption on the way terms are correlated, the probability 
of formulas can be evaluated by traversing their equivalent Ordered BDDs,
then the above-cited result of~\cite{Vardi05} would imply other tractable cases of our
consistency checking problem.
However, our results on hypertrees and rings would be still of
definite interest, as we have found that in these cases the consistency checking problem 
can be solved in linear time, while the construction of the ordered BDD is $O(n^{k+1})$.
Moreover, our results show that the consistency checking problem over hypertrees and rings 
is still polynomially solvable (actually, in quadratic time) in the case that the 
cardinality of hyperedges is not known to be bounded by constants (see the discussion 
right after Theorem~\ref{theo:tractableCC}), which does not always correspond to
structures having bounded treewidth.

Finally, our framework can be exploited to
address the problem of repairing data and extracting reliable information from inconsistent
PDBs.
This research direction is somehow related to~\cite{And06}, where the evaluation of clean answers over \emph{deterministic} databases which are inconsistent due to the presence of duplicates is accomplished by encoding the inconsistent database into a PDB adopting the
bucket independent model.
Basically, in this PDB, probabilities are assigned to tuples representing variants of the same tuple, and these variants are grouped in buckets.
However, the so obtained PDB is consistent, thus this approach is not a repairing framework for inconsistent PDBs, but is a technique for getting clean answers over inconsistent
deterministic databases after rewriting queries into ``equivalent'' queries over the corresponding consistent PDBs.
A more general repairing problem in the probabilistic setting has been recently addressed in~\cite{Lian10},
where a strategy based on deleting tuples has been proposed, ``inspired'' by the 
common approaches for inconsistent deterministic databases~\cite{Ber11}.
We envision a different repairing paradigm, which addresses a source of inconsistency which 
is typical of the probabilistic setting: inconsistencies may arise from wrong assignments 
to the marginal probabilities of tuples, due to limitations of the model adopted for 
encoding uncertain data into probabilistic tuples.
In this perspective, a repairing strategy based on properly updating the probabilities
of the tuples (possibly by adapting frameworks for data repairing in the deterministic setting based on attribute updates~\cite{FlescaFP07,FlescaFP10,Wijsen05}) seems to be the most suitable choice.\\

%
%
%
 %
%
%
\noindent
{\em\textbf{\emph{Acknowledgements.}}
We are grateful to the anonymous reviewers of an earlier conference submission 
of a previous version of this 
paper for their fruitful suggestions (one especially for 
pointing out the reduction of \cc\ to PSAT), as 
well as Thomas Lukasiewicz,
for insightful discussions about his work \cite{Luk01}, and Francesco Scarcello, 
for valuable comments~about~our~work.}


\bibliographystyle{model1b-num-names}
\bibliography{bib}

\newpage
\appendix

\section{Proofs}
\label{app:proofs}
In this appendix we report the proofs of the theorems whose statement have been provided and
commented in the main body of the paper.
Furthermore, the appendix contains some new lemmas which are exploited in these proofs.

%

\subsection{Proofs of Theorem~\ref{theo:CCisNPcomplete}, Proposition~\ref{prop:esistenzamatrioska}, and Lemma~\ref{lem:pmin1/2}}
\label{app:Pro1Lemma1}
\noindent{\textbf{Theorem~\ref{theo:CCisNPcomplete}.}}
(Complexity of \cc)
\cc\  \textit{is} $NP$-\textit{complete}.
\begin{proof}
The membership of \cc\ in $NP$ has been already proved in the core of the 
paper, where a reduction from \cc\ to \emph{PSAT} has been described.
As regards the hardness, it follows from Theorem~\ref{theo:2FDs-cc} (or, equivalently, from 
Theorem~\ref{theo:ic-arity-3}), whose proof is given in Section~\ref{app:567}.
\end{proof}

We now report a property of $\gamma$-acyclic hypergraphs from~\cite{DAtri84},
which will be used in the proof
of Proposition~\ref{prop:esistenzamatrioska}.

\begin{fact}\label{fact:dm}
\emph{\cite{DAtri84}}~Let $H=\langle N,E\rangle$ be a hypertree.
There exists at least one hyperedge $e\in E$ such that
at least one of the following conditions hold:
\begin{enumerate}
\item
$e\cap \mbox{\emph{N}}(H^{-\{e\}})$ is a set of edge equivalent nodes;
\item
there exists $e'\in E$ such that $e'\neq e$ and
$e\cap \mbox{\emph{N}}(H^{-\{e, e'\}})=
e'\cap \mbox{\emph{N}}(H^{-\{e, e'\}})$.
\end{enumerate}
Moreover, $H^{-\{e\}}$ is still a hypertree. 
\end{fact}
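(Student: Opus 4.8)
The plan is to prove the two assertions of Fact~\ref{fact:dm} separately, working directly from the paper's definition of $\gamma$-acyclicity and proceeding by contradiction for the existence of the edge $e$. For each hyperedge $f$, call $\mathrm{core}(f)=f\cap N(H^{-\{f\}})$ its set of non-ear nodes, i.e. the nodes it shares with at least one other edge. The two listed conditions describe the two ways a hyperedge can be ``pendant'': condition~1 says that $\mathrm{core}(e)$ is edge-equivalent, so that the nodes through which $e$ attaches to the rest of $H$ always travel together inside one common interface; condition~2 says that $e$ has a ``twin'' $e'$ reaching $H^{-\{e,e'\}}$ through exactly the same node set. First I would dispose of the trivial case $|E|\le 1$ (where $e$ vacuously satisfies condition~1 and $H^{-\{e\}}$ is empty), and then assume $|E|\ge 2$ with $H$ connected.

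For the existence claim I would argue by contradiction: suppose no hyperedge satisfies either condition. The failure of condition~1 for an edge $e$ yields two nodes $a,b\in\mathrm{core}(e)$ and a further edge separating them (one containing $a$ but not $b$); the failure of condition~2 forbids any two edges from sharing their whole external interface. Starting from such a separated pair and using connectivity of $H$, I would trace an alternating walk of distinct nodes and hyperedges that eventually returns to an already-visited interface, thereby exhibiting a pair of edges $e_1,e_2$ whose remaining nodes stay connected in the hypergraph obtained by deleting the nodes of $e_1\cap e_2$ from every edge. This is precisely the configuration ruled out by the $\gamma$-acyclicity property, contradicting that $H$ is a hypertree. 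I expect this combinatorial construction --- turning the simultaneous failure of both conditions into a genuine forbidden pair --- to be the main obstacle, since it requires careful bookkeeping of which edges contain which of the traced nodes so that the resulting pair really remains connected after its intersection is removed.

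For the ``moreover'' part I would show that removing a pendant edge preserves both connectivity and acyclicity. Acyclicity is inherited for free: every pair of edges of $H^{-\{e\}}$ is already a pair of edges of $H$, the ears stripped in forming $H^{-\{e\}}$ lie in $e$ only and hence on no path among the surviving edges, and deleting an edge only removes connection possibilities; so any path witnessing a forbidden pair in $H^{-\{e\}}$ would witness the same forbidden pair in $H$, which is impossible. For connectivity I would use the two conditions. If $e$ satisfies condition~1 then, since $|E|\ge 2$ and $H$ is connected, $\mathrm{core}(e)$ is nonempty and, being edge-equivalent, is contained in some edge $f\neq e$; every path that used $e$ entered and left it through $\mathrm{core}(e)$ and can be rerouted through $f$, so $H^{-\{e\}}$ stays connected. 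If instead $e$ satisfies condition~2 with twin $e'$, then $e$ and $e'$ attach to $H^{-\{e,e'\}}$ through the same node set, so $e'$ alone preserves every connection that $e$ provided, and again $H^{-\{e\}}$ remains connected. Combining connectivity with the inherited acyclicity yields that $H^{-\{e\}}$ is a hypertree, completing the argument.
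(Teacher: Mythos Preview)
The paper does not give its own proof of Fact~\ref{fact:dm}; it is stated as a known result imported from~\cite{DAtri84} and used as a black box in the proof of Proposition~\ref{prop:esistenzamatrioska}. So there is no in-paper argument to compare your proposal against.

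On the merits of your sketch: the ``moreover'' part is handled correctly. Your observation that any forbidden pair in $H^{-\{e\}}$ would already be a forbidden pair in $H$ (since the witnessing path avoids $e$ and the removed nodes are ears of $e$) is sound, and your connectivity arguments under each of the two conditions are right --- in particular, under condition~2 you correctly deduce $\mathrm{core}(e)\subseteq e'$, which is exactly what is needed to reroute paths.

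The existence part, however, is where the real content lies, and here your proposal is a plan rather than a proof. You yourself flag the construction of the forbidden pair from the simultaneous failure of both conditions as ``the main obstacle,'' and indeed it is: producing edges $e_1,e_2$ whose residual nodes remain connected after deleting $e_1\cap e_2$ from every edge requires more than tracing an alternating walk that revisits some interface --- one must control which edges contain which traced nodes so that the deletion of $e_1\cap e_2$ does not sever the path. The standard route in~\cite{DAtri84} (and in Fagin's work) goes via an equivalent reduction-based characterisation of $\gamma$-acyclicity rather than a direct walk argument against the ``no forbidden pair'' definition, precisely because the latter is awkward to contradict directly. As written, your existence argument has a genuine gap at its core step.
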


\noindent{\textbf{Proposition~\ref{prop:esistenzamatrioska}.}}
\textit{
Let $H=\langle N,E\rangle$ be a hypertree.
Then, there is at least one hyperedge $e\in E$ such that 
$Int(e, H)$ is a matryoshka.
Moreover, $H^{-\{e\}}$ is still a hypertree.
}
\begin{proof} 
Reasoning by induction on the number of hyperedges in $E$, we prove that
there is a total ordering $e_1, \cdots, e_n$ of the edges in $E$
such that all the following conditions hold for each $i\in [1..n]$:
\begin{enumerate}
\item
either $e_i\cap \mbox{\emph{N}}(H^{-\{e_1, \cdots, e_{i-1}\}})$ is a set 
of edge equivalent nodes, or there
exists $e'\in \mbox{\emph{E}}(H^{-\{e_1, \cdots, e_{i-1}\}})$ such that $e'\neq e$ and 
$e\cap \mbox{\emph{N}}(H^{-\{e, e'\}})=
e'\cap \mbox{\emph{N}}(H^{-\{e, e'\}})$;
\item
$H^{-\{e_1, \cdots, e_{i}\}}$ is a hypertree;
\item
$Int(e_i, H^{-\{e1, \cdots, e_{i-1}\}})$ is a matryoshka.
\end{enumerate}

The base case ($|E|=1$) is straightforward.
In order to prove the induction step, we reason as follows.
Since $H$ is a hypertree, Fact~\ref{fact:dm} implies that there 
is a node $e$ such that 
$1)$ either $e\cap \mbox{\emph{N}}(H^{-\{e\}})$ is a set of edge equivalent 
nodes, or there exists $e'\in E$ such that $e'\neq e$ and
$e\cap \mbox{\emph{N}}(H^{-\{e, e'\}})=
e'\cap \mbox{\emph{N}}(H^{-\{e, e'\}})$, and
$2)$ $H^{-\{e\}}$ is a hypertree. 

From the inductive hypothesis, since $H^{-\{e\}}$ is a hypertree,
there exists a total ordering $e_1, \cdots, e_{n-1}$ of the nodes in
$E-\{e\}$ such that for each $i\in [1..n-1]$ conditions $1,2$ and $3$ 
are satisfied w.r.t. $H^{-\{e\}}$.

If $Int(e, H)$ is a matryoshka, then the total ordering
$e, e_1, \cdots,$ $e_{n-1}$ of the nodes in $E$ satisfies
conditions $1,2$ and $3$ for every edge in the sequence thus the
statement is proved in this case.

Otherwise, since $Int(e, H)$ is not a matryoshka then
$e\cap \mbox{\emph{N}}(H^{-\{e\}})$ is not a set of 
edge equivalent nodes. Hence, since $e$ satisfies the conditions
of Fact~\ref{fact:dm} then
there exists $e_j\in \{e_1, \cdots, e_{n-1}\}$ such that
$e\cap \mbox{\emph{N}}(H^{-\{e, e_j\}})=
e_j\cap \mbox{\emph{N}}(H^{-\{e, e_j\}})$.

We now consider separately the following two cases:
\begin{list}{--}
     {\setlength{\rightmargin}{0mm}
      \setlength{\leftmargin}{5mm}
      \setlength{\itemindent}{5mm} 
			\setlength{\itemsep}{0mm}}
\item[Case $1)$:] 
there is $k\in [1..j-1]$ such that 
$e_k\cap \mbox{\emph{N}}(H^{-\{e,e_1,\cdots, e_k, e_j\}})=
e_j\cap \mbox{\emph{N}}(H^{-\{e,e_1,\cdots, e_k, e_j\}})$.
\item[Case $2)$:] 
there is no $k\in [1..j-1]$ such that 
$e_k\cap \mbox{\emph{N}}(H^{-\{e,e_1,\cdots, e_k, e_j\}})=
e_j\cap \mbox{\emph{N}}(H^{-\{e,e_1,\cdots, e_k, e_j\}})$.
\end{list}

We first prove Case $1)$. Let $k\in [1..j-1]$ be the smallest index
such that $e_k\cap \mbox{\emph{N}}(H^{-\{e,e_1,\cdots, e_k, e_j\}})=$
$e_j\cap$ $ \mbox{\emph{N}}(H^{-\{e,e_1,\cdots, e_k, e_j\}})$.
We consider the total ordering 
of the edges of $E$ obtained by inserting 
$e$ immediately before $e_k$ in $e_1,\cdots, e_{n-1}$, i.e., 
$e_1,\cdots, e_{k-1}, e, $ $ e_k,\cdots, e_j,\cdots, e_{n-1}$.

We first prove that for each $i\in[1..k-1]$ conditions
$1,2$ and $3$ still hold.
For each $i\in[1..k-1]$ one of the following cases occur:
\begin{itemize}
\item
$e_i\cap e_j = \emptyset$. In this case since 
$e\cap \mbox{\emph{N}}(H^{-\{e, e_j\}})=
e_j\cap \mbox{\emph{N}}(H^{-\{e, e_j\}})$, it is straightforward to see that conditions $1,2$ and $3$ hold.
\item
$e_i\cap e_j \neq \emptyset$ and 
$e_i\cap \mbox{\emph{N}}(H^{-\{e,e_1,\cdots, e_{i-1}\}})$ is a set of edge equivalent nodes.
Since $e\cap \mbox{\emph{N}}(H^{-\{e, e_j\}})=
e_j\cap \mbox{\emph{N}}(H^{-\{e, e_j\}})$, 
$e_i\cap e_j \neq \emptyset$
and $e_j$ is an edge of $H^{-\{e,e_1,\cdots, e_{i-1}\}}$
then
$e_i\cap \mbox{\emph{N}}(H^{-\{e,e_1,\cdots, e_{i-1}\}})=
e_i\cap \mbox{\emph{N}}(H^{-\{e_1,\cdots, e_{i-1}\}}).$
Therefore, the nodes in $e_i\cap \mbox{\emph{N}}(H^{-\{e_1,\cdots, e_{i-1}\}})$
are edge equivalent w.r.t $H^{-\{e_1,\cdots, e_{i-1}\}}$ too.
Hence, conditions $1,2$ and $3$ hold.
\item
$e_i\cap e_j \neq \emptyset$ and there is an $h\in [i+1..n-1]$, 
with $h\neq j$, such that
$e_i\cap \mbox{\emph{N}}(H^{-\{e,e_1,\cdots, e_{i}, e_h\}})=
e_h\cap \mbox{\emph{N}}(H^{-\{e,e_1,\cdots, e_{i}, e_h\}}).$
Since, $e_j$ is and edge of $H^{-\{e,e_1,\cdots, e_{i}, e_h\}}$
and
$e\cap \mbox{\emph{N}}(H^{-\{e, e_j\}})=
e_j\cap \mbox{\emph{N}}(H^{-\{e, e_j\}})$
it holds that
$e_i\cap \mbox{\emph{N}}(H^{-\{e,e_1,\cdots, e_{i}, e_h\}})=e_i\cap \mbox{\emph{N}}(H^{-\{e_1,\cdots, e_{i}, e_h\}})=
e_h\cap \mbox{\emph{N}}(H^{-\{e_1,\cdots, e_{i}, e_h\}}).$
Hence conditions $1,2$ and $3$ hold in this case too.
\end{itemize}
Observe that, in the last two cases mentioned above
the fact that $Int(e_i, H^{-\{e1, \cdots, e_{i-1}\}})$ is a matryoshka
follows from the fact that 
$e_i\cap N(H^{-\{e1, \cdots, e_{i-1}\}}) = 
e_i\cap N(H^{-\{e,e1, \cdots, e_{i-1}\}})$ and
$e_i\cap e = e_i\cap e_j$.
Moreover, 
conditions $1,2$ and $3$ still hold for each $i\in[k..n-1]$
since they are not changed w.r.t. the inductive hypothesis.

As regards the edge $e$, it is easy to see that
conditions $1$ and $2$ are
satisfied since $e_j$ appears after $e$ in the total ordering
$e_1,\cdots, e_{k-1}, e, $ $ e_k,\cdots, e_j,\cdots, e_{n-1}$.

We now prove that condition $3$ holds for $e$.
We know from the induction hypothesis that 
$Int(e_k, H^{\{e,e_1,\cdots, e_{k-1}\}})$ is a matryoshka.
However, since $e\cap \mbox{\emph{N}}(H^{-\{e, e_j\}})=
e_j\cap \mbox{\emph{N}}(H^{-\{e, e_j\}})$ and $j>k$ then
$Int(e_k, H^{\{e,e_1,\cdots, e_{k-1}\}}) =$
$Int(e_k, H^{\{e_1,\cdots, e_{k-1}\}})$.
Since, $e_k\cap \mbox{\emph{N}}(H^{-\{e,e_1,\cdots, e_k, e_j\}})=$
$e_j\cap \mbox{\emph{N}}(H^{-\{e,e_1,\cdots, e_k, e_j\}})$
and
$e\cap \mbox{\emph{N}}(H^{-\{e, e_j\}})=
e_j\cap \mbox{\emph{N}}(H^{-\{e, e_j\}})$
it holds that 
$$e_k\cap \mbox{\emph{N}}(H^{-\{e,e_1,\cdots, e_k, e_j\}})=e_j\cap \mbox{\emph{N}}(H^{-\{e,e_1,\cdots, e_k, e_j\}})=
e\cap \mbox{\emph{N}}(H^{-\{e,e_1,\cdots, e_k, e_j\}}).$$
Therefore the set of nodes in
$e\cap \mbox{\emph{N}}(H^{-\{e_1,\cdots, e_{k-1}\}})$
can be partitioned in three sets $N, N', N''$ such that:
\begin{list}{--}
     {\setlength{\rightmargin}{0mm}
      \setlength{\leftmargin}{1mm}
      \setlength{\itemindent}{2mm} 
			\setlength{\itemsep}{0mm}}
\item
$N= e_k\cap \mbox{\emph{N}}(H^{-\{e,e_1,\cdots, e_k, e_j\}}) 
=\bigcup_{S\in Int(e_k, H^{\{e,e_1,\cdots, e_{k-1}\}})} S$,
\item
$N'= e_k\cap e_j - N$, and
\item
$N'' =  e\cap e_j -N'-N$.
\end{list}
Hence, it is easy to see that
$Int(e, H^{\{e_1,\cdots, e_{k-1}\}}) =
Int(e_k, H^{\{e,e_1,\cdots, e_{k-1}\}}) \cup \{N \cup N'\} \cup \{N \cup N'\cup N''\}.$
Therefore, $Int(e, H^{\{e_1,\cdots, e_{k-1}\}})$ is a matryoshka.
Hence, the proof for Case $1)$ is completed.

We now prove Case $2)$.
We consider the total ordering 
of the edges of $E$ obtained by inserting 
$e$ immediately before $e_j$ in $e_1,\cdots,$ $e_{n-1}$, i.e., 
$e_1,\cdots, e_{j-1}, $ $ e, e_j,\cdots, e_{n-1}$.
It is easy to see that we can prove that
for each $i\in[1..j-1]$ conditions
$1,2$ and $3$ are satisfied applying the same reasoning applied 
in Case $1)$ in order to prove that
for each $i\in[1..k-1]$ conditions
$1,2$ and $3$ hold.
Analogously to the proof of Case $1)$ it is straightforward to see that
conditions $1,2$ and $3$ still hold for each $i\in[j..n-1]$
since they are not changed w.r.t. the inductive hypothesis.

As regards the edge $e$, it is easy to see that
conditions $1$ and $2$ are
satisfied since $e_j$ appears after $e$ in the total ordering
$e_1,\cdots, e_{j-1}, $ $ e, e_j,\cdots, e_{n-1}$.

To complete the proof we show that condition $3$ holds for $e$ in this case.
From the induction hypothesis, we know that it is the case that 
$Int(e_j, H^{\{e,e_1,\cdots, e_{j-1}\}})$ is a matryoshka.
However, since $e\cap \mbox{\emph{N}}(H^{-\{e, e_j\}})=
e_j\cap \mbox{\emph{N}}(H^{-\{e, e_j\}})$ 
then $e\cap \mbox{\emph{N}}(H^{-\{e,e_1,\cdots, e_j\}})=$
$e_j\cap \mbox{\emph{N}}(H^{-\{e,e_1,\cdots, e_j\}})$, and 
it holds that the set of nodes in 
$e\cap \mbox{\emph{N}}(H^{-\{e_1,\cdots, e_{k-1}\}})$
can be partitioned in the sets $N$ and $N'$ such that:
\begin{list}{--}
     {\setlength{\rightmargin}{0mm}
      \setlength{\leftmargin}{1mm}
      \setlength{\itemindent}{2mm} 
			\setlength{\itemsep}{0mm}}
\item
$N= e_j\cap \mbox{\emph{N}}(H^{-\{e,e_1,\cdots, e_j\}}) = \bigcup_{S\in Int(e_j, H^{\{e,e_1,\cdots, e_{k-1}\}})} S$,
\item
$N'= e_k\cap e_j - N$.
\end{list}
It is easy to see that the following holds
$Int(e, H^{\{e_1,\cdots, e_{j-1}\}}) =Int(e_j, H^{\{e,e_1,\cdots, e_{k-1}\}}) \cup \{N \cup N'\}.$
Therefore, $Int(e, H^{\{e_1,\cdots, e_{j-1}\}})$ is a matryoshka,
which completes the proof for Case $2)$ and the proof of the proposition.
\end{proof}

%
%

Before providing the proof of Lemma~\ref{lem:pmin1/2},
we report a well-known result 
on the minimum and maximum probability of the conjunction 
of events among which no correlation is known, taken from~\cite{Boole1854}. 
\begin{fact}
\label{fac:boole}
Let $E_1, E_2$ be a pair of events such that their marginal probabilities  
$p(E_1)$, $p(E_2)$ are known, while no correlation among them is known.
Then, the minimum and maximum probabilities of the event $E_1\wedge E_2$ are as follows:\\
$p^{min}(E_1\wedge E_2)= \max\left\{0,  p(E_1)+p(E_2)-1\right\}$; and
$p^{max}(E_1\wedge E_2)= \min\left\{p(E_1),p(E_2)\right\}$.
\end{fact}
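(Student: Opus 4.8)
The plan is to treat the two events through their four joint atoms and reduce the claim to an elementary feasibility analysis of a small linear system. Write $p_1 = p(E_1)$ and $p_2 = p(E_2)$, and introduce the probabilities of the four mutually exclusive and exhaustive atoms determined by $E_1$ and $E_2$: let $a = p(E_1\wedge E_2)$, $b = p(E_1\wedge \neg E_2)$, $c = p(\neg E_1\wedge E_2)$, and $d = p(\neg E_1\wedge \neg E_2)$. Any assignment of probabilities compatible with the known marginals corresponds exactly to a choice of $a,b,c,d \ge 0$ satisfying $a+b = p_1$, $a+c = p_2$, and $a+b+c+d = 1$. Since no correlation between $E_1$ and $E_2$ is assumed, every such assignment is admissible, and $p(E_1\wedge E_2)$ is precisely the value $a$.

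First I would derive the two inequalities that bound $a$ from the sign constraints. From $b = p_1 - a \ge 0$ and $c = p_2 - a \ge 0$ we obtain $a \le \min\{p_1, p_2\}$, which is just the fact that a conjunction cannot be more probable than either conjunct. From $d = 1 - p_1 - p_2 + a \ge 0$ we obtain $a \ge p_1 + p_2 - 1$, and together with $a \ge 0$ this yields $a \ge \max\{0, p_1 + p_2 - 1\}$; this is inclusion--exclusion applied to $p(E_1\vee E_2)\le 1$. These show that the two displayed formulas are valid lower and upper bounds for $p(E_1\wedge E_2)$.

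It then remains to prove tightness, i.e. that both extremes are actually attained and hence are the true minimum and maximum. Here I would observe that the derivation above is reversible: for any value $a$ in the closed interval $[\max\{0, p_1 + p_2 - 1\},\ \min\{p_1, p_2\}]$, setting $b = p_1 - a$, $c = p_2 - a$, and $d = 1 - p_1 - p_2 + a$ produces four nonnegative numbers summing to $1$ with the prescribed marginals, that is, a genuine admissible distribution realizing $p(E_1\wedge E_2) = a$. Taking $a$ at the two endpoints yields distributions attaining $\max\{0, p_1 + p_2 - 1\}$ and $\min\{p_1, p_2\}$ respectively, which completes the proof.

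The argument is essentially routine, so there is no serious obstacle; the only point needing care is the tightness direction, where one must check that the endpoint candidate for $a$ keeps all four atoms nonnegative simultaneously. This is exactly what the interval endpoints guarantee, since the interval was itself obtained as the intersection of the four sign constraints on $a,b,c,d$.
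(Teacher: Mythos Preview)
Your argument is correct and entirely standard: the four-atom decomposition reduces the question to the feasibility of a small linear system, and you check both validity and tightness of the bounds carefully. There is nothing to compare against, since the paper does not prove this statement at all; it simply records it as a well-known result (the Fr\'echet--Hoeffding bounds, attributed to Boole) and uses it as a black box.
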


The formulas reported above are also known as Frechet-Hoeffding formulas.
In Lemma~\ref{lem:pmin1/2}, we generalize the formula for the minimum probability, and adapt it to our database setting.\\


\noindent{\textbf{Lemma~\ref{lem:pmin1/2}.}}
\textit{Let $D^p$ be an instance of $\D^p$ consistent w.r.t. $\IC$, 
$T$ a set of tuples of $D^p$, and
$H=HG(D^p, \IC)$.
If either 
$i)$ the tuples in $T$ are pairwise disconnected in $H$, 
or 
$ii)$ $Int(T,H)$ is a matryoshka,
then
$p^{min}(T)= \max\left\{0, \sum_{t\in T} p(t)\!-|T|\!+\! 1\right\}$.
Otherwise, this formula provides a lower bound for $p^{min}(T)$.
}
\begin{proof}
Case $i)$: 
In the case that $t_1,\dots, t_n$ are pairwise disconnected in the conflict hypergraph, 
the formula for $p^{min}(t_1,\dots,t_n)$ 
can be proved by induction on $n$, considering as base case
the formula for the minimum probability of a pair of events reported in Fact~\ref{fac:boole}.\\[4pt]
\noindent
Case $ii)$: 
We prove an equivalent formulation of the statement over the same instance of $D^p$:
{\em``Let $T$ be a set of nodes of $H=HG(D^p, \IC)$ such that $Int(T,H)$ is a 
matryoshka.
Let $T^n= t_1,\dots,t_n$ be a sequence consisting of the nodes of $T$ ordered as follows:
$i>j \implies s(t_i)\supseteq s(t_j)$, where $s(t_i)$ is the maximal set in $Int(T,H)$
containing $t_i$.
Then, 
$p^{min}(t_1,\dots,t_n)= \max\left\{0, \sum_{i=1}^n p(t_i)-n+1\right\}$''.}
That is, we consider the nodes in $T$ suitably ordered, as this will help us to reason 
inductively.

We reason by induction on the length of the sequence $T^n$.
The base case ($n=1$)  trivially holds, as, for any tuple $t$, $p^{min}(t)=p(t)$.
We now prove the induction step: we assume that the property holds for any sequence 
of the considered form of length $n-1$, and prove that this implies that the property holds for 
sequences of $n$ nodes.

From induction hypothesis, we have that the property holds for the subsequence 
$T^{n-1}= t_1, \dots, t_{n-1}$ of $T^n$.
That is, there is a model $M$ for $D^p$ w.r.t. $\IC$ such that
$\sum_{w\supseteq \{t_1,\dots, t_{n-1}\}} M(w)=$ 
$\max \{0,$ $\sum_{i=1}^{n-1}p(t_i)-(n-1)+1\}$.
We show how, starting from $M$, a model $M'$ can be constructed such that
$\sum_{w\supseteq \{t_1,\dots, t_{n-1}, t_n\}}=$ 
$\max \{0, \sum_{i=1}^{n}p(t_i)-n+1\}$, which is the formula reported in the statement for
$p^{min}(t_1,\dots,t_{n})$.
According to $M$, the set of possible worlds of $D^p$ can be partitioned into:
\begin{itemize}
\item
$W\left(t_1\wedge \dots\wedge t_{n-1}\wedge t_n\right)$: the set of possible worlds containing all the tuples $t_1,\dots, t_{n-1}, t_n$;
\item
$W\left(\neg(t_1\wedge\dots\wedge t_{n-1}) \wedge t_n\right)$: the set of possible worlds containing $t_n$, but not containing at least one among $t_1, \dots,t_{n-1}$;
\item 
$W\left(t_1\wedge \dots\wedge t_{n-1},\neg t_n\right)$: the set of possible worlds containing 
all the tuples $t_1,\dots, t_{n-1}$, but not containing $t_n$;
\item
$W\left(\neg(t_1\wedge\dots\wedge t_{n-1})\wedge \neg t_n \right)$: the set of possible worlds 
not containing $t_n$ and not containing at least one tuple among $t_1, \dots,t_{n-1}$.
\end{itemize}

For the sake of brevity, the set of worlds defined above will be denoted as 
$W$, $W'$, $W''$, $W'''$, respectively.
In the following, given a set of possible worlds $\mathcal{W}$, we denote as $M(\mathcal{W})$ the overall probability
assigned by $M$ to the worlds in $\mathcal{W}$, i.e., 
$M(\mathcal{W})=\sum_{w\in \mathcal{W}} M(w)$.
Thus, if $M(W)= \max \{0,$ $\sum_{i=1}^{n}p(t_i)-n+1\}$, then we are done, since the right-hand side of this formula is the expression for 
$p^{min}(t_1,\dots,t_n)$ given in the statement, and it is in every case a lower bound for 
$p^{min}(t_1,\dots,t_n)$ (in fact, $p^{min}(t_1,\dots,t_n)$ can not be less than the case that the tuples are pairwise disconnected in $H$).
Otherwise, it must be the case that 
$M(W)>\max \{0, \sum_{i=1}^{n}p(t_i)-n+1\}$.
Assume that $\sum_{i=1}^{n}p(t_i)-n+1>0$ (the case that 
$\max \{0, \sum_{i=1}^{n}p(t_i)-n+1\}=0$ can be proved similarly).
Hence, we are in the case that $M(W)=\sum_{i=1}^{n}p(t_i)-n+1+\epsilon>0$, with $\epsilon>0$.
Since $M(W')=p(t_n)-M(W)$, this means that
$M(W')= p(t_n)-\left( \sum_{i=1}^{n}p(t_i)-n+1 +\epsilon \right)=$
$-\sum_{i=1}^{n-1}p(t_i)+ (n-1)-\epsilon$.
From the induction hypothesis, the term $-\sum_{i=1}^{n-1}p(t_i)+ (n-1)$ is equal to 
$1- p^{min}(t_1,\dots, t_{n-1})$, thus 
we have:
$M(W')=1-p^{min}(t_1,\dots, t_{n-1})-\epsilon$.
Since $p^{min}(t_1,\dots, t_{n-1})$ is exactly the overall probability, according to $M$, of the
possible worlds containing all the tuples $t_1,\dots, t_{n-1}$, we have that
$1-p^{min}(t_1,\dots, t_{n-1})=M(W')+M(W''')$, thus we obtain:
$M(W')=M(W')+M(W''')-\epsilon$.
This means that $M(W''')=\epsilon$, where 
$\epsilon>0$.
That is, the overall probability of the possible worlds in $W'''$ is equal to the difference 
$\epsilon$ between $M(W)$ and the value $\sum_{i=1}^{n}p(t_i)-n+1$ that we want to obtain for the cumulative probability of the worlds in $W$.
We now show how $M$ can be modified in order to obtain a model $M'$ such that $M'(W)$ is exactly this value.
We construct $M'$ as follows.
Let $w'''_1, \dots,w'''_k$ be the possible worlds in $W'''$ such that $M(w'''_i)>0$, for each 
$i\in[1..k]$.
Take $k$ values $\epsilon_1,\dots,  \epsilon_k$, where each $\epsilon_i$ is equal to 
$M(w'''_i)$.  
Hence $\sum_{i=1}^k \epsilon_i=\epsilon$.
Then, for each $i\in[1..k]$, let $M'(w'''_i)=M(w'''_i)-\epsilon_i=0$, and, for each $w'''\in W'''\setminus\{w'''_1,\dots,w'''_k\}$, $M'(w''')=0$.
This way, $M'(W''')=\sum_{w'''\in W'''} M'(w''')= M(W''')-\epsilon=0$.
For each $w'''_i$ (with $i\in[1..k])$, let $w'_i$ be the possible world in $W'$ 
``corresponding'' to $w'''_i$: that is, $w'_i$ is the possible world $w'''_i\cup\{t_n\}$.
The, for each $i\in[1..k]$, let $M'(w'_i)=M(w'_i)+\epsilon_i$, and, 
for each $w'\in W'\setminus\{w'_1,\dots,w'_k\}$, $M'(w')=M(w')$.
This way, $M'(W')=\sum_{w'\in W'} M'(w')= M(W')+\epsilon$.
Basically, we are constructing the model $M'$ by ``moving'' $\epsilon$ from the overall 
probability assigned by $M$ to the worlds of $W'''$ towards the worlds of $W'$.
Observe that every world $w'_i\in W'$ such that $M'(w'_i)>0$ is consistent w.r.t. $\IC$, for the following reason.
If $M'(w'_i)=M(w'_i)$ the property derives from the fact that $M$ is a model.
Otherwise, we are in the case that $w'_i=w'''_i\cup\{t_n\}$, where $M(w'''_i)>0$. 
Since $M$ is a model,  $M(w'''_i)>0$ implies that $w'''_i$ is consistent w.r.t. $\IC$. 
Then, adding $t_n$ to $w'''_i$ to obtain $w'_i$ has no impact on the consistency:
$w'_i$ does not contain at least one tuple among $t_1, \dots, t_{n-1}$, and from the fact that 
any hyperedge of $HG(D^p, \IC)$ containing $t_n$ contains all the tuples $t_1, \dots, t_{n-1}$
no constraint encoded by the hyperedges containing $t_n$ is fired in $w'_i$.

It is easy to see that the strategy that we used to move $\epsilon$ from the overall probability
of $W'''$ to $W'$ does not change the overall probabilities assigned to the tuples 
different from $t_n$ in the worlds in $W'\cup W''$, but it changes the overall probability
assigned to tuple $t_n$ in the same worlds, as it is increased by $\epsilon$.
Hence, to adjust this, we perform an analogous reasoning to ``move'' $\epsilon$ from the 
overall probability $M(W)$ (which is at least $\epsilon$ and whose worlds contain $t_n$) to the overall probability assigned to $W''$ (which contains the same worlds of $W$ deprived of 
$t_n$).
Thus, we define $M'$ by ``moving" portions of $\epsilon$ from the worlds of $W$ to the 
corresponding worlds of $W''$ (where the corresponding worlds are those having the same tuples
except from $t_n$), analogously to what done before from the worlds of $W'''$ to those of $W'$.
This way, we obtain that $M'(W)=M(W)-\epsilon$ and $M'(W'')=M(W)+\epsilon$. 
Also in this case, $M'$ does not assign a non-zero probability to inconsistent worlds of $W''$:
for any $w''_i$ such that $M'(w''_i)>M(w''_i)$, it is the case that $M(w_i)>0$ 
(where $w_i=w''_i
\cup \{t_n\}$, which means that $w_i$ is consistent, and thus $w''_i$ (which results from 
removing a tuple from $w_i$) must be consistent as well (removing a tuple cannot fire any denial constraint).
Finally, observe that this strategy for moving $\epsilon$ from the cumulative probability of
$W$ to $W''$ does not alter the marginal probabilities of the tuples different from $t_n$ in these worlds.

Therefore, $M'$ is a model for $D^p$ w.r.t. $\IC$ which assigns to $W$ a cumulative probability
equal to 
$M'(W)=M(W)-\epsilon=\sum_{i=1}^{n}p(t_i)-n+1$, which ends the proof.
\end{proof}

\subsection{Proof of Theorem~\ref{theo:ring}}\label{sec:AppendixRing}
In order to prove Theorem~\ref{theo:ring}, we exploit a property that holds for particular 
conflict hypergraphs, called \textit{chains}.
Basically, a chain is the hypergraph resulting from removing a hyperedge from a ring.
Thus, a chain consists of a sequence of hyperedges $e_1, \dots, e_n$
where all and only the pairs of
consecutive hyperedges have non-empty intersection (differently from the ring, 
$e_1\cap e_n=\emptyset$).

Given a chain $\mathcal{C}=e_1, \dots, e_n$, 
we say that $n$ is its length, and denote it with \emph{length}$(\mathcal{C})$.
Moreover, for each $i\in[1..n-1]$, we will use the symbol $\alpha_i$ to denote 
the intersection $e_i\cap e_{i+1}$ of consecutive hyperedges,
and, for each $i\in[1..n]$, we will use the symbol $\beta_i$ to denote 
\emph{ears}$(e_i)$, and  $\tilde{\beta}_i$ to denote a subset of \emph{ears}$(e_i)$.
Finally, \emph{sub}$(\mathcal{C})$ will denote the subsequence $e_2,\dots, e_{n-1}$ of the 
hyperedges in $\mathcal{C}$.

In the following, given a set of tuples $X$, we will use the term ``\emph{event $X$}'' 
to denote the event that all the tuples in the set $X$ co-exist.
Furthermore, $p^{min}_{H}(E)$ will denote the minimum probability of the event 
$E$ involving the tuples of the database $D^p$ when the conflict hypergraph contains only the
hyperedges in $H$.

\begin{lemma}
\label{lem:pmin-chain-hyper}
Let $D^p$ be a PDB instance of $\D^p$ such that $D^p\models \IC$.
Assume that $HG(D^p, \IC)$ is the chain $\mathcal{C}=e_1,\dots,e_n$ (with $n>1$). 
Moreover, let $\tilde{\beta}_1$, $\tilde{\beta}_n$ be subsets of the ears $\beta_1$, 
$\beta_n$ of $e_1$ and $e_n$, respectively.
Then:

\noindent

$$
p^{min}_{\mathcal{C}}(\tilde{\beta}_1\cup \tilde{\beta}_n)=
\max
\left\{ 
0, \ 
p^{min}_{\emptyset}(\tilde{\beta}_1)+ 
p^{min}_{\emptyset}(\tilde{\beta}_n)
-
\left[
1-
p^{min}_{\mbox{\scriptsize{sub}}(\mathcal{C})}
(\alpha_1\cup(\beta_1\!\setminus\! \tilde{\beta}_1)\cup \alpha_{n-1}\cup (\beta_n\!\setminus\!\tilde{\beta}_n))
\right]
\right\}
$$

\noindent
where:
\noindent
$p^{min}_{\mbox{\scriptsize{sub}}(\mathcal{C})}
(\alpha_1\cup(\beta_1\!\setminus\! \tilde{\beta}_1)\cup \alpha_{n-1}\cup (\beta_n\!\setminus\!\tilde{\beta}_n))=
\max
\left\{ 
0, \ 
p^{min}_{\mbox{\scriptsize{sub}}(\mathcal{C})}(\alpha_1\cup\alpha_{n-1})\!+\! 
p^{min}_{\emptyset}\!\left((\beta_1\!\setminus\!\tilde{\beta}_1)\!\cup\!(\beta_n\!\setminus\!\tilde{\beta}_n)\right)\!-\!1\right\}
$
and,
for any set of tuples $\gamma$, 
$p^{min}_{\emptyset}\!(\gamma)\!=\!
\max\!\left\{ 0, \sum_{t\in \gamma} p(t)\!-\!|\gamma|\!+\!1 \right\}.$
\end{lemma}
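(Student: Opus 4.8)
The plan is to read the whole expression as a \emph{nested Fr\'echet--Hoeffding bound} and to prove it by combining two disjointness relations forced by the extremal hyperedges $e_1$ and $e_n$ with the monotonicity of the feasible regions. Throughout, write $A$ for the event that all tuples of $\tilde\beta_1$ coexist, $B$ for the event that all tuples of $\tilde\beta_n$ coexist, and $C$ for the event that all tuples of $\alpha_1\cup(\beta_1\!\setminus\!\tilde\beta_1)\cup\alpha_{n-1}\cup(\beta_n\!\setminus\!\tilde\beta_n)$ coexist. First I would dispose of the inner identity for $p^{min}_{\mathrm{sub}(\mathcal{C})}(C)$: since $(\beta_1\!\setminus\!\tilde\beta_1)$ and $(\beta_n\!\setminus\!\tilde\beta_n)$ are ears of $e_1$ and $e_n$, they belong to \emph{no} hyperedge of $\mathrm{sub}(\mathcal{C})$, hence are pairwise disconnected from $\alpha_1\cup\alpha_{n-1}$ in the sub-chain. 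Treating ``$\alpha_1\cup\alpha_{n-1}$ coexist'' and ``$(\beta_1\!\setminus\!\tilde\beta_1)\cup(\beta_n\!\setminus\!\tilde\beta_n)$ coexist'' as two uncorrelated events and applying the event-level reading of Fact~\ref{fac:boole} (with the marginal of the second event supplied by Lemma~\ref{lem:pmin1/2}, case~(i)) yields exactly the stated formula for the inner quantity $Z:=p^{min}_{\mathrm{sub}(\mathcal{C})}(C)$.

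The two structural observations driving the outer formula are that, in any model $M$ satisfying the full chain, $A\wedge C$ and $B\wedge C$ are impossible: if $A$ and $C$ held together in a world $w$, then $e_1=\tilde\beta_1\cup(\beta_1\!\setminus\!\tilde\beta_1)\cup\alpha_1\subseteq w$ and $w$ would violate $\IC$, so $M(w)=0$; symmetrically for $B$ and $e_n$. For the lower bound I would then argue by inclusion--exclusion. Because the full-chain feasible region is contained in the sub-chain region, which is in turn contained in the unconstrained region, monotonicity of the minima gives $p_M(A)\ge p^{min}_\emptyset(\tilde\beta_1)$, $p_M(B)\ge p^{min}_\emptyset(\tilde\beta_n)$ and $p_M(C)\ge Z$ for every full-chain model $M$. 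Since $A,B$ both avoid the $C$-region, $p_M(A\cup B)\le 1-p_M(C)\le 1-Z$, whence $p_M(A\wedge B)=p_M(A)+p_M(B)-p_M(A\cup B)\ge p^{min}_\emptyset(\tilde\beta_1)+p^{min}_\emptyset(\tilde\beta_n)-(1-Z)$; together with the trivial bound $p_M(A\wedge B)\ge 0$ this shows that $p^{min}_{\mathcal{C}}(\tilde\beta_1\cup\tilde\beta_n)$ is at least the right-hand side.

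For the matching upper bound I would construct a single full-chain model attaining the value. Start from a sub-chain model that realizes $p(C)=Z$ (it exists by definition of $Z$), so that a probability mass of exactly $1-Z$ is available outside the $C$-region. Within this free mass I would place the ear-events $A$ and $B$ so that they overlap as little as the Fr\'echet bound permits, i.e.\ on a mass of $\max\{0,\,p^{min}_\emptyset(\tilde\beta_1)+p^{min}_\emptyset(\tilde\beta_n)-(1-Z)\}$, while assigning the remaining ears $(\beta_1\!\setminus\!\tilde\beta_1)$ and $(\beta_n\!\setminus\!\tilde\beta_n)$ so as to keep $p(C)=Z$ and respect all marginals. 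To make this rigorous I would reuse the cumulative-sequence representation of interpretations from the proof of Theorem~\ref{theo:tractableCC}, aligning the $C$-worlds, the $A$-worlds and the $B$-worlds along a common ordering of cumulative probabilities and reading off the induced interpretation.

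The main obstacle is exactly this construction: one must simultaneously (i) keep the spine tuples of $\alpha_1,\dots,\alpha_{n-1}$ arranged as in the chosen sub-chain model, (ii) insert the four groups of ears with their prescribed marginals, (iii) never populate a world containing a full hyperedge $e_1$ or $e_n$, and (iv) drive the $A$--$B$ overlap down to the Fr\'echet value --- all without perturbing the marginals of the spine. Verifying that the ears of $e_1$ (resp.\ $e_n$), which lie in no other edge, can always be re-distributed to achieve (iv) while (i)--(iii) hold is the delicate bookkeeping; once it is done, the constructed model witnesses that $p^{min}_{\mathcal{C}}(\tilde\beta_1\cup\tilde\beta_n)$ does not exceed the right-hand side, and combined with the lower bound the claimed equality follows.
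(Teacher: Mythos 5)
Your proposal is correct and takes essentially the same route as the paper's own proof: both exploit the mutual exclusivity, forced by $e_1$ and $e_n$, between the ear-events and the event $C$, reduce the problem to a Fr\'echet--Hoeffding bound over the free probability mass $1-Z$, and establish tightness by placing the two ear-events at opposite ends of that free mass (the paper's ``segment'' argument), handling the inner identity exactly as you do via the unconstrainedness of the ears in $\mbox{sub}(\mathcal{C})$. Your explicit inclusion--exclusion lower bound merely makes rigorous what the paper leaves implicit in its minimization narrative, and your upper-bound construction, including the bookkeeping you flag as delicate, is the paper's steps (1)--(2) at the same level of detail.
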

\begin{proof}
$p(\tilde{\beta}_1\cup \tilde{\beta}_n)$ can be minimized as follows.\\
$1)$ We start from any model $M$ of $D^p$ minimizing the portion of the probability space 
where neither the event $\tilde{\beta}_1$ nor the event $\tilde{\beta}_n$ can occur.
That is, $M$ is any model minimizing the probability of the event 
$E=\alpha_1\cup(\beta_1\!\setminus\! \tilde{\beta}_1)\cup \alpha_{n-1}\cup (\beta_n\!\setminus\!\tilde{\beta}_n)$ (this event is mutually exclusive with both $\tilde{\beta}_1$ 
and $\tilde{\beta}_n$ due to hyperedges $e_1$ and $e_n$).
It is easy to see that $M$ is also a model for $D^p$ w.r.t. the conflict hypergraph 
\emph{sub}$(\mathcal{C})$, and that the minimum probability 
$p^{min}_{\mbox{\em\scriptsize{sub}}(\mathcal{C})}(E)$ of $E$ w.r.t. \emph{sub}$(\mathcal{C})$ 
is equal to the minimum probability $p^{min}_{\mathcal{C}}(E)$ of $E$ w.r.t. $\mathcal{C}$.
We denote this probability as $Y$.\\
$2)$ We re-distribute the tuples in $\tilde{\beta_1}\cup\tilde{\beta}_n$ over the portion of size $1\!-\!Y$ of the probability space not assigned to $E$, so that 
$p(\tilde{\beta_1})=p^{min}_{\emptyset}(\tilde{\beta_1})$  and 
$p(\tilde{\beta_2})=p^{min}_{\emptyset}(\tilde{\beta_n})$, and with the aim of minimizing the intersection of the events $\tilde{\beta}_1$ and $\tilde{\beta}_n$.
The fact that the events $\tilde{\beta_1}$ and $\tilde{\beta_n}$ can be simultaneously
assigned their minimum probabilities  $p^{min}_{\emptyset}(\tilde{\beta_1})$ and 
$p^{min}_{\emptyset}(\tilde{\beta_n})$, respectively, derives from Lemma~\ref{lem:pmin1/2} and 
from the consistency of $D^p$ w.r.t. $\mathcal{C}$.
This yields a (possibly) new model $M'$ for $D^p$ w.r.t. the ``original'' chain $\mathcal{C}$
where 
$p(\tilde{\beta}_1\cup \tilde{\beta}_n)=
\max
\left\{ 
0, \ p^{min}(\tilde{\beta}_1)+ p^{min}(\tilde{\beta}_n)-[1-Y]\right\}.$
In fact, viewing the available probability space as a segment of length $1-Y$, this corresponds
to assigning the left-most part of the segment of length $p^{min}(\tilde{\beta}_1)$ to event 
$\tilde{\beta}_1$, and the right-most part of length $p^{min}(\tilde{\beta}_n)$ to event 
$\tilde{\beta}_n$.
This way, the probability of the intersection is the length of the segment portion (if any)
assigned to both $\tilde{\beta}_1$ and $\tilde{\beta}_n$.
In brief, we obtain the formula reported in the statement for $p^{min}(\tilde{\beta}_1\cup \tilde{\beta}_n)$.

The formula for 
$p^{min}(\alpha_1\cup(\beta_1\!\setminus\! \tilde{\beta}_1)\cup \alpha_{n-1}\cup (\beta_n\!\setminus\!\tilde{\beta}_n))$ can be proved with an analogous reasoning, while the formula for $p^{min}_{\emptyset}(\gamma)$ follows from Lemma~\ref{lem:pmin1/2}.
\end{proof}

\noindent{\textbf{Theorem~\ref{theo:ring}.} }
\textit{Given an instance $D^p$ of $\D^p$, if $HG(D^p, \IC)=\langle N, E\rangle$ is a ring, then $D^p\models \IC$
iff both the following hold:
1)
$\forall e \in E,\  \sum_{t\in e} p(t) \leq |e|-1$;
\hspace*{1mm}
2)
$\sum_{t \in N} p(t) - |N| + \lceil \frac{|E|}{2}\rceil \leq 0$.
}
\begin{proof}
In the following, we will denote the ring $HG(D^p,$ $\IC)$ as 
$\mathcal{R}=e_1, \dots, e_n, e_{n+1}$,
and, for each $i\in[1..n+1]$, the ears of $e_i$ as $\varepsilon_i$, and,
for each $i\in[1..n]$, the intersection $e_i\cap e_{i+1}$ as $\gamma_i$, and 
$e_1\cap e_{n+1}$ as $\gamma_0$.
Moreover, we will denote as $\mathcal{C}=e_1, \dots, e_n$ the chain obtained from ring 
$\mathcal{R}$ by removing the edge $e_{n+1}$.
We now prove the left-to-right and right-to-left implications separately.\\[-4pt]

\noindent
$(\Rightarrow)$:
We first show that, if $D^p\models \IC$ and $HG(D^p, \IC)$ is a ring, then both 
Condition $1.$ and $2.$ hold.
Condition $1.$ trivially follows from the fact that the proof of the left-to-right 
implication of Theorem~\ref{theo:tractableCC} holds for general conflict hypergraphs.

We now focus on Condition $2.$
As $D^p$ is consistent w.r.t. $\mathcal{R}$, the presence of hyperedge $e_{n+1}$ in 
$HG(D^p, \IC)$ implies that the minimum probability that the tuples in $e_{n+1}$ co-exist is equal to $0$.
That is, $p^{min}_{\mathcal{R}}((\gamma_0\cup \gamma_n)\cup \varepsilon_{n+1})=0$.
On the other hand, $p^{min}_{\mathcal{C}}((\gamma_0\cup \gamma_n)\cup \varepsilon_{n+1})\leq 
p^{min}_{\mathcal{R}}((\gamma_0\cup \gamma_n)\cup \varepsilon_{n+1})$, thus it must hold that
$p^{min}_{\mathcal{C}}((\gamma_0\cup \gamma_n)\cup \varepsilon_{n+1})=0$.
Since, according to the conflict hypergraph $\mathcal{C}$, no correlation is imposed between
the events $(\gamma_0\cup \gamma_n)$ and $\varepsilon_{n+1}$, we also have that
$p^{min}_{\mathcal{C}}((\gamma_0\cup \gamma_n)\cup \varepsilon_{n+1})=
\max\{0, p^{min}_{\mathcal{C}}(\gamma_0\cup \gamma_n)+p^{min}_\emptyset(\varepsilon_{n+1})-1\}$ 
(see Fact~\ref{fac:boole}).
Hence, the following inequality must hold:
\begin{equation}
p^{min}_{\mathcal{C}}(\gamma_0\cup \gamma_n)+p^{min}_\emptyset(\varepsilon_{n+1})-1\leq 0.
\label{eq:provaring}
\end{equation}

We now show that inequality (\ref{eq:provaring}) entails that Condition $2.$ holds.
First, observe that $\gamma_0$ and $\gamma_n$ are subsets of the ears of $e_1$ and $e_n$, 
respectively, w.r.t. the hypergraph $\mathcal{C}$. 
Hence, since $\mathcal{C}$ is a chain, we can apply Lemma~\ref{lem:pmin-chain-hyper} to obtain
$p^{min}_{\mathcal{C}}(\gamma_0\cup \gamma_n)$ in function of $p^{min}_{sub(\mathcal{C})}(\gamma_1\cup \gamma_{n-1})$.
Thus, by recursively applying ($\lfloor\frac{n}{2}\rfloor$ times) Lemma~\ref{lem:pmin-chain-hyper}, we obtain the following expression for
$p^{min}_{\mathcal{C}}(\gamma_0\cup \gamma_n)$ 
(where $x=\lfloor\frac{n}{2}\rfloor-1$ and $y=\lceil\frac{n}{2}\rceil+1$):

$$
\begin{array}{l}
\max
\left\{ 
0, 
\max\left\{ 0, 
\sum_{t\in \gamma_0} p(t) - |\gamma_0|+1 \right\}+
\max\left\{ 0, 
\sum_{t\in \gamma_n} p(t) - |\gamma_n|+1 \right\}-1+\right. \\ 
\hspace*{20mm}\max \left\{ 0,
\max \left\{  0,
\max\left\{ 0, 
\sum_{t\in \gamma_1 } p(t) - |\gamma_1|+1 \right\}+
\max\left\{ 0, 
\sum_{t\in \gamma_{n-1} } p(t) - |\gamma_{n-1}|+1 \right\}-1+\right.\right.\\
\hspace*{40mm}\dots\\
\hspace*{45mm}\max
\left\{
0, 
\max\left\{ 0, 
\sum_{t\in \gamma_{x}} p(t) - |\gamma_x|+1 \right\}+
\max\left\{ 0, 
\sum_{t\in \gamma_y} p(t) -|\gamma_y|+1 \right\}
-1+P\right\}+\\
\hspace*{40mm}\dots\\
\hspace*{31mm}\max\left\{ 0, 
\sum_{t\in (\varepsilon_2\cup\varepsilon_{n-1})} p(t)\!-\! 
|\varepsilon_2\cup\varepsilon_{n-1}|\!+1\! \right\}-1 \\
\hspace*{27mm}
\left. \right\}+\\
\hspace*{11mm}\max\left\{ 0, 
\sum_{t\in (\varepsilon_1\cup\varepsilon_n)} p(t) - 
|\varepsilon_1\cup\varepsilon_n|\!+1\! \right\}-1 \\
\hspace*{7mm}
\left. \right\}\\
\end{array}
$$

\noindent
where:
$$P=\left\{
\begin{array}{ll}
p^{min}_{\emptyset}(\gamma_{x+1}) & \mbox{if $n$ is even;}\\
p^{min}_{e_{y-1}}(\gamma_{x+1}\cup\gamma_{y-1}) & \mbox{otherwise}.
\end{array}
\right.$$

In this formula, 
$p^{min}_\emptyset(\gamma_{x+1})=\max\{0, \sum_{t\in \gamma_{x+1}} p(t) - |\gamma_{x+1}| +1 \}$, and
$p^{min}_{e_{y-1}}(\gamma_{x+1}\cup\gamma_{y-1})=
\max\{0,\\
\sum_{t\in (\gamma_{x+1}\cup\gamma_{y-1})} p(t) - |(\gamma_{x+1}\cup\gamma_{y-1})| +1
\}$
(the latter follows from applying Lemma~\ref{lem:pmin1/2}).


The value of $p^{min}_{\mathcal{C}}(\gamma_0\cup \gamma_n)$ is greater than or equal to
the sum $S$ of the non-zero terms that occur in the expression obtained so far, that is:
$$
S=\left\{
\begin{array}{l}
\sum_{t \in (N\setminus\varepsilon_{n+1})} p(t) - (|N|-|\varepsilon_{n+1}|) 
+ \frac{n}{2}+1,	\\
\mbox{if the length $n$ of the chain $\mathcal{C}$ is even;}\\ \\
\sum_{t \in (N\setminus\varepsilon_{n+1})} p(t) - 
(|N|\!-\!|\varepsilon_{n+1}|\! -\!|\varepsilon_{x+2}|)\! 
+\! \lfloor \frac{n}{2}\rfloor\! +1, \\ 
\mbox{if the length $n$ of $\mathcal{C}$ is odd.}
\end{array}
\right.$$

The fact that $p^{min}_{\mathcal{C}}(\gamma_0\cup \gamma_n)\geq S$ straightforwardly 
follows from that $S$ is obtained by summing also 
possibly negative contributions of terms of the form 
$p^{min}_\emptyset(Z)=\sum_{t\in Z} p(t) - |Z|+1$,
which are not considered
when evaluating $p^{min}_{\mathcal{C}}(\gamma_0\cup \gamma_n)$, 
since invocations of the $\max$ function return non-negative values only.

As the number of edges in the ring $\mathcal{R}$ is $|E|=n+1$,
the value of $S$ is in every case greater than or equal to
$$S'= \sum_{t \in (N\setminus\varepsilon_{n+1})} p(t) - (|N|-|\varepsilon_{n+1}|) + \left\lceil \frac{|E|}{2}\right\rceil
$$

In brief, we have obtained $S'\leq S \leq p^{min}_{\mathcal{C}}(\gamma_0\cup \gamma_n)$.


Since $D^p\models\IC$ implies that 
$p^{min}_{\mathcal{C}}(\gamma_0\cup \gamma_n)+p^{min}_\emptyset(\varepsilon_{n+1})-1\leq 0$ (equation (\ref{eq:provaring})),
we obtain $S'+ p^{min}_\emptyset(\varepsilon_{n+1})-1\leq 0$.
By replacing $S'$ and $p^{min}_\emptyset(\varepsilon_{n+1})$ with the corresponding formulas, we obtain
$$
\sum_{t \in (N\setminus\varepsilon_{n+1})} p(t) -  (|N|-|\varepsilon_{n+1}|) + \lceil \frac{|E|}{2}\rceil +\sum_{t\in \varepsilon_{n+1}} p(t)-|\varepsilon_{n+1}| \leq 0
$$
that is,
$\sum_{t \in N} p(t) - |N| + \left\lceil \frac{|E|}{2}\right\rceil\leq 0$.\\[-4pt]

\noindent
$(\Leftarrow)$:
We now prove the right-to-left implication, reasoning by contradiction.
Assume that both Condition $1.$ and $2.$ hold, but $D^p$ is not consistent w.r.t. the 
conflict hypergraph $\mathcal{R}$.
However, since $\mathcal{C}$ is a hypertree and Condition $1.$ holds, from Theorem~\ref{theo:tractableCC} we have that $D^p$ is consistent w.r.t. the 
conflict hypergraph $\mathcal{C}$.
In particular, it must be the case that 
$p^{min}_{\mathcal{C}}(e_{n+1})= p^{min}_{\mathcal{C}}((\gamma_0\cup \gamma_n)\cup \varepsilon_{n+1})>0$: otherwise, any model of $D^p$ w.r.t. $\mathcal{C}$ assigning probability $0$ to 
the event $(\gamma_0\cup \gamma_n)\cup \varepsilon_{n+1}$ would be also a model 
for $D^p$ w.r.t. $\mathcal{R}$, which is in contrast with the contradiction hypothesis.

Since, according to the conflict hypergraph $\mathcal{C}$, no correlation is imposed between
the events $(\gamma_0\cup \gamma_n)$ and $\varepsilon_{n+1}$, we also have that
$p^{min}_{\mathcal{C}}((\gamma_0\cup \gamma_n)\cup \varepsilon_{n+1})=
\max\{0, p^{min}_{\mathcal{C}}(\gamma_0\cup \gamma_n)+p^{min}_\emptyset(\varepsilon_{n+1})-1\}$ 
(see Fact~\ref{fac:boole}).
Hence, the following inequality must hold:
\begin{equation}\label{eq:provaRing2}
p^{min}_{\mathcal{C}}(\gamma_0\cup \gamma_n)+p^{min}_\emptyset(\varepsilon_{n+1})-1>0
\end{equation}
which also implies both 
$p^{min}_{\mathcal{C}}(\gamma_0\cup \gamma_n)>0$ and
$p^{min}_{\emptyset}(\varepsilon_{n+1})> 0$
(as probabilities values are bounded by $1$).

By applying Lemma~\ref{lem:pmin-pmax-chain}, we obtain that 
$p^{min}_{\mathcal{C}}(\gamma_0\cup\gamma_n)$ is equal to 
$$
\begin{array}{l}
\max\{0, 
p^{min}_{\emptyset}(\gamma_0)+
p^{min}_{\emptyset}(\gamma_n)
-1
+\max\{0, p^{min}_{sub(\mathcal{C})}(\gamma_1\cup \gamma_{n-1})
+
p^{min}_{\emptyset}(\varepsilon_1\cup\varepsilon_n)
-1
\}
\}
\end{array}
$$

As shown above, $p^{min}_{\mathcal{C}}(\gamma_0\cup\gamma_n)>0$, thus
the expression for $p^{min}_{\mathcal{C}}(\gamma_0\cup\gamma_n)$ can be 
simplified into:
$$
\begin{array}{l}
p^{min}_{\emptyset}(\gamma_0)+
p^{min}_{\emptyset}(\gamma_n)
-1
+\max\{0, p^{min}_{sub(\mathcal{C})}(\gamma_1\cup \gamma_{n-1})
+
p^{min}_{\emptyset}(\varepsilon_1\cup\varepsilon_n)
-1
\}
\end{array}
$$

By replacing $p^{min}_{\mathcal{C}}(\gamma_0\cup\gamma_n)$ with this  formula in 
equation (\ref{eq:provaRing2}), we obtain
\begin{equation}\label{eq:provaRing3}
\begin{array}{l}
p^{min}_{\emptyset}(\gamma_0)+
p^{min}_{\emptyset}(\gamma_n)+
p^{min}_\emptyset(\varepsilon_{n+1})-2+
\max\{0, p^{min}_{sub(\mathcal{C})}(\gamma_1\cup \gamma_{n-1})
+
p^{min}_{\emptyset}(\varepsilon_1\cup\varepsilon_n)
-1
\}
>0
\end{array}
\end{equation}

Since $p^{min}_{\emptyset}(\gamma_0)+ p^{min}_{\emptyset}(\gamma_n)+ 
p^{min}_\emptyset(\varepsilon_{n+1})-2
\leq 
p^{min}_{\emptyset}(\gamma_0\cup \gamma_n \cup \varepsilon_{n+1})$
(which follows from applying twice Fact~\ref{fac:boole}), and 
$p^{min}_{\emptyset}(\gamma_0\cup \gamma_n \cup \varepsilon_{n+1})=\max
\{0, \sum_{t\in (\gamma_0\cup \gamma_n\cup\varepsilon_{n+1})} p(t)-|(\gamma_0\cup \gamma_n\cup\varepsilon_{n+1})|+1\}$, and
$\sum_{t\in (\gamma_0\cup \gamma_n\cup\varepsilon_{n+1})} p(t)-|(\gamma_0\cup \gamma_n\cup\varepsilon_{n+1})|+1\leq 0$ (Condition $1.$ over hyperedge $e_{n+1}$),
we obtain that 
$p^{min}_{\emptyset}(\gamma_0)+ p^{min}_{\emptyset}(\gamma_n)+ 
p^{min}_\emptyset(\varepsilon_{n+1})-2\leq 0$.
Hence, the second argument of $\max$ in equation (\ref{eq:provaRing3}) must be strictly positive, thus equation (\ref{eq:provaRing3}) can be rewritten as:
\begin{equation}\label{eq:provaRing4}
\begin{array}{l}
p^{min}_{\emptyset}(\gamma_0)+
p^{min}_{\emptyset}(\gamma_n)+
p^{min}_\emptyset(\varepsilon_{n+1})-2
+p^{min}_{sub(\mathcal{C})}(\gamma_1\cup \gamma_{n-1})
+
p^{min}_{\emptyset}(\varepsilon_1\cup\varepsilon_n)
-1
>0
\end{array}
\end{equation}
where $p^{min}_{sub(\mathcal{C})}(\gamma_1\cup \gamma_{n-1})>0$ and 
$p^{min}_{\emptyset}(\varepsilon_1\cup\varepsilon_n)>0$ (otherwise, the second argument of 
$\max$ in equation (\ref{eq:provaRing3}) could not be strictly positive, being probability 
values bounded by $1$).

Observe that all the terms of the form $p^{min}$ occurring in (\ref{eq:provaRing4}) are strictly positive.
In fact, we have already shown that this holds for $p^{min}_\emptyset(\varepsilon_{n+1})$,
$p^{min}_{sub(\mathcal{C})}(\gamma_1\cup \gamma_{n-1})$, and $p^{min}_{\emptyset}(\varepsilon_1\cup\varepsilon_n)$.
As regards $p^{min}_{\emptyset}(\gamma_0)$, the fact that it is strictly greater than $0$ 
derives from the $p^{min}_{\emptyset}(\gamma_0)=p^{min}_{\mathcal{C}}(\gamma_0)$
(which is due to Lemma~\ref{lem:pmin1/2}, as $\gamma_0$ is a matryoshka w.r.t. 
$\mathcal{C}$), and $p^{min}_{\mathcal{C}}(\gamma_0)\geq p^{min}_{\mathcal{C}}(\gamma_0\cup \gamma_n)$, where $p^{min}_{\mathcal{C}}(\gamma_0\cup \gamma_n)>0$, as shown before.
The same reasoning suffices to prove that $p^{min}_{\emptyset}(\gamma_n)>0$.

The fact that all the terms of the form $p^{min}_\emptyset$  in (\ref{eq:provaRing4}) are strictly positive implies that we can replace them with the corresponding formulas given in
Lemma~\ref{lem:pmin1/2}, simplified by eliminating the $\max$ operator.
Therefore, we obtain:
\begin{equation}\label{eq:provaRing5}
\begin{array}{l}
\left(\sum_{t\in \gamma_0} p(t) - |\gamma_0| +1\right)+
\left(\sum_{t\in \gamma_n} p(t) - |\gamma_n| +1\right)+
\left(\sum_{t\in \varepsilon_{n+1}} p(t) - |\varepsilon_{n+1}| +1\right)+\
p^{min}_{sub(\mathcal{C})}(\gamma_1\cup \gamma_{n-1})+\\
+
\left(\sum_{t\in \varepsilon_{1}} p(t) - |\varepsilon_{1}| +1\right) +
\left(\sum_{t\in \varepsilon_{n}} p(t) - |\varepsilon_{n}| +1\right) -1
-3
>0
\end{array}
\end{equation}

By recursively applying the same reasoning on $p^{min}_{sub(\mathcal{C})}(\gamma_1\cup \gamma_{n-1})$ a number of times equal to $\lfloor \frac{n}{2}\rfloor$, 
the term on the left-hand side of equation (\ref{eq:provaRing5}) can be shown to be less than
or equal to $\sum_{t\in N} p(t) -|N|+\left\lceil\frac{|E|}{2}\right\rceil$ (depending on 
whether $n$ is even or not, analogously to the proof of the inverse implication).
Thus, we obtain $\sum_{t\in N} p(t) -|N|+\left\lceil\frac{|E|}{2}\right\rceil>0$, which contradicts Condition $2$.
\end{proof}

\subsection{Proofs of theorems~\ref{theo:ic-empty-phi},~\ref{theo:ic-arity-3},~\ref{theo:EGDs-cc}, ~\ref{theo:2FDs-cc}, and~\ref{theo:tract-EGDs-and-join-free}}
\label{app:567}


\ \\
\noindent{\textbf{Theorem~\ref{theo:ic-empty-phi}.} }
\textit{If $\IC$ consists of a join-free denial constraint, then \cc\ is in \textit{PTIME}.
In particular, $D^p\models \IC$ iff, for each hyperedge $e$ of $HG(D^p, \IC)$, it holds 
that $\sum_{t\in e}\!p(t)\!\leq\!|e|\!-\!1$.}
\begin{proof}

Let $\IC$ consist of the denial constraint $ic$ having the form:  
$\neg [R_1(\vec{x}_1)\wedge\dots\wedge R_m(\vec{x}_m)\wedge \phi_1(\vec{x}_1)\wedge\dots\wedge \phi_m(\vec{x}_m)]$, 
where no variable occurs in two distinct relation atoms of $ic$,
and, for each built-in predicate occurring in 
$\phi_1(\vec{x}_1)\wedge\dots\wedge \phi_m(\vec{x}_m)$ 
at least one term is a constant.
Given an instance $D^p$ of $\D^p$, 
we show that $D^p\models \IC$
iff 
for each hyperedge $e$ of $HG(D^p, \IC)$, it holds 
that $\sum_{t\in e}\!p(t)\!\leq\!|e|\!-\!1$.

$(\Rightarrow)$: It straightforwardly follows for the fact that, as pointed out in the core of the paper after Theorem~\ref{theo:tractableCC},
the condition that, for each hyperedge $e$ of $HG(D^p, \IC)$, $\sum_{t\in e}\!p(t)\!\leq\!|e|\!-\!1$ is a necessary condition for the 
consistency in the presence of any conflict hypergraph.

$(\Leftarrow)$:
For each $i\in [1..m]$, let $R_{\phi_i}$ be the maximal set of tuples in 
the instance of $R_i$ such that every tuple $t_i\in R_{\phi_i}$ satisfies 
$R_i(\vec{x}_i)\wedge\phi_i(\vec{x}_i)$.

It is easy to see that $HG(D^p, \IC)$ consists of the set of hyperedges 
$\left\{ \{t_1,\dots,t_m\}\,|\, \forall i\in[1..m]\, t_i\in R_{\phi_i}\right\}$.
Observe that not all the hyperdeges in $HG(D^p, \IC)$ have size $m$, as the same
relation scheme may appear several times in $ic$. That is, in the case that
there are $i,j\in [1..m]$ with $i< j$ such that
$R_{\phi_i}\cap R_{\phi_j}\neq \emptyset$, the tuples $t_i$ and $t_j$ occurring in the
same hyperedge $\{t_1,\dots,t_i,\dots,t_j,\dots, t_m\}$ may coincide, thus this hyperedge 
has size less than $m$.

From the hypothesis, it holds that, for every hyperedge $e$ of 
$HG(D^p, \IC)$, it must be the case that $\sum_{t\in e} p(t)\leq |e|-1$.
Let $e^*$ be the hyperedge in $HG(D^p, \IC)$ such that
$|e|-1-\sum_{t\in e} p(t)$ is the minimum, that is, 
$$e^* = argmin_{e\in HG(D^p, \IC)}\left(|e|-1-\sum_{t\in e} p(t)\right).$$

For the sake of simplicity of presentation we consider the case that
$e^*$ has size $m$, and denote its tuples as $t_1,\dots,t_m$. 
The generalization to the case that the
size of $e^*$ is less than $m$ is straightforward.

Let $S$ be a subset of $D^p$. 
We denote with $D^p_S$ the subset of $D^p$ containing only the tuples in $S$.
Let $Pr_{e^*}$ be a model in $\M(D^p_{e^*},\IC)$.
Moreover, let $t'_1,\dots,t'_n$ be the tuples in $D^p/e^*$.

In the following, we will define a sequence of interpretations $Pr_0,Pr_1,\dots,Pr_n$
such that, for each $i\in[0..n]$, 
$Pr_i$ is a model in $\M(D^p_{e^*\cup \{t'_j| j\leq i\}},\IC)$.

We start by taking $Pr_0$ equal to $Pr_{e^*}$.
At the $i^{th}$ step we consider tuple $t'_i$ and define $Pr_i$ as follows:
\begin{enumerate}
\item
In the case that, for each $j\in[1..m]$, it holds that $t'_i\not\in R_{\phi_j}$,
we define, 
for each possible world $w$ in \emph{pwd}$(e^*\cup \{t'_j| j\leq i\})$,
$Pr_i(w) = Pr_{i-1}(w\setminus\{t'_i\})\cdot p(t'_i)$, if $t'_i\in w$,
and $Pr_i(w) = Pr_{i-1}(w\setminus\{t'_i\})\cdot (1-p(t'_i))$, otherwise.
\item
Otherwise, if there is $j\in[1..m]$ such that $t'_i\in R_{\phi_j}$,
we consider the set $J$ of all the indexes $j\in[1..m]$ such that $t'_i\in R_{\phi_j}$.
Moreover, we denote with $p_J$ the sum of the probabilities  (computed according to 
$Pr_{i-1}$) of
all the possible worlds $w\in$ \emph{pwd}$(e^*\cup \{t'_j| j\leq i-1\})$ 
such that, for each $j\in J$, the corresponding tuple $t_j$ appearing in $e^*$ belongs also
to $w$, i.e.,
$p_J = \sum_{w\in pwd(e^*\cup \{t'_j| j\leq i-1\}), 
s.t. \forall j\in J\, t_j\in w}Pr_{i-1}(w)$.

Then, for each possible world $w$ in \emph{pwd}$(e^*\cup \{t'_j| j\leq i\})$,
we define $Pr_i$ as follows:
\begin{itemize}
\item
$Pr_i(w) = Pr_{i-1}(w-\{t'_i\})\cdot \frac{p(t'_i)}{p_J}$, if $t'_i\in w$ and for each $j\in J$ it holds that $t_j\in w$,
\item
$Pr_i(w) = Pr_{i-1}(w-\{t'_i\})\cdot \frac{max(0,p_J- p(t'_i))}{p_J}$, if $t'_i\not\in w$ and for each $j\in J$ it holds that $t_j\in w$,
\item
$Pr_i(w) = Pr_{i-1}(w)$, if $t'_i\not\in w$ and there is a $j\in J$ such that
$t_j\not\in w$,
\item
$Pr_i(w) = 0$, otherwise.
\end{itemize}
\end{enumerate}

We prove that for each $i\in[0..n]$ it holds that
$Pr_i$ is a model in $\M(D^p_{e^*\cup \{t'_j| j\leq i\}},\IC)$
reasoning by induction on $i$.
The proof is straightforward for $i=0$.
We now prove the induction step, that is, we assume that
$Pr_{i-1}$ is a model in $\M(D^p_{e^*\cup \{t'_j| j\leq i-1\}},\IC)$
and prove that $Pr_i$ is a model in $\M(D^p_{e^*\cup \{t'_j| j\leq i\}},\IC)$.

As regards the first case of the definition of $Pr_i$ from $Pr_{i-1}$,
it is easy to see that
$Pr_i$ is a model in $\M(D^p_{e^*\cup \{t'_j| j\leq i\}},\IC)$
since $Pr_i$ consists in a trivial extension of $Pr_{i-1}$
which takes into account a tuple not correlated with the other
tuples in the database.

As regards the second case of the definition of $Pr_i$ from $Pr_{i-1}$, 
it is easy to see that, if 
$p_J\geq p(t'_i)$
than $Pr_i$ guarantees that the condition about the marginal probabilities of all
the tuples in $e^*\cup \{t'_j| j\leq i\}$ holds.
Moreover, $Pr_j$ assigns zero probability to each possible world
$w$ such that $w\not\models \IC$, since, for each
possible world $w$ in \emph{pwd}$(e^*\cup \{t'_j| j\leq i\})$,
there is no subset $S$ of $w$ such that 
for each $i\in[1..m]$ there is a tuple $t\in S$ such that $t\in R_{\phi_i}$.
The latter follows from the induction hypothesis, which ensures that $Pr_{i-1}$
is a model in $\M(D^p_{e^*\cup \{t'_j| j\leq i-1\}},\IC)$, and from the fact that
$Pr_i$ assigns non-zero probability to a possible 
world $w$ in \emph{pwd}$(e^*\cup \{t'_j| j\leq i\})$ containing $t'_i$ iff
for each $j\in J$ it holds that $t_j\in w$.
Specifically, it can not be the case that $w$ contains, for each
$x\in [1..m]$ such that $x\not\in J$ a tuple $t_x\in R_{\phi_i}$,
as otherwise $w - \{t'_i\}$ would satisfy all the conditions
expressed in $ic$, and $w - \{t'_i\}$ would be assigned a non-zero probability by
$Pr_{i-1}$, thus contradicting the induction hypothesis that
$Pr_{i-1}$ is a model in $\M(D^p_{e^*\cup \{t'_j| j\leq i-1\}},\IC)$.

We now prove that $p_J\geq p(t'_i)$. 
Reasoning by contradiction, assume that $p_J< p(t'_i)$.
From the definition of $p_J$ it
follows that $p_J\geq p^{min}(\wedge_{j\in J} t_j)$.
Therefore, since $p^{min}(\wedge_{j\in J} t_j)$ is equal to
$\max\left\{0, \sum_{j\in J} p(t_j)-|J|+1\right\}$
it follows that $p(t'_i)> \sum_{j\in J} p(t_j)-|J|+1$.
Consider the hyperdege 
$e = \{t_x| t_x\in e^* \wedge x\not\in J\}\cup \{t'_i\}$.
From the definition of $e^*$ it follows that
$|e|-1-\sum_{t\in e} p(t)\geq |e^*|-1-\sum_{t\in e^*} p(t)$.
The latter implies that
$1-p(t'_i) \geq |J|-\sum_{j\in J} p(t_j)$, from
which it follows that
$\sum_{j\in J} p(t_j) -|J|+1 \geq p(t'_i)$
which is a contradiction.
Hence, we can conclude that, in this case
$Pr_i$ is a model in $\M(D^p_{e^*\cup \{t'_j| j\leq i\}},\IC)$.

This conclude the proof, as 
$Pr_n$ is a model in $\M(D^p,\IC)$
and then $D^p\models \IC$.
\end{proof}

\ \\
\noindent{\textbf{Theorem~\ref{theo:ic-arity-3}.} }
\textit{
There is an $\IC$ consisting of a non-join-free denial constraint of arity $3$ 
such that \cc\ is $NP$-hard.}
\begin{proof}
\emph{The reader is kindly requested to read this proof after that of Theorem~\ref{theo:2FDs-cc}, as the construction used there will be exploited in the reasoning used below}.

We show that the reduction from \textsc{3-coloring} 
to \cc\ presented in the hardness proof of Theorem~\ref{theo:2FDs-cc} can
be rewritten to obtain an instance of \cc\  where 
$\IC$ contains only a denial constraints having arity equal to $3$. 

Let $G=\la N,E \ra$ be a \textsc{3-coloring} instance.
We construct an equivalent instance  $\la \D^p,\IC,D^p \ra$ of \cc\ as follows:
\begin{list}{--}
     {\setlength{\rightmargin}{0mm}
      \setlength{\leftmargin}{3mm}
      \setlength{\itemindent}{0mm} 
			\setlength{\itemsep}{0mm}}
\item
$\D^p$ consists of the probabilistic relation schemas
$R^p_1($\emph{Node, Color, P}$)$ and
$R^p_2($\emph{Node1, Node2, Color1, Color2, P}$)$;
\item 
$D^p$ is the instance of $\D^p$ consisting of the instances $r^p_1$ of $R^p_1$, 
and $r^p_2$ of $R^p_2$, defined as follows:
\begin{itemize}
\item
for each node $n\in N$, and for each color $c\in\{$\emph{Red, Green, Blue}$\}$,
$r^p_1$ contains the tuple $(n, c, \frac{1}{3})$;
\item
for each edge $\{n_1,n_2\}\in E$, and for each color $c\in\{$\emph{Red, Green, Blue}$\}$,
$r^p_2$ contains the tuple $(n_1, n_2, c, c, 1)$;\\
moreover, 
for each node $n\in N$, and for each pair of distinct colors 
$c_1, c_2\in\{$\emph{Red, Green, Blue}$\}$,
$r^p_2$ contains the tuple $(n, n, c_1, c_2, 1)$;
\end{itemize}
\item
$\IC$ is the set of denial constraints over $\D^p$ consisting of the 
constraint:
$\neg [R_1(x_1,x_2)\wedge R_1(x_3,x_4)\wedge R_2(x_1,x_3,x_2,x_4)]$.
\end{list}

Basically, the constraint in $\IC$ imposes that adjacent nodes can not be
assigned the same color, and the same node can not be assigned more than one color.

Let $\la \overline{\D}^p,\overline{\IC},\overline{D}^p \ra$ be 
the instance of \cc\ defined in the hardness proof of Theorem~\ref{theo:CCisNPcomplete},
where it was shown that an instance $G$ of \textsc{3-coloring} is 3-colorable iff $\overline{D}^p\models \overline{\IC}$. 
It is easy to see that $D^p\models\IC$ iff $\overline{D}^p\models \overline{\IC}$,
which completes the proof.
\end{proof}

\ \\
\noindent{\textbf{Theorem~\ref{theo:EGDs-cc}.}}
\hspace*{-1mm}\textit{If $\IC$ consists of a BEGD, then \cc\ is in \textit{PTIME}.}
\begin{proof}
Let the BEGD in $\IC$ be:
$$ic= \neg [R_1(\vec{x},\vec{y}_1)\wedge R_2(\vec{x},\vec{y}_2)\wedge z_1\neq z_2 ],$$
where each $z_i$ (with $i\in\{1,2\}$) is a variable in $\vec{y}_1$ or $\vec{y}_2$.
That is, for the sake of presentation, we assume that the conjunction of built-in predicates
in $ic$ consists of one conjunct only (this yields no loss of generality, as it is easy to 
see that the reasoning used in the proof is still valid in the presence of more conjuncts).
We consider two cases separately.\\[2pt]
\noindent
\textit{\underline{Case 1}}: 
$R_1=R_2$, 
that is, only one relation name occurs in $ic$.
Let $\overline{X}$ be the set of attributes in \emph{Attr}$(R_1)$ corresponding to the 
variables in $\vec{x}$, and 
let $Z_1$ and $Z_2$ be the attributes in \emph{Attr}$(R_1)$ corresponding to the variables $z_1$ and $z_2$,
respectively.
Let $r$ be an instance of $R_1$.

It is easy to see that the conflict hypergraph $HG(r, \IC)$ is a graph having the following 
structure: for any pair of tuples $t_1$, $t_2$,
there is the edge $(t_1,t_2)$ in $HG(r, ic)$ iff: 
$1)$ $\forall X\in\overline{X}$, $t_1[X]=t_2[X]$,
and
$2)$ $t_1[Z_1]\neq t_2[Z_2]$.
%

This structure of the conflict hypergraph implies a partition of the tuples of $r$, where the
tuples in each set of the partition share the same values of the attributes in $\overline{X}$.
Obviously, \cc\ can be decided by considering these sets separately.
%

For each set $G$ of this partition, we reason as follows.
Let $\mathcal{P}_G$ be the set of pairs of values
$\langle c_1,c_2 \rangle$ occurring as values of attributes $Z_1$ and $Z_2$ in at least one tuple of $r$ (that is, $\mathcal{P}_G$ is the projection of $r$ over $Z_1$ and $Z_2$).
For each pair $\langle c_1,c_2 \rangle\in \mathcal{P}_G$,
let $T[c_1,c_2]$ be the set of tuples in $G$ such that, $\forall t\in  T[c_1,c_2]$,
$t[Z_1]=c_1$ and $t[Z_2]=c_2$.
A first necessary condition for consistency is that there is no pair $\langle c_1,c_2 \rangle\in \mathcal{P}_G$ such that $c_1\neq c_2$:
otherwise, any tuple in $T[c_1,c_2]$ would not satisfy the constraint, thus it would not be possible to put it in any possible world with non-zero probability\footnote{Obviously, we assume that there is no tuple with zero probability, as tuples with zero probability can be discarded from the database instance.}.
Straightforwardly, this condition is also sufficient if $z_1$ and $z_2$ belong to the same relation atom.
Thus, in this case, the proof ends, as checking this condition can be done in polynomial time.

Otherwise, if $z_1$ and $z_2$ belong to different relation atoms and if the above-introduced
necessary condition holds, we proceed as follows.
From what said above, it must be the case that $\mathcal{P}_G$ contains only pairs of the form
$\langle c,c \rangle$, and, correspondingly, all the sets $T[c_1,c_2]$ are of the form 
$T[c,c]$.
For each $T[c,c]$, let $\widetilde{p}(T[c,c])$ be the maximum probability of the tuples in 
$T[c,c]$ (i.e., $\widetilde{p}(T[c,c])=\max_{t\in T[c,c]}\{p(t)\}$.
Moreover, for each $\langle c,c \rangle\in \mathcal{P}_G$, take the tuple $t_{c}$ in $G$ such that
$p(t_{c})=\widetilde{p}(T[c,c])$, and let 
$\mathcal{T}_G$ be the set of these tuples.
We show that \cc\ is \textit{true} iff, 
for each $G$,
the following inequality (which can be checked in polynomial time) holds:
\begin{equation}\label{eq:caso1-cc-BEGD}
\sum_{\langle c,c \rangle\in \mathcal{P}_G} \widetilde{p}(T[c,c]) \leq 1
\end{equation}

$(\Rightarrow)$:
Reasoning by contradiction, assume that, for a group $G$, 
inequality (\ref{eq:caso1-cc-BEGD}) does not hold, but there is a model for the PDB w.r.t. 
$\IC$.

The constraint entails that,
for each pair of distinct tuples $t_1,t_2\in \mathcal{T}_G$,
there is the edge $(t_1,t_2)$ in $HG(r, \IC)$.
Hence, there is a clique in $HG(r, ic)$ consisting of the tuples in $\mathcal{T}_G$.
Since the sum of the probabilities of the tuples in $\mathcal{T}_G$ is greater than $1$ 
(by contradiction hypothesis), and since \cc\ is \textit{true} only if, for each clique 
in the conflict hypergraph, the sum of the probabilities in the clique does not exceed $1$,
it follows that \cc\ is \textit{false}.

$(\Leftarrow)$:
It is straightforward to see that there is model for $\mathcal{T}_G$ w.r.t. $ic$,
since the sum of the probabilities of the tuples in $\mathcal{T}_G$ is less than or equal to $1$,
and since the tuples in $\mathcal{T}_G$ describe a clique in $HG(\mathcal{T}_G, ic)$.
Since, for each $\langle c,c \rangle\in \mathcal{P}_G$, the tuple 
$t_{c}$ in $\mathcal{T}_G$ is such that its probability is not less than
the probability of every other tuple in $T[c,c]$, it is easy to see that a model $M$ 
for $G$ w.r.t. $ic$ can be obtained by putting the tuples in $T[c,c]$ other than $t_{c}$ 
in the portion of the probability space occupied by the worlds containing $t_c$.\\[2pt]
\noindent
\textit{\underline{Case 2}}: 
$R_1\neq R_2$.
We assume that $z_1\in\vec{y_1}$ and $z_2\in\vec{y_2}$, 
that is, two distinct relation names occur in $ic$, and the variables of the inequality predicate belongs to different relation atoms.
In fact, the case that $z_1$ and $z_2$ belong to the same relation atom can be proved by
reasoning analogously.

Let $\overline{X}_1$ and $\overline{X}_2$ be, respectively, the set of attributes in 
\emph{Attr}$(R_1)$ and \emph{Attr}$(R_2)$
corresponding to the variables in $\vec{x}$, and 
let $Z_1$ and $Z_2$ be the attributes in \emph{Attr}$(R_1)$ and \emph{Attr}$(R_2)$ 
corresponding to the variables $z_1$ and $z_2$, respectively.
Let $r_1$ be the instance of $R_1$, and $r_2$ be the instance of $R_2$.

Observe that $ic$ does not impose any condition between pairs of tuples 
$t_1\in r_1$ and $t_2\in r_2$ such that 
there are attributes $X_1\in\overline{X}_1$ and $X_2\in\overline{X}_2$ 
such that $t_1[X_1]\neq t_2[X_2]$.
This entails that  \cc\ can be decided by considering the consistency of the
tuples of $r_1$ and $r_2$ sharing the same combination of values for the attributes 
corresponding to the variables in $\vec{x}$ separately from the tuples sharing
different combinations of values for the same attributes.
For each combination $\vec{v}=v_1, \dots, v_k$ of values for these attributes (i.e., 
$\forall \vec{v}\in \Pi_{\overline{X}_1}(r_1) \cap \Pi_{\overline{X}_2}(r_2)$),
let $G_1(\vec{v})$ and $G_2(\vec{v})$ be the sets of tuples of $r_1$ and $r_2$, respectively,
where the attributes corresponding to the variables in $\vec{x}$ have values
$v_1, \dots, v_k$.
Let $\mathcal{V}(G_1(\vec{v}))=\{t[Z_1] \ |\ t\in G_1(\vec{v}) \}$ and 
$\mathcal{V}(G_2(\vec{v}))=\{t[Z_2] \ |\ t\in G_2(\vec{v})\}$.
For each $c_1\in \mathcal{V}(G_1(\vec{v}))$ (resp., $c_2\in \mathcal{V}(G_2(\vec{v}))$),
let $T_1[c_1]$ (resp., $T_2[c_2]$) be the set of tuples $t$ of $G_1(\vec{v})$ 
(resp., $G_2(\vec{v})$)
such that $t[Z_1]=c_1$ (resp., $t[Z_2]=c_2$).
Moreover,
for each $c_1\in \mathcal{V}(G_1(\vec{v}))$ (resp., $c_2\in \mathcal{V}(G_2(\vec{v}))$),
let $\widetilde{p}(T_1[c_1])$ (resp., $\widetilde{p}(T_2[c_2])$) 
be the maximum probability of the tuples in $T_1[c_1]$ (resp., $T_2[c_2]$).

We show that \cc\ is \textit{true} iff, 
$\forall \vec{v}\in \Pi_{\overline{X}_1}(r_1) \cap \Pi_{\overline{X}_2}(r_2)$,
it is the case that:
\begin{equation}\label{eq:caso3-cc-BEGD}
\forall c_1\in \mathcal{V}(G_1(\vec{v}))\ \forall c_2\in \mathcal{V}(G_2(\vec{v}))
\mbox{ s.t. } c_1\neq c_2, \mbox{ it holds that } \widetilde{p}(T_1[c_1])+\widetilde{p}(T_2[c_2]) \leq 1
\end{equation}

$(\Rightarrow)$:
Reasoning by contradiction, assume that the database is consistent but there are 
$c_1\in \mathcal{V}(G_1(\vec{v}))$ and $c_2\in \mathcal{V}(G_2(\vec{v}))$, 
with $c_1\neq c_2$, such that 
$\widetilde{p}(T_1[c_1])+\widetilde{p}(T_2[c_2]) >1$.
Hence, there are tuples $t_1\in T_1[c_1]$ and $t_2\in T_2[c_2]$ such that $p(t_1)+p(t_2)>1$.
As these tuples form a conflicting set, the conflict hypergraph $HG(D^p, \IC)$ contains the edge $(t_1,t_2)$.
It follows that the condition of Theorem~\ref{theo:tractableCC}, that is a necessary condition for the 
consistency in the presence of any hypergraph (as pointed out in the core of the paper after Theorem~\ref{theo:tractableCC}),
is not satisfied, thus contradicting the hypothesis.

$(\Leftarrow)$:
It suffices to separately consider each $\vec{v}\in \Pi_{\overline{X}_1}(r_1) \cap \Pi_{\overline{X}_2}(r_2)$, and to show that the fact that (\ref{eq:caso3-cc-BEGD}) holds for this $\vec{v}$ implies 
the consistency of the tuples in $G_1(\vec{v})\cup G_2(\vec{v})$ (as explained above, 
the consistency can be checked by separately considering the various combinations in  
$\Pi_{\overline{X}_1}(r_1) \cap \Pi_{\overline{X}_2}(r_2)$).

Let  $\widetilde{t}_{1}\in G_1(\vec{v})$ and $\widetilde{t}_{2}\in G_2(\vec{v})$ be 
such that 
\begin{itemize}
\item[(i)]
$\widetilde{t}_{1}\in T_1[c_1]$ and  $\widetilde{t}_{2}\in T_2[c_2]$, with $c_1\neq c_2$; and
\item[(ii)]
among the pair of tuples satisfying the above conditions, 
$\widetilde{t}_{1}$ and $\widetilde{t}_{2}$ have maximum probability  w.r.t. the tuples in $G_1(\vec{v})$ and $G_2(\vec{v})$, respectively.
\end{itemize}

If these two tuples do not exist, it means that the set of tuples 
$G_1(\vec{v})\cup G_2(\vec{v})$ is consistent, as there are
no tuples coinciding in the values of the attributes corresponding to $\vec{x}$, but not in 
the attributes corresponding to $z_1$ and $z_2$.
It remains to be proved that, if these two tuples exist, then the tuples in 
$G_1(\vec{v})\cup G_2(\vec{v})$ are consistent w.r.t. $\IC$.
%
In fact, equation~(\ref{eq:caso3-cc-BEGD}) ensures that
$p(\tilde{t}_1)+p(\tilde{t}_2)\leq 1$, which in turn entails that a model for 
$\{\widetilde{t}_{1},\widetilde{t}_{2} \}$ w.r.t. $\IC$ exists.
Starting from this model, a model $M$ for $G_1(\vec{v})\cup G_2(\vec{v})$ w.r.t. $\IC$ 
can be obtained as follows.
The tuples in $G_1(\vec{v})$ other than $\tilde{t}_{1}$ which are conflicting with at least
one tuple $G_2(\vec{v})$ are put in the portion of the probability space occupied by the worlds containing $\widetilde{t}_{1}$.
This can be done since the fact that $\tilde{t}_1$ has maximum probability among the tuples 
in $G_1(\vec{v})$ conflicting with at least one tuple in $G_2(\vec{v})$ makes any other tuple
in $G_1(\vec{v})$ conflicting with at least one tuple in $G_2(\vec{v})$ have a probability
which fits the portion of the probability space occupied by $\tilde{t}_1$.
Similarly, the tuples in $G_2(\vec{v})$ other than $\tilde{t}_{2}$ which are conflicting with 
at least one tuple $G_1(\vec{v})$ are put in the portion of the probability space occupied by
the worlds containing $\widetilde{t}_{2}$.
Also in this case, this can be done since $\tilde{t}_2$ has maximum probability among the 
tuples in $G_2(\vec{v})$ conflicting with at least one tuple in $G_1(\vec{v})$.
Finally, any tuple in $G_1(\vec{v})$ (resp., $G_2(\vec{v})$) which is conflicting with no tuple in $G_2(\vec{v})$ (resp., $G_1(\vec{v})$) can be put in any portion of the probability 
space, since its co-occurrence with any other tuple makes no constraint violated.
\end{proof}

\ \\
\noindent{\textbf{Theorem~\ref{theo:2FDs-cc}.} }
\textit{There is an $\IC$ consisting of $2$ FDs over the same relation scheme 
such that \cc\ is $NP$-hard.}
\begin{proof}
We show a LOGSPACE reduction  
from \textsc{3-coloring} to \cc\ which yields \cc\ instances where 
$\IC$ contains only functional dependencies. 
The rationale of the proof is similar to the proof in~\cite{Pap88}
of the $NP$-hardness of PSAT.

We briefly recall the definition of \textsc{3-coloring}.
An instance of  \textsc{3-coloring} consists of a graph $G=\la N,E \ra$, where 
$N$ is a set of node identifiers and $E$ is a set of edges (pairs of node identifiers).
The answer of  a \textsc{3-coloring} instance is \emph{true} iff there is a total 
function 
$f:N\rightarrow \{$\emph{Red, Green,} \emph{Blue}$\}$ such that $f(n_i)\neq f(n_j)$ 
whenever $\{n_i,n_j\}\in E$ ($f$ is said to be a $3$-coloring function over $G$).

Let $G=\la N,E \ra$ be a \textsc{3-coloring} instance.
We construct an equivalent instance  $\la \D^p,\IC,D^p \ra$ of \cc\ as follows:
\begin{list}{--}
     {\setlength{\rightmargin}{0mm}
      \setlength{\leftmargin}{3mm}
      \setlength{\itemindent}{0mm} 
			\setlength{\itemsep}{0mm}}
\item
$\D^p$ consists of the probabilistic relation schema
$R^p($\emph{Node, Color, IdEdge, P}$)$;
\item 
$D^p$ is the instance of $\D^p$ consisting of the instance $r^p$ of $R^p$ 
defined as follows:
for each node $n\in N$, for each edge $e\in E$ such that $n\in e$,
and for each color $c\in\{$\emph{Red,Green,Blue}$\}$, 
$r^p$ contains the tuple $(n, c, e, \frac{1}{3})$.
\item
$\IC$ is the set of denial constraints over $\D^p$ consisting of the 
following two functional dependencies:
\begin{list}{--}
     {\setlength{\rightmargin}{0mm}
      \setlength{\leftmargin}{6mm}
      \setlength{\itemindent}{0mm} 
			\setlength{\itemsep}{0mm}}
\item[$ic_1:$] 
$
\neg [R(x_1,x_2,x_3)\wedge R(x_1,x_4,x_5)\wedge x_2\neq x_4]$ 
\item[$ic_2:$] 
$
\neg [R(x_1,x_2,x_3)\wedge R(x_4,x_2,x_3)\wedge x_1\neq x_4]$ 
\end{list}

\end{list}


We first show that, if $G$ is $3$-colorable, then $D^p\models\IC$.
In fact, given a $3$-coloring function $f$ over $G$,
the interpretation $Pr$ defined below is a model of $D^p$ w.r.t. $\IC$.
$Pr$ assigns non zero probability to the following three possible worlds only:

\noindent
$w_1 = \{R(n, f(n), e) \ | \ n\in N, e\in E \wedge n\in e \}$;\\
$w_2 = \{R(n,$ \emph{Next}$(f(n)), e) \ | \ n\in N, e\in E \wedge n\in e \}$;\\
$w_3 = \{R(n,$ \emph{Next}$($\emph{Next}$(f(n))), e) \ | \ n\in N, e\in E \wedge  n\in e \}$,\\
where \emph{Next} is a function which receives a color $c\in \{$\emph{Red, Green, Blue}$\}$ and returns 
the next color in the sequence $[$\emph{Red, Green, Blue}$]$ 
(where \emph{Next}$($\emph{Blue}$)$ returns \emph{Red}).
Specifically, $Pr$ assigns probability $\frac{1}{3}$ to all the three possible worlds 
$w_1,w_2,$ $w_3$.
It is easy to see that each possible world $w_1,w_2,w_3$ satisfies $\IC$ and that
every tuple in $D^p$ appears exactly in one possible world in $\{w_1,w_2, w_3\}$.
Therefore $Pr$ is a model of $D^p$.

We now show that, if $D^p\models\IC$, then $G$ is $3$-colorable.
It is easy to see that $G$ is $3$-colorable if 
there is a model $Pr$ for $D^p$ w.r.t. $\IC$ having the following property $\Pi$:
$Pr$ assigns non-zero probability only to 3-\emph{coloring possible worlds}, i.e., possible worlds containing, for each edge $e=(n_i,n_j)\in E$, two tuples 
$t^e_i=R(n_i,c_i, e)$ and $t^e_j=R(n_j,c_j, e)$, where $c_i\neq c_j$. 
In fact, starting from $Pr$ and a 3-coloring possible world $w$ with $Pr(w)>0$, a
function $f^w$ can be defined which assigns to each node $n\in N$ the color $c$ if there is 
a tuple $R(n,c, e)\in w$ ($f^w$ is a function since it is injective, as $w$ cannot 
contain tuples assigning different colors to the same node).
Clearly, $f^w$ 
is a $3$-coloring function, as 
it associates every node $n$ with a unique color
and assigns different colors to pairs of nodes connected by an edge. 
Hence, it remains to be shown that at least one model satisfying $\Pi$ exists.
In fact, we prove that any model for $D^p$ w.r.t. $\IC$ satisfies $\Pi$. 
Reasoning by contradiction, assume that, for a model $Pr$, there is a non-3-coloring possible world $w^*$ such that $Pr(w^*) = \epsilon >0$.
That is, there is at least a pair $n,e$, with $n\in N$ and $e\in E$  such that
for each $c\in\{$\emph{Red,Green,Blue}$\}$, $R(n,c,e)\not\in w^*$.
Now, consider the tuples 
$t_1= R(n,$\emph{Red}$,e)$, 
$t_2= R(n,$\emph{Green}$,e)$, 
$t_3= R(n,$\emph{Blue}$,e)$
and the sets\\ 
$S_1 = \{w\in pwd(D^p)\ | \ t_1\in w \wedge Pr(w)>0\}$,\\
$S_2 = \{w\in pwd(D^p)\ | \ t_2\in w \wedge Pr(w)>0\}$,\\
$S_3 = \{w\in pwd(D^p)\ | \ t_3\in w \wedge Pr(w)>0\}$.\\
Since $ic_1$ is satisfied by every possible world $w\in pwd(D^p)$
such that $Pr(w)>0$, this means that
for each possible world $w$ there is at most one color $c\in\{$\emph{Red,Green,Blue}$\}$ 
such that the tuple
$R(n,c,e)$ belongs to $w$.
Therefore, it must be the case that, $\forall i,j\in\{1,2,3\},i\neq j$, 
$S_i\cap S_j=\emptyset$.
Since $Pr$ is an interpretation, the following equalities must hold:
\begin{itemize}
\item
$\frac{1}{3} = p(t_1)=\sum_{w\in S_1} Pr(w)$;
\item
$\frac{1}{3} = p(t_2)=\sum_{w\in S_2} Pr(w)$;
\item
$\frac{1}{3} = p(t_3)=\sum_{w\in S_3} Pr(w)$.
\end{itemize}
This implies that 
$$\sum_{w\in S_1} Pr(w)+ \sum_{w\in S_2} Pr(w)+ \sum_{w\in S_3} Pr(w) =1$$
However, since $Pr(w^*)=\epsilon>0$ and $Pr$ is an interpretation,
$\sum_{w\in pwd(D^p)\setminus\{w^*\}} Pr(w)<1$. 
The latter, since $w^*\not\in S_i$ for each $i\in\{1,2,3\}$,
implies that $pwd(D^p)\setminus\{w^*\}\supseteq S_1\cup S_2\cup S_3$, and then
$\sum_{w\in (S_1\cup S_2\cup S_3)} Pr(w)<1$ which is a contradiction.
\end{proof}

\noindent{\textbf{Theorem~\ref{theo:tract-EGDs-and-join-free}.}}
\textit{Let each denial constraint in $\IC$ be join-free or a BEGD.
If, for each pair of distinct constraints $ic_1$,$ic_2$ in $\IC$, the relation names occurring in $ic_1$ are distinct from those in $ic_2$, then \cc\ is in \textit{PTIME}.}
\begin{proof}
Trivially follows from theorems~\ref{theo:EGDs-cc},~\ref{theo:ic-empty-phi}, and from the fact that the consistency can be checked 
by considering the maximal connected components of the conflict hypergraph separately.
\end{proof}

\noindent{\textbf{Theorem~\ref{cor:FD-cc}.}}
\textit{If $\IC$ consists of one FD per relation, then $HG(D^p,\IC)$ is a graph where
each connected component is either a singleton or a complete multipartite graph.
Moreover, $D^p$ is consistent w.r.t. $\IC$ iff the following property holds:
for each connected component $C$ of $HG(D^p,\IC)$, denoting the maximal independent sets 
of $C$ as $S_1, \dots, S_k$, it is the case that
$\sum_{i\in[1..k]} \tilde{p}_i\leq 1$, where $\tilde{p}_i=\max_{t\in S_i} p(t)$.}
\begin{proof}
It is easy to see that multiple FDs over distinct relations involve disjoint sets of tuples.
Thus, it is straightforward to see that the conflict hypergraph has the structural property described in the statement iff, for each relation, the conflict hypergraph over the set of
tuples of this relation is a graph having the same structural property.
Moreover, as observed in the proof of Theorem~\ref{theo:tract-EGDs-and-join-free}, the consistency can be checked by considering the maximal connected components of the conflict hypergraph separately.

This implies that, in order to prove the statement, it suffices to consider the case that
that $\IC$ consists of a unique FD $ic$ over a relation $R$, and $D^p$ consists of an 
instance $r$ of $R$.
In particular, we assume that $ic$ is of the form:
$$\neg [R(\vec{x},\vec{y}_1)\wedge R(\vec{x},\vec{y}_2)\wedge z_1\neq z_2 ],$$
where $z_1$ and $z_2$ are variables in $\vec{y}_1$ and $\vec{y}_2$, respectively,
corresponding to the same attribute $Z$ of $R$.
That is, we are assuming that the FD $ic$ is in canonical form (i.e., its right-hand side consists of a unique attribute).
This yields no loss of generality, as it is easy to 
see that the reasoning used in the proof is still valid in the presence of FDs whose 
right-hand sides contain more than one attribute.

The relation instance $r$ can be partitioned into the two relations $r'$, $r''$, containing the tuples connected to at least another tuple in $HG(D^p,\IC)$ (that is, tuples belonging to some conflicting set) and the isolated tuples (that is, tuples belonging to no conflicting set), respectively.
Obviously, the subgraph of $HG(D^p, \IC)$ containing only the tuples in $r''$ contains no edge, and it is such that each of its connected component is a singleton.
Therefore, in order to complete the proof of the first part of the statement, it remains to be
proved that the subgraph $G$ of $HG(D^p, \IC)$ containing only the tuples in $r'$ is such
that each of its connected component is a complete multipartite graph.

Let $\overline{X}$ be the set of attributes in \emph{Attr}$(R)$ corresponding to the 
variables in $\vec{x}$.
The form of $ic$ implies that $G$ is a graph having the following structural property
$\mathcal{S}$: 
for any pair of tuples $t_1$, $t_2$, there is the edge $(t_1,t_2)$ in $G$ iff: 
$1)$ $\forall X\in\overline{X}$, $t_1[X]=t_2[X]$,
and
$2)$ $t_1[Z]\neq t_2[Z]$.

This implies that $G$ has as many connected components as the cardinality of 
$\Pi_{\overline{X}} r'$.
Specifically, each connected component of $G$ corresponds to a tuple $\vec{v}$ in 
$\Pi_{\overline{X}} r'$, as it contains every tuple of $r'$ whose projection over 
$\overline{X}$ coincides with $\vec{v}$.
In fact, property $\mathcal{S}$ implies that:
\begin{enumerate}
\item[$A.$]
there is no path in $G$ between tuples differing in at least one attribute in 
$\overline{X}$;
\item[$B.$]
any two tuples $t'$, $t''$ coinciding in all the attributes in $\overline{X}$ are either
directly connected to one another (in the case that they do not coincide in attribute $Z$),
or there is a third tuple $t'''$ to which they are both connected.
In fact, $t'$ and $t''$ are not isolated (otherwise they would not belong to $r'$), and any 
tuple conflicting with $t'$ is also conflicting with $t''$, as we are in the case that 
$t'$ and $t''$ coincide in $Z$.
\end{enumerate}

To complete the proof of the first part of the statement, we now show that, taken any connected component $C$ of $G$, $C$ is a complete multipartite graph.
This straightforwardly follows from the following facts:
\begin{enumerate}
\item[$a.$]
the nodes of $C$ can be partitioned into the maximal independent sets $S_1, \dots, S_k$, 
where $k$ is the number of distinct values of attribute $Z$ occurring in the tuples in $C$.
In particular, each $S_i$ corresponds to one of these values $v$ of $Z$, and contains all 
the tuples of $C$ having $v$ as value of attribute $Z$.
The fact that every $S_i$ is a maximal independent set trivially follows from property 
$\mathcal{S}$.
\item[$b.$]
for every pair of tuples $t_i$ and $t_j$ belonging to $S_i$ and $S_j$ (with $i,j\in [1..k]$ 
and $i\neq j$), there is an edge connecting $t_i$ to $t_j$ (this also trivially follows 
from property $\mathcal{S}$).
\end{enumerate}

We now prove the second part of the statement.

$(\Rightarrow)$:
Reasoning by contradiction, assume that $D^p$ is consistent w.r.t. $\IC$ but, for some connected component $C$ of $HG(D^p,\IC)$, it does not hold that 
$\sum_{i\in[1..k]} \tilde{p}_i\leq 1$, where $\tilde{p}_i=\max_{t\in S_i} p(t)$ and $S_1, \dots, S_k$ are the maximal independent sets of $C$.
Obviously, $C$ can not be a singleton (otherwise the inequality would hold), thus it
must be the case that $C$ is a complete multipartite graph. 

For each $i\in[1..k]$, let $\tilde{t}_i$ be a tuple of $S_i$ such that 
$p(\tilde{t}_i)=\tilde{p}_i$.
Since $C$ is a complete multipartite graph, and since the so obtained tuples 
$\tilde{t}_1, \dots, \tilde{t}_k$ belong to distinct independent sets, it must be the case
that, for each $i,j\in[1..k]$ with $i\neq j$, there is an edge in $C$ between $\tilde{t}_i$ 
and $\tilde{t}_j$.
This means that, in every model $M$ for $D^p$ w.r.t. $\IC$,
for each $i,j\in[1..k]$ with $i\neq j$,
the tuples $\tilde{t}_i, \tilde{t}_j$ can not co-exist in a non-zero probability
possible world.
That is, every non-zero probability possible world contains at most one tuple
among those in $\{\tilde{t}_1, \dots, \tilde{t}_k\}$.
This entails that the sum of the probabilities of the possible worlds containing 
the tuples in $\{\tilde{t}_1, \dots, \tilde{t}_k\}$ is equal to 
the sum of the marginal probabilities of the tuples in $\{\tilde{t}_1, \dots, \tilde{t}_k\}$, which, by contradiction hypothesis, is greater than $1$.
This contradicts the fact that $M$ is a model.


$(\Leftarrow)$:
We now show that $D^p$ is consistent w.r.t. $\IC$ if the inequality $\sum_{i\in[1..k]} \tilde{p}_i\leq 1$ holds, 
where $\tilde{p}_i=\max_{t\in S_i} p(t)$ and $S_1, \dots, S_k$ are the maximal independent sets of $C$.
Consider the database instance $\tilde{D}^p$ consisting of the tuples 
$\tilde{t}_1, \dots, \tilde{t}_k$ where $\tilde{t}_i$ (with  $i\in[1..k]$) is a tuple of $S_i$ such that 
$p(\tilde{t}_i)=\tilde{p}_i$.
It is easy to see that there is a model for $\tilde{D}^p$ w.r.t. $IC$:
since $C$ is a complete multipartite graph, and $\tilde{t}_1, \dots, \tilde{t}_k$ 
belong to distinct independent sets of $C$, it follows that, 
for each $i,j\in[1..k]$ with $i\neq j$, there is exactly one edge in $C$ between $\tilde{t}_i$ 
and $\tilde{t}_j$.
That is, the conflict graph of $\tilde{D}^p$ w.r.t. $\IC$ is a clique.
Hence, the fact that inequality $\sum_{i\in[1..k]} \tilde{p}_i\leq 1$ holds is sufficient
to ensure the existence of a model $\tilde{M}$ for $\tilde{D}^p$ w.r.t. $\IC$.
Starting from $\tilde{M}$, a model for $D^p$ w.r.t. $\IC$ can be obtained by reasoning as follows.
Since, for each maximal independent set $S_i$ of $C$ (with  $i\in[1..k]$),
the tuples in $S_i$ other than $\tilde{t}_i$ are such that their probability is less than or equal to 
$p(\tilde{t}_i)$, a model $M$ for $D^p$ w.r.t. $\IC$ can be obtained by putting the tuples in 
$S_i$ other than $\tilde{t}_i$ in the portion of the probability space corresponding to 
that occupied by the worlds containing $\tilde{t}_i$ according the model $\tilde{M}$.
The fact that $M$ is a model follows from the fact that, for each $i\in[1..k]$, the tuples in 
$S_i$ other than $\tilde{t}_i$ are conflicting only with the same tuples which are conflicting
with $\tilde{t}_i$.
\end{proof}

\subsection{Proofs of Lemma~\ref{boundedpmin-pmax} and Theorem~\ref{theo:mphardness2}}

\ \\
\noindent{\textbf{Lemma~\ref{boundedpmin-pmax}.} }
\textit{Let $Q$ be a conjunctive query over $\D^p$, $D^p$ an instance of $\D^p$, and 
$\vec{t}$ an answer  of $Q$ having minimum probability $p^{min}$ and maximum probability 
$p^{max}$.
Let $m$ be the number of tuples in $D^p$ plus $3$ and $a$ be the maximum among the 
numerators and denominators of the probabilities of the tuples in $D^p$.
Then 
$p^{min}$ and $p^{max}$ are expressible as fractions of the form $\frac{\eta}{\delta}$,
with $0\leq \eta \leq (m a)^m$ and $0< \delta \leq (m a)^m$.
}
\begin{proof}
Consider the equivalent form of the
linear programming problem $LP(S^*)$ described in the proof of
Proposition~\ref{pro:intquery}, where equalities 
$(e1)$ of $S^*(\D^p,\IC,D^p)$ are rewritten as:\\
$\forall t\in D^p,\  d(p(t)) \times \sum_{i | w_i\in pwd(D^p)\wedge t\in w_i} v_i= d(p(t)) \times p(t)$,\\
where $p(t) = \frac{n(p(t))}{d(p(t))}$ (i.e., $n(p(t))$ and $d(p(t))$ are the numerator and denominator of $p(t)$, respectively).
This way, we have that all the coefficients of $S^*(\D^p,\IC,D^p)$ are integers, where each coefficient can be either $0$, or $1$,  or the numerator or the denominator of the marginal probability of a tuple of $D^p$.

In~\cite{PapOpt}, it was shown that the solution of any instance of the
linear programming problem with integer coefficients is expressible as a fraction of the form 
$\frac{\eta}{\delta}$,
where both $\eta$ and $\delta$ are naturals bounded by $(m a)^m$, 
where $m$ is the number of (in)equalities and 
$a$ the greatest integer coefficient occurring in the instance. 
By applying this result to $LP(S^*)$, we get the statement:
in fact, it is easy to see that 
$i)$ $S^*(\D^p,\IC,D^p)$ contains integer coefficients only, 
$ii)$ the number $m$ of (in)equalities in $S^*(\D^p,\IC,D^p)$ is equal to 
the number of tuples in $D^p$ plus $3$, and 
$iii)$ the greatest integer constant $a$ in $S^*(\D^p,\IC,D^p)$ is the maximum among the numerators and denominators of the probabilities of the tuples in $D^p$.
\end{proof}

\ \\
\noindent{\textbf{Theorem~\ref{theo:mphardness2}.}}
(Lower bound of \mp)
\textit{There is at least one conjunctive query without projection for which \mp\ is co$NP$-hard, even if $\IC$ consists of binary constraints only.}
\begin{proof}
We show a reduction from the planar \textsc{3-coloring} problem to the complement 
of the membership problem ($\nmp$).
An instance of planar \textsc{3-coloring} consists of a planar graph $G=\la N,E \ra$, 
where $N$ is a set of node identifiers and $E$ is a set of edges (pairs of node identifiers).
The answer of a planar \textsc{3-coloring} instance $G$ is true iff 
there is a $3$-coloring function over $G$, i.e., a total function  $f:N\rightarrow \{R,G,B\}$ 
such that $f(n_i)\neq f(n_j)$ whenever $\{n_i,n_j\}\in E$.
Observe that every planar graph $G=\la N,E \ra$ is $4$-colorable, that is,
there exists a function $f:N\rightarrow \{R,G,B,C\}$ 
such that $f(n_i)\neq f(n_j)$ whenever $\{n_i,n_j\}\in E$ (in this case, 
$f$ is said to be a $4$-coloring function).

Let $G=\la N,E \ra$ be a planar \textsc{3-coloring} instance. 
We construct an equivalent $\nmp$ instance $\la \D^p,\IC,D^p, Q, t, k_1, k_2 \ra$ as follows:
\begin{list}{--}
     {\setlength{\rightmargin}{0mm}
      \setlength{\leftmargin}{3mm}
      \setlength{\itemindent}{-1mm} 
			\setlength{\itemsep}{0mm}}
\item
$\D^p$ consists of the probabilistic relation schemas
$R^p_G($\emph{Node, Color, IdEdge, P}$)$ and $R^p_\phi($\emph{Tid,P}$)$.
\item
$D^p$ is the instance of $\D^p$ consisting of the instances $r^p_G$ of $R^p_G$ and 
$r^p_\phi$ of $R^p_\phi$ defined as follows:
\begin{list}{--}
     {\setlength{\rightmargin}{0mm}
      \setlength{\leftmargin}{5mm}
      \setlength{\itemindent}{-1mm} 
			\setlength{\itemsep}{0mm}}
\item
for each node $n\in N$ and for each edge $e\in E$ such that $n\in e$, 
$r^p_G$ contains four tuples of the form $R^p_G(n, c, e, \frac{1}{8})$, 
one for each $c\in\{R,G,B,C\}$;
\item
$r^p_\phi$ consists of the tuples $R^p_\phi(1,\frac{1}{2})$ and $R^p_\phi(2,\frac{1}{2})$ only;
\end{list}
\item
$\IC$ contains the following binary denial constraints:
\begin{list}{--}
     {\setlength{\rightmargin}{0mm}
      \setlength{\leftmargin}{7mm}
      \setlength{\itemindent}{-1mm} 
			\setlength{\itemsep}{0mm}}
\item[$ic_1:$] 
$
\neg [R_G(x_1,x_2,x_3)\wedge R_G(x_1,x_4,x_5)\wedge x_2\neq x_4]$; 
\item[$ic_2:$] 
$
\neg [R_G(x_1,x_2,x_3)\wedge R_G(x_4,x_2,x_3)\wedge x_1\neq x_4]$;
\item[$ic_3:$]
$
\neg [R_G(x_1,x_2,x_3)\wedge R_\phi(2)]$;
\item[$ic_4:$]
$
\neg [R_G(x_1,x_2,C)\wedge R_\phi(1)]$;
\end{list}
\item
$Q(x,y) = R_\phi(x)\wedge R_\phi(y)$;
\item
$t = (1,2)$; 
\item
$k_1=\frac{1}{2}$;
\item
$k_2=1$.
\end{list}

It is easy to see that the fact that $G$ is $4$-colorable implies that $D^p$ is consistent
w.r.t. $\IC$ (it suffices to follow the same reasoning as the proof of Theorem~\ref{theo:CCisNPcomplete}, using $4$ colors instead of $3$).

We first prove that, if $G$ is 3-colorable, then the corresponding instance of $\nmp$ is true.
Let $f$ be a $3$-coloring function over $G$.
Consider an interpretation $Pr$ for $D^p$ which assigns non-zero probability to the following
possible worlds only:\\
\noindent
$w_1 = \{R_G(n, f(n), e) \ | \ n\in N, e\in E \wedge n\in e \}\cup\{R_\phi(1)\}$\\
$w_2 = \{R_G(n,$ \emph{Next}$(f(n)), e) \ | \ n\in N, e\in E \wedge n\in e \}$\\
$w_3 = \{R_G(n,$ \emph{Next}$($\emph{Next}$(f(n))), e) \ | \ n\in N, e\in E \wedge  n\in e \}$\\
$w_4 = \{R_G(n,$ \emph{Next}$($\emph{Next}$($\emph{Next}$(f(n)))), e) \ | \ n\!\in\!N, e\!\in\!E \wedge  n\!\in\!e \}$\\
$w_5 = \{R_\phi(1), R_\phi(2)\}$\\
$w_6 = \{R_\phi(2)\}$\\
where \emph{Next} is a function which receives a color $c\in \{R,G,B,C\}$ and returns 
the next color in the sequence $[R, G, B, C]$ (where \emph{Next}$(C)$ returns $R$).
Furthermore, $Pr$ assigns probability $\frac{1}{8}$ to the possible worlds 
$w_1, w_2, w_3, w_4$ and $w_5$, and probability $\frac{3}{8}$ to the possible world $w_6$.
It is easy to see that $Pr$ is a model of $D^p$ w.r.t. $\IC$
and the probability that the tuple $t=(1,2)$ is an answer of $Q$ assigned by $Pr$ is
$\frac{1}{8}$. 
Hence, the $\nmp$ is true in this case (as $\frac{1}{8}< k_1$).


We now prove that if $G$ is not $3$-colorable, then the corresponding instance of $\nmp$ is
false.
First observe that,
reasoning similarly to in the proof of Theorem~\ref{theo:CCisNPcomplete},
it is possible to show that, for each model $Pr$ of
$D^p$ w.r.t. $\IC$ and for each possible world $w$ such that $Pr(w)>0$,
if $w$ contains at least a tuple of $r_G$, 
then for each node $n\in N$ and for each edge $e\in E$ such that 
$n\in e$, there exists $c\in \{R,G,B,C\}$ such that $w$ contains the tuple $R_G(n, c, e)$.
This is due to the fact that every possible world $w$ such that
$Pr(w)>0$ can not contain two tuples $R_G(n, c', e), R_G(n, c'', e)$
and no tuple in $r_G$ can belong to a possible world which contains the tuple $R_\phi(2)$ too.

Since $G$ is not $3$-colorable, for each model $Pr$ of $D^p$ w.r.t. $\IC$ and
for each possible world $w$ such that $Pr(w)>0$ containing at least a tuple of $r_G$, 
it holds that $w$ contains a tuple $R_G(n, C, e)$. 
This implies that no possible world containing a tuple of $r_G$
can contain the tuple $R_\phi(1)$, as otherwise $ic_4$ would be violated.
Since $ic_1$ and $ic_3$ hold for $Pr$, then the sum of the probability of the possible worlds containing 
at least a tuple of $r_G$ is equal to $\frac{1}{2}$.
Since the possible worlds containing at least a tuple of $r_G$
cannot contain neither $R_\phi(1)$ nor $R_\phi(2)$ (as $ic_4$ holds) and both
$R_\phi(1)$ nor $R_\phi(2)$ has probability $\frac{1}{2}$ it holds that
the probability that both $R_\phi(1)$ and $R_\phi(2)$ are true is $\frac{1}{2}$.
The latter implies that the minimum probability  that $t =(1,2)$ is an answer of $Q$ is 
$\frac{1}{2}$, which is equal to $k_1$. 
Therefore the $\nmp$ is false if $G$ is not 3-colorable.
\end{proof}


\subsection{Proof of Theorem~\ref{theo:fp-at-np}}

\ \\
\noindent{\textbf{Theorem~\ref{theo:fp-at-np}.}}
(\qa\ complexity)
\textit{\qa\ belongs to $F\!P^{N\!P}$ and is $F\!P^{N\!P [\log n]}$-hard.}
\begin{proof}
The membership in $F\!P^{NP}$ follows from \cite{Luk01},
where it was shown that a problem more general than ours (that is, the entailment problem 
for probabilistic logic programs with conditional rules)
belongs to $F\!P^{NP}$ (see \emph{Related Work}).
We prove the hardness for $F\!P^{NP [\log n]}$ by showing a reduction to \qa\ 
from the well-known $F\!P^{NP [\log n]}$-hard problem \textsc{clique size},
that is the problem of determining the size $K^*$ of the largest
clique of a given graph.

Let the graph $G=\la N,E \ra$ be an instance of \textsc{clique size}, 
where $u_1,\dots, u_n$ are the nodes of $G$ (where $n=|N|$).
We construct an equivalent instance $\la \D^p, \IC, D^p, Q \ra$ of \qa\ as follows.
$D^p$ is the database schema consisting of the following relation schemas:
\emph{Node}$^p($\emph{Id}, \emph{P}$)$,
\emph{NoEdge}$^p($\emph{nodeId}$_1$, \emph{nodeId}$_2$, \emph{P}$)$,
\emph{Flag}$^p($\emph{Id}, \emph{P}$)$. 
The database instance $D^p$ consists of the following relation instances.
Relation \emph{node}$^p$ contains a tuple $t_i=$ \emph{Node}$^p(u_i,\nicefrac{1}{n})$
for each node $u_i$ of $G$ (that is, every node of $G$ corresponds to a tuple of
\emph{node}$^p$ having probability $\nicefrac{1}{n}$).
Relation \emph{noEdge}$^p$ contains a tuple \emph{NoEdge}$^p(u_i,u_j,1)$ 
for each pair of distinct nodes of $G$ which are not connected by means of any edge in $E$
(thus, \emph{noEdge}$^p$ represents the complement of $E$, and all of
its tuples have probability $1$).
Finally, relation \emph{flag}$^p$ contains the unique tuple \emph{Flag}$^p(1,\frac{n-1}{n})$.

Let $\IC$ consist of the following denial constraints over $\D^p$:
\begin{itemize}
\item[$ic_1:$] 
$\neg [$\emph{Node}$(x_1)\ \wedge$ \emph{Node}$(x_2)\ \wedge\ $\emph{NoEdge}$(x_1,x_2)]$ 
\item[$ic_2:$] 
$\neg [$\emph{Node}$(x_1)\ \wedge$ \emph{Node}$(x_2)\ \wedge\ $\emph{Flag}$(1)\ \wedge\ x_1\neq x_2]$ 
\end{itemize}

Basically, constraint $ic_1$ forbids that tuples representing distinct nodes 
co-exist if they are not connected by any edge, while $ic_2$ imposes that tuple 
\emph{Flag}$(1)$ can co-exist with at most one tuple representing a node.

To complete the definition of the instance of $\qa$, we define the (boolean) query
$Q()=$\emph{Flag}$(1)\wedge$\emph{Node}$(x)$.

We will show that the size of the largest clique of $G$ is $K^*$ iff the empty tuple 
$t_{\emptyset}$ is an answer of $Q$ over $D^p$ with minimum probability $l^*=\frac{n-K^*}{n}$ 
(i.e., \emph{Ans}$(Q, D^p, \IC)$ consists of the pair 
$\la t_{\emptyset}, [p^{min}, p^{max}]\ra$, with  $p^{min}=\frac{n-K^*}{n}$).

We first show that if $G$ contains a clique of size $K$, then 
$p^{min}\leq \frac{n-K}{n}$.
In fact, if $K$ is the size of a clique $C$ of $G$, then we can construct the following model 
$M$ for $D^p$ w.r.t. $\IC$.
Let $w^c=\{$\emph{Node}$(u_i)| u_i\in C\} \cup$ \emph{noEdge}, 
$w^f=\{$ \emph{Flag}$(1) \} \cup$ \emph{noEdge}, and,
for each $u_i\in N\setminus C$, $w_i= \{$\emph{Node}$(u_i),$ \emph{Flag}$(1) \} \cup$ \emph{noEdge}.
Then, denoting as $w$ the generic possible world, the model $M$ is defined as follows:
$$
M(w)=\left\{
\begin{array}{ll}
\nicefrac{1}{n} & \mbox{if $w=w^c$;}\\
\nicefrac{1}{n} & \mbox{if $w=w_i$, for $i$ s.t. $u_i\in N\setminus C$;}\\
\nicefrac{(K-1)}{n} &  \mbox{if $w=w^f$;}\\
0               & \mbox{otherwise}
\end{array}
\right.$$

It is easy to see that $M$ is a model.
First of all, it assigns non-zero probability only to possible worlds
satisfying the constraints.
Moreover, for any tuple $t$ in $D^p$, summing the probabilities
of the possible worlds containing $t$ results in $t[P]$.
In fact, considering only the possible worlds which have been assigned a non-zero probability
by $M$, every tuple \emph{Node}$(u_i)$ representing a node $u_i\in C$ belongs only to 
$w^c$, which is assigned by $M$ the probability $\nicefrac{1}{n}$ (the same as $p($\emph{Node}$(u_i))$). 
Analogously, every tuple \emph{Node}$(u_i)$ representing a node $u_i\not\in C$ belongs only to 
$w^i$, which is assigned by $M$ the probability $\nicefrac{1}{n}$ (the same as $p($\emph{Node}$(u_i))$). 
Finally, tuple \emph{Flag}$(1)$ occurs only in $w^f$ and in $n-K$ possible worlds of the form
$w_i$, thus the sum of the probabilities of the possible worlds containing \emph{Flag}$(1)$
is $M(w^f)+ (n-K)\cdot \frac{1}{n}=\frac{(n-1)}{n}=p($\emph{Flag}$(1))$. 

It is easy to see that the probability of the answer $t_{\emptyset}$ of $Q$ over the model $M$
is the sum of the probabilities of the possible worlds of the form $w_i$, that is 
$\frac{(n-K)}{n}$.
Hence, from definition of minimum probability, it holds that $p^{min}\leq\frac{(n-K)}{n}$.

To complete the proof, it suffices to show that the following property $\mathcal{P}$ holds
over any model $M'$ for $\D^p$ w.r.t. $\IC$:
``\emph{the probability $l$ of the answer $t_{\emptyset}$ of $Q$ over $M'$ can not be strictly 
less than $l^*=\frac{(n-K^*)}{n}$}''.
Observe that, for every model $M'$, 
the possible worlds which have been assigned a non-zero probability by 
$M'$ can be of three types (we do not consider \emph{noEdge} tuples, as they have probability
$1$, thus they belong to every non-zero-probability possible world):
\begin{list}{--}
     {\setlength{\rightmargin}{0mm}
      \setlength{\leftmargin}{7.5mm}
      \setlength{\itemindent}{-1mm} 
			\setlength{\itemsep}{0mm}}
\item[\emph{Type 1:}]
world not containing \emph{Flag}$(1)$, and containing a non-empty set of tuples representing
the nodes of a clique (the non-emptiness of this set derives from the combination of 
constraint $ic_2$ with the value of the marginal probability assigned to tuple \emph{Flag}$(1)$);
\item[\emph{Type 2:}]
world containing the tuple \emph{Flag}$(1)$ and exactly one \emph{node} tuple;
\item[\emph{Type 3:}]
world containing the tuple \emph{Flag}$(1)$ only.
\end{list}
We will show that property $\mathcal{P}$ holds over any model $M'$ by reasoning 
inductively on the number $x$ of possible worlds of Type 1 which have been assigned a 
non-zero probability by $M'$.

The base case is $x=1$, meaning that $M'$ assigns probability $\nicefrac{1}{n}$ to a unique 
Type-1 world $w^{T1}_1$, and probability $0$ to all the other possible worlds of the same type.
It is easy to see that the sum of the probabilities assigned by $M'$ to the Type-2 worlds 
(which coincides with $l$) is equal to $\frac{1}{n}\cdot (n-|C^{T1}_1|)$,  where 
$C^{T1}_1$ is the clique represented by $w^{T1}_1$. 
Hence, if it were $l<l^*$, it would hold that 
$\frac{1}{n}\cdot (n-|C^{T1}_1|)< \frac{(n-K^*)}{n}$, 
which means that $|C^{T1}_1|>K^*$, thus contradicting that $K^*$ is the size of the maximum clique of $G$.

We now prove the induction step. 
The induction hypothesis is that $\mathcal{P}$ holds over any model assigning non-zero 
probability to exactly $x-1$ Type-1 possible worlds (with $x-1\geq 1$). 
We show that this implies that $\mathcal{P}$ holds also over any model assigning non-zero 
probability to exactly $x$ Type-1 possible worlds. 
Consider a model $M'$ assigning non-zero probability to exactly $x$ Type-1 possible worlds,
namely $w^{T1}_1,\dots, w^{T1}_x$. 
We assume that these worlds are ordered by their cardinality (in descending order),
and denote as $C_i$ the clique represented by $w^{T1}_i$ (with $i\in[1..x]$).
We also denote as $w^{T2}_1,\dots, w^{T2}_n$ the Type-2 possible worlds (where $w^{T2}_i$ contains the 
\emph{node} tuple representing $u_i$).
Moreover, let $l'$ be the probability of the answer $t_{\emptyset}$ of $Q$ over $M'$.
We show that, starting from $M'$, a new model $M''$ for $D^p$ w.r.t. $\IC$ can be constructed 
such that:\\
$i)$ $M''$ assigns non-zero probability to $x-1$ Type-1 possible worlds;\\
$ii)$ the probability $l''$ of the answer \emph{true} of $Q$ over $M''$ satisfies 
$l''\leq l'$.\\
Specifically, $M''$ is defined as follows.
$M''$ coincides with $M'$ on all the Type-1 worlds except for the probabilities assigned to  
$w^{T1}_1$ and $w^{T1}_x$.
In particular, $M''(w^{T1}_1)=M'(w^{T1}_1)+ M'(w^{T1}_x)$, while $M''(w^{T1}_x)=0$.
Moreover, for each Type-2 world $w^{T2}_i$ such that $u_i\in C_1\setminus C_x$, 
$M''(w^{T2}_i)=M'(w^{T2}_i)- M'(w^{T1}_x)$,
and,
for each Type-2 world $w^{T2}_i$ such that $u_i\in C_x\setminus C_1$,
$M''(w^{T2})=M'(w^{T2})+ M'(w^{T1}_x)$.
On the remaining Type-2 worlds, $M''$ is set equal to $M'$.
Finally, denoting the type-3 world as $w^{T3}$, 
$M''(w^{T3})=M'(w^{T3})-|C_x\setminus C_1|\cdot M'(C_x) + |C_1\setminus C_x|\cdot M'(C_x)$.
In brief, $M''$ is obtained from $M'$ by moving the probability assigned to $w^{T1}_x$ to
$w^{T1}_1$, and re-assigning the probabilities of the Type-2 and Type-3 worlds accordingly.
Hence, it is easy to see that $M''$ is still a model (as it can be easily checked that it
makes the sum of the probabilities of the possible worlds containing a tuple equal to 
the marginal probability of the tuple).
Moreover, property $i)$ holds, as $M''$ assigns probability $0$ to the world $w^{T1}_x$, 
while the other worlds of the form $w^{T1}_i$ (with $i<x$) are still assigned by $M''$ 
a positive probability, and the remaining Type-1 worlds are still assigned probability $0$.
Also property $ii)$ holds, since the probability of \emph{true} as answer of $Q$ over $M''$ 
is given by
$l''= l'+|C_x\setminus C_1|\cdot M'(C_x) - |C_1\setminus C_x|\cdot M'(C_x)$.
Since $|C_1|\geq|C_x|$, and thus $|C_1\setminus C_x|\geq|C_x\setminus C_1|$, $l''$ is less 
than or equal to $l'$.
If it were $l'< l^*$ (and thus $l''<l^*$) $M''$ would be a model assigning non-zero probability
to $x-1$ Type-1 possible worlds such that the answer \emph{true} of $Q$ over $M''$ has probability
strictly less than $l^*$, thus contradicting the induction hypothesis.
\end{proof}

\subsection{Proof of Theorem~\ref{theo:qatractable}}
\label{app:tractabilityqa}
The proof of Theorem~\ref{theo:qatractable} is postponed to the end of this section,
after introducing some preliminary lemmas.

\begin{lemma}
\label{lem:pmin-pmax-chain}
Let $D^p$ be a PDB instance of $\D^p$ such that
$HG(D^p,$ $\IC)$ is a graph and $D^p\models \IC$.
Let $t,t'$ be two tuples connected by exactly one path in $HG(D^p, \IC)$. 
Then, $p^{min}(t\wedge t')$ and $p^{max}(t\wedge t')$
can be computed in polynomial time w.r.t. the size of $D^p$.
\end{lemma}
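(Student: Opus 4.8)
The plan is to exploit the fact that, since $t$ and $t'$ are joined by a unique path in the graph $H=HG(D^p,\IC)$, the edges of this path form a \emph{chain} $\mathcal{C}=e_1,\dots,e_n$ (in the sense defined before Lemma~\ref{lem:pmin-chain-hyper}) with $t\in e_1$ and $t'\in e_n$; indeed $t$ and $t'$ can be regarded as subsets of the ears of the extreme edges $e_1$ and $e_n$, since all edges of $H$ are $2$-element. The whole argument then reduces the two quantities to computations over this chain, for which the machinery of Lemma~\ref{lem:pmin-chain-hyper} and Fact~\ref{fac:boole} is available.

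First I would treat $p^{\max}(t\wedge t')$. If the path is trivial (a single edge containing both $t$ and $t'$), then $t,t'$ are conflicting and $p^{\min}=p^{\max}=0$. Otherwise $t$ and $t'$ never occur in a common edge, so by the Frechet--Hoeffding upper bound (Fact~\ref{fac:boole}) $p^{\max}(t\wedge t')\le \min\{p(t),p(t')\}$; I would show this value is attained by explicitly constructing a model that aligns the occurrences of $t$ and $t'$ (placing $\min\{p(t),p(t')\}$ of the probability mass on worlds containing both, never together with the adjacent chain tuples) and pushes every intermediate and off-path tuple into the remaining probability space without violating any edge. Consistency of $D^p$, together with the per-edge inequalities of Theorem~\ref{theo:tractableCC}, guarantees enough room for this, so $p^{\max}(t\wedge t')=\min\{p(t),p(t')\}$, computable in constant time.

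For $p^{\min}(t\wedge t')$ I would proceed in two directions. For the lower bound, note that adding the edges of $H$ that do not belong to $\mathcal{C}$ only shrinks the model set, hence $p^{\min}_{H}(t\wedge t')\ge p^{\min}_{\mathcal{C}}(t\wedge t')$, and the latter is obtained by invoking Lemma~\ref{lem:pmin-chain-hyper} with $\tilde\beta_1=\{t\}$ and $\tilde\beta_n=\{t'\}$ and then applying it recursively: each application expresses $p^{\min}$ on a chain of length $n$ through $p^{\min}$ on its subchain $\mathrm{sub}(\mathcal{C})$ of length $n-2$, so after $\lfloor n/2\rfloor\le|N|$ steps the recursion bottoms out and yields a closed arithmetic expression evaluable in polynomial time. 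For the matching upper bound I would take the model realizing the chain minimum (built in the proof of Lemma~\ref{lem:pmin-chain-hyper}) and extend it to a model of the full graph $H$, distributing the tuples lying outside $\mathcal{C}$ so that all edges of $H$ are satisfied while the joint occurrence of $t$ and $t'$ is left unchanged.

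The main obstacle is precisely this last step: arguing that the chain-minimizing model can be completed to a model of the entire conflict graph without increasing $p(t\wedge t')$. This is exactly where the hypothesis that $t$ and $t'$ are connected by \emph{exactly one} path is essential --- it rules out ``shortcuts'' that would couple $t$ and $t'$ through off-path tuples, so that the components of $H$ hanging off the chain can be filled in independently of how $t$ and $t'$ have been aligned. I expect to formalize this by an induction that folds the off-path branches onto the chain nodes to which they are attached, relying again on Fact~\ref{fac:boole} and Lemma~\ref{lem:pmin1/2} to place each branch in the probability space left free at its attachment node, mirroring the redistribution technique used in the proofs of Lemma~\ref{lem:pmin1/2} and Lemma~\ref{lem:pmin-chain-hyper}.
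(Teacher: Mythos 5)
Your reduction to the chain and your treatment of $p^{\min}$ follow essentially the paper's route (restrict attention to the unique path, apply Lemma~\ref{lem:pmin-chain-hyper} recursively, and lift the chain model back to the whole graph), but your computation of $p^{\max}$ contains a genuine error that would make the proof fail. You claim that for non-adjacent $t,t'$ the Frechet-Hoeffding bound $\min\{p(t),p(t')\}$ is always attained, with consistency and the per-edge inequalities of Theorem~\ref{theo:tractableCC} ``guaranteeing enough room'' to align $t$ and $t'$. They do not. Consider the four-node path on $t,\hat{t},\hat{t}',t'$ with edges $(t,\hat{t})$, $(\hat{t},\hat{t}')$, $(\hat{t}',t')$ and all four marginal probabilities equal to $1/2$. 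This instance is consistent (the graph is a tree and every edge satisfies the inequality of Theorem~\ref{theo:tractableCC}; indeed the two worlds $\{t,\hat{t}'\}$ and $\{\hat{t},t'\}$, each with probability $1/2$, form a model), yet any world containing both $t$ and $t'$ must avoid both $\hat{t}$ and $\hat{t}'$; since $\hat{t}$ and $\hat{t}'$ are mutually exclusive and their probabilities sum to $1$, the worlds containing one of them already exhaust the whole probability space, so $p^{\max}(t\wedge t')=0$, strictly below $\min\{p(t),p(t')\}=1/2$.

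The missing idea is that $p^{\max}$, just like $p^{\min}$, must be computed by a recursion along the path rather than read off the endpoints: the worlds witnessing $t\wedge t'$ must avoid the path-neighbors $\hat{t}$ and $\hat{t}'$ of $t$ and $t'$, and the mass available for such worlds is
$1-p^{\min}(\hat{t}\vee\hat{t}')=1-\left[p(\hat{t})+p(\hat{t}')-p^{\max}(\hat{t}\wedge\hat{t}')\right]$.
This yields the paper's formula
$p^{\max}(t\wedge t')=\min\left\{p(t),\,p(t'),\,1-\left[p(\hat{t})+p(\hat{t}')-p^{\max}(\hat{t}\wedge\hat{t}')\right]\right\}$,
with value $0$ in the base case where $t$ and $t'$ are adjacent; it is evaluated inward along the path exactly as the $p^{\min}$ recursion, and on the example above it correctly returns $0$. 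Your $p^{\min}$ argument can stand essentially as written, but without this corrected recursive treatment of $p^{\max}$ the lemma is not proved.
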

\begin{proof}
Let $\pi$ be the path connecting $t$ and $t'$ in $HG(D^p, \IC)$. 
It is easy to see that the fact that $\pi$ is unique implies that 
$p^{min}(t\wedge t')= p^{min}_\pi(t\wedge t')$
and 
$p^{max}(t\wedge t')= p^{max}_\pi(t\wedge t')$
(in fact, any model for $D^p$ w.r.t. 
$HG(D^p, \IC)$ can  be obtained by refining a model for $D^p$ w.r.t. 
$\pi$ without changing the probabilities assigned to the event $t\wedge t'$,
following a reasoning analogous to that used in the proof of the right-to-left implication
of Theorem~\ref{theo:tractableCC}).

Since the path $\pi$ connecting $t$ and $t'$ in the graph $HG(D^p, \IC)$ is unique, it does not 
contain cycles (otherwise there would be at least two paths between $t$ and $t'$).
Hence, $\pi$ is a chain in a graph (the definition of chain for hypergraph is introduced in
Section~\ref{sec:AppendixRing}).
Therefore, $p^{min}_\pi(t\wedge t')$ can be determined by exploiting Lemma~\ref{lem:pmin-chain-hyper},
which provides the formula for computing the minimum probability that the ears at the endpoints 
of a chain co-exist.
It is trivial to see that, denoting as $\hat{t}$ and $\hat{t}'$ the tuples connected
to $t$ and $t'$ in $\pi$, in our case 
the formula in Lemma~\ref{lem:pmin-chain-hyper}
becomes:
{\small
$$
p^{min}_\pi(t\wedge t')\!=\!\!
\left\{
\begin{array}{l}
\!\!0,  \mbox{ if } (t,t') \mbox{ is an edge of }\pi \\[2pt]
\!\!\max \{0, p(t)\!+\!p(t')\!-\![1\!-\!p^{min}_\pi(\hat{t}\!\wedge\!\hat{t}')]\}, \mbox{ otherwise.}
\end{array}
\right.
$$
}
\noindent
since $\pi$ is a chain in a graph, thus its intermediate edges are hyperedges of 
cardinality $2$ with no ears.

As regards $p^{max}_\pi(t\wedge t')$, it can be evaluated as follows:

\noindent
$p^{max}_\pi(t\wedge t')\!=\!\!
\left\{
\begin{array}{l}
\!\!0,  \mbox{ if } (t,t') \mbox{ is an edge of }\pi \\[2pt]
\!\!\min \{p(t), p(t'), 1\!-\![p(\hat{t})\!+\!p(\hat{t}')\!-\!p^{max}_\pi(\hat{t}\!\wedge\!\hat{t}')]\},\\ 
\mbox{ otherwise.}
\end{array}
\right.
$

In fact, it is easy to see that the maximum probability of the event $t\wedge t'$ is
$\min \{p(t), p(t'), p^{max}_\pi(\neg \hat{t}\!\wedge\!\neg \hat{t}')\}$,
where  $p^{max}_\pi(\neg \hat{t}\!\wedge\!\neg \hat{t}')$ is the maximum probability that 
both the tuples $\hat{t}$ and $\hat{t}'$ (which are mutually exclusive with $t$ and $t'$, respectively) are false.
In turn, $p^{max}_\pi(\neg \hat{t} \wedge\!\neg \hat{t}')= 
1-p^{min}_\pi(\hat{t} \vee \hat{t}')=
1-[p(\hat{t})\!+\!p(\hat{t}')\!-\!p^{max}_\pi(\hat{t} \wedge \hat{t}')]$,
thus proving the above-reported formula.

We complete the proof by observing that  
$p^{min}(t\wedge t')$ and $p^{max}(t\wedge t')$
can be computed in polynomial time w.r.t. the size of $D^p$ by recursively applying
the above-reported formulas for $p^{min}$ and $p^{max}$ starting from $t$ and $t'$, and going further on towards the center of the unique path connecting $t$ and $t'$.
\end{proof}

\begin{lemma}
\label{lem:qatractable-clique}
For projection-free queries, \qa\ is in $PTIME$ if $HG(D^p, \IC)$ is a clique.
\end{lemma}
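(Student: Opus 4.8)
The plan is to reduce \qa\ over a clique conflict graph to a collection of elementary probability computations, one per candidate answer tuple. First I would observe that, since $Q$ is projection-free, its head $\vec{x}$ contains every variable of the body, so each ground answer $\vec{t}$ fixes a single grounding of $Q$. Under this grounding the relational atoms $R_1(\vec{y}_1),\dots,R_m(\vec{y}_m)$ become ground atoms and, when the grounded built-in predicates in $\phi$ are satisfied, they identify a set $T_{\vec{t}}=\{s_1,\dots,s_m\}$ of database tuples such that, for any possible world $w$, $w\models Q(\vec{t})$ iff $T_{\vec{t}}\subseteq w$. Consequently $p^M_Q(\vec{t})$ equals the probability, under $M$, that all tuples of $T_{\vec{t}}$ co-exist, and the interval associated with $\vec{t}$ in \emph{Ans}$(Q,D^p,\IC)$ is exactly $[p^{\min}(T_{\vec{t}}),p^{\max}(T_{\vec{t}})]$.

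Next I would bound the number of candidate answers. Since $\D^p$, $\IC$ and $Q$ are of fixed size, the groundings of the (constant-size) body range over the tuples of $D^p$, so there are at most $|D^p|^{m}$ candidate answers, a polynomial in the size of $D^p$. For each candidate it can be checked in polynomial time whether $T_{\vec{t}}\subseteq D^p$ and whether the grounded built-in predicates hold; only such $\vec{t}$ are retained, and these are precisely the tuples admitting a witness world, hence exactly the ones occurring in \emph{Ans}$(Q,D^p,\IC)$.

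The heart of the argument is then to evaluate $p^{\min}(T_{\vec{t}})$ and $p^{\max}(T_{\vec{t}})$ using the clique structure, which I expect to be immediate. If $T_{\vec{t}}$ reduces to a single tuple $s$ (all atoms grounding to the same database tuple), then by condition $(i)$ of the possible-world semantics every interpretation assigns probability exactly $p(s)$ to the worlds containing $s$, so $p^{\min}(T_{\vec{t}})=p^{\max}(T_{\vec{t}})=p(s)$. If instead $T_{\vec{t}}$ contains two distinct tuples $s_i\neq s_j$, then, since $HG(D^p,\IC)$ is a clique, $\{s_i,s_j\}$ is a conflicting edge; hence in every model no non-zero-probability world can contain both, and a fortiori none can contain all of $T_{\vec{t}}$, giving $p^{\min}(T_{\vec{t}})=p^{\max}(T_{\vec{t}})=0$. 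The assumption that $D^p$ is consistent w.r.t. $\IC$, which we adopt throughout query evaluation, guarantees $\M(D^p,\IC)\neq\emptyset$, so these extrema are well defined. Assembling the interval for every retained candidate yields \emph{Ans}$(Q,D^p,\IC)$ in polynomial time.

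As for difficulties, I do not anticipate a substantive obstacle here: unlike the tree case of Theorem~\ref{theo:qatractable}, where the extremal probabilities must be obtained as the solution of a constant-size linear program, the clique case collapses to the dichotomy ``one tuple: marginal probability / two or more tuples: zero''. The only points demanding care are the bookkeeping of repeated relation atoms that may collapse to a single tuple, and verifying that the definition of \emph{Ans} indeed includes answer tuples whose probability range degenerates to $[0,0]$, which it does, since membership in the answer requires only the existence of some (possibly inconsistent) witness world.
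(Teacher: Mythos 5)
Your proposal is correct and follows essentially the same route as the paper, whose proof is a one-line observation that in a clique $p^{\min}(t\wedge t')=p^{\max}(t\wedge t')=0$ for every pair of tuples; you simply spell out the details this observation leaves implicit (polynomially many grounded candidates, the singleton case yielding $[p(s),p(s)]$, and the fact that \emph{Ans} retains tuples whose range degenerates to $[0,0]$ because witness worlds need not be consistent).
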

\begin{proof}
It straightforwardly follows from the fact that, for each pair of
tuples $t,t'$ in $HG(D^p, \IC)$, it holds that $p^{min}(t\wedge t') = p^{max}(t\wedge t') = 0$.
\end{proof}

\begin{lemma}
\label{lem:qatractable-tree}
For projection-free queries, \qa\ is in $PTIME$ if $HG(D^p, \IC)$ is a tree.
\end{lemma}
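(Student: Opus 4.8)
The plan is to reduce \qa\ to the computation of extremal co-existence probabilities of constant-size tuple sets, and then to solve each such computation with a linear program of constant size. First I would exploit the fact that $Q$ is projection-free: once a candidate answer $\vec t$ is fixed, every relation atom $R_i(\vec y_i)$ becomes ground, so $w\models Q(\vec t)$ holds iff the fixed built-in condition $\phi(\vec t)$ is satisfied \emph{and} a specific set $T(\vec t)=\{t_1,\dots,t_m\}$ of database tuples is contained in $w$. Hence, for any model $M$, the quantity $p^M_Q(\vec t)$ is exactly the probability that the tuples in $T(\vec t)$ co-exist under $M$, and so the extremes $p^{\min},p^{\max}$ associated with $\vec t$ in $\mbox{\emph{Ans}}(Q,D^p,\IC)$ coincide with $p^{\min}(T(\vec t))$ and $p^{\max}(T(\vec t))$. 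Since the query has fixed size, there are only $O(|D^p|^m)$ candidate answers, enumerable in \emph{PTIME}; thus it suffices to show that, when $HG(D^p,\IC)$ is a tree, $p^{\min}(T)$ and $p^{\max}(T)$ can be computed in polynomial time for a set $T$ of constant cardinality $m$.

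To compute these values I would set up a constant-size LP. Let $\mathcal{T}$ be the minimal subtree of $HG(D^p,\IC)$ spanning $T$; since in a tree any two nodes are joined by a unique path, $\mathcal{T}$ is well defined, and its terminal and branching nodes number $O(m)$. The subtree decomposes into $O(m)$ maximal chain-\emph{segments} joining consecutive distinguished (terminal or branching) nodes, and for each such segment the achievable correlation window of its two endpoints is computable in \emph{PTIME} by Lemma~\ref{lem:pmin-pmax-chain}. I would then introduce one LP variable for each joint truth-assignment to the $O(m)$ distinguished nodes (a constant number of variables), taking as objective the variable encoding the configuration ``all tuples of $T$ present,'' to be minimized (for $p^{\min}(T)$) and, symmetrically, maximized (for $p^{\max}(T)$).

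The constraints of this LP would be: (i) nonnegativity and normalization to $1$; (ii) marginal constraints fixing each distinguished node's marginal to its $p(t)$; (iii) \emph{conflict} constraints setting to $0$ every configuration in which a conflicting (edge-connected) pair appears together; and (iv) \emph{segment} constraints forcing the joint behaviour of each segment's two endpoints to lie inside the correlation window returned by Lemma~\ref{lem:pmin-pmax-chain}. The central claim would be that the feasible solutions of this LP are \emph{exactly} the distributions over $T$-configurations obtainable by marginalizing some model of $D^p$ w.r.t.\ $\IC$. The ``easy'' direction is that every model yields, by marginalization, a feasible solution. For the polynomiality it matters that the whole LP has constant size (depending only on $m$, not on $|D^p|$), so that, exactly as in the use of convexity of the feasible region in Proposition~\ref{pro:intquery}, its optimum is attainable in \emph{PTIME}.

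The main obstacle will be the converse, \emph{realizability} direction: showing that every feasible LP solution can be glued into a full model of $D^p$. I would prove this by induction along the tree, in the style of the sufficiency argument of Theorem~\ref{theo:tractableCC}: starting from the prescribed distribution on the distinguished nodes, I extend the model one segment (and then one dangling subtree and one isolated tuple) at a time. The key structural point is that, in a tree, conditioning on the configuration of all branch and terminal nodes renders the segments independent — the internal nodes of a segment occur in no other segment — so each segment can be realized separately; the chain lemma guarantees that any endpoint correlation inside the window (iv) is realizable by a chain model meeting the internal marginals, and Lemma~\ref{lem:pmin1/2} (matryoshka case) lets me attach the remaining subtrees and the tuples outside $\mathcal{T}$ without disturbing any marginal. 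Establishing that the pairwise segment windows (iv) together with (i)--(iii) are \emph{sufficient} for global realizability (i.e.\ that no higher-order obstruction survives the tree decomposition) is the delicate step; once it is in place, the lemma follows, and, combined with Lemma~\ref{lem:qatractable-clique}, it also yields the tree-component case of Theorem~\ref{theo:qatractable}.
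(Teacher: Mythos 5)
Your proposal follows essentially the same route as the paper's own proof: reduce \qa\ to computing $p^{\min}(T)$ and $p^{\max}(T)$ for constant-size tuple sets, decompose the minimal spanning subtree into chain segments between terminal/branching nodes (the paper's ``elementary pairs''), precompute each segment's endpoint-correlation window via Lemma~\ref{lem:pmin-pmax-chain}, and solve a constant-size LP over joint truth-assignments to the distinguished nodes, with the two-way correspondence between feasible LP solutions and models established exactly by your conditional-independence gluing argument (the paper realizes it via the product of ``relative weights'' $Pr(w)=\sigma_\alpha \prod_{(t_i,t_j)\in EP} wr(w_{t_i,t_j})$). The approach and all key ingredients match; the proposal is correct.
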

\begin{proof}
\emph{Ans}$(Q,D^p,\IC)$ can be determined by first evaluating the answer $r_q$ of $Q$ w.r.t. 
$det(D^p)$, and then computing, for each $\vec{t}\in r_q$, the minimum and maximum 
probabilities $p^{min}$ and $p^{max}$ of $\vec{t}$ as answer of $Q$.
Obviously, $r_q$ can be evaluated in polynomial time w.r.t. the size of $D^p$, and 
the number of tuples in $r_q$ is polynomially bounded by the size of $D^p$.

Observe that, every ground tuple $\vec{t}\in r_q$ derives from the 
conjunction of a set of tuples $\{t_1, \dots, t_n\}$ in $det(D^p)$.
Thus, in order to prove the statement, it suffices to prove that, for each set
$\{t_1, \dots, t_n\}$ of tuples in $det(D^p)$, 
computing $p^{min}(t_1\wedge\dots\wedge t_n)$ 
and $p^{max}(t_1\wedge\dots\wedge t_n)$ is feasible in 
polynomial time w.r.t. the size of $D^p$.

For the sake of clarity of presentation, we assume that $HG(D^p, \IC)$ coincides with its own
minimal spanning tree containing all the tuples in $\{t_1, \dots, t_n\}$.
This means that each $t_i$ (with $i\in[1..n]$) is either a leaf node or occurs as
intermediate node in the path connecting two other tuples in $\{t_1, \dots, t_n\}$, and
all the leaf nodes are in $\{t_1, \dots, t_n\}$.
In fact, if this were not the case, it is straightforward to see that nothing would change in evaluating $p^{min}(t_1\wedge\dots\wedge t_n)$ and $p^{max}(t_1\wedge\dots\wedge t_n)$ 
if we disregarded the nodes of $HG(D^p, \IC)$ which are not in $\{t_1, \dots, t_n\}$ and do not belong to any path connecting some pair of nodes in $\{t_1, \dots, t_n\}$.




Before showing how $p^{min}(t_1\wedge\dots\wedge t_n)$ and $p^{max}(t_1\wedge\dots\wedge t_n)$
can be computed, we introduce some notations.
We say that a tuple $t$ is a \textit{branching node} of
$HG(D^p, \IC)$ iff the degree of $t$ is greater than two.
Moreover, a pair of tuples $(t_i,t_j)$ is said to be an
\textit{elementary pair of tuples} of $HG(D^p, \IC)$ if
$(i)$ each of $t_i$ and $t_j$ is either in $\{t_1, \dots, t_n \}$ or a branching node, and 
$(ii)$ the path connecting $t_i$ to $t_j$ contains neither branching nodes nor tuples in 
$\{t_1, \dots, t_n\}$ as intermediate nodes.

The set of the elementary pairs of tuples is denoted as
$EP_{HG(D^p, \IC)}$ (we also use the short notation $EP$, when
$HG(D^p, \IC)$ is understood).
Moreover, we denote the branching nodes of $HG(D^p, \IC)$ which are not in 
$\{t_1, \dots, t_n\}$ as $t_{n+1}, \cdots, t_{n+m}$.
Observe that $m<n$, as $n$ is also greater than or equal to the number of leaves
of $HG(D^p, \IC)$.
Finally, we denote with $B=\{$\emph{true}, \emph{false}$\}$ the boolean domain, 
with $B^{n+m}$ the set of all the tuples of $n+m$ boolean values, 
and use the symbol $\alpha$ for tuples of $n+m$ boolean values 
and the notation $\alpha[i]$ to indicate the value of the $i$-th 
attribute of $\alpha$.


We will show that $p^{min}(t_1\wedge\dots\wedge t_n)$ 
(resp., $p^{max}(t_1\wedge\dots\wedge t_n)$) is a solution of 
the following linear programming problem instance
$LP(t_1\wedge\dots\wedge t_n, \D^p, \IC, D^p)$: 


$$
\mbox{\emph{minimize} (resp., \emph{maximize})} \hspace*{.4cm}
\sum_{\alpha\in B^{n+m}\, |\, \forall i\in[1..n]\, \alpha[i]=true} x_\alpha
$$
$$\mbox{\emph{subject to} }  \hspace*{2cm}S(t_1\wedge\dots\wedge t_n, \D^p, \IC, D^p)
$$

\noindent
where $S(t_1\wedge\dots\wedge t_n, \D^p, \IC, D^p)$ is the following system 
of linear inequalities:
\[
\left\{
\begin{array}{lll}
\!\forall (t_i,t_j)\!\in\! EP & 
\mbox{\large$\sum$}_{\scriptsize \begin{array}{l}
\alpha\in B^{n+m}\, | \\
\alpha[i]=\mbox{\scriptsize{\emph{true}}}\ \wedge \\
\alpha[j]=\mbox{\scriptsize{\emph{true}}}
\end{array}
} x_\alpha = x_{t_i,t_j} & (A) \\[20pt]

\!\forall(t_i,t_j)\!\in\!EP & p^{min}(t_i\wedge t_j)\leq x_{t_i,t_j}\leq p^{max}(t_i\wedge t_j) & (B)\\[5pt]

\!\forall i\in\![1..n\!+\!m] & \mbox{\large$\sum$}_{\alpha\in B^{n+m} \,|\, \alpha[i]=\mbox{\scriptsize{\emph{true}}}} \ x_\alpha = p(t_i) & (C)\\[5pt]

 & \mbox{\large$\sum$}_{\alpha\in B^{n+m}} x_\alpha = 1 & (D) \\
\end{array}
\right.
\]

Therein:
$(i)$
$x_{t_i,t_j}$ is a variable representing the probability that $t_i$ and $t_j$ coexist;  and
$(ii)$ $\forall \alpha\in B^{n+m}$, $x_\alpha$ is a variable
representing the probability that $\forall i\in[1..n+m]$ the truth value of $t_i$ is 
$\alpha[i]$; that is,
$x_\alpha$ is the probability of the event
$\bigwedge_{i|\alpha[i]=\mbox{\scriptsize{\emph{true}}}} t_i \ \wedge\bigwedge_{i|\alpha[i]=\mbox{\scriptsize{\emph{false}}}} \neg t_i$.

Since $HG(D^p, \IC)$ is a tree, Lemma~\ref{lem:pmin-pmax-chain} ensures that, for each $(t_i,t_j)\in EP$,
$p^{min}(t_i\wedge t_j)$ and $p^{max}(t_i\wedge t_j)$
can be computed in polynomial time w.r.t. the size of $D^p$. 
Therefore, we assume that they are precomputed constants 
in $LP(t_1\wedge\dots\wedge t_n, \D^p, \IC, D^p)$.

It is easy to see that $LP(t_1\wedge\dots\wedge t_n, \D^p, \IC, D^p)$ can 
be solved in polynomial time w.r.t. the size of $D^P$,
as it consists of at most $6n-2$ inequalities using $2^{2n-1}+2n-1$ variables,
and $n$ only depends on the number of relations appearing in $Q$
(we recall that we are addressing data complexity, thus queries are of constant arity).

We now show that, for each solution of 
$S(t_1\wedge\dots\wedge t_n, \D^p, \IC,$ $D^p)$, there is a model 
$Pr$ of $D^p$ w.r.t. $\IC$
such that $p(t_1\wedge\dots\wedge t_n)$ w.r.t. $Pr$ is equal to
$\sum_{\scriptsize \begin{array}{l}
\alpha\in B^{n+m}\, |\\
\forall i\in[1..n]\, \alpha[i]=\mbox{\scriptsize{\emph{true}}}
\end{array}} \hspace*{-2mm} x_\alpha$, and vice versa.

Given a solution $\sigma$ of 
$S(t_1\wedge\dots\wedge t_n, \D^p, \IC, D^p)$, 
for each $\alpha\in B^{n+m}$ we denote with $\sigma_\alpha$ the value 
assumed by the variable $x_\alpha$ in $\sigma$; moreover, for each 
$(t_i, t_j)\in EP$ we denote with $\sigma_{t_i,t_j}$ the value 
assumed by the variable $x_{t_i,t_j}$ in $\sigma$.

For each $(t_i, t_j)\in EP$, we denote with $D^p_{t_i,t_j}$ 
the maximal subset of $D^p$ which contains only $t_i$, $t_j$, and the tuples
along the path connecting $t_i$ and $t_j$.

From Proposition~\ref{pro:intquery}, 
the fact that, for each $(t_i, t_j)\in EP$, the value $\sigma_{t_i,t_j}$ is such that 
$p^{min}(t_i\wedge t_j)\leq \sigma_{t_i,t_j} \leq p^{min}(t_i\wedge t_j)$, implies that there is at least a model 
$Pr_{t_i,t_j}$ of $D^p_{t_i,t_j}$ w.r.t. $\IC$ such that
$p(t_i\wedge t_j)$ w.r.t. $Pr_{t_i,t_j}$ is equal to
$\sigma_{t_i, t_j}$.
For each $(t_i, t_j)\in EP$, we consider a model $Pr_{t_i,t_j}$
of $D^p_{t_i,t_j}$ w.r.t. $\IC$ such that
$p(t_i\wedge t_j)$ w.r.t. $Pr_{t_i,t_j}$ is equal to
$\sigma_{t_i, t_j}$.
Moreover, for each possible world $w\in pwd(D^p_{t_i,t_j})$, we 
define the relative weight of $w$ (and denote it by $wr(w)$) as:
$$
wr(w) = \left\{
\begin{array}{ll}
\begin{tabular}{c}
$Pr_{t_i,t_j}(w)$\\
\hline
$\sum_{w'\in D^p_{t_i,t_j} \wedge t_i\in w' \wedge t_j\in w'} Pr_{t_i,t_j}(w')$
\end{tabular} &
\mbox{if $t_i\in w \wedge t_j\in w$}\\
 & \\
\begin{tabular}{c}
$Pr_{t_i,t_j}(w)$\\
\hline
$\sum_{w'\in D^p_{t_i,t_j}\wedge, t_i\in w' \wedge t_j\not\in w'} Pr_{t_i,t_j}(w')$
\end{tabular} &
\mbox{if $t_i\in w \wedge t_j\not\in w$}\\
 & \\
\begin{tabular}{c}
$Pr_{t_i,t_j}(w)$\\
\hline
$\sum_{w'\in D^p_{t_i,t_j}\wedge t_i\not\in w' \wedge t_j\in w'} Pr_{t_i,t_j}(w')$
\end{tabular} &
\mbox{if $t_i\not\in w \wedge t_j\in w$}\\
& \\
\begin{tabular}{c}
$Pr_{t_i,t_j}(w)$\\
\hline
$\sum_{w'\in D^p_{t_i,t_j}\wedge t_i\not\in w' \wedge t_j\not\in w'} Pr_{t_i,t_j}(w')$
\end{tabular} &
\mbox{if $t_i\not\in w \wedge t_j\not\in w$}\\
\end{array}
\right.
$$


It is easy to see that, for each possible world $w\in pwd(D^p)$, there is
for each pair $(t_i, t_j)\in EP$ a possible world 
$w_{t_i,t_j}\in pwd(D^p_{t_i,t_j})$ such that 
$w = \bigcup_{(t_i, t_j)\in EP} w_{t_i,t_j}$, and vice versa.

We consider the interpretation $Pr$ of $D^p$ defined as follows.
For each possible world $w\in pwd(D^p)$, we consider the possible worlds
$w_{t_i,t_j}$ such that $w = \bigcup_{(t_i, t_j)\in EP} w_{t_i,t_j}$ 
and define the interpretation $Pr$ of $D^p$ as:
$$Pr(w) = \sigma_\alpha \prod_{(t_i, t_j)\in EP} wr(w_{t_i,t_j}),$$
\noindent
where $\alpha$ is the tuple in $B^{n+m}$ which agrees with $w$ on the 
presence/absence of $t_1,\cdots, t_{n+m}$
(i.e., $\forall i\in [1..n+m]\, \alpha[i]=\mbox{\scriptsize{true}}$ (resp. \emph{false}) iff 
$t_i\in w$ (resp. $t_i\not\in w$)).
It is easy to see that $Pr$ is a model for $D^p$ w.r.t. $\IC$.
Specifically,  the following conditions hold:
\begin{itemize}
\item 
$Pr$ assigns probability $0$ to every possible world $w$ not satisfying $\IC$. 
This can be proved reasoning by contradiction. 
Assume that $Pr(w)>0$ and $w$ does not satisfy $\IC$.
Consider the possible worlds $w_{t_i,t_j}$ such that 
$$w = \bigcup_{(t_i, t_j)\in EP} w_{t_i,t_j}.$$ 
Since $Pr(w)>0$, for each $(t_i, t_j)\in EP$ it holds that
$$Pr_{t_i,t_j}(w_{t_i,t_j})>0.$$
Hence, since $Pr_{t_i,t_j}$ is a model of $D^p_{t_i,t_j}$, then
$w_{t_i,t_j}$ contains no pair of tuples $t',t''$ connected by 
an edge in $HG(D^p, \IC)$. 
Therefore, $w$ contains no pair of tuples $t',t''$ connected by an edge in $HG(D^p, \IC)$, thus contradicting that $w$ does not satisfy $\IC$.
\item
For each tuple $t\in D^p$, $p(t) =  \sum_{w\in pwd(D^p) \wedge t\in w} Pr(w)$.
This follows from the fact that, given a tuple $t\in D^p$, and such that $t$ belongs 
to a chain whose ends are the tuples $t_i, t_j$, the probability of a tuple $t$
is given by 
$$\sum_{w_{t_i,t_j}\in pwd(D^p_{t_i,t_j})\, s.t.\, t\in w_{t_i,t_j}} Pr_{t_i,t_j}(w_{t_i,t_j}).$$
The latter is equal to  $\sum_{w\in pwd(D^p) s.t. t\in w} Pr(w)$,
since for each $w_{t_i,t_j}\in pwd(D^p_{t_i,t_j})$ it holds that
$$\sum_{w\in pwd(D^p)\, s.t.\, w_{t_i,t_j}\subseteq w} Pr(w) = Pr_{t_i,t_j}(w_{t_i,t_j}).$$
\end{itemize}

Therefore, the interpretation $Pr$ is a model for $D^p$ w.r.t. $\IC$, and the probability
assigned to $t_1 \wedge \cdots, t_n$ by $Pr$ is equal to
$\sum_{\tiny \begin{array}{l}
\alpha\in B^{n+m} \\
\wedge \forall i\in[1..n]\, \alpha[i]=\mbox{\scriptsize{true}}
\end{array}} \sigma_\alpha$.
Hence, it is easy to see that $p^{min}(t_1\wedge\dots\wedge t_n)$ 
(resp. $p^{max}(t_1\wedge\dots\wedge t_n)$) is the optimal solution 
of $LP(t_1\wedge\dots\wedge t_n, \D^p, \IC, D^p)$ and can be computed in polynomial
time w.r.t. the size of $D^p$, which completes the proof.
\end{proof}


\noindent{\textbf{Theorem~\ref{theo:qatractable}.} 
For projection-free queries, \qa\ is in $PTIME$ if $HG(D^p, \IC)$ is a simple graph.}
\begin{proof}
Let $\vec{t}$ be an answer of the projection-free query $Q$ posed on the deterministic version of $D^p$.
The minimum and maximum probabilities $p^{min}$ and $p^{max}$ of $\vec{t}$ as answer of $Q$ 
over $D^p$ can be determined as follows.
Let $T=\{t_1, \dots, t_n\}$ be the set of tuples in $D^p$ such that $Q(\vec{t})=t_1\wedge \cdots\wedge t_n$.
$T$ can be partitioned into the sets $T_1, \dots, T_k$, such that:\\
$1)$ $k$ is the number of distinct (maximal) connected components of $HG(D^p, \IC)$, each of which contains at least one tuple in $T$; \\
$2)$ for each $i\in [1..k]$, $T_i$ contains the tuples of $T$ belonging to the $i$-th 
maximal connected component of $HG(D^p, \IC)$ among those mentioned in $1)$.\\
Let $\vec{t}_i$ be the conjunction of the tuples belonging to the partition $T_i$ of $T$.
Since every maximal connected component of $HG(D^p, \IC)$ is either a clique or a tree,
lemmas~\ref{lem:qatractable-clique} and \ref{lem:qatractable-tree} ensure that
$p^{min}(\vec{t}_i)$ and $p^{max}(\vec{t}_i)$ can be computed in polynomial time w.r.t. the size of $D^p$.
As distinct tuples $\vec{t}_i$ and $\vec{t}_j$, with $i,j\in [1..k]$, belong to distinct maximal connected components of $HG(D^p, \IC)$, they can be viewed as events among which no 
correlation is known.
Hence, $p^{min}(\vec{t})$  (resp., $p^{max}(\vec{t})$) can be determined by applying 
Fact~\ref{fac:boole} to the events $\vec{t}_1, \dots \vec{t}_k$, with the probability of  
$\vec{t}_i$ equal to $p(\vec{t}_i)=p^{min}(\vec{t}_i)$ (resp., 
$p(\vec{t}_i)=p^{max}(\vec{t}_i)$), for each $i\in [1..k]$.
\end{proof}

\subsection{Extending tractable cases of query evaluation}
\label{app:estensionequery}
As discussed in the core of the paper (Section~\ref{app:extensions}), our tractability result on query evaluation can be extended to the cases that:
$i)$ tuples are associated with ranges of probabilities, instead of exact probability values;
$ii)$ denial constraints are probabilistic.
We here give a hint on how the proof of Lemma~\ref{lem:qatractable-tree} can be extended to 
these cases (Lemma~\ref{lem:qatractable-tree} states that projection-free queries can be 
evaluated in \emph{PTIME} if the conflict hypergraph is a tree, and is the core of the 
proof of Theorem~\ref{theo:qatractable}).

As regards extension $ii)$, it is easy to see that, as shown for \cc, any instance $I$ of the
query evaluation problem in the presence of probabilistic constraints is equivalent to an instance $I'$ of \qa, where the conflict hypergraph $H'$ of $I'$ is obtained by augmenting each hyperedge of the conflict hypergraph $H$ of $I$ with an ear. 
The point is that, even if $H$ is a tree,  this reduction makes $H'$ contain hyperedges with 
more than two nodes, thus $H'$ is no more a tree.
However, $H'$ is a hypertree of a particular form: for any pairs of intersecting edges, their intersection consists of a unique node, which is a node inherited from $H$ (the new nodes of 
$H'$ are all ears).
This implies that the minimum and maximum probabilities $p^{min}$ and $p^{max}$ of an answer
can be still computed as solutions of the two variants of the optimization problem 
$LP(t_1\wedge\dots\wedge t_n, \D^p, \IC, D^p)$ introduced in the proof of Lemma~\ref{lem:qatractable-tree}. 
The fact that $LP(t_1\wedge\dots\wedge t_n, \D^p, \IC, D^p)$  can be still written and solved 
in polynomial time derives from the fact that the values $p^{min}(t_i\wedge t_j)$ and 
$p^{max}(t_i\wedge t_j)$ occurring in the inequalities (B) can be still evaluated in 
polynomial time, by observing that both $p^{min}(t_i\wedge t_j)$ and 
$p^{max}(t_i\wedge t_j)$ can be obtained by exploiting Lemma~\ref{lem:pmin-chain-hyper} for the minimum probability value, and an analogous result for the maximum probability value.
Observe that this reasoning does not work (as is) for general hypertrees, as 
in this case we are not assured that the tuples composing the answer are in intersections
between distinct pairs of hyperedges.

As regards extension $i)$, the minimum and maximum probabilities $p^{min}$ and $p^{max}$ 
of an answer can be computed as solutions of the two variants of the optimization problem 
$LP(t_1\wedge\dots\wedge t_n, \D^p, \IC, D^p)$ with the following changes:\\
$1)$ equalities (C) are replaced with pairs of inequalities imposing that, for each $t_i$, its probability ranges between the minimum and maximum marginal probabilities of the range associated with $t_i$ in the PDB;\\
$2)$ the values $p^{min}(t_i\wedge t_j)$ and $p^{max}(t_i\wedge t_j)$ occurring in the inequalities (B) are evaluated by considering the minimum probabilities for the tuples along the path connecting $t_i$ and $t_j$ in the conflict tree.
Moreover, when evaluating $p^{min}(t_i\wedge t_j)$, the minimum marginal probabilities for $t_i$ and $t_j$ are taken into account, while, for $p^{max}(t_i\wedge t_j)$, we have to consider their maximum probabilities.
Therein, the maximum probability of a tuple $t$ is the minimum between 
the upper bound of the probability range of $t$, and the maximum probability value that $t$ 
can have according to the conflict tree (this value is entailed by the tuples connected to 
$t$ by direct edges: as implied by Theorem~\ref{theo:tractableCC}, the sum of the probabilities of two tuples connected through an edge must be less than or equal to $1$).
Intuitively enough, we consider the minimum probabilities for the intermediate 
tuples between $t_i$ and $t_j$ as this allows the greatest degree of freedom in
distributing $t_i$ and $t_j$ in the probability space.
\end{document}